\newif\iffull\fulltrue
\newcommand{\defeq}{\vcentcolon=}
\newcommand{\defdefeq}{\vcentcolon\vcentcolon=}
\newcommand\joe[1]{\textcolor{red}{joe: #1}}
\newcommand\jh[1]{\textcolor{blue}{justin: #1}}
\newcommand\jb[1]{\textcolor{brown}{jialu: #1}}
\newcommand\mg[1]{\textcolor{cyan}{MG: #1}}
\newcommand{\hush}[1]{}
\newcommand{\quieteveryone}{%
	\let\joe\hush%
	\let\jh\hush%
	\let\jb\hush%
	\let\mg\hush%
}
\newcommand{\idxset}{M}
\newcommand{\idx}{m}
\newcommand\LOGIC{$\idxset$-BI\xspace}
\newcommand\AssertLOGIC{\text{$\idxset$-BI${}_{\text{restricted}}$}\xspace}
\newcommand{\programlogic}{\textsc{LINA}\xspace}
\newcommand{\Lang}{\textsc{pWhile}\xspace}
\newcommand{\Config}{\textsf{Config}\xspace}
\newcommand\sepid{\ensuremath{I}}
\newcommand\sepand{\mathrel{\ast}}
\newcommand\sepimp{\mathrel{-\mkern-6mu*}}
\newcommand{\bigsep}{\mathop{\Huge \mathlarger{\mathlarger{\ast}}}}
\newcommand\indand{\mathrel{\ast}}
\newcommand\indimp{\mathrel{-\mkern-6mu*}}
\newcommand{\bigind}{\mathop{\Huge \mathlarger{\mathlarger{\ast}}}}
\newcommand\negand{\mathrel{\circledast}}
\newcommand\negimp{\mathrel{-\mkern-6mu\circledast}}
\newcommand{\bigneg}{\mathop{\Huge \mathlarger{\mathlarger{\circledast}}}}
\newcommand{\PROP}{\textsf{Prop}}
\newcommand{\TRUE}{\textsf{True}}
\newcommand{\FALSE}{\textsf{False}}
\newcommand{\valuation}{\mathcal{V}}
\newcommand{\complex}{\textsf{Com}}
\newcommand{\prf}{\textsf{Prf}}
\newcommand{\PSLset}{X}
\newcommand{\PSLunit}{E_{\text{indep}}}
\newcommand{\PSLmodel}{\mathcal{\PSLset}_{\text{indep}}}
\newcommand{\PNAmodel}{\mathcal{\PSLset}_{\CPNA}}
\newcommand{\PNAunit}{E_{\CPNA}}
\newcommand{\DetModel}{\mathcal{X}_{\Mem}}
\newcommand{\DetSet}{X'}
\newcommand{\DetUnit}{E_{\text{d}}}
\newcommand{\ProbModel}{\mathcal{X}_{\mathcal{D}(\Mem)}}
\newcommand{\CombModel}{\mathcal{X}_{\textsf{comb}}}
\newcommand\Var{\ensuremath{\mathbf{Var}}}
\newcommand\Val{\ensuremath{\mathbf{Val}}}
\newcommand{\Exp}{\ensuremath{\mathbf{Exp}}} 
\newcommand\Mem{\mathbf{Mem}}
\newcommand{\DMem}[1]{\mathcal{D}(\Mem[#1])}
\newcommand{\dom}{\mathbf{dom}}
\newcommand{\p}{\mathbf{p}} 
\newcommand{\comb}{\circ} 
\newcommand{\Dist}{\mathcal{D}}
\newcommand{\CPNA}{\text{PNA}\xspace}
\newcommand{\Sem}[1]{\llbracket #1 \rrbracket}
\newcommand{\denot}[1]{\ensuremath{\llbracket #1 \rrbracket}}
\newcommand{\dbind}{\mathbf{bind}}
\newcommand{\dunit}{\mathbf{unit}}
\newcommand{\ktt}{\ensuremath{\mathit{tt}}}
\newcommand{\kff}{\ensuremath{\mathit{ff}}}
\newcommand{\FV}{\ensuremath{\text{FV}}}
\newcommand{\RV}{\ensuremath{\text{RV}}}
\newcommand{\WV}{\ensuremath{\text{WV}}}
\newcommand{\MV}{\ensuremath{\text{MV}}}
\newcommand{\Expr}{\mathcal{E}}
\newcommand{\Event}{\mathcal{EV}}
\newcommand{\DetVar}{\mathcal{DV}}
\newcommand{\RanVar}{\mathcal{RV}}
\newcommand{\Command}{\mathcal{C}}
\newcommand{\onehot}{oh}
\newcommand{\perm}{perm}
\newcommand{\Unif}[1]{\mathbf{U}_{#1}}
\newcommand{\Skip}{\mathbf{skip}}
\newcommand{\Seq}[2]{{#1} \mathrel{;} {#2}}
\newcommand{\Assn}[2]{\ensuremath{{#1} \leftarrow {#2}}}
\newcommand{\Rand}[2]{{#1} \stackrel{\raisebox{-.25ex}[.25ex]%
{\tiny $\mathdollar$}}{\raisebox{-.2ex}[.2ex]{$\leftarrow$}} {#2}}
\newcommand{\DWhile}[2]{\mathbf{while}\ #1\ \mathbf{do}\ #2}
\newcommand{\RCond}[3]{\mathbf{if}\ #1\ \mathbf{then}\ #2\ \mathbf{else}\ #3}
\newcommand{\RCondt}[2]{\mathbf{if}_R\ #1\ \mathbf{then}\ #2}
\newcommand{\Cond}[3]{\mathbf{if}\ #1\ \mathbf{then}\ #2\ \mathbf{else}\ #3}
\newcommand{\Condt}[2]{\mathbf{if}\ #1\ \mathbf{then}\ #2}
\newcommand{\RWhile}[2]{\mathbf{while}\ #1\ \mathbf{do}\ #2}
\newcommand{\apEq}[2]{{#1} \sim {#2}}
\newcommand{\apUnif}[2]{\Unif{#2}\langle #1 \rangle}
\newcommand{\apBern}[2]{\mathbf{Bern}_{#2}\langle #1\rangle}
\newcommand{\apDetm}[1]{\mathbf{Detm}\langle #1 \rangle}
\newcommand{\apIn}[1]{\apDist{#1}}
\newcommand{\apDist}[1]{\langle #1 \rangle}
\newcommand{\apEvent}[1]{{#1}={1}}
\newcommand{\apEventgeneral}[2]{{#1}={#2}}
\newcommand{\apPerm}[2]{\mathbf{Perm}_{#2} \langle #1 \rangle}
\newcommand{\apOnehot}[2]{\mathbf{OH}_{#2} \langle #1 \rangle}
\newcommand{\CCond}{\text{CC}}
\newcommand{\CheckMem}{\textsc{CheckMem}\xspace}
\newcommand{\bv}[3]{\mathbf{bv}(#1, #2, #3)}
\newcommand{\EE}{\mathbb{E}}
\newcommand{\NN}{\mathbb{N}}
\newcommand{\bigmid}{\middle\vert}
\begin{document}

\title{A Separation Logic for Negative Dependence}         



\author{Jialu Bao}
\affiliation{
  \institution{Cornell University}            
  \country{USA}                    
}

\author{Marco Gaboardi}
\affiliation{
  \institution{Boston University}            
  \country{USA}                    
}

\author{Justin Hsu}
\affiliation{
  \institution{Cornell University}            
  \country{USA}                    
}

\author{Joseph Tassarotti}
\affiliation{
  \institution{Boston College}            
  \country{USA}                    
}

\begin{abstract}
  Formal reasoning about hashing-based probabilistic
  data structures often requires reasoning about random
  variables where when one variable gets larger (such as
  the number of elements hashed into one bucket), the others
  tend to be smaller (like the number of elements hashed into the
  other buckets).
		This is an example of \emph{negative dependence}, a
  generalization of probabilistic independence
  that has recently found interesting
  applications in algorithm design and machine learning.
  Despite the usefulness of negative dependence for the analyses of
  probabilistic data structures, existing verification methods cannot
  establish this property for randomized programs.

		To fill this gap, we design \programlogic, a probabilistic separation logic
		for reasoning about negative dependence. Following recent works on
		probabilistic separation logic using \emph{separating conjunction} to reason
		about the probabilistic independence of random variables, we use separating
		conjunction to reason about negative dependence.  Our assertion logic
		features two separating conjunctions, one for independence and one for
		negative dependence. We generalize the logic of bunched
		implications (BI) to support multiple separating conjunctions, and provide a sound
		and complete proof system. Notably, the semantics for separating conjunction
		relies on a \emph{non-deterministic}, rather than partial, operation for
		combining resources. By drawing on closure properties for negative
		dependence, our program logic supports a Frame-like rule for negative
		dependence and \emph{monotone} operations. We demonstrate how \programlogic
		can verify probabilistic properties of hash-based data structures and
		balls-into-bins processes.
\end{abstract}

\begin{CCSXML}
<ccs2012>
<concept>
<concept_id>10003752.10003790.10011742</concept_id>
<concept_desc>Theory of computation~Separation logic</concept_desc>
<concept_significance>500</concept_significance>
</concept>
</ccs2012>
\end{CCSXML}

\ccsdesc[500]{Theory of computation~Separation logic}

\keywords{Probabilistic programs, separation logic, negative dependence}  

\maketitle

\section{Introduction}
\label{sec:intro}

Hashing plays a fundamental role in many probabilistic data structures, from
basic hash tables to more sophisticated schemes such as Bloom filters. In
these applications, a \emph{hash function} $h$ maps a \emph{universe} of
possible values, typically large, to a set of \emph{buckets}, typically small.
Hash-based data structures satisfy a variety
of probabilistic guarantees. For instance, we may
be interested in the \emph{false positive rate}: the
probability that a data structure mistakenly identifies an element as
being stored in a collection, when it was not inserted. We
may also be interested in load measures, such as the probability
that a bucket in the data structure overflows. A typical way to
analyze these quantities is to treat random hash functions as
\emph{balls-into-bins} processes. For example, hashing $N$ unique elements into
$B$ bins can be modeled as throwing $N$ balls into $B$ bins, where each bin is
drawn uniformly at random.

While this modeling is convenient, one complication is that the counts
of the elements in the different buckets are not probabilistically
independent: one bin containing many elements makes it more
likely that other bins contain few elements. The lack of
independence makes it difficult to reason about
multiple bins, for instance bounding the number of empty bins. Moreover, many
common tools for analyzing probabilistic processes, like concentration
bounds, usually require independence. This subtlety has also been a
source of problems in pen-and-paper analyses of probabilistic
data structures. For instance, the standard analysis of the Bloom filter
bounds the number of occupied bins in order to bound the false
positive rate. The original version of this
analysis presented by~\citet{DBLP:journals/cacm/Bloom70}, and also repeated in many papers,
assumes that the bin counts are independent. However,
\citet{DBLP:journals/ipl/BoseGKMMMST08} pointed out that this assumption is
incorrect, and in fact the claimed upper-bound on the false-positive rate is
actually a \emph{lower} bound. Proving a correct bound on the false-positive
rate required a substantially more complicated argument; recently,
\citet{DBLP:conf/cav/GopinathanS20} mechanized a correct, but complex proof in
Coq.

We aim to develop a simpler method to formally reason about hash-based
data structures and balls-into-bins processes, drawing on a key concept in
probability theory: \emph{negative dependence}.

\paragraph*{Towards a simpler analysis: negative dependence.}
To study balls-into-bins processes and other phenomena, researchers in
probability theory have developed a theory of \emph{negative
dependence}~\citep{pemantle:negdep}. Intuitively, variables are \emph{negatively
dependent} if when one is larger, then the others tend to be smaller. The counts
of the bins in the balls-into-bins process is a motivating example of negative
dependence.

While there are multiple incomparable definitions of negative dependence,
\citet{joag1983negative} proposed a notion called \emph{negative association}
(NA) that has many good probabilistic properties. For instance, the bins' counts
in the balls-into-bins process satisfies NA, and NA's closure properties enable
simple, calculation-free proofs of NA. More intriguingly, as
\citet{dubhashi-ranjan} identified, sums of NA variables satisfy some
concentration bounds that usually assume probabilistic independence, including
the widely-used Chernoff bounds.

\paragraph*{Our goal: formal reasoning about negative dependence.}
From a verification perspective, the closure properties suggest a compositional
method for proving NA in probabilistic programs. In this work, we develop a
separation logic for negative dependence, building on a separation logic for
probabilistic programs called PSL~\citep{PSL}. Like all separation logics, PSL
is a program logic where assertions are drawn from the logic of bunched
implications (BI)~\citep{OhP99}, a substructural logic. In PSL, the separating
conjunction $\sepand$ states that two sets of variables are probabilistically
independent, a common and useful property when analyzing probabilistic programs.

We aim to extend the assertions of PSL so that they can describe both
independence and negative dependence. There are three main difficulties:
\begin{itemize}
  \item To support reasoning about negative dependence, the assertion logic
    needs to be extended with a second separating conjunction that is weaker
    than the separating conjunction of PSL. It is easy to extend the syntax of
    formulas, but the extended logic should also enjoy good metatheoretical
    properties like BI does, including a sound and complete proof system.
  \item The standard resource semantics of BI~\citep{PymMono}, based on partial
    commutative monoids (PCMs), is not expressive enough to model negative
    association because two variables with given marginal distributions can be
    negatively associated in more than one way.
  \item Defining the semantics of separating conjunction to capture NA is
    surprisingly challenging. Straightforward definitions fail to satisfy
    expected properties, like associativity of separating conjunction.
\end{itemize}
Beyond the assertions, it is also unclear how to integrate negative association
with the proof rules of PSL. In particular, to view negative association as a
kind of separation, our program logic should have an analogue of the Frame rule
for NA.

\paragraph*{Contributions and outline.}
In this paper, we offer the following contributions.
\begin{itemize}
  \item A novel logic \LOGIC that extends BI with multiple separating
    conjunctions, related by a pre-order. Following \citet{docherty:thesis},
    models of \LOGIC allow two states to be combined into a single state in more
    than one way (\Cref{sec:mbi}). We develop a proof system for \LOGIC, and use
    Docherty's duality-theoretic approach to prove soundness and completeness.
  \item A probabilistic model of \LOGIC that can capture both the independence and
    negative association (\Cref{sec:model}). There are two interesting aspects
    of our model:
    \begin{itemize}
      \item We crucially use the ``non-deterministic'' combination of resources
        allowed by Docherty's semantics of BI. While this semantics was
        originally used to simplify the metatheory of BI, our model shows that
        the added flexibility can enable new applications of the logic.
      \item Our model relies on a novel notion called \CPNA that is more
        expressive than \citet{joag1983negative}'s NA. The generalization is
        needed to satisfy the conditions for an \LOGIC model. Moreover, the
        closure properties and useful consequences of NA continue to hold for
        our generalization.
    \end{itemize}
  \item A program logic, \programlogic (\textbf{L}ogic of \textbf{I}ndependence
    and \textbf{N}egative \textbf{A}ssociation), extending PSL with
    \LOGIC-assertions and a new negative-association Frame rule
    (\Cref{sec:psl}). Being a conservative extension of PSL, the proof rules of
    PSL remain valid in \programlogic. We demonstrate our program logic by
    proving negative association and related properties on several case studies
    (\Cref{sec:ex}). For example, using NA, it is possible to give a
    significantly simpler verification of the false positive rate of the Bloom
    filter. Another example---an analysis of a repeated balls-into-bins process
    motivated by distributed computing---involves a loop with a probabilistic
    guard, and requires reasoning about conditional distributions.
\end{itemize}
We discuss related work in \Cref{sec:rw}, and conclude in \Cref{sec:conc}.

\section{Overview and Key Idea}
\label{sec:overview}

In this section, we introduce negative association as a tool for analyzing
hashing-based algorithms. We use Bloom filters, a hash-based data structure, as
a motivating example. After sketching a standard proof applying negative
association to Bloom filters, we will show how the same analysis can be
formalized in \programlogic.

\subsection{Background on negative association}
Negative association is a property of a set of random variables, which
intuitively says that when some variables are larger, we expect the others to be
smaller. It is formalized as follows:

\newcommand{\Ex}[1]{\mathbb{E}[#1]}

\begin{definition}[Negative Association (NA)] \label{def:na}
  Let $X_1, \dots X_n$ be random variables. The set $\{X_i\}_i$ is
  \emph{negatively associated (NA)} if for every pair of subsets $I, J \subseteq
  \{1, \dots, n\}$ such that $I \cap J = \emptyset$, and every pair of both
  monotone or both antitone functions\footnote{%
    In the following, we will consistently use monotone to mean monotonically
  non-decreasing and antitone to mean monotonically non-increasing.}
  $f : \mathbb{R}^{|I|} \rightarrow \mathbb{R}$ and $g :\mathbb{R}^{|J|} \rightarrow \mathbb{R}$,
  where $f, g$ is either lower bounded or upper bounded, we have:
  \[ \Ex{f(X_i, i\in I) \cdot g(X_j, j\in J)} \leq \Ex{f(X_i, i \in I)} \cdot \Ex{g(X_j, j\in J)} \]
\end{definition}

We can view NA as generalizing \emph{independence}: a set of independent random
variables is NA because equality holds. NA also strengthens \emph{negative
covariance}, a simpler notion of negative dependence that requires $\Ex{
\prod_{i \in [n]} X_i} \leq  \prod_{i \in [n]} \Ex{X_i} $.

The survey paper by \citet{dubhashi-ranjan} explains several properties of NA
random variables useful for algorithm analysis. First, some standard theorems
about sums of independent random variables apply more generally to sums of NA
random variables. In particular, the widely-used \emph{Chernoff bound}, which
intuitively says that the sum of independent random variables is close to the
expected value of the sum with high probability, holds also for NA variables. In
addition, NA is preserved by some common operations on random variables. Thus,
we can easily prove that a set of random variables satisfies NA if they are
generated by applying NA-preserving operations to a few basic, building-block
random variables:

\begin{theorem}
  \label{thm:buildingblock}
  The random variables $\{X_i\}_i$ in the following cases are negatively
  associated:
  \begin{enumerate}
    \item Let $\{X_1, \dots, X_n\}$ be Bernoulli random variables such that
      $\sum X_i = 1$.
    \item Let $X_i$ be the $i$-th entry in the vector $X$, where $X$ is a
      uniformly random permutation of a finite, nonempty set $A$.
    \item Let $\{X_1, \dots, X_n\}$ be independent random variables.
  \end{enumerate}
\end{theorem}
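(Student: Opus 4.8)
The plan is to verify the defining inequality of \Cref{def:na} separately for each case, since the three have quite different structure. Throughout, fix disjoint $I, J \subseteq \{1, \dots, n\}$ and functions $f, g$ that are either both monotone or both antitone; writing $F = f(X_i, i \in I)$ and $G = g(X_j, j \in J)$, the goal is always $\mathbb{E}[F \cdot G] \le \mathbb{E}[F]\,\mathbb{E}[G]$.

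Case (3) is immediate, and is really the boundary case of the definition. Since $I$ and $J$ are disjoint and the $\{X_i\}$ are independent, the collections $(X_i)_{i \in I}$ and $(X_j)_{j \in J}$ are independent, hence so are the derived variables $F$ and $G$. Therefore $\mathbb{E}[FG] = \mathbb{E}[F]\,\mathbb{E}[G]$ and the inequality holds with equality, for arbitrary $f, g$ (monotonicity is not even needed).

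Case (1) I would prove by a direct moment computation, exploiting that $\sum_i X_i = 1$ forces exactly one index to be ``hot''. Set $p_k = \Pr[X_k = 1]$, so $\sum_k p_k = 1$, and write $P_I = \sum_{i\in I} p_i$ and $P_J = \sum_{j \in J} p_j$. Because at most one index is hot and $I, J$ are disjoint, the restrictions of the one-hot vector to $I$ and to $J$ cannot both be nonzero at once, so $F$ takes only the all-zero value $f_0 := f(\mathbf 0)$ or a single one-hot value $f_i := f(e_i)$ (and similarly $g_0, g_j$ for $G$); the product $FG$ is thus supported on the three mutually exclusive events ``hot index in $I$'', ``in $J$'', or ``outside $I \cup J$''. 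Expanding $\mathbb{E}[F]\mathbb{E}[G] - \mathbb{E}[FG]$ and collecting terms, everything cancels except one product:
\[
  \mathbb{E}[F]\,\mathbb{E}[G] - \mathbb{E}[FG]
  = \Big(\sum_{i \in I}(f_i - f_0)\,p_i\Big)\Big(\sum_{j \in J}(g_j - g_0)\,p_j\Big).
\]
Monotonicity gives $f_i \ge f_0$ (since $e_i \ge \mathbf 0$) and $g_j \ge g_0$, so both factors are nonnegative; if instead $f, g$ are both antitone, both factors are nonpositive. Either way the product is $\ge 0$. The only real content is the algebraic identity above, which I would verify by direct expansion.

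Case (2), the uniform random permutation, is the main obstacle, and here I would argue by induction on $n = |A|$. The base case $n \le 2$ is direct: for two distinct values, $X_2$ is a strictly decreasing function of $X_1$, so a monotone $f(X_1)$ and a monotone $g(X_2)$ become oppositely monotone functions of the single variable $X_1$, and Chebyshev's sum inequality gives the required negative correlation. For the inductive step I would condition on the value $X_k$ at one position, so the remaining coordinates form a uniform permutation of the smaller set $A \setminus \{X_k\}$; applying the induction hypothesis to that permutation yields $\mathbb{E}[FG \mid X_k] \le \mathbb{E}[F \mid X_k]\,\mathbb{E}[G \mid X_k]$ (this works whether $k$ lies in $I$, in $J$, or outside, after absorbing the conditioned coordinate into $f$ or $g$). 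The difficulty is that the law of total covariance then leaves a residual cross term $\operatorname{Cov}\big(\mathbb{E}[F \mid X_k], \mathbb{E}[G \mid X_k]\big)$, and a naive analysis gives it the \emph{wrong} sign: removing a value from the common pool shifts the remaining values of both blocks in the same direction, so the two conditional means tend to be \emph{positively} correlated. Hence the conditional bound cannot simply be combined termwise, and the proof needs the more careful rearrangement argument of \citet{joag1983negative}, which controls this residual term directly rather than through the total-covariance split. I expect reproducing or adapting that argument to be the delicate part.
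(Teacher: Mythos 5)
First, a point of comparison: the paper does not prove this theorem. It is stated as background in Section 2.1 and imported wholesale from \citet{joag1983negative} and \citet{dubhashi-ranjan}, so there is no in-paper proof to measure your attempt against; your proposal has to stand on its own. On that basis, cases (1) and (3) are correct and complete. Case (3) is exactly the observation that $F$ and $G$ are functions of disjoint sub-collections of mutually independent variables, hence independent, so the defining inequality holds with equality. For case (1), your algebraic identity checks out: writing $A=\sum_{i\in I}p_i f_i$ and $B=\sum_{j\in J}p_j g_j$, direct expansion gives $\mathbb{E}[F]\,\mathbb{E}[G]-\mathbb{E}[FG]=(A-P_I f_0)(B-P_J g_0)=\bigl(\sum_{i\in I}p_i(f_i-f_0)\bigr)\bigl(\sum_{j\in J}p_j(g_j-g_0)\bigr)$, and since $e_i\ge\mathbf{0}$ pointwise, monotonicity forces both factors to have the same sign. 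This is essentially the standard ``zero--one lemma'' proof.

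Case (2) is a genuine gap, and to your credit you say so yourself. Your proposed induction conditions on a single coordinate $X_k$ and tries to recombine via the law of total covariance, but as you correctly diagnose, the residual term $\operatorname{Cov}\bigl(\mathbb{E}[F\mid X_k],\mathbb{E}[G\mid X_k]\bigr)$ has the wrong sign in general: removing one value from the common pool shifts both conditional means in the same direction, and two same-direction monotone functions of a single real variable are nonnegatively correlated. Diagnosing why the naive argument fails is not the same as supplying one that works, and the permutation case is the only part of the theorem with real content --- ``reproduce the rearrangement argument of \citet{joag1983negative}'' is a citation, not a proof. The known arguments do not route through this covariance split at all; roughly, they condition on \emph{which position receives an extreme value of $A$} (those indicators form a one-hot vector, hence are NA by your own case (1)), apply induction to the uniform permutation of the smaller set, and glue the two layers using monotonicity of the resulting conditional expectations. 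Either carry such an argument out in full, or do what the paper does and state case (2) explicitly as quoted from \citet{joag1983negative}.
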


In particular, the first case of this theorem implies that if we draw a
length-$n$ \emph{one-hot vector}, i.e., a vector that has one entry being one
and all remaining entries being zero, uniformly at random, then the entries of
the vector satisfy negative association.

The following theorem states two key closure properties of NA random variables.
\begin{theorem}
  \label{thm:closure}
  The set $S$ of random variables in the following cases are negatively
  associated:
  \begin{enumerate}
  \item Let $T$ be negatively associated, and let $S$ be a non-empty subset of $T$.
  \item Let $T$ and $U$ be two sets of negatively associated random variables such that every $X \in T$ and $Y \in U$ is independent of each other. Let $S = T \cup U$.
  \item Let $\{X_1, \dots X_n\}$ be negatively-associated, and
			$I_1, \dots, I_m$ be a partition of the set $\{1, \dots, n\}$.
			For each $1 \leq j \leq m$, let $f_j : \mathbb{R}^{|I_j|} \rightarrow \mathbb{R}$ be monotone.
			Let $S = \{ f_1(X_k, k \in I_1), \dots, f_m(X_k, k \in I_m) \}$.
  \end{enumerate}
\end{theorem}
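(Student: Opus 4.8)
The plan is to verify the defining inequality of \Cref{def:na} directly in each of the three cases. A preliminary simplification applies throughout: it suffices to treat the case where the two test functions are both monotone, since if $f,g$ are both antitone then $-f,-g$ are both monotone, $\EE[fg]=\EE[(-f)(-g)]$ and $\EE[f]\EE[g]=\EE[-f]\EE[-g]$, so the antitone inequality is equivalent to the monotone one (one-sided boundedness is preserved under negation, merely flipping the side). Cases (1) and (3) then reduce immediately to the negative association of the original family, while case (2) additionally uses independence.

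For case (1), write $T=\{X_1,\dots,X_n\}$ and let $S$ consist of the variables indexed by $K\subseteq\{1,\dots,n\}$. Any disjoint index sets used to test negative association of $S$ correspond to disjoint subsets of $\{1,\dots,n\}$, and any monotone, one-sided bounded $f,g$ on the $S$-variables are monotone, one-sided bounded functions on the corresponding $T$-variables; hence the required inequality is an instance of negative association of $T$. For case (3), set $Z_j=f_j(X_k,k\in I_j)$, let $A,B\subseteq\{1,\dots,m\}$ be disjoint, and take monotone, one-sided bounded test functions $\phi,\psi$ on the $Z$'s. Precomposing with the relevant $f_j$'s yields $\tilde\phi$, a function of the $X$-variables indexed by $\bigcup_{a\in A}I_a$, and $\tilde\psi$, a function of those indexed by $\bigcup_{b\in B}I_b$. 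Because the $I_j$ partition $\{1,\dots,n\}$ and $A\cap B=\emptyset$, these two index sets are disjoint; and because a composition of monotone functions is monotone (with one-sided boundedness inherited from $\phi,\psi$), both $\tilde\phi,\tilde\psi$ are admissible. Since $\EE[\phi(Z_a,a\in A)\,\psi(Z_b,b\in B)]=\EE[\tilde\phi\,\tilde\psi]$, negative association of $\{X_i\}$ gives the bound.

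Case (2) is the substantive one. Let $S=T\cup U$, take disjoint index sets $I,J$ with monotone test functions $f,g$, and split $I=I_T\sqcup I_U$ and $J=J_T\sqcup J_U$ according to whether an index lies in $T$ or $U$. Using the independence of the block $T$ from the block $U$, I would write the expectation as an iterated one, integrating out the $T$-variables first for each fixed value $y$ of the $U$-variables. For fixed $y$, viewing $f$ and $g$ as functions of the $T$-coordinates (indexed by the disjoint sets $I_T,J_T$), negative association of $T$ bounds $\EE_X[fg]$ pointwise in $y$ by $F(y)\,G(y)$, where $F(y)=\EE_X[f]$ depends only on $y_{I_U}$ and $G(y)=\EE_X[g]$ only on $y_{J_U}$. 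Since expectation preserves monotonicity and one-sided boundedness, $F,G$ are monotone, one-sided bounded test functions on the disjoint $U$-index sets $I_U,J_U$; applying negative association of $U$ to $F,G$ and then using block independence to identify $\EE_Y[F]=\EE[f]$ and $\EE_Y[G]=\EE[g]$ completes the chain of inequalities.

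The main obstacle is case (2): one must ensure that the hypothesis supplies genuine \emph{block} independence of $T$ and $U$ (so that the iterated-expectation/Fubini step is valid, since pairwise independence across the blocks would not suffice), and that the partial-expectation functions $F$ and $G$ obtained after integrating out $T$ really are monotone and one-sided bounded, so that negative association of $U$ may legitimately be applied to them. The preservation of monotonicity under expectation and the careful bookkeeping of the one-sided boundedness conditions from \Cref{def:na} are the technical crux; cases (1) and (3) are essentially immediate reductions by comparison.
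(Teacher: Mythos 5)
Your proof is correct, and it is essentially the standard argument. Note that the paper itself does not prove \Cref{thm:closure}: it is imported as a known result from the negative-association literature (\citet{joag1983negative}, \citet{dubhashi-ranjan}). What the paper does prove are the \CPNA{} generalizations in the appendix, and your strategy coincides with theirs: your case (2) is exactly the iterated-expectation argument of \Cref{groupNA} and \Cref{lemma:indep_closure} (integrate out one block, observe that the partial expectations remain monotone and one-side bounded in the remaining block, apply NA twice, recombine via independence), and your case (3) is the precomposition argument of \Cref{lemma:monotoneclosure2CPNA}. Two points in your write-up deserve emphasis. First, your observation that case (2) requires \emph{block} independence of $T$ and $U$ rather than pairwise independence of individual variables across the blocks is a genuine and important catch: the theorem's phrasing is loose on this point, and pairwise independence really is insufficient (e.g.\ $T=\{X_1,X_2\}$ independent fair bits and $U=\{X_1\oplus X_2\}$ gives a counterexample with $f=X_1\vee X_2$, $g=Y$). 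The intended hypothesis, as in the cited literature, is that the joint vector of $T$ is independent of the joint vector of $U$, which is what your Fubini step uses. Second, the only loose end is integrability: for one-side-bounded test functions the intermediate expectations could a priori be infinite; the paper sidesteps this in its \CPNA{} proofs by restricting to non-negative functions and handles the general signed case by the shifting-and-truncation argument in the proof of \Cref{theorem:PNAeqNA}, which is the ``bookkeeping'' you correctly identify as the technical crux.
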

The first case shows that negative association is preserved if we discard random
variables, while the second case allows us to join two independent sets of
negatively associated random variables to form a larger negatively associated
set. Finally, the third case guarantees that negative association is preserved
under applying monotone maps on disjoint subsets of variables.

\subsection{Example: Bloom filters}

We demonstrate how NA and its closure properties can be used to analyze Bloom
filters. A Bloom filter is a space-efficient probabilistic data structure for
storing a set of items from a universe $U$. An $N$-bit Bloom filter consists of
a length-$N$ array $bloom$ holding zero-one entries. We assume there is a
family $\mathcal{A}$ of hash functions mapping $U$ to $\{0, \dots, N-1\}$ such
that for any $x \in U$ and any bucket $k$, $\mathbb{P}_{f \in \mathcal{A}}(f(x)
= k) = 1/N $. Let $l_1, \dots, l_H$ be a collection of hash functions drawn from
$\mathcal{A}$. We assume the hash functions are independent, meaning the
collection of variables $\{l_i(x) \mid x \in U, i \in \{1, \dots, H\}\}$ are
mutually independent. To add an item $x \in U$ to the filter, we compute
$l_1(x), \dots, l_H(x)$ to get $H$ positions in the bit array and then set the
bits at each of these positions to $1$. To check if an item $y$ is in the
filter, we check whether the bits at positions $l_1(y), \dots, l_H(y)$ in
$bloom$ are all $1$. If they are, the item is said to be in the filter, but if
any is $0$, then the item is not in the filter. This membership test may suffer
from \emph{false positives}, i.e., it may show that an item $y$ is in the filter
even when $y$ was never added to the filter. This can happen because with hash
collisions, other items added to the Bloom filter could set all the bits at
locations $l_1(y), \dots, l_H(y)$ to 1. A basic quantity of interest is the
\emph{false positive rate}: the probability that a Bloom filter reports a false
positive.

\newcommand{\fpevent}{\textsf{FP}}

Our goal is to bound an $N$-bit Bloom filter's false positive rate after $M$
distinct items are added, assuming that it uses $H$ independent hash functions.
Here, we briefly sketch a standard proof where negative association plays a key
role. Let $x$ be some data item not in the set, and let $\fpevent$ be the event
the Bloom filter returns a false positive on $x$. We split the analysis of the
probability of $\fpevent$ into two steps.

For the first step, we \emph{condition} on $\rho$, the fraction of bits in
$bloom$ that are set to 1 after all items have been added. With $\rho$ fixed to
some value, for each hash function $l_{\alpha}$ the probability that
$bloom[l_{\alpha}(x)] = 1$ is $\rho$.  For $x$ to be a false positive, we must have
$bloom[l_{\alpha}(x)] = 1$ for all $\alpha$.  Since $l_1, \dots, l_H$ are
independent hash functions, this occurs with probability $\rho^H$.

The next step is to show that with high probability, $\rho$ lies within a narrow
range around its expected value. Before showing how to prove this, let us first
see why such a bound is useful. Let $\mu$ be the expected value of $\rho$.
Suppose we have that for some small $\epsilon$ and $\delta$ that $\Pr[\rho \leq \mu
+ \epsilon] \geq 1 - \delta$.  Then, by the law of total probability we have:
\begin{align*}
\Pr[\fpevent]
&\leq \Pr[\fpevent \mid \rho \leq \mu + \epsilon] + \Pr[\rho > \mu + \epsilon] \\
&\leq (\mu + \epsilon)^H + \delta
\end{align*}
where the second line follows from the calculation of the probability
of $\fpevent$ when conditioned on $\rho$.

Now, we turn to the question of how to obtain a bound of the form
$\Pr[\rho \leq \mu + \epsilon] < 1 - \delta$.
As mentioned in \Cref{sec:intro}, a common (incorrect) analysis of $\rho$ assumes that
the entries of $bloom$ are independent, and then applies a Chernoff bound. However,
the entries in $bloom$, $\{bloom[\beta]\}_\beta$, are \emph{not} independent---what we
can actually prove is that are \emph{negatively associated}, which fortunately
still allows us to apply the Chernoff bound, as stated later in \Cref{thm:chernoff}.

\begin{wrapfigure}[13]{r}{0.3\textwidth}
  \begin{minipage}[c]{0.3\textwidth}
    \vspace{-3ex}
    \[
      \begin{array}{l}
        \Assn{bloom}{zero(N)}; \\
        \Assn{m}{0}; \\
        \DWhile{m < M}{} \\
        \quad \Assn{h}{0} \\
        \quad \DWhile{h < H}{} \\
        \quad\quad \Rand{bin}{oh([N])}; \\
        \quad\quad \Assn{upd}{bloom \mathop{||}  bin}; \\
        \quad\quad \Assn{bloom}{upd}; \\
        \quad\quad \Assn{h}{h+1}; \\
        \quad \Assn{m}{m + 1}
      \end{array}
    \]
  \end{minipage}
  \caption{Example: Bloom filter}
  \label{fig:ex:bloom-overview}
\end{wrapfigure}

To see that the $\{bloom[\beta]\}_\beta$ are NA, consider the program in
\Cref{fig:ex:bloom-overview}, which models the process of adding $M$ distinct
items to the Bloom filter. Because the $M$ items are distinct, we model the
hash functions as independently, randomly sampling hash values for each item as
they are added, a standard model used in the analysis of hashing data
structures~\citep{MitzenmacherUpfal}. That is, we encode the hashing step as
sampling a one-hot vector with the command $oh$ and storing it in the variable
$bin$, where the hot bit of the vector $bin$ represents the selected position.
To set the corresponding position in the filter to 1, we update $bloom$ to be $bloom
\mathop{||} bin$, the bitwise-or of the current array and the sampled one-hot
array. To show that $\{ bloom[\beta] \}_\beta$ are NA, we can reason using the
closure properties. Initially, $bloom$ is set to $zero(N)$. Any set of constant
random variables is independent, and hence negatively associated by
\Cref{thm:closure}. Next, when an item is added, the $bin$ array is NA by
\Cref{thm:buildingblock}. Because $bin$ is sampled independently of $bloom$, the set
$\{bloom[\beta]\}_\beta \cup \{bin[\beta]\}_\beta$ is NA. The bitwise-or operation
$\mathop{||}$ is monotone, so again by \Cref{thm:closure}, the array $upd$ is
negatively associated, thereby showing that $bloom$ is NA at the end of each
loop iteration.

\subsection{Representing negative association with separating conjunction}

Now that we have seen some properties of negative association and how they can
be used to analyze the Bloom filter, we give a high-level explanation of how
these ideas are formalized in \programlogic, a novel program logic that is a
core contribution of our work. As mentioned in \Cref{sec:intro}, an earlier
separation logic PSL has a separating conjunction $\sepand$ which is interpreted
as probabilistic independence.  That is, a program state satisfies $P \sepand Q$
if its randomized program variables can be split so that one subset satisfies
$P$, another satisfies $Q$, \emph{and the distributions of these two sets are
independent}. \programlogic augments PSL with a weaker separating
conjunction $\negand$ modeling negative association. The precise
definition of negative association needs to be modified to form a
proper model of bunched implications, but for now, one can informally think of $P
\negand Q$ as meaning the random variables can be split into two sets negatively
associated with each other that satisfy $P$ and $Q$ respectively.

The proof rules of \programlogic can be used to prove NA by applying closure
properties to building-block NA-distributions, much like in our proof sketch
above for the Bloom filter. For example, we can derive a rule that captures NA
of entries in a one-hot distribution:
\begin{mathpar}
			\inferrule*[]{}
			{ \vdash \hoare{\top}{\Rand{x}{oh(n)}}{\bigneg_{\beta = 0}^n\apDist{x[\beta]}}} , %
\end{mathpar}
where the assertion $\apDist{y}$ means that the program variable $y$ is
distributed according to some unspecified probability distribution, and
$\bigneg$ is an iterated version of the $\negand$ separating conjunction. Thus,
the post-condition here says that all of the entries of the $x$ vector are
negatively associated.

Meanwhile, since NA is closed under monotone maps, we obtain a form of
separation logic's frame rule for $\negand$:
\begin{mathpar}
  \inferrule*[]
  {
    \vdash \hoare{\phi}{c}{\apEq{y}{f(X)}} \\
    \textrm{$f$ monotone} \\
    \textrm{(side conditions omitted)}
  }
  { \vdash \hoare{ \phi \negand \eta}{c}{\apIn{y}  \negand \eta} },
\end{mathpar}
where $X$ is a set of variables contained in any program states satisfying
$\phi$, $f$ is a monotone function mapping $X$ to a variable $y$, and $\eta$ is
an assertion on some other random variables that are negatively associated with
those satisfying $\phi$.
(We describe a complete version of this rule with all side conditions later, in
\Cref{sec:programlogic}.) Using this rule, we can show monotone vector operations
like $\mathop{||}$ in the Bloom filter example preserve negative associativity.
For example, we can derive:
\begin{mathpar}
  \inferrule*[]{}
  { \vdash \hoare{ \bigneg_{\beta = 0}^n \apDist{x[\beta]} \negand \bigneg_{\gamma = 0}^n \apDist{y[\gamma]}}{\Assn{z}{x \mathop{||} y}}{\bigneg_{\beta = 0}^n \apDist{z[\beta]}} },
\end{mathpar}
which says that if the union of entries in $x$ and entries in $y$ satisfy NA,
then entries in $z = x \mathop{||} y$ also satisfy NA.

We now sketch how to formalize the proof that $\{bloom[\beta]\}_\beta$ in the
Bloom filter are NA; we defer the rest of the proof of this example to
\Cref{sec:ex}.  The basic idea is to establish $\bigneg_{\beta = 0}^N
{\apDist{bloom[\beta]}}$ as a loop invariant. When an item is added in the loop,
we combine the frame rule with the one-hot sampling vector rule to get that the
$bin$ vector is negatively associated, thus showing: $\bigneg_{\beta = 0}^N
{\apDist{bloom[\beta]}} \negand \bigneg_{\gamma=0}^N {\apDist{bin[\gamma]}}$.
Applying the rule for $\mathop{||}$ above, we obtain that $upd$ is negatively
associated, $\bigneg_{\beta = 0}^N \apDist{upd[\beta]}$. At that point $upd$ is
assigned to $bloom$, restoring the loop invariant.

\section{The logic \LOGIC}
\label{sec:mbi}

Having seen the role of negative association in analyzing randomized algorithms
and how its properties correspond to rules in \programlogic, we now show how
negative association can be interpreted by separating conjunction. As a first
step, we extend the logic of bunched implications (BI), the assertion logic
underlying separation logic, to support multiple forms of separating conjunction
simultaneously, related by a pre-order. Our motivation to design this logic is
to reason about independence and negative association in one logic and capture
that independence implies negative association, but the logic is more general
and accommodates other potentially interesting models.

\subsection{The syntax and proof rules}
Let $\mathcal{AP}$ be a set of atomic propositions, and $(\idxset, \leq)$ be a finite pre-order.
The formula in the logic of $\idxset$-bunched implications ($\idxset$-BI) has the following grammar:
\begin{align*}
	P, Q &::= p \in \mathcal{AP}
	\mid \top
	\mid \sepid_{\idx \in \idxset}
	\mid \bot
	\mid P \land Q
	\mid P \lor Q
	\mid P \rightarrow Q
	\mid P \sepand_{\idx \in \idxset} Q
	\mid P \sepimp_{\idx \in \idxset} Q.
\end{align*}
\LOGIC associates each element of $\idxset$ with a separating conjunction
$\sepand_\idx$, a corresponding multiplicative identity $\sepid_{\idx}$ and a
separating implication $\sepimp_{\idx}$.  The proof system for
\idxset-BI is based on the proof system for BI, with indexed copy of rules for
each separation, and in addition has $\sepand$-\textsc{Weakening} rules. We
present the full Hilbert-style proof system in
\iffull
\Cref{sec:app:logic};
\else
the extended version;
\fi
most of the rules are the
same as in the proof system for BI. Here, we only comment on the new rules.

The $\sepand$-\textsc{Weakening} rule says that if $m_1 \leq m_2$, then the
assertion $P \sepand_{\idx_{1}} Q$ implies $P \sepand_{\idx_{2}} Q$.
\begin{mathpar}
   \inferrule* [right= $\sepand$-\textsc{Weakening}]
 	{\idx_1 \leq \idx_2}
 	{P \sepand_{\idx_1} Q \vdash P \sepand_{\idx_2} Q}
\end{mathpar}
We can derive analogous weakening rules for separating implications and
multiplicative identities, in the reverse direction.

\begin{restatable}{lemma}{InclusionOthers}
	The following rules are derivable in $\idxset$-BI:
\begin{mathpar}
        \inferrule* [right= $\sepimp$-\textsc{Weakening}]
 	{\idx_1 \leq \idx_2}
 	{P \sepimp_{\idx_2} Q \vdash P \sepimp_{\idx_1} Q}

        \inferrule* [right= \textsc{UnitWeakening}]
 	{\idx_1 \leq \idx_2}
 	{\sepid_{\idx_2} \vdash \sepid_{\idx_1}}
\end{mathpar}
\end{restatable}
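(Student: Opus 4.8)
The plan is to derive both rules purely from the indexed BI structure already assumed: the residuation (adjunction) between each $\sepand_\idx$ and $\sepimp_\idx$, the unit laws relating $\sepand_\idx$ to $\sepid_{\idx}$, transitivity of $\vdash$ (cut), and crucially the given $\sepand$-\textsc{Weakening} rule. Recall that residuation provides, for each $\idx$, the equivalence between $P \sepand_{\idx} R \vdash Q$ and $R \vdash P \sepimp_{\idx} Q$; in particular the counit $P \sepand_{\idx} (P \sepimp_{\idx} Q) \vdash Q$ is derivable. The unit laws give that $\sepid_{\idx} \sepand_{\idx} P$ and $P \sepand_{\idx} \sepid_{\idx}$ are each inter-derivable with $P$. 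Both target rules are then short chains built from these pieces, with $\sepand$-\textsc{Weakening} providing the one step that crosses between the two indices.

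For $\sepimp$-\textsc{Weakening}, assume $\idx_1 \leq \idx_2$ and aim at $P \sepimp_{\idx_2} Q \vdash P \sepimp_{\idx_1} Q$. By residuation for $\idx_1$, instantiated with $R := P \sepimp_{\idx_2} Q$, it suffices to prove $P \sepand_{\idx_1} (P \sepimp_{\idx_2} Q) \vdash Q$. Here I would apply $\sepand$-\textsc{Weakening} to replace $\sepand_{\idx_1}$ by $\sepand_{\idx_2}$, obtaining $P \sepand_{\idx_1} (P \sepimp_{\idx_2} Q) \vdash P \sepand_{\idx_2} (P \sepimp_{\idx_2} Q)$, and then cut against the $\idx_2$-counit $P \sepand_{\idx_2} (P \sepimp_{\idx_2} Q) \vdash Q$. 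This yields the required entailment, and hence the rule; note the direction of the conclusion reverses precisely because $\sepimp$ occupies the left (contravariant) position of the adjunction.

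For \textsc{UnitWeakening}, the idea is to route $\sepid_{\idx_2}$ through a mixed expression that combines the two units using the weaker conjunction. Concretely, since $\sepid_{\idx_1}$ is a unit for $\sepand_{\idx_1}$, we have $\sepid_{\idx_2} \vdash \sepid_{\idx_2} \sepand_{\idx_1} \sepid_{\idx_1}$; applying $\sepand$-\textsc{Weakening} gives $\sepid_{\idx_2} \sepand_{\idx_1} \sepid_{\idx_1} \vdash \sepid_{\idx_2} \sepand_{\idx_2} \sepid_{\idx_1}$; and finally, since $\sepid_{\idx_2}$ is a unit for $\sepand_{\idx_2}$, we get $\sepid_{\idx_2} \sepand_{\idx_2} \sepid_{\idx_1} \vdash \sepid_{\idx_1}$. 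Cutting these three steps together yields $\sepid_{\idx_2} \vdash \sepid_{\idx_1}$. The subtle point is that a different unit law fires at each end of the chain—$\sepid_{\idx_1}$ cancels on the right, $\sepid_{\idx_2}$ on the left—so the single $\sepand$-\textsc{Weakening} step must be sandwiched exactly between them, with commutativity used as needed to place each unit on the correct side.

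I expect the genuinely non-obvious part to be \textsc{UnitWeakening}: unlike the $\sepimp$ case there is no adjunction to invoke directly, so one must discover the detour through $\sepid_{\idx_2} \sepand_{\idx_1} \sepid_{\idx_1}$ and recognize that the two unit laws can be played off against each other across a single weakening step. The $\sepimp$ case is comparatively routine once residuation is set up; the only care needed there is getting the orientation of the entailment right, since monotonicity of $\sepand$-\textsc{Weakening} becomes anti-monotonicity on the $\sepimp$ side. Everything else is standard BI bookkeeping via repeated cut.
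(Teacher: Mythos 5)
Your proof is correct and follows essentially the same route as the paper's: for $\sepimp$-\textsc{Weakening}, reduce by residuation to $(P \sepimp_{\idx_2} Q) \sepand_{\idx_1} P \vdash Q$ and cross indices with $\sepand$-\textsc{Weakening} before applying the counit; for \textsc{UnitWeakening}, the three-step chain $\sepid_{\idx_2} \vdash \sepid_{\idx_2} \sepand_{\idx_1} \sepid_{\idx_1} \vdash \sepid_{\idx_2} \sepand_{\idx_2} \sepid_{\idx_1} \vdash \sepid_{\idx_1}$ is exactly the paper's derivation. No gaps.
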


\subsection{Semantics}
As is standard with bunched logics~\citep{POhY04}, we give a Kripke style
semantics to \LOGIC. We will define a structure called \LOGIC~{\em frame}, and
then define \LOGIC~{\em models} and the satisfaction rules on \LOGIC models.

An \LOGIC frame is a collection of BI frames satisfying some \emph{frame
conditions}. While BI frames are often presented as partial, pre-ordered
commutative monoids over states, we need the more general presentation due
to~\citet{docherty:thesis}, where the binary operation returns a \emph{set} of
states, instead of at most one state. Such binary operations can be
deterministic (returning a set of at most one element) or non-deterministic. The
admission of non-deterministic models was originally motivated by the
metatheory; somewhat surprisingly, it is also a crucial ingredient in defining
the negative association model we will see in~\Cref{sec:model}.
\begin{definition}[BI Frame] \label{def:bi-frame}
  A \emph{(Down-Closed) BI frame} is a structure $\mathcal{X} = (X, \sqsubseteq,
  \oplus, E)$ such that $\sqsubseteq$ is a pre-order on the set of states $X$,
		$\oplus\colon X^2 \rightarrow \mathcal{P}(X)$ is a binary
  operation,  and $E \subseteq X$, satisfying following frame conditions (with outermost universal
  quantification omitted for readability):
	\[\begin{array}{ll}
		\text{(Down-Closed)} &  z \in x \oplus y \land x' \sqsubseteq x \land y' \sqsubseteq y \rightarrow
		\exists z' (z' \sqsubseteq z \land z' \in x' \oplus y'); \\
		\text{(Commutativity)} & z \in x \oplus y \rightarrow z \in y \oplus x; \\
		\text{(Associativity)} & w \in t \oplus z \land t \in x \oplus y \rightarrow
		\exists s(s \in y \oplus z \land w \in x \oplus s); \\
		\text{(Unit Existence)} & \exists e \in E(x \in e \oplus x); \\
		\text{(Unit Coherence)} & e \in E \land x \in y \oplus e \rightarrow y \sqsubseteq x; \\
		\text{(Unit Closure)} & e \in E \land e \sqsubseteq e' \rightarrow e' \in E.
	\end{array} \]
\end{definition}
Since all frames we consider in this paper will be Down-Closed, we often
abbreviate Down-Closed BI frame as just BI frame.
%
%
The (Commutativity), (Associativity) and (Unit Existence) conditions capture the
properties of commutative monoids, and the (Down-Closed) condition ensures that
the binary operation is coherent with the pre-order. Properties (Associativity)
and (Commutativity) are generalizations of the usual algebraic properties to
accommodate the non-determinism. The three Unit frame conditions
ensure that the set $E$ behaves like a set of units, and satisfies various
closure properties under the binary operation and the pre-order.

We can now define \LOGIC frame to be a collection of BI frames sharing the same
set of states and pre-order, with ordered binary operations:
\begin{definition}[$\idxset$-BI Frame] \label{def:mbi-frame}
  An $\idxset$-BI frame is a structure $\mathcal{X} = (X, \sqsubseteq, \oplus_{\idx \in \idxset}, E_{\idx})$ such that
  for each $\idx$, $(X, \sqsubseteq, \oplus_\idx, E_\idx)$ is a BI frame, and
		there is a preorder $\leq$ on $M$ satisfying:
	\[\begin{array}{ll}
		\text{(Operation Inclusion)} & \idx_1 \leq \idx_2 \rightarrow x \oplus_{\idx_1} y \subseteq x \oplus_{\idx_2} y . \\
	\end{array} \]
\end{definition}
The (Operation Inclusion) condition together with the frame conditions of BI also imply an
inclusion on unit sets:

\begin{lemma}
  Let $\mathcal{X}$ be an $\idxset$-BI frame. If $\idx_1 \leq \idx_2$ then $E_{\idx_2} \subseteq E_{\idx_1}$.
\end{lemma}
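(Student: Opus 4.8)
The plan is to prove the inclusion pointwise: I fix an arbitrary $e \in E_{m_2}$ and show $e \in E_{m_1}$. The strategy is to use (Unit Existence) in the $m_1$-frame to manufacture a candidate $m_1$-unit lying \emph{below} $e$, and then use (Unit Closure) in the $m_1$-frame to promote that candidate up to $e$ itself. The single bridge between the two separating operations $\oplus_{m_1}$ and $\oplus_{m_2}$ is the (Operation Inclusion) condition, which is exactly what the hypothesis $m_1 \leq m_2$ unlocks.

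Concretely, the steps are as follows. First, applying the (Unit Existence) condition of the BI frame $(X, \sqsubseteq, \oplus_{m_1}, E_{m_1})$ to the state $e$ yields some $e_1 \in E_{m_1}$ with $e \in e_1 \oplus_{m_1} e$. Since $m_1 \leq m_2$, the (Operation Inclusion) condition gives $e_1 \oplus_{m_1} e \subseteq e_1 \oplus_{m_2} e$, so that $e \in e_1 \oplus_{m_2} e$. Next, I read this membership through the (Unit Coherence) condition of the $m_2$-frame, taking the distinguished unit to be $e$ itself, which is legitimate precisely because $e \in E_{m_2}$: matching $e \in e_1 \oplus_{m_2} e$ against the coherence pattern ``$x \in y \oplus_{m_2} u$ with $u$ a unit'' via $x = e$, $y = e_1$, $u = e$, I conclude $e_1 \sqsubseteq e$. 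Finally, from $e_1 \in E_{m_1}$ together with $e_1 \sqsubseteq e$, the (Unit Closure) condition of the $m_1$-frame gives $e \in E_{m_1}$. As $e$ was arbitrary, $E_{m_2} \subseteq E_{m_1}$.

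The only delicate point, and the one I would check most carefully, is the bookkeeping in the application of (Unit Coherence): it must be applied in the $m_2$-frame, and $e$ must play two roles at once, namely the left-hand state $x$ and the unit $u$. This dual role is sound only because $e \in E_{m_2}$ (and not merely $E_{m_1}$), so the direction of the hypothesis $m_1 \leq m_2$ is used in two places: once to transport membership from $\oplus_{m_1}$ to $\oplus_{m_2}$, and implicitly to justify that $e$ is a legitimate $m_2$-unit in the coherence step. Everything else is a routine chaining of frame conditions; in particular, neither (Associativity) nor (Down-Closed) appears to be needed.
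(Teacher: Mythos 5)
Your proof is correct and follows exactly the same route as the paper's: (Unit Existence) in the $\idx_1$-frame, (Operation Inclusion) to move to $\oplus_{\idx_2}$, (Unit Coherence) in the $\idx_2$-frame to get $e_1 \sqsubseteq e$, and (Unit Closure) in the $\idx_1$-frame to conclude. Your careful remark about $e$ playing the dual role of state and $\idx_2$-unit in the coherence step is exactly the point the paper's proof relies on implicitly.
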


\begin{proof}
	Let $e_2 \in E_{\idx_2}$. By (Unit Existence), there exists $e_1 \in
	E_{\idx_1}$ such that $e_2 \in e_1 \oplus_{\idx_1} e_2$. By (Operation
	Inclusion), $e_2 \in e_1 \oplus_{\idx_2} e_2$, so (Unit Coherence) implies
	that $e_1 \sqsubseteq e_2$, and then (Unit Closure) implies $e_2 \in
	E_{\idx_1}$.  So $E_{\idx_2} \subseteq E_{\idx_1}$.
\end{proof}

To obtain a BI model over a given BI frame, we must provide a \emph{valuation},
which defines which atomic propositions hold at each states in the frame. For
the soundness of the proof system, it is important that the valuation is
\emph{persistent}: any formula true at a state remains true at any larger state.
Formally, we define \LOGIC models as follows.
\begin{definition}[Valuation and model]
A persistent
valuation is a map $\valuation : \mathcal{AP} \rightarrow \mathcal{P}(X)$ such
that, for all $P \in \mathcal{AP}$, if $x \in \valuation(P)$ and $x \sqsubseteq y$ then $y \in
\valuation(P)$.
An \LOGIC model  $(\mathcal{X}, \valuation)$ is an \LOGIC frame
$\mathcal{X} = (X, \sqsubseteq, \oplus_{\idx},
E_{\idx})$ associated with a persistent valuation $\valuation$ on it.
\end{definition}

Next, we define which \LOGIC formula are true at a state in a \LOGIC model.

\begin{definition}
On model $(\mathcal{X}, \valuation)$,
we define the satisfaction relation $\models_\valuation$ between states in $\mathcal{X}$ and \LOGIC
formula: for $x \in \mathcal{X}$
\begin{center}
\begin{tabular}{@{}l@{ }l@{} l@{}}
  $x \models_\valuation p$ &\quad iff \quad & $x \in \valuation(p)$ \\
  $x \models_\valuation \top$ &\quad iff \quad & $\TRUE$ \\
  $x \models_\valuation \sepid_{\idx}$ &\quad iff \quad & $x \in E_{\idx} $ \\
  $x \models_\valuation \bot$ &\quad iff \quad & \FALSE $ $ \\
  $x \models_\valuation P \land Q$ &\quad iff \quad & $x \models_\valuation P $ and $ x \models_\valuation Q$ \\
  $x \models_\valuation P \lor Q$ &\quad iff \quad & $x \models_\valuation P$ or $ x \models_\valuation Q$ \\
  $x \models_\valuation P \rightarrow Q$ &\quad iff \quad & for all $y$ such that $x \sqsubseteq y $, if $y \models_\valuation P$ then $y \models_\valuation Q $ \\
  $x \models_\valuation P \sepand_{\idx} Q$ &\quad iff \quad & there exist $x'$, $y$, and $z$ with $x' \sqsubseteq x$ and $x' \in y \oplus_{\idx} z$ such that $y \models_\valuation P$ and $z \models_\valuation Q$\\
  $x \models_\valuation P \sepimp_{\idx} Q$ &\quad iff \quad & for all $y$ and $z$ such that $z \in x \oplus_{\idx} y$, if $y \models _\valuation P$ then $z \models_\valuation Q$ \\
\end{tabular}
\end{center}
\end{definition}

We say that a formula $P$ is \emph{valid in the model} $(\mathcal{X}, \valuation)$, written as $\mathcal{X} \models_{\valuation} P$, iff $x \models_{\valuation} P$
for all $x \in \mathcal{X}$.
We also say that $P$ is \emph{valid} if and only if $P$ is valid in all models and write that as $\models P$. Finally, we write $P \models Q$ if and only if for any model $(\mathcal{X}, \valuation)$,
$\mathcal{X} \models_{\valuation} P$ implies $\mathcal{X} \models_{\valuation} Q$.

We prove the following theorem in
\iffull
\Cref{sec:app:completeness}.
\else
the extended version.
\fi
\begin{theorem}
  Let $P$ and $Q$ be any two \LOGIC formulas. Then $P \models Q$ iff $P \vdash Q$.
\end{theorem}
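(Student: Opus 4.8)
The plan is to prove the two implications separately. \emph{Soundness} ($P \vdash Q \Rightarrow P \models Q$) I would prove by induction on the derivation, checking that each proof rule preserves validity in every \LOGIC model. All rules carried over from BI are discharged exactly as in the standard Kripke soundness argument for BI, with persistence of $\valuation$ handling the $\rightarrow$ and $\sepimp$ clauses. The only genuinely new cases are $\sepand$-\textsc{Weakening} and its two derived companions. For $\sepand$-\textsc{Weakening}, given $\idx_1 \leq \idx_2$ and $x \models_\valuation P \sepand_{\idx_1} Q$, the semantics yields witnesses $x' \sqsubseteq x$ and $x' \in y \oplus_{\idx_1} z$ with $y \models_\valuation P$ and $z \models_\valuation Q$; the frame condition (Operation Inclusion) gives $y \oplus_{\idx_1} z \subseteq y \oplus_{\idx_2} z$, so the very same witnesses show $x \models_\valuation P \sepand_{\idx_2} Q$. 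The unit and magic-wand weakenings are dual, using (Operation Inclusion) together with the inclusion $E_{\idx_2} \subseteq E_{\idx_1}$ already established above.

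For \emph{completeness} I would follow Docherty's duality-theoretic method, extended to handle the family of indexed separations. The first step is to form the Lindenbaum--Tarski algebra of \LOGIC: formulas modulo interprovability, carrying a Heyting algebra from the additives and, for each $\idx \in \idxset$, a commutative residuated-monoid structure $(\sepand_\idx, \sepid_\idx, \sepimp_\idx)$, with the whole family monotone in $\idx$ --- the algebraic content of the weakening rules being exactly $a \sepand_{\idx_1} b \leq a \sepand_{\idx_2} b$ whenever $\idx_1 \leq \idx_2$. This defines the class of ``$\idxset$-BI algebras'' and makes the proof system sound and complete with respect to this algebraic semantics by construction.

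The second step is the representation theorem turning such an algebra into an \LOGIC frame. I would take the states to be the prime filters of the algebra, ordered by inclusion; for each $\idx$ define $G \oplus_\idx H$ relationally from the products $\{a \sepand_\idx b : a \in G,\ b \in H\}$, using primeness to obtain the non-deterministic \emph{set} of resulting filters, and let $E_\idx$ collect the prime filters containing $\sepid_\idx$. One then checks that each $(X, \subseteq, \oplus_\idx, E_\idx)$ is a Down-Closed BI frame and that monotonicity of the operations forces (Operation Inclusion), so the whole structure is an \LOGIC frame, with the persistent valuation $\valuation(p) = \{F : p \in F\}$. The crux is the Truth Lemma, $F \models_\valuation R \iff R \in F$, proved by induction on $R$. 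Completeness then follows: if $P \not\vdash Q$, a Lindenbaum-style extension lemma yields a prime filter containing $P$ but not $Q$; since the stated entailment is \emph{global}, I would apply this construction to the theory extended by $P$ as an axiom (equivalently, quotient the algebra by the filter generated by $P$), obtaining a model in which $P$ is valid at every state while some prime filter omits $Q$, which refutes $P \models Q$.

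The step I expect to be the main obstacle is this representation theorem, specifically verifying the BI frame conditions on prime filters in the \emph{relational}, non-deterministic setting. (Associativity) and the unit conditions are the delicate ones: unlike the equational reasoning available for partial-monoid models, here they demand prime-filter extension arguments in which the residual $\sepimp_\idx$ and primeness are essential for producing the required witness filters. Coordinating this across all indices $\idx$ is the genuinely new work beyond Docherty's single-separation construction, but because every $\oplus_\idx$ is read off the same algebra of filters, (Operation Inclusion) reduces immediately to the algebraic monotonicity of $\sepand_\idx$ in $\idx$, so the multi-index bookkeeping should add little difficulty once the single-index representation is in place.
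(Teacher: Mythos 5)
Your overall strategy coincides with the paper's on both directions. Soundness is by induction on derivations, with (Operation Inclusion) discharging the new $\sepand$-\textsc{Weakening} rule and the inclusion $E_{\idx_2} \subseteq E_{\idx_1}$ handling the unit, exactly as you say. Completeness follows Docherty's duality-theoretic route in the same way the paper does: the Lindenbaum--Tarski algebra of \LOGIC (an ``$\idxset$-BI algebra''), the prime filter frame with $F_1 \oplus_{\idx} F_2$ and $E_{\idx}$ defined precisely as you describe, a representation theorem and truth lemma, and finally a prime filter containing $P$ but not $Q$. You also correctly locate the only genuinely new verification beyond the single-index construction: (Operation Inclusion) for the prime filter frame, which reduces to upward closure of filters plus the algebraic monotonicity $a \sepand_{\idx_1} b \leq a \sepand_{\idx_2} b$.

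The one place you diverge is the final step, and there you have talked yourself into trouble. You read $P \models Q$ as \emph{global} consequence (which is what the paper's definition literally says) and propose to repair the argument by adding $P$ as an axiom, or quotienting the Lindenbaum algebra by the filter generated by $[P]_\sim$, so that $P$ becomes valid at every state. This patch cannot work, because the theorem is false under the global reading: in any model where an atom $p$ is valid at every state, (Unit Existence) gives $x \in e \oplus_{\idx} x$ with both $e$ and $x$ satisfying $p$, so every state satisfies $p \sepand_{\idx} p$; hence $p$ globally entails $p \sepand_{\idx} p$, yet $p \vdash p \sepand_{\idx} p$ is not derivable (BI has no multiplicative contraction, and the entailment already fails locally in, e.g., the independence model, where $\otimes$ requires disjoint domains). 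Your two proposed repairs break down exactly where one would expect: the Heyting congruence $a \leftrightarrow b \in F$ induced by a filter is not a congruence for $\sepand_{\idx}$ (the multiplicatives are monotone for $\leq$, not for implication relative to a filter), and the deduction theorem needed to pull a proof in the extended theory back to $P \vdash Q$ fails in the presence of $\sepand_{\idx}$. The intended reading --- and what both the paper's proof and the body of yours actually establish --- is the \emph{local} consequence relation: for every model and every state $x$, $x \models_\valuation P$ implies $x \models_\valuation Q$. Under that reading the separating prime filter $F$ with $F \models P$ and $F \not\models Q$ finishes the proof immediately, and your extra step should simply be dropped (though you are right that the paper's definition of $P \models Q$ does not match the argument it gives).
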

The reverse direction of (soundness) is straightforward by induction on the
proof derivation, but the forward direction (completeness) is less obvious; we
use the duality-theoretic framework proposed by \citet{docherty:thesis} to
establish this theorem.

\subsection{Potential models}
Our design of \LOGIC is mainly motivated by our intended model of negative
association and probabilistic independence, which we will see in the next
section, but the logic \LOGIC is quite flexible and we can see other natural
models. Here we outline three models, inspired by the heap model of separation
logic~\citep{Reynolds01}.

\paragraph*{Hierarchical heaps.}
In the heap model of separation logic, heaps are partial maps $h : \NN
\rightharpoonup \Val$ from integer addresses to values, and heap combination
$\circ_{heap}$ is a partial binary operation that takes the union if the two
heaps have disjoint domains, and is not defined otherwise. In many systems,
memory addresses are partitioned into larger units, for instance \emph{pages}.
We can define another partial binary operation $\circ_{page}$ that takes the
union if the two heaps have disjoint domains \emph{and} are defined on disjoint
pages. Then, $h_1 \circ_{page} h_2 \subseteq h_1 \circ_{heap} h_2$, so we can
build a model of \LOGIC on the two-point pre-order. The two separating
conjunctions $\sepand_{heap}$ and $\sepand_{page}$ then describe heap and page
separation respectively, which could be useful for reasoning about which memory
accesses may require a page-table lookup.

\paragraph*{Strong separation logic.}
Heaps in separation logic can store values, but also addresses of other
locations. In the standard heap model, two separate heaps must have disjoint
domains but may store common addresses, i.e., they may hold dangling pointers to
the same locations. Searching for a separation logic with better decidability
properties, \citet{DBLP:conf/esop/PagelZ21} proposed a notion of \emph{strong
separation logic}, where two strongly-separated heaps can only hold common
addresses that are already stored in stack variables. The resulting form of
separation can be modeled by a separating conjunction $\sepand_{st}$, and the
standard (weak) form of separation can be modeled by a separating conjunction
$\sepand_{wk}$. Since strong separation implies weak separation, we can again
build a model of \LOGIC supporting both conjunctions on the two-point pre-order.

\begin{wrapfigure}[8]{r}{0.4\textwidth}
\tikzset{every picture/.style={line width=0.75pt}} 
\vspace{-2ex}
\centering
\begin{tikzpicture}[x=0.75pt,y=0.75pt,yscale=-1,xscale=1]

\draw   (234.31,60.41) -- (255.99,75.9) -- (257.66,73.56) -- (265.45,85.72) -- (251.43,82.29) -- (253.1,79.95) -- (231.42,64.46) -- cycle ;
\draw   (153.31,94.41) -- (174.99,109.9) -- (176.66,107.56) -- (184.45,119.72) -- (170.43,116.29) -- (172.1,113.95) -- (150.42,98.46) -- cycle ;
\draw   (182,61.41) -- (160.32,76.9) -- (158.65,74.56) -- (150.86,86.72) -- (164.89,83.29) -- (163.22,80.95) -- (184.9,65.46) -- cycle ;
\draw   (263.01,96.41) -- (241.32,111.9) -- (239.65,109.56) -- (231.86,121.72) -- (245.89,118.29) -- (244.22,115.95) -- (265.9,100.46) -- cycle ;

\draw (195,47) node [anchor=north west][inner sep=0.75pt]   [align=left] {$\displaystyle \sepand _{(h, p)}$};
\draw (272,85) node [anchor=north west][inner sep=0.75pt]   [align=left] {$\displaystyle \sepand _{h}$};
\draw (130,81.72) node [anchor=north west][inner sep=0.75pt]   [align=left] {$\displaystyle \sepand _{p}$};
\draw (200,120) node [anchor=north west][inner sep=0.75pt]   [align=left] {$\displaystyle \sepand _{-}$};

\end{tikzpicture}

\caption{The pre-order for $\sepand_{\idx}$ }
\label{fig:pre-order}
\end{wrapfigure}

\paragraph*{Tagged memory.}
In some security-focused architectures, pointers contain an address as well as a
tag, indicating capabilities that may be performed with that piece of memory.
To reason about these machines, we can consider a resource frame where states
are pairs of $(h, p)$, where $h$ is a heap and $p$ is a permission (say shared
access, or exclusive access). We can then consider four kinds of separation
taking all combinations of heaps aliasing/non-aliasing, and permissions
compatible/incompatible. The result is a \LOGIC frame, with the lattice of
separating conjunctions depicted in \Cref{fig:pre-order}. Assertions in models
on this frame can reason about all four kinds of separation, where
$\sepand_{-}$ degenerates to the standard conjunction $\land$.

\section{A model of negative association and independence}
\label{sec:model}

In this section, we will present a \LOGIC model for reasoning about both
probabilistic independence and negative association. \citet{PSL} proposed a BI
model that captures probabilistic independence, developed a program logic (PSL)
to reason about independence in probabilistic programs.  We will construct a BI
model for negative association and combine it with the PSL model to obtain a
$\mathbf{2}$-BI model for both probabilistic independence and negative
association, where $\mathbf{2} = \{ 1, 2\}$ is the two-point set with pre-order
$1 \leq 2$.

One may wonder if it is a simple exercise to replace the independence semantics
of the separating conjunction in PSL by a semantics that capture negative
association, but there are technical challenges. As we will show
in~\Cref{sec:failed-attempts}, two intuitive BI model definitions fail to
satisfy all frame conditions. To overcome the difficulties,
in~\Cref{sec:our-model} we define a new notion of negative association that can
express negative dependence of various strengths, and then define a model based
on our new notion.
\iffull
\else
All omitted proofs can be found in the extended version.
\fi


\subsection{Preliminaries and the PSL model}
We need to introduce some notation to define the models.

First, we represent program states as memories.
Let the set of all program variables be $\Var$, and the set of all possible
values be $\Val$.  For any finite set of variables $S \subseteq \Var$, a \emph{memory}
on $S$ is a map $S \to \Val$, and $\Mem[S]$ denotes the set of memories on $S$.
For disjoint sets of variables $S,T \subseteq \Var$, and $m_1 \in \Mem[S]$,
$m_2 \in \Mem[T]$, we define $m_1 \bowtie m_2$ to be the union of $m_1, m_2$.

Now we will introduce probabilistic memories.  For a real-valued function $f$,
we say that $x$ is in the support of $f$ if $f(x) \neq 0$.  A countable
distribution on a set $X$ is a countable support function $\mu: X \to [0,1]$
such that $\sum_{x \in X} \mu(x) = 1$.  Let $\Dist(X)$ denote the set of
countable distributions on $X$.  A \emph{probabilistic memory} on variables $S$
is a distribution over $\Mem[S]$, so the set of probabilistic memories on $S$ is
$\Dist(\Mem[S])$.

Next, we will need some constructions on distributions.
A family of special distributions in $\DMem{S}$ is \emph{Dirac distributions}: for any $x \in \Mem[S]$, the Dirac distribution $\delta(x)$ puts all the weight on $x$, that is, for any $y \in \Mem[S]$,
$\delta(x)(y) = 1$ if $x = y$, and $\delta(x)(y) = 0$ otherwise.
%
For any sets of variables $S \subseteq S'$, we define the \emph{projection} map $\pi_{S', S}$ to
map a distribution $\mu$ on $\Mem[S']$ to a distribution on $\Mem[S]$: for any $x \in
\Mem[S]$,
\begin{align*}
	\pi_{S', S} \mu(x) \defeq \sum_{x' \in \Mem[S'] \text{ and } \p_{S}(x') = x } \mu(x'),
\end{align*}
where $\p_{S}(x')$ is $x'$ restricted on $S$.
Often $S'$ is clear, so we just write $\pi_{S}$ for $\pi_{S', S}$.
Then, we can formally define independence of variables in a distribution:
\begin{definition}[Independence]
	For any $\mu \in \DMem{S}$, and disjoint $T_1, T_2 \subseteq S$, we say
	$T_1, T_2$ are independent in $\mu$ if for any $x \in \Mem[T_1 \cup T_2]$,
	\begin{align*}
		\pi_{T_1 \cup T_2} \mu (x) = \pi_{T_1} \mu (\p_{T_1}(x)) \cdot  \pi_{T_2} \mu (\p_{T_2}(x)).
	\end{align*}
\end{definition}
We then define the \emph{independent product}
$\otimes$ as: for any $\mu_1 \in \DMem{S}$, $\mu_2 \in \DMem{T}$,
\begin{align*}
	\mu_1 \otimes \mu_2
	=
\begin{cases}
	\emptyset & \text{if $S, T$ not disjoint} \\
	\{\mu  \mid \text{for any $x \in \Mem[S \cup T]$, } \mu(x) = \mu_1(\p_S(x)) \cdot \mu_2(\p_T(x))\}  & \text{if $S, T$ disjoint}
\end{cases}
\end{align*}
For any distribution $\mu \in \DMem{S}$, we call $S$ the \emph{domain} of $\mu$,
denoted $\dom(\mu)$.  By construction, if $\mu \in \mu_1 \otimes \mu_2$, then
$\dom(\mu_1)$ and $\dom(\mu_2)$ are independent in $\mu$, and $\mu$ is the
unique element in $\mu_1 \otimes \mu_2$. Simple calculations also show that if
$\mu \in \mu_1 \otimes \mu_2$, then $\pi_S \mu = \mu_1$, $\pi_T \mu = \mu_2$.

We can then present the Independence frame from~\citet{PSL} as the following BI
frame. For simplicity, we restrict its states to probabilistic memories for
now.\footnote{%
  Technically we take a slightly different notion of BI frames that is more
  suitable for our purposes. \citet{PSL} presents BI frames with partial,
  pre-ordered commutative monoids, which require a unique unit for all states.
  But we can encode their frame as our BI frame by taking its partial operation
as an operation that returns sets of size at most one and defining the unit set
$E$ to include their unique unit and be closed under $\sqsubseteq$.}

\begin{definition} \label{def:indep-BI}
	Let $\PSLset = \cup_{S \subseteq \Var} \DMem{S}$.
	Say $\mu \sqsubseteq \mu'$ iff $\dom(\mu) \subseteq \dom(\mu')$ and $\pi_{\dom(\mu)} \mu' = \mu$.
	Let $\PSLunit = \PSLset$.
	We call $\PSLmodel = (\PSLset, \sqsubseteq, \otimes, \PSLunit)$ the \emph{Independence structure}.
\end{definition}

This Independence structure $\PSLmodel$ is a BI frame.

\subsection{Initial attempts at a NA model}
\label{sec:failed-attempts}
Our goal is to design a BI model $\PNAmodel$ that can capture negative
association and can be combined with $\PSLmodel$. To be compatible with
$\PSLmodel$, we let $\PNAmodel$ have the same set of states and the same
pre-order as $\PSLmodel$.  The important remaining piece of the puzzle is the
binary operation $\oplus$, which must satisfy the frame conditions.

One first attempt is to let  $\mu_1 \oplus \mu_2$ return the set of
distributions that agree with $\mu_1, \mu_2$, and satisfy \emph{strong NA}---we
say $\mu$ satisfies strong NA if $\dom(\mu)$ satisfies NA.

\begin{definition}(Attempt 1: Strong NA model)
	\label{def:strongNA}
	Let $\PSLset = \cup_{S \subseteq \Var} \DMem{S}$.
	For $\mu, \mu' \in \PSLset$, say $\mu \sqsubseteq \mu'$ iff $\dom(\mu) \subseteq \dom(\mu')$ and $\pi_{\dom(\mu)} \mu' = \mu$.
Let $E_s = \PSLset$.
Define $\oplus_s : \PSLset \times \PSLset \to \mathcal{P}(\PSLset)$:
\[
  \mu_1 \oplus_s \mu_2 =
  \{ \mu \in \DMem{S \cup T} \mid \mu \text{ satisfies strong NA}, \pi_S \mu = \mu_1, \pi_T \mu = \mu_2, S \cap T = \emptyset \} .
\]
We call $\mathcal{\PSLset}_s = (\PSLset, \sqsubseteq, \oplus_s, E_s)$ the \emph{strong NA structure}.
\end{definition}

Unfortunately, the strong NA structure fails to have the (Unit Existence)
property: if $\mu$ does not satisfy strong NA, then there exists no $\mu'$ that
marginalizes to $\mu$ and satisfies strong NA, and thus no $e$ such that $\mu
\in e \oplus_s \mu$. The failure of this property implies that whether or not
two states can be combined depends on properties of the single states in
isolation (e.g., whether a distribution satisfies strong NA), and not just on
how the two states relate to each other; this is hard to justify if we are to
read $\oplus$ as describing which pairs of states can be safely combined.

Looking for a different way of capturing NA, we can take inspiration from the
$\PSLmodel$. There, $\mu_1 \otimes \mu_2$ returns a distribution that agrees with
$\mu_1, \mu_2$ and on which $\dom(\mu_1)$ are independent from $\dom(\mu_2)$.
Thus, we can try letting $\mu_1 \oplus \mu_2$ return distributions that agree
with $\mu_1, \mu_2$ where any variable $x$ in $\dom(\mu_1)$ must be negatively
associated with any variable $y$ in $\dom(\mu_2)$, but variables within
$\dom(\mu_1)$ and variables within $\dom(\mu_2)$ need not be negatively
associated. We call this notion \emph{weak NA}.
\begin{definition}[Weak NA] \label{def:weakna}
	Let $S \subseteq \Var$ be a set of variables,
  and let $A, B$ be two disjoint subsets of $S$.  A
	distribution $\mu \in \DMem{S}$ satisfies \emph{$(A,B)$-NA} if
	for every pair of both monotone or both antitone functions $f : \Mem[A] \to
	\mathbb{R}$, $g : \Mem[B] \to \mathbb{R}$, where we take the point-wise orders on $\Mem[A]$ and $\Mem[B]$,
	such that $f, g$ is either lower bounded or upper bounded, we have
	\[
		\mathbb{E}_{m \sim \mu} [ f(\p_A m) \cdot g(\p_B m) ]
		\leq \mathbb{E}_{m \sim \mu} [ f(\p_A m) ] \cdot \mathbb{E}_{m \sim \mu} [ g(\p_B m) ] .
	\]
\end{definition}

By definition, being $(A, B)$-NA for all disjoint $A, B
\subseteq S$ is equivalent to strong NA on $S$. Now, we can try defining another
model based on weak NA.
\begin{definition}(Attempt 2: Weak NA model)
	\label{def:weakNA}
	Let $\PSLset = \cup_{S \subseteq \Var} \DMem{S}$.
	For $\mu, \mu' \in \PSLset$, $\mu \sqsubseteq \mu'$ iff $\dom(\mu) \subseteq \dom(\mu')$ and $\pi_{\dom(\mu)} \mu' = \mu$. Let $E_w = \PSLset$.
Define $\oplus_w : \PSLset \times \PSLset \to \mathcal{P}(\PSLset)$:
\[
\mu_1 \oplus_w \mu_2 =
  \{ \mu \in \DMem{S \cup T} \mid \mu \text{ satisfies } (S, T)\text{-NA}, \pi_S \mu = \mu_1, \pi_T \mu = \mu_2, S \cap T = \emptyset \} .
\]
We call $\mathcal{\PSLset}_w = (\PSLset, \sqsubseteq, \oplus_w, E_w)$ the \emph{weak NA structure}.
\end{definition}

This weak NA structure satisfies most BI frame conditions, except that
(Associativity) is unclear.  In short, the definition of $\oplus_w$ and
(Associativity) requires that: if $w$ satisfies $(R \cup S, T)$-NA and $(R,
S)$-NA, then $w$ also satisfies $(S, T)$-NA and $(R, S \cup T)$-NA. Now
$w$ satisfies $(S, T)$-NA by projection closure, but it is unclear whether
$w$ must satisfy $(R, S \cup T)$-NA; we leave this question as an interesting
open problem.
Failing to satisfy (Associativity) would lead to a logic where separating
conjunction is not associative, and significantly more difficult to use. Since
it is unknown whether $\mathcal{\PSLset}_w$ is a BI frame, we will define
another structure to capture negative association.

\subsection{Our NA model}
\label{sec:our-model}
Facing the problems with the strong NA structure and the weak NA structures, we
will define a BI model for negative association based on a new notion of
negative association called \emph{partition negative association} (\CPNA). This
notion \emph{interpolates} weak NA and strong NA, in the following sense: $\{A,
B\}$-\CPNA is equivalent to $(A,B)$-NA for disjoint $A, B \subseteq S$, and $\{
\{s\} \mid s \in S\}$-\CPNA is equivalent to strong NA for distributions in
$\DMem{S}$.
\begin{definition}[Partition Negative Association]
	\label{def:PNA}
  We say a partition $\mathcal{S}'$ \emph{coarsens} a partition $\mathcal{S}$ if
  $\cup \mathcal{S} = \cup \mathcal{S}'$ and for any $s' \in \mathcal{S}'$, $s' = \cup \mathcal{R}$ for some $\mathcal{R} \subseteq \mathcal{S}$.

	A distribution $\mu$ is $\mathcal{S}$-\CPNA if and only if
	for any $\mathcal{T}$ that coarsens $\mathcal{S}$,
	for any family of \emph{non-negative} monotone functions
	(or family of \emph{non-negative} antitone functions),
	$\{f_{A} : \Mem[A] \to \mathbb{R}^+\}_{A \in \mathcal{T}}$,\footnote{%
    We restrict the family of functions to be non-negative: prior work
    like~\citet{joag1983negative} has assumed non-negativity when working with
    notions of NA on partitions; furthermore, without that requirement, for
    partitions with odd number of components, \CPNA would be equivalent to
  independence, a strange property.}
	where we take the point-wise order on $\Mem[A]$ for each $A \in \mathcal{T}$,
	we have
	\[
		\mathbb{E}_{m \sim \mu} \left[ \prod_{A \in \mathcal{T}} f_{A}(\p_A m) \right]
		\leq \prod_{A \in \mathcal{T}} \mathbb{E}_{m \sim \mu} [ f_{A}(\p_A m) ].
	\]
\end{definition}

We can use \CPNA to prove NA:
\begin{theorem}
	\label{theorem:PNAeqNA}
	Given a set of variables $S$, $S$ satisfies NA in $\mu$
	iff $\mu$ satisfies $\mathcal{S}$-\CPNA for any
$\mathcal{S}$ partitioning $S$ iff $\mu$  satisfies $\{ \{s\} \mid s \in S\}$-\CPNA.\footnote{
	Technically, we slightly modify \citet{dubhashi-ranjan}'s NA when defining it in~\Cref{def:na} by
	in addition assuming that $f, g$ are bounded from one side.
	We add that condition
		to have a cleaner version of this theorem and~\Cref{theorem:starcapturesNA}.
	All our other results and properties we state about NA in~\Cref{sec:intro} hold with or without this condition. }
\end{theorem}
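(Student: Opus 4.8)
The plan is to establish the cycle of implications \textbf{(A)}~$\Rightarrow$~\textbf{(B)}~$\Rightarrow$~\textbf{(C)}~$\Rightarrow$~\textbf{(A)}, where \textbf{(A)} is ``$S$ is NA in $\mu$'', \textbf{(B)} is ``$\mu$ is $\mathcal{S}$-\CPNA for every partition $\mathcal{S}$ of $S$'', and \textbf{(C)} is ``$\mu$ is $\{\{s\}\mid s\in S\}$-\CPNA''. The step \textbf{(B)}~$\Rightarrow$~\textbf{(C)} is immediate, since $\{\{s\}\mid s\in S\}$ is itself a partition of $S$ and hence an instance of (B). The key simplifying observation I would record first is that a partition $\mathcal{T}$ coarsens the finest partition $\{\{s\}\mid s\in S\}$ exactly when $\mathcal{T}$ is an arbitrary partition of $S$; consequently (C) unfolds into the single statement that the product inequality $\Ex{\prod_{A\in\mathcal{T}}f_A(\p_A m)}\le\prod_{A\in\mathcal{T}}\Ex{f_A(\p_A m)}$ holds for \emph{every} partition $\mathcal{T}$ of $S$ and every non-negative monotone (or every non-negative antitone) family $\{f_A\}_{A\in\mathcal{T}}$. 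It is this product form that I will relate to NA.

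For \textbf{(A)}~$\Rightarrow$~\textbf{(B)} I would prove the product inequality for every partition $\mathcal{T}=\{A_1,\dots,A_k\}$ of $S$ by induction on $k$; this is the substantive direction. The case $k=1$ is trivial, and $k=2$ is exactly the defining two-factor inequality of NA specialized to non-negative test functions. For the inductive step I peel off the last block: writing $F=\prod_{i<k}f_i$, a short lemma shows that a product of non-negative monotone (resp.\ antitone) functions is again non-negative monotone (resp.\ antitone), so $F$ and $f_k$ have the same type and are non-negative, hence bounded below. Applying two-factor NA to $F$ (on $A_1\cup\cdots\cup A_{k-1}$) and $f_k$ (on $A_k$) gives $\Ex{F\cdot f_k}\le\Ex{F}\cdot\Ex{f_k}$, and the inductive hypothesis applied to the partition $\{A_1,\dots,A_{k-1}\}$ bounds $\Ex{F}$ by $\prod_{i<k}\Ex{f_i}$. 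Here I use that the marginal of $\mu$ on $A_1\cup\cdots\cup A_{k-1}$ is still NA, which is the subset/restriction closure of NA (\Cref{thm:closure}, immediate by restricting test functions). Since $\Ex{f_k}\ge 0$, chaining the two bounds yields the $k$-fold inequality, and as every $\mathcal{T}$ coarsening a given $\mathcal{S}$ is a partition of $S$, this proves (A)~$\Rightarrow$~(B).

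For \textbf{(C)}~$\Rightarrow$~\textbf{(A)} I would recover the two-factor inequality of NA by padding and shifting. Given disjoint $I,J\subseteq S$ and same-type test functions $f,g$ bounded on one side, take $\mathcal{T}=\{I,J\}\cup\{\{s\}\mid s\in S\setminus(I\cup J)\}$ and assign the constant function $1$ (which is both non-negative monotone and non-negative antitone) to every singleton block. The only mismatch is that (C) demands \emph{non-negative} functions whereas NA allows functions only bounded on one side; I close this with the standard constant shift, replacing $f,g$ by $f-a,g-b$ or by $a-f,b-g$ so that the shifted functions are non-negative and land in one of the two admissible \CPNA families. A one-line computation shows that $\Ex{fg}-\Ex{f}\Ex{g}$ is invariant under such a shift, so the product inequality supplied by (C) on the shifted, padded family yields exactly $\Ex{f(\p_I m)\,g(\p_J m)}\le\Ex{f}\cdot\Ex{g}$, i.e.\ NA.

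I expect the main obstacle to be the bookkeeping in the inductive step of (A)~$\Rightarrow$~(B): keeping the accumulated factor $F$ inside the correct non-negative monotone/antitone class so that the two-factor NA hypothesis applies to $F$ and $f_k$, and cleanly invoking the inductive hypothesis on a marginal (i.e.\ justifying that NA is preserved under restriction, and that expectations under $\mu$ and under the marginal agree for functions of the restricted coordinates). The sign and boundedness reductions in (C)~$\Rightarrow$~(A) are routine but must be organized so that each monotone/antitone $\times$ bounded-below/above combination is sent to the matching non-negative \CPNA family; well-definedness of the expectations is handled by the one-sided bounds together with countability of the distributions.
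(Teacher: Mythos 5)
Your overall architecture matches the paper's: the equivalence of the second and third conditions via coarsening, the forward direction by induction on the number of blocks (peeling off the last factor, using that products of non-negative same-type monotone functions stay in the class, and applying the two-block NA inequality plus the subset closure), and the reverse direction by padding with constant singleton blocks and then relating non-negative test functions to one-side-bounded ones. Up to bookkeeping, that is the paper's proof.

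The one genuine gap is in your (C)~$\Rightarrow$~(A) reduction from one-side-bounded test functions to non-negative ones. You offer exactly two shifts, $f-a,\,g-b$ or $a-f,\,b-g$, which handle the cases where $f$ and $g$ are bounded on the \emph{same} side. But \Cref{def:na} (as the paper reads it in its own proof) allows the mixed case: $f$ monotone and bounded below but unbounded above, $g$ monotone and bounded above but unbounded below. No affine shift sends such a pair to a non-negative family of a single monotonicity type: $f-a$ is non-negative monotone while $b-g$ is non-negative \emph{antitone}, so $\{f-a,\,b-g\}$ is not an admissible \CPNA family, and the double flip $a-f,\,b-g$ fails non-negativity for $f$. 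The paper closes this case with a limiting argument that your shift cannot replace: truncate the unbounded-below function as $g_n = \max(g,-n)$, apply the already-established inequality to each bounded pair, and pass to the limit by monotone convergence (this also settles finiteness of the expectations, which your appeal to countability does not by itself). If one instead reads the side condition in \Cref{def:na} as requiring $f$ and $g$ to be bounded on the same side, your argument is complete, but that is not the reading the paper's own proof adopts, so you should either justify the narrower reading or add the truncation step.
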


We require \CPNA to be closed under coarsening, which helps us to prove the
next structure we define is a BI frame.

\begin{definition}
Let $\PNAmodel = (\PSLset, \sqsubseteq, \oplus, \PNAunit)$,
where $\PSLset = \PNAunit = \cup_{S \subseteq \Var} \DMem{S}$.
For $\mu, \mu' \in \PSLset$, say $\mu \sqsubseteq \mu'$ iff $\dom(\mu) \subseteq \dom(\mu')$ and $\pi_{\dom(\mu)} \mu' = \mu$.
Define the operation $\oplus : \PSLset \times \PSLset \to \mathcal{P}(\PSLset)$:
\[
	\begin{aligned}
    \mu_1 \oplus \mu_2 = \{\mu \in \DMem{S \cup T} \mid{} &\pi_{S} \mu = \mu_1,  \pi_{T} \mu = \mu_2, \\
																																		&\text{$\mu$ is $(\mathcal{S} \cup \mathcal{T})$-\CPNA for any partition $\mathcal{S}, \mathcal{T}$ such that } \\
																																		&\text{$\mu_1$ is $\mathcal{S}$-\CPNA, and $\mu_2$ is $\mathcal{T}$-\CPNA, and $(\cup \mathcal{S}) \cap (\cup \mathcal{T}) = \emptyset$} .\}
		\end{aligned}
	\]
\end{definition}
This definition of $\oplus$ interpolates $\oplus_w$ and $\oplus_s$, in the
following sense.
\begin{restatable}{theorem}{PNAoplusinterpolate}
\label{theorem:PNAoplusinterpolate}
For any two states $\mu_1, \mu_2 \in \PSLset$,
	$ \mu_1 \oplus_s \mu_2 \subseteq \mu_1 \oplus \mu_2 \subseteq \mu_1 \oplus_w \mu_2$.
\end{restatable}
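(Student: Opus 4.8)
The plan is to prove the two inclusions $\mu_1 \oplus_s \mu_2 \subseteq \mu_1 \oplus \mu_2$ and $\mu_1 \oplus \mu_2 \subseteq \mu_1 \oplus_w \mu_2$ separately, in each case unfolding the relevant definitions and exhibiting the required witness partitions. Fix $\mu_1 \in \DMem{S}$ and $\mu_2 \in \DMem{T}$; if $S \cap T \neq \emptyset$ all three sets are empty and the inclusions hold trivially, so assume $S \cap T = \emptyset$. Note that all three operations require $\pi_S \mu = \mu_1$ and $\pi_T \mu = \mu_2$, so the content of the claim lies entirely in comparing the negative-association conditions imposed on a candidate $\mu \in \DMem{S \cup T}$.

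For the right inclusion $\mu_1 \oplus \mu_2 \subseteq \mu_1 \oplus_w \mu_2$, I would take $\mu \in \mu_1 \oplus \mu_2$ and show $\mu$ satisfies $(S,T)$-NA. The idea is to instantiate the universally-quantified partitions in the definition of $\oplus$ with the coarsest legal choice: take $\mathcal{S} = \{S\}$ and $\mathcal{T} = \{T\}$, which are partitions of $S$ and $T$ respectively. I would need the fact that $\mu_1$ is $\{S\}$-\CPNA and $\mu_2$ is $\{T\}$-\CPNA; being $\{A\}$-\CPNA for a singleton partition should be vacuous or trivially true, since the only coarsening of a one-block partition is itself, and the single-function inequality $\mathbb{E}[f] \leq \mathbb{E}[f]$ is an equality. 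Given this, $\mu$ is $\{S, T\}$-\CPNA, and by the remark following \Cref{def:weakna} (that $\{A,B\}$-\CPNA equals $(A,B)$-NA), this is exactly $(S,T)$-NA, so $\mu \in \mu_1 \oplus_w \mu_2$.

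For the left inclusion $\mu_1 \oplus_s \mu_2 \subseteq \mu_1 \oplus \mu_2$, I would take $\mu \in \mu_1 \oplus_s \mu_2$, so $\mu$ satisfies strong NA on $S \cup T$, and show it lies in $\mu_1 \oplus \mu_2$. This requires showing that for \emph{every} pair of partitions $\mathcal{S}$ of $S$ and $\mathcal{T}$ of $T$ with $\mu_1$ being $\mathcal{S}$-\CPNA and $\mu_2$ being $\mathcal{T}$-\CPNA, the distribution $\mu$ is $(\mathcal{S} \cup \mathcal{T})$-\CPNA. The key observation is that $\mathcal{S} \cup \mathcal{T}$ is a partition of $S \cup T$, and by \Cref{theorem:PNAeqNA} strong NA on $S \cup T$ implies $\mu$ is $\mathcal{P}$-\CPNA for \emph{any} partition $\mathcal{P}$ of $S \cup T$. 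Applying this with $\mathcal{P} = \mathcal{S} \cup \mathcal{T}$ immediately gives the \CPNA\ condition required for membership in $\mu_1 \oplus \mu_2$, regardless of the premises on $\mu_1, \mu_2$.

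The main obstacle I anticipate is handling the quantifier structure correctly in the left inclusion: the definition of $\oplus$ demands the \CPNA\ property hold for all admissible witness partitions of the factors, so I must verify the strong-NA hypothesis is strong enough to cover every such pair simultaneously — which it is, precisely because \Cref{theorem:PNAeqNA} upgrades strong NA to $\mathcal{P}$-\CPNA uniformly over all partitions $\mathcal{P}$ of the combined domain. I would also need to double-check the edge cases where $\mu_1$ or $\mu_2$ fails to be $\mathcal{S}$- or $\mathcal{T}$-\CPNA for some candidate partition: in those cases the corresponding conjunct in the defining set of $\oplus$ is vacuously discharged, so no obligation arises. These checks are routine once the partition-gluing step is in place.
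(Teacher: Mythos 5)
Your proof is correct and follows essentially the same route as the paper's: the left inclusion uses the fact that strong NA implies $\mathcal{R}$-\CPNA for every partition $\mathcal{R}$ of the combined domain (so in particular for every $\mathcal{S}\cup\mathcal{T}$), and the right inclusion instantiates the definition of $\oplus$ at the trivially-satisfied one-block partitions $\{S\}$ and $\{T\}$ to extract $(S,T)$-NA. Your additional remarks on the quantifier structure and the vacuous cases are consistent with, and slightly more explicit than, the paper's treatment.
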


The first inclusion is because $\mu$
satisfying strong NA implies $\mu$ is $\mathcal{R}$-\CPNA for any partition
$\mathcal{R}$ on $\dom(\mu)$. The second inclusion is because $\mu_1 \in
\DMem{S}$ satisfies $\{S\}$-\CPNA and $\mu_2 \in \DMem{T}$ satisfies
$\{T\}$-\CPNA trivially, which implies any $\mu \in \mu_1 \oplus \mu_2$ would
satisfy $(S, T)$-NA.

Note that $\oplus$ is non-deterministic, and not just partial.
\begin{theorem} \label{thm:PNAoplus:nondet}
  There are distributions $\mu_1, \mu_2$ such that $|\mu_1 \oplus \mu_2| \geq
  2$.
\end{theorem}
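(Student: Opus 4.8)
The plan is to exhibit concrete marginals $\mu_1, \mu_2$ together with two genuinely distinct joint distributions, each satisfying the membership condition for $\mu_1 \oplus \mu_2$. The cleanest way to avoid all the bookkeeping over partitions is to take $\mu_1, \mu_2$ to be single-variable distributions, say $\mu_1 \in \DMem{\{x\}}$ and $\mu_2 \in \DMem{\{y\}}$ with $x, y$ ranging over $\{0,1\}$. Then the only partition of $\dom(\mu_1)$ is $\{\{x\}\}$ and the only partition of $\dom(\mu_2)$ is $\{\{y\}\}$, and both are satisfied trivially. Consequently, a joint distribution $\mu$ on $\{x,y\}$ lies in $\mu_1 \oplus \mu_2$ precisely when $\pi_{\{x\}}\mu = \mu_1$, $\pi_{\{y\}}\mu = \mu_2$, and $\mu$ is $\{\{x\},\{y\}\}$-\CPNA, which by \Cref{theorem:PNAeqNA} is just negative association of the pair $x, y$.

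First I would fix $\mu_1, \mu_2$ to both be uniform on $\{0,1\}$ and let $\mu$ be the independent coupling (the unique element of $\mu_1 \otimes \mu_2$), i.e. the uniform distribution on $\{0,1\}^2$. Under independence every \CPNA inequality holds with equality, so $\mu$ is $\{\{x\},\{y\}\}$-\CPNA, and it has the correct marginals, giving $\mu \in \mu_1 \oplus \mu_2$. For the second coupling I would take the anti-correlated distribution $\mu'$ supported on the anti-diagonal, with $\mu'(x=0,y=1) = \mu'(x=1,y=0) = \tfrac{1}{2}$; this again projects to $\mu_1$ and $\mu_2$ but is visibly different from $\mu$.

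The one point requiring an actual argument is that $\mu'$ is $\{\{x\},\{y\}\}$-\CPNA. Under $\mu'$ we have $y = 1-x$ with $x$ uniform, so for any test functions $f, g$ on $\{0,1\}$ a direct expansion yields $\mathbb{E}_{\mu'}[f(x)]\,\mathbb{E}_{\mu'}[g(y)] - \mathbb{E}_{\mu'}[f(x)g(y)] = \tfrac{1}{4}(f(1)-f(0))(g(1)-g(0))$. When $f, g$ are both monotone or both antitone this product is non-negative, which is exactly the defining inequality of $\{\{x\},\{y\}\}$-\CPNA (the only nontrivial coarsening is the finest one; the single-block coarsening holds trivially). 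Hence $\mu' \in \mu_1 \oplus \mu_2$ as well, and since $\mu \neq \mu'$ we conclude $|\mu_1 \oplus \mu_2| \geq 2$.

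I expect the main, though minor, obstacle to be verifying that $\mu'$ satisfies \emph{every} instance of the \CPNA inequality rather than merely having negative covariance; the uniform sign computation above settles this for all admissible $f, g$ at once, so no case analysis on the test functions is needed. The underlying moral is that, unlike the independent product $\otimes$ which pins down a single joint distribution, negative association is defined by inequalities and hence admits an entire family of couplings as witnesses, forcing $\oplus$ to be non-deterministic rather than merely partial.
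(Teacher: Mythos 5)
Your proposal is correct and uses essentially the same witnesses as the paper: uniform marginals on single $0/1$ variables, with the independent product and the anti-diagonal (one-hot) coupling as the two distinct elements of $\mu_1 \oplus \mu_2$. The only difference is that you verify the \CPNA{} condition for the anti-correlated coupling by a direct sign computation, whereas the paper cites the fact that one-hot distributions are NA (\Cref{thm:buildingblock}) and that NA implies \CPNA{}; both routes are sound.
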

\begin{proof}
  Let $\mu_1 \in \DMem{\{ x \}}$ and $\mu_2 \in \DMem{\{ y \}}$ be uniform
  distribution over memories over 0/1 variables $x$, $y$. Then the
  independent product $\mu_{\otimes} \in \mu_1 \otimes \mu_2$ is in $\mu_1 \oplus
  \mu_2$, because the projections to $x$ and to $y$ are $\mu_1$ and $\mu_2$
  respectively, and $\mu_{\otimes}$ satisfies \CPNA since independence
		implies \CPNA (we will see this shortly in \Cref{lemma:indep_closure}). But the
  one-hot uniform distribution $\mu_{oh}$ over variables $x$ and $y$, i.e.,
  $\mu_{oh}([x \mapsto 1, y \mapsto 0]) = \mu_{oh}([x \mapsto 0, y \mapsto 1]) =
  1/2$, is also in $\mu_1 \oplus \mu_2$,
  since again the projections match $\mu_1$ and $\mu_2$ and the one-hot
  distribution satisfies NA, and hence \CPNA. Since $\mu_{oh} \neq
  \mu_{\oplus}$, we are done.
\end{proof}

Thus, we can build a BI frame on probabilistic memories, crucially using a
non-deterministic combination operation on states~\citep{docherty:thesis}.
\begin{restatable}{theorem}{PNAisBI}
	\label{theorem:PNAisBI}
	The structure $\PNAmodel = (\PSLset, \sqsubseteq, \oplus, \PNAunit)$ is a Down-Closed BI frame.
	\end{restatable}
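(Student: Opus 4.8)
The plan is to verify the six frame conditions of \Cref{def:bi-frame} for $\PNAmodel$ one at a time, isolating the single genuinely hard case. Three are immediate. (Commutativity) holds because the defining condition of $\mu_1 \oplus \mu_2$ is symmetric in its two arguments (using $\mathcal{S} \cup \mathcal{T} = \mathcal{T} \cup \mathcal{S}$); (Unit Closure) is trivial since $\PNAunit = \PSLset$; and (Unit Coherence) follows because any $\mu \in \mu_1 \oplus e$ satisfies $\pi_{\dom(\mu_1)}\mu = \mu_1$ with $\dom(\mu_1) \subseteq \dom(\mu)$, which is exactly $\mu_1 \sqsubseteq \mu$. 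For (Unit Existence) I would take $e$ to be the unique distribution on the empty set of variables: it lies in $\PNAunit$, and $\mu \in e \oplus \mu$ holds because the marginal constraints are vacuous and the \CPNA side-condition degenerates to the \CPNA-ness of $\mu$ itself.

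Before the remaining two conditions I would record two closure properties of \CPNA that do the real work. The first is \emph{coarsening closure}, already noted in the text: if $\mu$ is $\mathcal{S}$-\CPNA and $\mathcal{S}'$ coarsens $\mathcal{S}$, then $\mu$ is $\mathcal{S}'$-\CPNA, which is immediate since every coarsening of $\mathcal{S}'$ is a coarsening of $\mathcal{S}$. The second is \emph{projection closure}: if $\mu \in \DMem{D}$ is $\mathcal{S}$-\CPNA and $B \subseteq D$, then $\pi_B \mu$ is $\mathcal{S}|_B$-\CPNA, where $\mathcal{S}|_B = \{A \cap B \mid A \in \mathcal{S},\ A \cap B \neq \emptyset\}$ is the induced partition. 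This is proved by extending any test family on $B$ by the constant function $1$ on the blocks of $\mathcal{S}$ disjoint from $B$, which converts a witness for $\mathcal{S}|_B$ into a witness for a coarsening of $\mathcal{S}$; it is the \CPNA-refinement of the ``discard variables'' case of \Cref{thm:closure}.

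With these in hand, (Down-Closed) is straightforward. Given $\mu \in \mu_1 \oplus \mu_2$ and $\mu_1' \sqsubseteq \mu_1$, $\mu_2' \sqsubseteq \mu_2$, the only possible witness is $\mu' = \pi_{\dom(\mu_1') \cup \dom(\mu_2')}\mu$, which satisfies $\mu' \sqsubseteq \mu$ and has the correct marginals. Membership $\mu' \in \mu_1' \oplus \mu_2'$ then holds because every \CPNA requirement in the definition of $\mu_1' \oplus \mu_2'$ is indexed by a partition of subsets $A \subseteq \dom(\mu_1')$, $B \subseteq \dom(\mu_2')$, and—since $\pi_A \mu_1' = \pi_A \mu_1$ and $\pi_{A \cup B}\mu' = \pi_{A \cup B}\mu$—each such requirement is already one of those guaranteed by $\mu \in \mu_1 \oplus \mu_2$ (projection closure gives the same conclusion from the whole-domain \CPNA facts about $\mu$).

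The hard case is (Associativity): from $w \in t \oplus z$ and $t \in x \oplus y$ I must produce $s$ with $s \in y \oplus z$ and $w \in x \oplus s$. Writing $X, Y, Z$ for the disjoint domains of $x, y, z$, the natural candidate is $s := \pi_{Y \cup Z} w$, whose marginals onto $Y$ and $Z$ are $y$ and $z$. That $s \in y \oplus z$ follows directly from $w \in t \oplus z$ by restricting its defining condition to subsets of $Y \subseteq X \cup Y$ and of $Z$. The crux is $w \in x \oplus s$. A \CPNA requirement here is indexed by $\mathcal{X} \cup \mathcal{D}$, with $\mathcal{X}$ a partition of some $A \subseteq X$ and $\mathcal{D}$ a partition of some $D \subseteq Y \cup Z$, and the difficulty is that a block of $\mathcal{D}$ may \emph{mix} variables of $Y$ and of $Z$, whereas the two hypotheses only separate $X$ from $Y$ and $X \cup Y$ from $Z$. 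I would resolve this by splitting each block of $\mathcal{D}$ into its $Y$-part and its $Z$-part to form a finer, unmixed partition $\mathcal{D}_Y \cup \mathcal{D}_Z$; projection closure applied to $\pi_D s$ then shows $\pi_{D \cap Y} y$ is $\mathcal{D}_Y$-\CPNA and $\pi_{D \cap Z} z$ is $\mathcal{D}_Z$-\CPNA. Feeding $\mathcal{X} \cup \mathcal{D}_Y$ into $t \in x \oplus y$ and then $(\mathcal{X} \cup \mathcal{D}_Y) \cup \mathcal{D}_Z$ into $w \in t \oplus z$ shows that $\pi_{A \cup D}w$ is $(\mathcal{X} \cup \mathcal{D}_Y \cup \mathcal{D}_Z)$-\CPNA; since $\mathcal{X} \cup \mathcal{D}$ coarsens this partition, coarsening closure finally yields $(\mathcal{X} \cup \mathcal{D})$-\CPNA, as required. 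I expect this regrouping step—reducing a mixed partition to an unmixed refinement and then re-coarsening—to be the main obstacle, and it is exactly where the non-determinism of $\oplus$ and the coarsening-closure property of \CPNA are indispensable.
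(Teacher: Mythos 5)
Your proposal is correct and follows essentially the same route as the paper's proof: the same witnesses ($\pi_{S'\cup T'}z$ for Down-Closed, the empty-domain distribution for Unit Existence, $s=\pi_{Y\cup Z}w$ for Associativity), and the same key mechanism for the hard associativity case, namely splitting mixed blocks into their $Y$- and $Z$-parts (the paper's $\mathcal{S}_S,\mathcal{S}_T$), pushing the unmixed refinement through the two hypotheses, and recovering the original partition by coarsening closure together with the fact that coarsening commutes with projection. The only cosmetic difference is that you package the projection step as an explicit ``projection closure'' lemma, which the paper handles via its coarsen-project lemma.
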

	\iffull
  See the full proof in~\Cref{proof:PNAisBI}.
	\fi
	For the frame conditions
  where the previous attempts failed, (Unit Existence) holds by letting the unit
  $e$ to always be the trivial distribution on the empty set, and
  (Associativity) can be proved using the facts that \CPNA is closed under coarsening
  and coarsening commute with projections.  We call
  $\PNAmodel$ the \CPNA model.

  Now that we know the \CPNA frame is a BI frame and captures NA, we want to
  combine it with the PSL frame to construct a \LOGIC frame.  To combine them,
  we need to show that for any $\mu_1, \mu_2 \in \PSLset$,
	\begin{align*}
		 \mu_1 \otimes \mu_2 \subseteq \mu_1 \oplus \mu_2.
	\end{align*}
The inclusion is implied by the following theorem:

\begin{restatable}[Independence implies \CPNA]{theorem}{IndepClosure}
		\label{lemma:indep_closure}
   Let $S, T \subseteq \Var$ be two disjoint sets of variables.
			Suppose $\mu_1 \in \DMem{S}$, $\mu_2 \in \DMem{T}$.
  If $\mu_1$ satisfies $\mathcal{S}$-\CPNA and
$\mu_2$ satisfies $\mathcal{T}$-\CPNA, then any $\mu \in \mu_S \otimes \mu_T$
satisfies $\mathcal{S} \cup \mathcal{T}$-\CPNA.
\end{restatable}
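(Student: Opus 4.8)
The plan is to establish the defining inequality of $(\mathcal{S}\cup\mathcal{T})$-\CPNA directly, by a two-stage conditioning argument: I would peel off the $T$-variables first using that $\mu_2$ is $\mathcal{T}$-\CPNA, then the $S$-variables using that $\mu_1$ is $\mathcal{S}$-\CPNA, with independence of $\mu \in \mu_1 \otimes \mu_2$ used to factor the joint expectation into an iterated one. Since $\mu_1,\mu_2$ are countable distributions, every expectation is a countable sum of non-negative terms, so Tonelli lets me freely write $\mathbb{E}_{m\sim\mu} = \mathbb{E}_{m_S\sim\mu_1}\mathbb{E}_{m_T\sim\mu_2}$ with $m = m_S \bowtie m_T$, and I never have to worry about integrability. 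So I would fix an arbitrary partition $\mathcal{U}$ coarsening $\mathcal{S}\cup\mathcal{T}$ and an arbitrary family of non-negative monotone functions $\{f_C : \Mem[C] \to \mathbb{R}^+\}_{C\in\mathcal{U}}$ (the antitone case being symmetric), and for each block $C$ write $C^S = C\cap S$ and $C^T = C\cap T$. Because $\mathcal{S}$ partitions $S$ and $\mathcal{T}$ partitions $T$, each block of $\mathcal{S}\cup\mathcal{T}$ lies wholly in $S$ or wholly in $T$, so each $C^S$ is a union of $\mathcal{S}$-blocks and each $C^T$ a union of $\mathcal{T}$-blocks; hence the nonempty $C^S$ form a partition $\mathcal{U}_S$ coarsening $\mathcal{S}$, and the nonempty $C^T$ a partition $\mathcal{U}_T$ coarsening $\mathcal{T}$.

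First I would fix $m_S$ and integrate over $m_T$. For each $C$ the map $m_T \mapsto f_C(\p_{C^S} m_S, \p_{C^T} m_T)$ depends only on $\p_{C^T} m_T$ and, with the $C^S$-coordinates frozen, is non-negative and monotone in the point-wise order (a constant when $C^T = \emptyset$). Pulling the constant factors (blocks with $C^T=\emptyset$) out of the expectation and applying the monotone case of $\mathcal{T}$-\CPNA for $\mu_2$ along the coarsening $\mathcal{U}_T$ would yield $\mathbb{E}_{m_T\sim\mu_2}[\prod_{C\in\mathcal{U}} f_C(\p_C m)] \le \prod_{C\in\mathcal{U}} F_C(m_S)$, where $F_C(m_S) \defeq \mathbb{E}_{m_T\sim\mu_2}[f_C(\p_{C^S} m_S, \p_{C^T} m_T)]$ (for $C^T=\emptyset$ this is just the constant $f_C(\p_{C} m_S)$, which is its own expectation, so the bound ranges over all of $\mathcal{U}$).

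Next I would integrate over $m_S$. The key point is that each $F_C$ is again a non-negative monotone function of $\p_{C^S} m_S$: non-negativity is inherited from $f_C$, and monotonicity holds since raising $\p_{C^S} m_S$ pointwise raises $f_C(\cdot,\p_{C^T} m_T)$ for every fixed $m_T$, an inequality preserved by $\mathbb{E}_{m_T}$. Taking $\mathbb{E}_{m_S\sim\mu_1}$ of the first bound, pulling out the constant factors (blocks with $C^S=\emptyset$), and applying $\mathcal{S}$-\CPNA for $\mu_1$ along $\mathcal{U}_S$ gives $\mathbb{E}_{m_S}[\prod_C F_C(m_S)] \le \prod_C \mathbb{E}_{m_S}[F_C(m_S)]$. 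Since independence gives $\mathbb{E}_{m_S}[F_C(m_S)] = \mathbb{E}_{m\sim\mu}[f_C(\p_C m)]$, chaining the two bounds through $\mathbb{E}_{m\sim\mu} = \mathbb{E}_{m_S}\mathbb{E}_{m_T}$ delivers exactly $\mathbb{E}_{m\sim\mu}[\prod_{C} f_C(\p_C m)] \le \prod_{C} \mathbb{E}_{m\sim\mu}[f_C(\p_C m)]$, which is the desired $(\mathcal{S}\cup\mathcal{T})$-\CPNA inequality.

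I expect the main obstacle to be the bookkeeping around the \emph{mixed} blocks --- those $C$ meeting both $S$ and $T$ --- since for such blocks $f_C$ genuinely couples the two marginals and does not factor. The decisive observation making the conditioning go through is that freezing one marginal's coordinates turns $f_C$ into a non-negative monotone (resp. antitone) function of the other marginal's coordinates, and that this structure survives one integration (yielding the $F_C$), so that each of the two \CPNA hypotheses applies in turn along the induced coarsenings $\mathcal{U}_T$ and $\mathcal{U}_S$. The remaining care is purely organizational: routing blocks with empty $C^T$ or $C^S$ through as constant factors so that both products genuinely range over all of $\mathcal{U}$.
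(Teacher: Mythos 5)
Your proof is correct. The core mechanism is the same one the paper uses --- condition on one marginal, observe that partially integrating out the other marginal's coordinates preserves non-negativity and monotonicity, then apply the second \CPNA{} hypothesis, with independence (Tonelli) supplying the factorization $\mathbb{E}_{\mu} = \mathbb{E}_{\mu_1}\mathbb{E}_{\mu_2}$ --- but your organization differs in a way worth noting. The paper isolates this conditioning argument as a separate \emph{binary} lemma (if $\mu$ is $\{S_1,S_2\}$-\CPNA{} and $\{T_1,T_2\}$-\CPNA{} with $S_1\cup S_2$ independent of $T_1\cup T_2$, then $\mu$ is $\{S_1\cup T_1, S_2\cup T_2\}$-\CPNA) and then derives the $n$-block statement by induction on the number of blocks of the coarsening, at each step merging the first $k$ blocks into one and invoking the binary lemma; this requires the observation that a product of non-negative monotone functions is again non-negative monotone so the merged block's function is admissible. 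You instead run the two-stage conditioning once, for all $n$ blocks of the coarsening $\mathcal{U}$ simultaneously, applying the $\mathcal{T}$-\CPNA{} hypothesis along the induced coarsening $\mathcal{U}_T$ and then the $\mathcal{S}$-\CPNA{} hypothesis along $\mathcal{U}_S$. Since the \CPNA{} definition already quantifies over arbitrary families of functions on a coarsening, this single $n$-ary application is legitimate and eliminates both the auxiliary lemma and the induction; the price is the bookkeeping you correctly identify --- splitting each block $C$ into $C\cap S$ and $C\cap T$, checking that the nonempty pieces form coarsenings of $\mathcal{S}$ and $\mathcal{T}$ (the paper's separate ``coarsening commutes with projection'' lemma), and routing blocks with an empty piece through as constant factors. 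Your argument is, if anything, the more direct of the two.
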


This theorem generalizes the independence closure for NA from~\Cref{thm:closure}.
Its proof, however, is more involved because \CPNA
is more expressive and is closed under coarsening.
\iffull
(See the proof in~\Cref{proof:IndepClosure}.)
\fi
%
%

Thus, we can combine $\PSLmodel$ and $\PNAmodel$ into a $\mathbf{2}$-BI model.
\begin{theorem}
	\label{theorem:XisMBI}
	Let $\mathbf{2} = \{1, 2\}$ with pre-order $1 \leq 2$.
	Let $\oplus_1 = \otimes$, $E_1 = \PSLunit$, $\oplus_2 = \oplus$, $E_2 = \PNAunit$.
	The structure $\ProbModel = (\PSLset, \sqsubseteq, \oplus_1, E_1, \oplus_2, E_2)$ is a $\mathbf{2}$-BI model.
\end{theorem}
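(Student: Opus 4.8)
The plan is to verify the two parts of \Cref{def:mbi-frame} directly. First, I would check that each of the two components $(\PSLset, \sqsubseteq, \oplus_1, E_1)$ and $(\PSLset, \sqsubseteq, \oplus_2, E_2)$ is individually a Down-Closed BI frame. The second component is exactly $\PNAmodel$, which is a BI frame by \Cref{theorem:PNAisBI}. The first component is the Independence structure $\PSLmodel$ of \Cref{def:indep-BI}, which the text already asserts to be a BI frame immediately after that definition. Since both share the same underlying set $\PSLset$ and the same pre-order $\sqsubseteq$ (both defined by domain-inclusion together with marginalization, $\pi_{\dom(\mu)}\mu' = \mu$), the structural compatibility required by \Cref{def:mbi-frame} is immediate.

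The remaining obligation is the (Operation Inclusion) condition: for $1 \leq 2$, we must show $\mu_1 \oplus_1 \mu_2 \subseteq \mu_1 \oplus_2 \mu_2$, i.e., $\mu_1 \otimes \mu_2 \subseteq \mu_1 \oplus \mu_2$. This is precisely the inclusion singled out in the text just before the statement, and it is established by \Cref{lemma:indep_closure} (Independence implies \CPNA). Concretely, take any $\mu \in \mu_1 \otimes \mu_2$ with $\mu_1 \in \DMem{S}$, $\mu_2 \in \DMem{T}$, $S \cap T = \emptyset$. By the properties of $\otimes$ noted earlier, $\pi_S \mu = \mu_1$ and $\pi_T \mu = \mu_2$, which matches the first two conjuncts in the definition of $\oplus$. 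For the \CPNA conjunct, fix any partitions $\mathcal{S}$ of $S$ and $\mathcal{T}$ of $T$ with $\mu_1$ being $\mathcal{S}$-\CPNA and $\mu_2$ being $\mathcal{T}$-\CPNA; since $\dom(\mu_1) \cap \dom(\mu_2) = \emptyset$, \Cref{lemma:indep_closure} applies and yields that $\mu$ is $(\mathcal{S} \cup \mathcal{T})$-\CPNA. Thus $\mu \in \mu_1 \oplus \mu_2$, giving the inclusion.

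I do not anticipate a serious obstacle here, since the theorem is essentially a bookkeeping step that packages previously established results: each side is a BI frame, they share states and pre-order, and the one genuinely new condition is exactly \Cref{lemma:indep_closure}. The only point requiring minor care is matching the universally-quantified form of the \CPNA condition in the definition of $\oplus$ (``for any partitions $\mathcal{S}, \mathcal{T}$ \ldots'') against the hypothesis-form of \Cref{lemma:indep_closure} (``if $\mu_1$ is $\mathcal{S}$-\CPNA and $\mu_2$ is $\mathcal{T}$-\CPNA''), but these align directly once one observes that $(\cup\mathcal{S}) \cap (\cup\mathcal{T}) = S \cap T = \emptyset$ is automatic from disjointness of the domains.
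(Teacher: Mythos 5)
Your proposal is correct and follows essentially the same route as the paper: both components are already known to be BI frames (the Independence structure from PSL and the \CPNA structure via \Cref{theorem:PNAisBI}), and the only new obligation, (Operation Inclusion) in the form $\mu_1 \otimes \mu_2 \subseteq \mu_1 \oplus \mu_2$, is discharged exactly as the paper does via \Cref{lemma:indep_closure} together with the marginalization properties of $\otimes$.
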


Thus $\ProbModel$ is a $\mathbf{2}$-BI frame on probabilistic memories.

\subsection{Combining with deterministic memory}

While we can model the program states of probabilistic programs as
probabilistic memories, some variables might only get deterministic
assignments. It is useful to know whether a variable is deterministic; for
instance, a deterministic variable is automatically independent of other
variables. To keep track of deterministic variables, we want a $\mathbf{2}$-BI
frame whose states distinguish deterministic memories and probabilistic
memories. We will construct it using a general approach for composing \LOGIC
models. In particular, we will compose $\ProbModel$ with a $\mathbf{2}$-BI
frame on deterministic memories.

We can define the product of two \LOGIC frames if they share the same pre-order for indexing, $\idxset$.
	\begin{definition}
		Let $\idxset$ be a pre-order.
		Given two \LOGIC frames,
 $\mathcal{X}_1 = (X_1, \sqsubseteq_1, \oplus_{(1, \idx \in \idxset)}, E_{(1, m  \in \idxset)})$ and
	$\mathcal{X}_2 = (X_2, \sqsubseteq_2, \oplus_{(2, \idx \in \idxset)}, E_{(2, \idx \in \idxset)})$.
	The product frame, $\mathcal{X} = \mathcal{X}_1 \times \mathcal{X}_2 = (X, \sqsubseteq, \oplus_{\idx \in \idxset}, E_{\idx \in \idxset})$ is defined as
\begin{itemize}
	\item $X = X_1 \times X_2$;
	\item $(x_1, x_2) \sqsubseteq (x_1', x_2')$ if and only if $x_1 \sqsubseteq_1 x_1'$ and $x_2 \sqsubseteq_2 x_2'$;
	\item For $\idx \in \idxset$, $(x_1, x_2) \oplus_{\idx} (x_1', x_2') = \{
    (y_1, y_2) \mid y_1 \in x_1 \oplus_{1, \idx} x_1' \land y_2 \in x_2 \oplus_{2, \idx} x_2' \}$;
	\item $E_{\idx} = E_{1, \idx} \times E_{2, \idx}$.
\end{itemize}
\end{definition}
\begin{restatable}{theorem}{ProductIsBI}
	If $\mathcal{X}_1$ and $\mathcal{X}_2$ are two $\LOGIC$ frames,
	then $\mathcal{X} = \mathcal{X}_1 \times \mathcal{X}_2$
	is also an \LOGIC frame.
\end{restatable}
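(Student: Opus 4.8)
The plan is to prove that the product structure $\mathcal{X} = \mathcal{X}_1 \times \mathcal{X}_2$ satisfies all the conditions of an \LOGIC frame by verifying each frame condition componentwise. Since both $\mathcal{X}_1$ and $\mathcal{X}_2$ are \LOGIC frames, for each index $\idx \in \idxset$ the components $(X_1, \sqsubseteq_1, \oplus_{1,\idx}, E_{1,\idx})$ and $(X_2, \sqsubseteq_2, \oplus_{2,\idx}, E_{2,\idx})$ are BI frames, and both share the preorder $(\idxset, \leq)$ with the (Operation Inclusion) property. The overall goal splits into two parts: first, show that for each fixed $\idx$, the structure $(X, \sqsubseteq, \oplus_\idx, E_\idx)$ is a BI frame (\Cref{def:bi-frame}); second, show that the (Operation Inclusion) condition of \Cref{def:mbi-frame} holds for $\mathcal{X}$.

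For the first part, I would verify each of the six BI-frame conditions by reducing the product condition to the corresponding conditions in the two factors. The key observation is that the product order and product operation are defined conjunctively: $(x_1,x_2) \sqsubseteq (x_1',x_2')$ holds iff both coordinates are related, and $(y_1,y_2) \in (x_1,x_2) \oplus_\idx (x_1',x_2')$ iff $y_1 \in x_1 \oplus_{1,\idx} x_1'$ and $y_2 \in x_2 \oplus_{2,\idx} x_2'$. The conditions that are purely universally-or-existentially quantified in a single witness---(Commutativity), (Unit Coherence), and (Unit Closure)---follow immediately by applying the factor conditions in each coordinate and pairing. For (Unit Existence), I would take the witness unit in the product to be the pair $(e_1, e_2)$ of the witnesses supplied by (Unit Existence) in each factor. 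The two conditions requiring an existential witness---(Down-Closed) and (Associativity)---need slightly more care: for (Down-Closed), given $(z_1,z_2) \in (x_1,x_2) \oplus_\idx (y_1,y_2)$ with $(x_1',x_2') \sqsubseteq (x_1,x_2)$ and $(y_1',y_2') \sqsubseteq (y_1,y_2)$, I would apply (Down-Closed) in each factor to obtain $z_1' \sqsubseteq_1 z_1$ and $z_2' \sqsubseteq_2 z_2$, then pair them as $(z_1',z_2')$; similarly (Associativity) produces intermediate states $s_1, s_2$ in each coordinate which combine to the required witness $(s_1,s_2)$.

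For the second part, (Operation Inclusion) for the product follows directly: if $\idx_1 \leq \idx_2$, then for any $(y_1,y_2) \in (x_1,x_2) \oplus_{\idx_1} (x_1',x_2')$ we have $y_1 \in x_1 \oplus_{1,\idx_1} x_1'$ and $y_2 \in x_2 \oplus_{2,\idx_1} x_2'$, so by (Operation Inclusion) in each factor, $y_1 \in x_1 \oplus_{1,\idx_2} x_1'$ and $y_2 \in x_2 \oplus_{2,\idx_2} x_2'$, whence $(y_1,y_2) \in (x_1,x_2) \oplus_{\idx_2} (x_1',x_2')$.

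The main obstacle, to the extent there is one, lies in the (Associativity) condition, since it is the only condition involving two nested applications of $\oplus$ and an existentially-quantified intermediate witness in each factor. The subtlety is to confirm that the witnesses $s_1 \in y_1 \oplus_{1,\idx} z_1$ and $s_2 \in y_2 \oplus_{2,\idx} z_2$ extracted independently from the two factors can indeed be paired into a single product state $(s_1,s_2)$ satisfying both $(s_1,s_2) \in (y_1,y_2) \oplus_\idx (z_1,z_2)$ and the outer combination; this works precisely because the product operation is defined coordinatewise and imposes no cross-coordinate coupling, so the witnesses in the two components are chosen entirely independently. Everything else is routine bookkeeping, unfolding the definitions of the product order, operation, and unit set and invoking the factor conditions.
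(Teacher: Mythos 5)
Your proposal is correct and follows essentially the same route as the paper's proof: both verify each BI-frame condition coordinatewise, pairing the independently obtained witnesses for (Down-Closed) and (Associativity), and then check (Operation Inclusion) componentwise. No gaps.
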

The proof is straightforward.

We now define a $\mathbf{2}$-BI frame modeling the independence and NA separation
on the deterministic memories. Because deterministic variables are
automatically independent of other variables, it is meaningless to check whether a set of
deterministic variables can be separated into two disjoint subsets
independent of each other. Thus, we do not require the separation of
domain when modeling the independence and NA of deterministic variables:
\begin{definition}
	Let $\DetSet = \cup_{S \in \Var} \Mem[S]$, and $\sqsubseteq$ be $=$, and the unit set $\DetUnit = \DetSet$. Define $\oplus_d$ by:

	\[ m_1 \oplus_d m_2 =
		\begin{cases}
			\{ m_1 \} &\quad \text{if $m_1 = m_2$} \\
			\emptyset &\quad \text{if $m_1 \neq m_2$}
		\end{cases}
	\]
\end{definition}

	\begin{theorem}
		The structure $\DetModel = (\DetSet, \sqsubseteq_d, \oplus_1, E_1, \oplus_2, E_2)$, where $\oplus_1 = \oplus_2 = \oplus_d$ and $E_1 = E_2 = \DetUnit$,
 is a $\mathbf{2}$-BI frame.
	\end{theorem}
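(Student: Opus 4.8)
The plan is to check the two clauses in the definition of a $\mathbf{2}$-BI frame (\Cref{def:mbi-frame}): that $(\DetSet, \sqsubseteq_d, \oplus_d, \DetUnit)$ is a Down-Closed BI frame (\Cref{def:bi-frame}), and that (Operation Inclusion) holds for the pre-order $1 \leq 2$ on $\mathbf{2}$. Because the two indexed operations and unit sets coincide, with $\oplus_1 = \oplus_2 = \oplus_d$ and $E_1 = E_2 = \DetUnit$, (Operation Inclusion) is immediate: $x \oplus_1 y$ and $x \oplus_2 y$ are literally the same set, so the required inclusion holds with equality. For the same reason I only need to verify the six BI frame conditions a single time, for $(\DetSet, \sqsubseteq_d, \oplus_d, \DetUnit)$.

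The key observation underlying every case is that $\sqsubseteq_d$ is plain equality and that $\oplus_d$ is supported only on the diagonal: $m_1 \oplus_d m_2$ is nonempty exactly when $m_1 = m_2$, and then it equals $\{m_1\}$. Hence any hypothesis of the form $z \in x \oplus_d y$ forces $x = y = z$, and each frame condition reduces to an instance of reflexivity. I would dispatch the conditions as follows. (Down-Closed) is trivial since $x' \sqsubseteq_d x$ and $y' \sqsubseteq_d y$ mean $x' = x$ and $y' = y$, so $z' := z$ works. (Commutativity) follows from the symmetry of the equality test defining $\oplus_d$. (Unit Existence) holds by taking $e := x \in \DetUnit$, since $x \in x \oplus_d x = \{x\}$. (Unit Coherence) holds because $x \in y \oplus_d e$ forces $y = x$, whence $y \sqsubseteq_d x$. (Unit Closure) is immediate as $\DetUnit = \DetSet$ is the whole carrier. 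Finally, for (Associativity), the hypotheses $w \in t \oplus_d z$ and $t \in x \oplus_d y$ force $x = y = t = z = w$, so the witness $s := y$ satisfies both $s \in y \oplus_d z$ and $w \in x \oplus_d s$.

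I do not anticipate a genuine obstacle, as the structure is deliberately degenerate. The only condition carrying real content is (Associativity) --- it is the one with an existentially quantified witness, and the one where the earlier probabilistic attempts in \Cref{sec:failed-attempts} broke down --- but here the diagonal support of $\oplus_d$ collapses all of $x, y, z, t, w$ to a single memory, so the witness is forced and the verification is immediate.
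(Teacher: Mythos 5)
Your proof is correct and is exactly the routine verification the paper intends (the paper states this theorem without proof, evidently regarding it as immediate): all six BI frame conditions collapse because $\sqsubseteq_d$ is equality and $\oplus_d$ is supported only on the diagonal, and (Operation Inclusion) holds with equality since $\oplus_1 = \oplus_2$. Each case you dispatch, including the forced witness $s := y$ in (Associativity), checks out.
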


  Both $\DetModel$ and $\ProbModel$ are $\mathbf{2}$-BI frames, so we can take
  their product.
\begin{corollary}
	$\CombModel = \DetModel \times \ProbModel$ is a $\mathbf{2}$-BI frame.
\end{corollary}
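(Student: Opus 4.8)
The plan is to derive this as an immediate instance of the general product construction established just above. The preceding product theorem states that whenever $\mathcal{X}_1$ and $\mathcal{X}_2$ are two $\LOGIC$ frames sharing the same indexing pre-order $\idxset$, their product $\mathcal{X}_1 \times \mathcal{X}_2$ is again an $\LOGIC$ frame. Here both inputs are $\mathbf{2}$-BI frames: that $\ProbModel$ is a $\mathbf{2}$-BI frame is \Cref{theorem:XisMBI}, and that $\DetModel$ is a $\mathbf{2}$-BI frame is the immediately preceding theorem. Since $\mathbf{2}$-BI is exactly $\idxset$-BI instantiated at $\idxset = \mathbf{2} = \{1,2\}$ with $1 \le 2$, both frames are indexed by the same pre-order, so the hypotheses of the product theorem are met.

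Concretely, I would instantiate the product theorem with $\idxset = \mathbf{2}$, $\mathcal{X}_1 = \DetModel$, and $\mathcal{X}_2 = \ProbModel$. The theorem then yields directly that $\CombModel = \DetModel \times \ProbModel$ is an $\idxset$-BI frame for $\idxset = \mathbf{2}$, that is, a $\mathbf{2}$-BI frame, which is the claim. No further verification is needed, since the product construction already defines the pre-order, the indexed operations, and the unit sets coordinatewise on pairs $(m, \mu)$ of a deterministic and a probabilistic memory.

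Because the corollary is a pure instantiation, there is essentially no obstacle at this level: all the real work has been discharged in the proof of the product theorem, where the frame conditions of \Cref{def:bi-frame} together with the (Operation Inclusion) condition of \Cref{def:mbi-frame} are checked component-wise. If I were instead proving the product theorem itself, the only mildly delicate points would be (Down-Closed) and (Associativity), since the component operations return \emph{sets} of states and must be recombined across both coordinates simultaneously; these go through because the product operation is defined coordinatewise and each factor already satisfies the corresponding condition, and (Operation Inclusion) for the product likewise reduces to (Operation Inclusion) in each factor.
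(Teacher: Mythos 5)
Your proof is correct and matches the paper exactly: the corollary is obtained by instantiating the product theorem at $\idxset = \mathbf{2}$ with the two preceding theorems supplying that $\DetModel$ and $\ProbModel$ are $\mathbf{2}$-BI frames. The paper likewise treats this as an immediate consequence, with all the substantive work residing in the proof of the product theorem.
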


As desired, the states of $\CombModel$ describe both deterministic memories and
probabilistic memories.  Furthermore, restricting to the BI model in
$\CombModel$ with operators indexed by 1 recovers the probabilistic BI model
in~\citet{PSL}.

\section{Program logic}
\label{sec:psl}
Given the model for NA developed in the previous section, we now have a suitable
logic of assertions. In this section, we complete the picture by designing a
program logic, named \programlogic, for reasoning about negative association and
independence on probabilistic programs.
We defer proofs and details to
\iffull
\Cref{sec:app:seplog}.
\else
the extended version.
\fi

\subsection{Probabilistic programs}
We consider probabilistic programs in a basic probabilistic imperative language
\Lang. Let $\DetVar, \RanVar$ be disjoint countable subsets of $\Var$ that
respectively contain all deterministic variables and all probabilistic
variables. We consider program states to be a pair of a deterministic memory
$\sigma$, and a distribution $\mu$ over the probabilistic memory, i.e.,
$(\sigma, \mu) \in \Mem[\DetVar] \times \DMem{\RanVar}$.

Because we will want to decompose a program state as a product of two disjoint
memories, each satisfying a sub-formula, we also want to interpret program
expressions on memories whose probabilistic part is only on part of $\RanVar$. These
memories have type $\Mem[\DetVar] \times \DMem{T}$ for some $T \subseteq
\RanVar$, and we call them \emph{configurations}, denoted $\Config$.

We assume all expressions in \Lang are well-typed:
\begin{align*}
	\Expr &\ni e \defdefeq \DetVar \mid \RanVar \mid [\Expr, \dots, \Expr] \mid \Expr + \Expr \mid \Expr \land \Expr \mid \dots
\end{align*}
Given an expression $e$, we can interpret it as $\Mem[\DetVar] \times \Mem[T]
\to \Val$ for any $T \subseteq \RanVar$ that includes all the free variables in
$e$.  We can also lift it to an interpretation from configurations to
distributions of values, i.e.,
\iffull
$\Sem{e}: \Mem[\DetVar] \times \DMem{T} \to \Dist(\Val)$
(see~\Cref{def:semantics_expressions}).
\else
$\Sem{e}: \Mem[\DetVar] \times \DMem{T} \to \Dist(\Val)$.
\fi

We then define commands in \Lang and again assume that they are well-typed:
\begin{align*}
	\Command \ni c &\defdefeq \Skip \mid \Assn{\DetVar}{\Exp} \mid \Assn{\RanVar}{\Exp} \mid  \Rand{\RanVar}{\Unif{T}} \mid \Seq{\Command}{\Command} \\
																	&	\mid \RCond{\Expr}{\Command}{\Command}
																\mid \RWhile{\Expr}{\Command}.
\end{align*}
The randomization is introduced by the sampling command:
$\Rand{\RanVar}{\Unif{T}}$, where $\Unif{T}$ stands for the uniform
distribution on a multi-set  $T$.
We assume that the $\mathbf{while}$ loops terminate in finite steps
	on all inputs.
We also assume that an expression assigned to a
deterministic variable only mentions deterministic variables, and a command
branching on a randomized expression does not assign to deterministic variables
in its body/branches. This assumption ensures that deterministic
variables will not receive randomized values during the execution. It is not
difficult
to enforce this condition by a syntactic restriction, which we omit for a cleaner
presentation.

Following the standard semantics for probabilistic programs due
to~\citet{Kozen81}, we interpret \Lang programs as transformers from
program states to program states, i.e.,
\[\Sem{c}:  \Mem[\DetVar] \times \DMem{\RanVar} \to \Mem[\DetVar] \times \DMem{\RanVar}. \]
\iffull
The semantics of \Lang is standard
(see~\Cref{def:semantics_lang}).
\else
The semantics of \Lang is standard.
\fi

In our examples, \emph{permutation distributions}, uniform distributions over
\emph{$\textsf{permutation}(A)$}:
\begin{definition}
Given a finite multi-set of $A$, a \emph{permutation of $A$} is a bijective function
$\alpha : A \to A$.
We let \emph{}$\textsf{permutation}(A)$ be the multi-set of $A$'s permutations.
When $A$ has duplicates, we distinguish them using additional labels;
so there are always $|A|!$ elements in $\textsf{permutation}(A)$.
\end{definition}

Let \textsf{one-hot}([n]) denote the set of length-$n$ one hot vectors. We then
define the shorthands:
\begin{align*}
  \Rand{\RanVar}{\perm(A)} &\triangleq \Rand{\RanVar}{\Unif{\textsf{permutation}(A)}} \\
  \Rand{\RanVar}{\onehot(n)} &\triangleq \Rand{\RanVar}{\Unif{\textsf{one-hot}([n])}}\\
  \Condt{b}{c} &\triangleq \Cond{b}{c}{\Skip}.
\end{align*}

\subsection{Assertion Logic: atomic propositions and axioms}

Like other program logics, \programlogic has two layers: the program logic
layer describing the relation between pre-conditions, programs and
post-conditions, and the assertion logic layer describing program states.
In~\Cref{sec:model}, we have constructed a probabilistic model of
$\mathbf{2}$-BI, $\CombModel$, whose states encompass all of \Config, so our
starting point for the assertion logic is this model. In this section, we
introduce atomic propositions $\AP$ for describing states in $\CombModel$ and
some axioms that will hold on $\CombModel$.

We extend the core atomic formula from~\citet{PSL}.
To talk about probabilities on program states distributions,
we first define an \emph{event} to be a function that maps a deterministic
program configuration to 0 or 1,
and let $\Event$ be a set of expressions that can be interpreted as event on deterministic configurations
i.e., for any $ev \in \Event$, $\Sem{ev}:  \Mem[\DetVar] \times \Mem[T]
\to \{0, 1\}$ for some $T \subseteq \RanVar$. Since boolean expressions in the
programming language can also be interpreted as this type, we will let $\Event$
include all boolean expression. Let
\begin{align} \label{def:ap}
	\AP \ni p &\defdefeq \apUnif{\Expr}{T} \mid \apBern{\Expr}{p}	\mid \apDetm{\Expr}
  \mid \apEq{\Expr}{\Expr} \mid \Expr \leq \Expr \mid
		\apEventgeneral{\Event}{b} \mid \Pr[\Event] \bowtie \delta
\end{align}
where ${\bowtie} \mathrel{\in} \{ =, \leq, \geq \}$, $b \in \{0, 1\}$, and $\delta \in \mathbb{R}$ is a constant.
In particular, for boolean expression $e$ and for $b \in \{0,1\}$,
	since we can also view $e$ as an event,
	$\apEq{e}{b}$ and $\apEventgeneral{e}{b}$ are both valid atomic propositions.
	We distinguish their notations ($\apEq{}{}$ v.s. $\apEventgeneral{}{}$) because, in
	general, the left hand side of $\apEventgeneral{\Event}{b}$ may not be an
	expression and the left hand side of $\apEq{\Expr}{\Expr}$ may not be an
	event.

We define the satisfaction of atomic proposition on program configurations as follows.
Let $\FV(e)$ be the set of free variables in expression $e$.

\begin{definition}[Atomic Propositions]
	For $(\sigma, \mu) \in \CombModel$, define
	\begin{itemize}
		\item $(\sigma, \mu) \models \apUnif{e}{T}$ iff $\FV(e) \subseteq
		\dom(\sigma) \cup \dom(\mu)$ and $\Sem{e}(\sigma, \mu)$ is a distribution
		that assigns probability $\frac{1}{|T|}$ to each element of $T$;
		\item $(\sigma, \mu) \models \apBern{e}{p}$ iff $\FV(e) \subseteq \dom(\sigma)
		\cup \dom(\mu)$ and $\Sem{e}(\sigma, \mu)$ is a distribution that assign
		probability $p$ to 1 and probability $1-p$ to 0, i.e., the \emph{Bernoulli}
		distribution;
		\item $(\sigma, \mu) \models \apDetm{e}$ iff $\FV(e) \subseteq
		\dom(\sigma) \cup \dom(\mu)$ and $\Sem{e}(\sigma, \mu)$ is a Dirac
		distribution;
		\item $(\sigma, \mu) \models \apEq{e}{e'}$ iff $\FV(e) \cup \FV(e')
		\subseteq \dom(\sigma) \cup \dom(\mu)$ and $\Sem{e}(\sigma, m)
		=\Sem{e'}(\sigma, m)$ for any $m$ in the support of $\mu$;
		\item $(\sigma, \mu) \models e \leq e'$ iff  $\FV(e) \cup
		\FV(e') \subseteq \dom(\sigma) \cup \dom(\mu)$ and $\Sem{e} (\sigma, m) \leq
		\Sem{e'} (\sigma, m)$ for any $m$ in the support of $\mu$;
	\item $(\sigma, \mu) \models \apEventgeneral{ev}{b}$ if for any $m$ in the support
		of $\mu$, $\Sem{ev}(\sigma, m) = b$.
		\item $(\sigma, \mu) \models \Pr[ev] \bowtie \delta$ iff the probability of
		event $\Sem{ev}$ in $(\sigma, \mu)$, defined to be $\Pr_{(\sigma, \mu)}[ev] =
		\sum_{m \in \Mem[\dom(\mu)]} \mu(m) \cdot \Sem{ev}(\sigma, m)$, satisfies
		$\Pr_{(\sigma, \mu)}[e] \bowtie \delta$.
	\end{itemize}

  We use the abbreviations:
  \begin{itemize}
    \item $\apIn{e} \triangleq \apEq{e}{e}$. That is, $(\sigma, \mu) \models
      \apIn{e}$ holds if all of the variables in $e$ are defined in $\sigma$ and
      $\mu$.
    \item	$\apOnehot{e}{N} \triangleq \apUnif{e}{\textrm{one-hot([N])}}$.
    \item For multi-set $A$, $\apPerm{e}{A} \triangleq \apUnif{e}{\textrm{permutation(A)}}$.
  \end{itemize}
\end{definition}

For any operation $\odot \in \{ \land, \lor, \negand, \indand\}$, we pick the
corresponding big-operation $\bigodot \in \left\{ \bigwedge, \bigvee, \bigneg,
\bigind\right\}$ to be
\iffull
their iterated version (see~\Cref{def:bignotation}).
\else
their iterated version.
\fi

With the atomic propositions and abbreviations, we can formally state that $\CombModel$ captures NA.
\begin{restatable}{theorem}{starcapturesNA}
		\label{theorem:starcapturesNA}
		Let $S$ be any subset of $\RanVar$.
		A set of randomized program variables $Y = \{y_i \mid 0 \leq i < K\}$ satisfies NA in distribution $\mu \in \DMem{S}$ if and only if
		for any deterministic memory $\sigma \in \Mem[\DetVar]$, we have
		$(\sigma, \mu) \models \bigneg_{i = 0}^{K} \apIn{y_i} $.
	\end{restatable}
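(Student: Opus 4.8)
The plan is to prove both directions by reducing the separating-conjunction assertion to a statement purely about \CPNA of the marginal $\mu_Y \defeq \pi_Y \mu$, and then invoke \Cref{theorem:PNAeqNA} to pass between NA of $Y$ and the finest-partition property ``$\mu_Y$ is $\{\{y_i\}\}_i$-\CPNA''. Two preliminary observations organize the argument. First, since $Y \subseteq \RanVar$ and each $\apIn{y_i}$ constrains only the probabilistic component, the deterministic memory $\sigma$ is inert: in any index-$2$ combination the deterministic parts must agree (the deterministic operation $\oplus_d$ is defined only on equal memories), so $\sigma$ is threaded unchanged through every piece of a decomposition and the statement holds for one $\sigma$ iff for all. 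Second, NA of $Y$ in $\mu$ depends only on $\mu_Y$, and by \Cref{theorem:PNAeqNA} it is equivalent to $\mu_Y$ being $\{\{y_i\}\}_i$-\CPNA. So it suffices to show that $(\sigma,\mu)\models \bigneg_i \apIn{y_i}$ holds iff $\mu_Y$ is $\{\{y_i\}\}_i$-\CPNA, which I prove by induction on $K$, peeling off $y_0$.

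For the forward direction (from NA to satisfaction), assume $\mu_Y$ is $\{\{y_i\}\}_i$-\CPNA and build a tight, singleton-block witness. With $\nu_0 = \pi_{\{y_0\}}\mu_Y$ and $\rho = \pi_{\{y_1,\dots\}}\mu_Y$, I claim $\mu_Y \in \nu_0 \oplus \rho$: the projections match, the domains are disjoint, and for the \CPNA side condition I must check that $\mu_Y$ is $(\{\{y_0\}\} \cup \mathcal{T})$-\CPNA for every partition $\mathcal{T}$ of $\{y_1,\dots\}$ witnessing $\rho$'s \CPNA --- but each $\{\{y_0\}\}\cup\mathcal{T}$ is a partition of $Y$, and $\mu_Y$ is $\mathcal{R}$-\CPNA for \emph{all} partitions $\mathcal{R}$ of $Y$ by \Cref{theorem:PNAeqNA}. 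Since $\rho$ is again NA as a marginal of an NA distribution (\Cref{thm:closure}), the induction hypothesis gives $(\sigma,\rho)\models\bigneg_{i\geq1}\apIn{y_i}$; threading $\sigma$ through the index-$2$ operation and using $(\sigma,\mu_Y)\sqsubseteq(\sigma,\mu)$ as the down-closure witness at the outermost step yields $(\sigma,\mu)\models\bigneg_i\apIn{y_i}$.

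The backward direction (from satisfaction to NA) is the main obstacle, and it is exactly where the naive reading meets the associativity difficulty flagged for the weak-NA structure in \Cref{sec:failed-attempts}. Unfolding $(\sigma,\mu)\models \apIn{y_0}\negand(\bigneg_{i\geq1}\apIn{y_i})$ yields only \emph{some} marginal $\lambda'\sqsubseteq\mu$ and a combination $\lambda'\in\nu_0\oplus\rho$ in which $\dom(\nu_0)$ may strictly contain $y_0$ and $\dom(\rho)$ may carry ``gap'' variables beyond $y_1,\dots,y_{K-1}$; reading \CPNA off such a loose witness would require composing a coarse block-\CPNA with an internal refinement, which is the open problem noted for $\oplus_w$. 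The key idea that avoids this is to first \emph{shrink the components to singletons using the (Down-Closed) condition} of \Cref{def:bi-frame}. From $\lambda'\in\nu_0\oplus\rho$ with $\pi_{\{y_0\}}\nu_0\sqsubseteq\nu_0$ and $\rho_Y\defeq\pi_{\{y_1,\dots,y_{K-1}\}}\rho\sqsubseteq\rho$, (Down-Closed) produces $\lambda''\sqsubseteq\lambda'$ with $\lambda''\in\pi_{\{y_0\}}\nu_0\oplus\rho_Y$; since $\lambda''\sqsubseteq\mu$ has domain exactly $Y$, it equals $\mu_Y$. The induction hypothesis applied to $(\sigma,\rho)\models\bigneg_{i\geq1}\apIn{y_i}$ shows $\rho_Y$ is $\{\{y_1\},\dots,\{y_{K-1}\}\}$-\CPNA, so this singleton partition is a \emph{valid} choice of $\mathcal{T}$ in the definition of $\oplus$ for $\mu_Y\in\pi_{\{y_0\}}\nu_0\oplus\rho_Y$; instantiating it forces $\mu_Y$ to be $\{\{y_0\},\dots,\{y_{K-1}\}\}$-\CPNA, and \Cref{theorem:PNAeqNA} returns NA of $Y$ in $\mu$.

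I expect the backward induction's use of (Down-Closed) to be the crux: it is what lets us instantiate the universally quantified refinement $\mathcal{T}$ in the definition of $\oplus$ with the \emph{fine} singleton partition supplied by the induction hypothesis, sidestepping the block/gap composition that dooms the weak-NA model. The remaining steps --- the deterministic-memory bookkeeping, preservation of NA under marginalization, and the trivial $\{D\}$-\CPNA of a one-block distribution used in the base case $K=1$ --- are routine.
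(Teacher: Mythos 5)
Your proof is correct, and the two directions relate to the paper's differently. The forward direction is essentially the paper's argument: an induction that peels off one variable at a time and verifies membership in $\oplus$ by using the fact, from \Cref{theorem:PNAeqNA}, that an NA distribution is $\mathcal{R}$-\CPNA for \emph{every} partition $\mathcal{R}$ of $Y$ (which of its endpoints you peel is immaterial). The backward direction, however, takes a genuinely different route. The paper avoids induction entirely: for each disjoint pair $A, B \subseteq Y$ it re-associates the iterated $\negand$ into three blocks ($A$, $B$, and the remainder), extracts a two-way split $\mu'' \in \mu_1 \oplus \mu_2$ with $A \subseteq \dom(\mu_1)$ and $B \subseteq \dom(\mu_2)$, and instantiates the universally quantified condition in the definition of $\oplus$ with only the \emph{trivial} one-block partitions of the two components; this yields $\{A,B\}$-\CPNA, hence $(A,B)$-NA, and ranging over all pairs recovers NA. Your argument instead inducts on $K$, uses (Down-Closed) to tighten the witness components to the exact marginals on $\{y_0\}$ and $\{y_1,\dots,y_{K-1}\}$, and feeds the \emph{fine} singleton partition supplied by the induction hypothesis into the $\oplus$ condition to obtain the finest-partition \CPNA of $\pi_Y\mu$ in one shot. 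Both routes are sound, and both ultimately lean on the nontrivial content of \Cref{theorem:PNAeqNA}. The paper's version is shorter because it never needs the internal \CPNA structure of the witnessing components, so the block/gap composition issue you worry about simply does not arise; your version buys a direct derivation of the finest-partition property, and your use of (Down-Closed) is a clean, correct way to discharge the mismatch between the free variables of the subformulas and the (possibly larger) domains of the witnessing marginals.
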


  In the $\CombModel$ model, all axioms from~\citet[Lemma 3, 4]{PSL} still hold,
  and we have new axioms for the negative association conjunction and the
  permutation distribution.

	\begin{restatable}{lemma}{PNAIntro}
		Let $x_{\gamma}$ be variables. The following axioms are valid in $\CombModel$.
  \begin{align}
			&\models \apOnehot{[x_0, \dots, x_{N - 1}]}{N} \to \bigneg_{\gamma = 0}^{N} \apIn{x_{\gamma}} \tag{OH-PNA}
      \label{ax:oh-pna}
			\\
			&\models \apPerm{[x_0, \dots, x_{N -1}]}{A} \to \bigneg_{\gamma = 0}^{N} \apIn{x_{\gamma}} \tag{Perm-PNA}
      \label{ax:perm-pna}
		\end{align}
\end{restatable}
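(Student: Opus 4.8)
The plan is to prove both axioms by reducing them to the already-established theorem that $\CombModel$ captures NA (\Cref{theorem:starcapturesNA}) together with the building-block facts about one-hot and permutation distributions (\Cref{thm:buildingblock}). The common structure of the two axioms is that a premise asserting a particular distributional shape on the vector $[x_0, \dots, x_{N-1}]$ entails that the entries are pairwise/jointly negatively associated, expressed via the iterated $\negand$. So the overall strategy is: unfold the semantics of the atomic proposition on the left, invoke the relevant case of \Cref{thm:buildingblock} to conclude the entries satisfy NA in the underlying distribution, and then apply \Cref{theorem:starcapturesNA} to translate NA into satisfaction of $\bigneg_{\gamma} \apIn{x_\gamma}$.

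Concretely, for \eqref{ax:oh-pna} I would fix an arbitrary state $(\sigma, \mu) \in \CombModel$ satisfying $\apOnehot{[x_0, \dots, x_{N-1}]}{N}$. By the definition of the atomic proposition, this means $\Sem{[x_0,\dots,x_{N-1}]}(\sigma,\mu)$ is the uniform distribution over length-$N$ one-hot vectors, so the joint distribution of $(x_0, \dots, x_{N-1})$ in $\mu$ is exactly a uniformly random one-hot vector. By the first case of \Cref{thm:buildingblock} (Bernoulli variables summing to $1$), the set $\{x_\gamma\}_\gamma$ satisfies NA in $\mu$. Now \Cref{theorem:starcapturesNA} gives, for every deterministic $\sigma$, that $(\sigma,\mu) \models \bigneg_{\gamma=0}^{N} \apIn{x_\gamma}$, which is the desired conclusion. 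The proof of \eqref{ax:perm-pna} is identical in shape, substituting the second case of \Cref{thm:buildingblock} (entries of a uniformly random permutation of a finite nonempty multiset) for the one-hot case.

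The main obstacle, and the step deserving the most care, is matching up the premises with the precise hypotheses of \Cref{thm:buildingblock} and of \Cref{theorem:starcapturesNA}. Three bookkeeping points need attention: first, one must check that the free variables $x_0, \dots, x_{N-1}$ are all defined in $(\sigma,\mu)$ and lie in $\RanVar$, so that the $\apIn{x_\gamma}$ assertions and the NA-capturing theorem actually apply; this is guaranteed by the $\FV$-containment clause baked into the satisfaction of $\apUnif{\cdot}{\cdot}$. Second, \Cref{theorem:starcapturesNA} is stated for a distribution $\mu \in \DMem{S}$ with $S \subseteq \RanVar$ and quantifies over \emph{all} deterministic memories $\sigma$, so I must confirm that the NA of $\{x_\gamma\}_\gamma$ is a property of the probabilistic part $\mu$ alone and does not depend on $\sigma$ — which holds because, under our well-typedness assumptions, the $x_\gamma$ are randomized variables whose joint law is read off from $\mu$. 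Third, \Cref{thm:buildingblock} is phrased for abstract random variables, whereas here the variables are coordinates of a program memory; the identification is routine but should be noted, treating $x_\gamma$ as the random variable $m \mapsto \p_{\{x_\gamma\}}(m)$ under $m \sim \mu$. Once these alignments are made explicit, both entailments follow immediately, so the bulk of the work is definitional unfolding rather than any substantive new argument.
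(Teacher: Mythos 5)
Your proof is correct and follows exactly the paper's own argument: unfold the satisfaction of the one-hot (resp.\ permutation) atomic proposition, invoke the corresponding case of \Cref{thm:buildingblock} to get NA of $\{x_\gamma\}_\gamma$ in $\mu$, and conclude via \Cref{theorem:starcapturesNA}. The additional bookkeeping you flag (free-variable containment, independence from $\sigma$) is sensible but does not change the route.
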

The two axioms follow from~\Cref{thm:buildingblock}, which shows that random
variables in one-hot distributions and permutation distributions are NA,
and~\Cref{theorem:starcapturesNA}, which shows that $\negand$ captures the NA of
random variables. We can also encode the monotone map closure
in~\Cref{thm:closure} as an axiom in the logic.

\begin{restatable}[\textsc{Monotone map}]{lemma}{MonotoneMapAxiom}
	Let $x$, $x_{\gamma, \alpha}$ and $y_{\gamma}$ be variables.
	The following is valid in $\CombModel$.
  \begin{align}
			\models \bigneg_{\gamma = 0}^{N} \left(\bigwedge_{\alpha = 0}^{K_{\gamma} + 1} \apIn{x_{\gamma, \alpha}} \right) &\land \bigwedge_{\gamma = 0}^{N} y_{\gamma} = f_{\gamma} \left(x_{\gamma, 0}, \dots, x_{\gamma, K_{\gamma} } \right) \to \bigneg_{\gamma=0}^{N} \apIn{y_{\gamma}}
      \notag \\
			&\textrm{ when } f_1, \dots, f_N \textrm{ all monotone or all antitone}
			\tag{Mono-Map} \label{ax:mono-map}
		\end{align}
\end{restatable}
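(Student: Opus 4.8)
The plan is to prove validity directly on the model: I fix an arbitrary state $(\sigma, \mu) \in \CombModel$ satisfying the premise and show it satisfies $\bigneg_{\gamma=0}^N \apIn{y_\gamma}$. By \Cref{theorem:starcapturesNA}, that conclusion is equivalent to the statement that the random variables $\{y_\gamma\}_\gamma$ are NA in $\mu$, so the whole lemma reduces to a monotone-map closure property (\Cref{thm:closure}(3)) transported to the partition structure that the $\negand$ connective exposes. Write $I_\gamma = \{x_{\gamma,0}, \dots, x_{\gamma,K_\gamma+1}\}$ for the $\gamma$-th block of variables, so the left conjunct asserts exactly that these blocks are the cells of a negatively associated partition.

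First I would unfold the left conjunct. By the satisfaction clause for $\negand$, together with the definition of $\oplus$ and (Associativity) of the \CPNA frame, the fact that $(\sigma, \mu)$ satisfies $\bigneg_\gamma \left( \bigwedge_\alpha \apIn{x_{\gamma,\alpha}} \right)$ yields a substate $\mu' \sqsubseteq \mu$ and a splitting $\mu' \in \mu_0 \oplus \cdots \oplus \mu_N$ with $I_\gamma \subseteq \dom(\mu_\gamma)$ for each $\gamma$. Since every $\mu_\gamma$ is trivially $\{\dom(\mu_\gamma)\}$-\CPNA, iterating the defining condition of $\oplus$ through associativity (as in the proof of \Cref{theorem:PNAisBI}) forces $\mu'$ to be $\{\dom(\mu_0), \dots, \dom(\mu_N)\}$-\CPNA. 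From the right conjunct and the semantics of the equality proposition $\apEq{y_\gamma}{f_\gamma(\cdots)}$, each $y_\gamma$ is almost surely equal to $f_\gamma(x_{\gamma,0}, \dots, x_{\gamma,K_\gamma})$, a function of the variables in block $I_\gamma \subseteq \dom(\mu_\gamma)$; hence the joint law of $(y_0, \dots, y_N)$ is the pushforward of $\mu'$ under the blockwise maps $f_\gamma$, and it suffices to argue about expectations under $\mu'$.

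The core step is to verify NA of $\{y_\gamma\}_\gamma$ straight from \Cref{def:na}. Given disjoint $U, V \subseteq \{0, \dots, N\}$ and simultaneously monotone (or simultaneously antitone), one-sided-bounded test functions $F$ and $G$, the composites $\tilde F = F(y_u, u \in U)$ and $\tilde G = G(y_v, v \in V)$ are again simultaneously monotone or simultaneously antitone functions of the disjoint variable groups $\bigcup_{u \in U} \dom(\mu_u)$ and $\bigcup_{v \in V} \dom(\mu_v)$, since composing a monotone/antitone outer function with monotone/antitone coordinate maps $f_\gamma$ preserves this (matching the case split on whether the $f_\gamma$ are all monotone or all antitone). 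These two groups, together with the leftover cells $\{\dom(\mu_\gamma)\}_{\gamma \notin U \cup V}$ carrying the constant function $1$, form a coarsening of $\{\dom(\mu_\gamma)\}_\gamma$. After shifting $\tilde F$ and $\tilde G$ by constants to reduce the one-sided boundedness of \Cref{def:na} to the non-negativity required in \Cref{def:PNA} (the shift leaves the covariance unchanged, exactly the reduction underlying \Cref{theorem:PNAeqNA}), the $\{\dom(\mu_\gamma)\}_\gamma$-\CPNA property of $\mu'$ gives $\EE[\tilde F \cdot \tilde G] \le \EE[\tilde F] \cdot \EE[\tilde G]$. Undoing the shift yields the NA inequality for $\{y_\gamma\}_\gamma$, and \Cref{theorem:starcapturesNA} then delivers the conclusion.

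I expect the main obstacle to be the bookkeeping in the first step: converting the existential, down-closure-based semantics of the iterated $\negand$ into the clean partition-\CPNA statement about $\mu'$. This needs care with the non-deterministic $\oplus$ (one genuinely uses that $\mu' \in \mu_0 \oplus \cdots \oplus \mu_N$, not merely a partial combination), with the possibility that $\dom(\mu_\gamma)$ is strictly larger than $I_\gamma$ (harmless, since $f_\gamma$ extends to a monotone map on the larger block that ignores the extra coordinates), and with justifying that the $y_\gamma$ can be treated as the pushforward of $\mu'$ even when some $y_\gamma$ does not literally lie in $\dom(\mu')$. By contrast, the monotone-composition and constant-shift manipulations of the last step are routine once the partition-\CPNA hypothesis is in hand.
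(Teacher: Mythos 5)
Your proposal is correct, and it reaches the conclusion by a somewhat different route in the key middle step. Both you and the paper first unfold the iterated $\negand$ to conclude that (a restriction of) $\mu$ is $\{I_\gamma\}_\gamma$-\CPNA for the partition into variable blocks --- the paper does this by an explicit downward induction on the number of blocks, you by iterating the definition of $\oplus$ through associativity, which amounts to the same thing --- and both finish with \Cref{theorem:starcapturesNA}. Where you diverge is in how negative association of the $y_\gamma$ is extracted: the paper proves a standalone pushforward lemma (its \Cref{lemma:monotoneclosure2CPNA}) showing that applying blockwise monotone maps to an $\mathcal{S}$-\CPNA distribution yields a $\mathcal{T}$-\CPNA distribution for the image partition (the proof commutes coarsening with the blockwise maps so as to handle \emph{every} coarsening of $\mathcal{T}$), and only then converts $\{\{y_\gamma\}\}$-\CPNA to NA via \Cref{theorem:PNAeqNA}. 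You instead verify \Cref{def:na} directly: you compose the NA test functions $F,G$ with the $f_\gamma$, observe the composites are again both monotone or both antitone on disjoint unions of blocks, and apply the \CPNA hypothesis to the single two-(plus-leftover-)cell coarsening that this pair of test functions requires. Your route avoids proving the fully general pushforward closure, at the cost of leaning on the \CPNA-to-NA bridge for arbitrary one-side-bounded test functions. On that point, one imprecision: a constant shift alone does not make an upper-bounded-but-unbounded-below test function non-negative, so "shifting by constants" undersells what is needed --- the truncation-plus-monotone-convergence argument from the backward direction of \Cref{theorem:PNAeqNA} (which you do correctly cite as "the reduction underlying" that theorem) is genuinely required there, not just a shift. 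Since that machinery is already established in the paper, this is a presentational gap rather than a mathematical one. Your flagged bookkeeping concerns (cells possibly larger than $I_\gamma$, and $y_\gamma$ possibly outside $\dom(\mu')$) are real but resolve exactly as you say.
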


When we establish NA from permutation distributions, it is preserved under not
only monotone/antitone maps but also any element-wise homogeneous maps. The
reason is that fixing a multi-set and a permutation, permuting first and then
applying the same map on each element is the equivalent to applying the map on
each element and then permuting. So applying homogeneous maps on a permutation
distribution gives another permutation distribution. We can capture this
property in an axiom.
\begin{lemma}[Permutation Map]
  Let $x_{\gamma}$ be variables, and $f(A)$ be $\{f(a) \mid a \in A \}$. The
  following axiom is valid in $\CombModel$.
  \begin{equation}
    \label{ax:perm-map}
    \models \apPerm{[x_1, \dots, x_N]}{A} \land \apEq{y}{[f(x_1), \dots, f(x_N)]} \to \apPerm{y}{f(A)} \tag{Perm-Map}
  \end{equation}
\end{lemma}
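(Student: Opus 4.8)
The plan is to unfold the definitions of the atomic propositions and reduce the claim to a statement about pushforward distributions along the element-wise application of $f$. Fix any state $(\sigma, \mu) \in \CombModel$ satisfying the antecedent. By the semantics of $\apPerm{[x_1,\dots,x_N]}{A} = \apUnif{[x_1,\dots,x_N]}{\textsf{permutation}(A)}$, the distribution $\Sem{[x_1,\dots,x_N]}(\sigma,\mu)$ is uniform over the $N!$ labeled elements of $\textsf{permutation}(A)$. Because $\apEq{y}{[f(x_1),\dots,f(x_N)]}$ forces $\Sem{y}(\sigma,m) = \Sem{[f(x_1),\dots,f(x_N)]}(\sigma,m)$ on every $m$ in the support of $\mu$, the distribution $\Sem{y}(\sigma,\mu)$ is exactly the pushforward of $\Sem{[x_1,\dots,x_N]}(\sigma,\mu)$ along the map $F : (b_1,\dots,b_N) \mapsto (f(b_1),\dots,f(b_N))$. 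The side-conditions on free variables (namely $\FV(y) \subseteq \dom(\sigma)\cup\dom(\mu)$) transfer directly from the $\apEq{}{}$ conjunct of the antecedent, so the only remaining task is to identify this pushforward with the uniform distribution on $\textsf{permutation}(f(A))$.

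The core of the argument is that $F$ restricts to a bijection from $\textsf{permutation}(A)$ onto $\textsf{permutation}(f(A))$. To see this, fix a reference listing $[a_1,\dots,a_N]$ of the labeled elements of $A$; then every element of $\textsf{permutation}(A)$ is uniquely of the form $[a_{\pi(1)},\dots,a_{\pi(N)}]$ for a permutation $\pi$ of $\{1,\dots,N\}$, giving a bijection between $\textsf{permutation}(A)$ and the symmetric group on $\{1,\dots,N\}$. Using the labeling convention from the definition of $\textsf{permutation}$, the multiset $f(A)$ also has $N$ labeled elements, the $i$-th carrying value $f(a_i)$, so $[f(a_1),\dots,f(a_N)]$ is a reference listing of $f(A)$ and $\textsf{permutation}(f(A))$ is likewise in bijection with the same symmetric group. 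Under these two identifications, $F$ sends $[a_{\pi(1)},\dots,a_{\pi(N)}]$ to $[f(a_{\pi(1)}),\dots,f(a_{\pi(N)})]$, i.e. $F$ corresponds to the \emph{identity} permutation map. This is precisely the observation stated informally before the lemma: permuting first and then applying $f$ elementwise coincides with applying $f$ and then permuting. Hence $F$ is a bijection, and it carries the uniform distribution on $\textsf{permutation}(A)$ to the uniform distribution on $\textsf{permutation}(f(A))$, which establishes $\apPerm{y}{f(A)}$.

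The step I expect to be the main obstacle is making the bijection claim rigorous in the presence of duplicate values: since $f$ need not be injective, $f(A)$ may contain many equal values, and without the labeling convention $F$ would fail to be injective as a map on value-vectors, so the pushforward would not be uniform. The labeling convention that guarantees $|\textsf{permutation}(A)| = |\textsf{permutation}(f(A))| = N!$ regardless of collisions among values is exactly what makes $F$ a bijection on the labeled elements, and the bulk of the write-up will consist of verifying that $F$ preserves labels so that this counting matches up. Once the bijection is in place, the transfer of uniformity is immediate.
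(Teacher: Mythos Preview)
Your proposal is correct and follows the same idea the paper sketches informally just before the lemma statement: permuting first and then applying $f$ elementwise is the same as applying $f$ elementwise and then permuting, so the pushforward of the uniform permutation distribution is again a uniform permutation distribution. The paper does not give a formal proof beyond that one-sentence justification, so your explicit treatment of the labeling convention for multisets---and the verification that $F$ induces a bijection on the $N!$ labeled elements rather than merely a (non-injective) map on value-vectors---fills in exactly the detail the paper elides.
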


\subsection{Restricting the assertion language}
When designing a separation logic for reasoning about negative association and
independence, we sometimes want to separate out a smaller configuration
$(\sigma', \mu')$ inside a given program state $(\sigma, \mu) \models \phi$,
such that $(\sigma', \mu')$  satisfies some sub-formula of $\phi$.  In the
program logic we will present in~\Cref{sec:programlogic}, the soundness of
\textsc{RCase}, \textsc{Const}, \textsc{Frame} and \textsc{NegFrame} rules all
rely on the ability to do that.  To ensure there exists such a smaller
configuration, we require the assertion logic to satisfy a key condition called
\emph{restriction}, which says that to check whether a configuration satisfies
$\phi$, it suffices to check whether the configuration's projection on
$\FV(\phi)$ satisfies $\phi$.  We identify a subset of \LOGIC formulas that
satisfy the restriction property when interpreted on states in $\CombModel$:

\begin{definition}
	We define \AssertLOGIC as
\begin{align*}
	\AssertLOGIC \ni P, Q &::= p \in \mathcal{AP}
	\mid \top
	\mid \bot
	\mid P \land Q
	\mid P \lor Q
	\mid P \rightarrow Q
	\mid P \indand Q
	\mid P \indimp Q
	\mid P \negand Q
\end{align*}
where $\mathcal{AP}$ is defined as in~\Cref{def:ap}.
\end{definition}

\AssertLOGIC omits multiplicative identities $I_{\idx}$ because on $\CombModel$
they are all equivalent to $\top$.
The only limitation is that \AssertLOGIC excludes the use of $\negimp$.

\begin{restatable}[Restriction]{theorem}{MBIRestriction}
	\label{theorem:MBIRestriction}
	Let $(\sigma, \mu)$ be any configuration,
		and let $\phi$ be an \AssertLOGIC formula interpreted on $\CombModel$,
		Then, for any $m \in \Mem[\DetVar \setminus \FV(\phi)]$,

		\[(\sigma, \mu) \models \phi \iff (\p_{\FV(\phi)}\sigma \bowtie m, \pi_{\FV(\phi)}\mu) \models \phi. \]
	\end{restatable}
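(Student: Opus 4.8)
The plan is to prove the biconditional by structural induction on the \AssertLOGIC formula $\phi$, establishing that satisfaction depends only on the deterministic values on $\FV(\phi) \cap \DetVar$ and the marginal $\pi_{\FV(\phi)\cap\RanVar}\mu$. Throughout I write $W = \FV(\phi)$, $F = W \cap \RanVar$, and abbreviate the right-hand configuration as $(\sigma_W, \mu_W) := (\p_{\FV(\phi)}\sigma \bowtie m,\ \pi_F\mu)$ (reading $\pi_F\mu$ as $\pi_{F\cap\dom(\mu)}\mu$ when $F \not\subseteq \dom(\mu)$). Two routine ingredients drive every case: a coincidence lemma, saying that for an expression $e$ with $\FV(e) \subseteq S$ both the value $\Sem{e}(\sigma,m)$ and the induced distribution $\Sem{e}(\sigma,\mu)$ depend only on $\p_S\sigma$ and $\pi_S\mu$; and transitivity of projection, $\pi_S(\pi_{S'}\mu) = \pi_S\mu$ for $S \subseteq S'$, with its deterministic analogue.

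The base and propositional cases are direct. For $p \in \AP$, $\top$, and $\bot$, the coincidence lemma gives the claim at once (noting that if the left-hand side holds then $F \subseteq \dom(\mu)$, so the prescribed projection carries the domain needed for $p$). For $P \land Q$ and $P \lor Q$ I apply the induction hypothesis to each immediate subformula and use projection transitivity to pass between $\FV(P),\FV(Q)$ and the larger set $W$. The case $P \to Q$ is the first where extensions must be matched: proving the projected instance from the full one uses a gluing lemma to extend $\mu$ by the fresh $F$-variables that a witness $\mu_W' \sqsupseteq \pi_F\mu$ introduces, with its prescribed joint marginal on $F$ (possible since $\mu$ and $\mu_W'$ agree on the overlap $F \cap \dom(\mu)$), while the converse direction merely restricts a full-state extension $\mu' \sqsupseteq \mu$ down to $F$.

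For the separating conjunctions $P \sepand_{\idx} Q$ I treat both $\oplus_1 = \otimes$ and $\oplus_2 = \oplus$ uniformly through the (Down-Closed) frame condition of \Cref{def:bi-frame}, which holds in $\CombModel$ since it is a product of BI frames. In the forward direction, a decomposition $\pi_D\mu \in \mu_y \oplus_{\idx} \mu_z$ with $(\sigma,\mu_y)\models P$ and $(\sigma,\mu_z)\models Q$ is first thinned, by the induction hypothesis, to the restricted witnesses $\pi_{\FV(P)}\mu_y \sqsubseteq \mu_y$ and $\pi_{\FV(Q)}\mu_z \sqsubseteq \mu_z$; then (Down-Closed) yields a combined state below $\pi_D\mu$ over the smaller domain $F$, which the order $\sqsubseteq$ forces to equal $\pi_F\mu$, furnishing the required decomposition of $(\sigma_W,\mu_W)$. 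The reverse direction is easier, since $\pi_F\mu \sqsubseteq \mu$: any decomposition of the projected state is already a decomposition witnessing $(\sigma,\mu)\models P \sepand_{\idx} Q$ after transporting the subwitnesses back along the induction hypothesis.

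The hard case, and the conceptual heart of the theorem, is the independence wand $P \indimp Q$; this is also where it becomes clear why $\negimp$ must be excluded from \AssertLOGIC. Satisfaction quantifies over all extensions $\mu_y$ disjoint from $\mu$, so replacing $\mu$ by $\pi_F\mu$ alters both which extensions are admissible (disjointness is taken against $\dom(\mu)$ versus $F$) and the combined state on which $Q$ is tested. I plan to reconcile the two quantifications using two special features of the independent product: it is single-valued, and its marginals factor, $\pi_{F_Q}(\mu \otimes \mu_y) = (\pi_{F_Q\cap\dom(\mu)}\mu)\otimes(\pi_{F_Q\cap\dom(\mu_y)}\mu_y)$ where $F_Q = \FV(Q)\cap\RanVar$. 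From the full wand to the projected one, I rename the variables of a projected-state extension that clash with $\dom(\mu)\setminus F$ to fresh names; since such variables lie outside $F \supseteq (\FV(P)\cup\FV(Q))\cap\RanVar$, renaming affects neither $P$ nor $Q$, and the factored marginals show the two $Q$-tests coincide. In the other direction I restrict a full-state extension $\mu_y$ to $F \cap \dom(\mu_y)$, verify it is still a $P$-satisfying extension disjoint from $\pi_F\mu$, and again invoke marginal factoring. The main obstacle is exactly this bookkeeping, and it pinpoints why the argument breaks for $\negimp$: the negative-association operation $\oplus$ is genuinely non-deterministic (\Cref{thm:PNAoplus:nondet}) and its combined distributions do not factor, so a distribution in $\pi_F\mu \oplus \mu_y$ need not be the $F$-projection of any distribution in $\mu \oplus \mu_y$—the correlations with $\dom(\mu)\setminus F$ cannot be reconstructed. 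Restriction therefore genuinely fails for $\negimp$, confirming that the theorem is stated for precisely the right fragment.
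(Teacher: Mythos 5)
Your proposal is correct and follows essentially the same strategy as the paper: a structural induction in which the only genuinely new case, $P \negand Q$, is discharged by thinning the two witnesses via the induction hypothesis and then invoking the (Down-Closed) frame condition to recombine them below the original state, after which the pre-order forces the recombined state to equal $\pi_{\FV(\phi)}\mu$. The main organizational difference is that the paper factors the theorem into two separate inductions---restriction on the deterministic memory (\Cref{lemma:det_restriction}) and restriction on the randomized memory (\Cref{lemma:random_restriction})---and composes them, whereas you run a single combined induction; this is mathematically equivalent but forces you to track both components in every case, which the paper's split avoids. You also work out the $P \to Q$ and $P \indimp Q$ cases explicitly (gluing an extension of the projected state back onto $\mu$ along the overlap $\FV(\phi)\cap\dom(\mu)$, and renaming clashing extension variables plus factoring the marginals of $\otimes$), whereas the paper defers these to the corresponding argument for the probabilistic model of \citet{PSL}; your sketches are sound, and your diagnosis of why the argument cannot extend to $\negimp$---the non-determinism of $\oplus$ and the failure of its outputs to factor---matches the paper's explicit counterexample for that connective.
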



  Indeed, we can exhibit a counterexample showing that $\negimp$ does not
  satisfy restriction.

	\begin{restatable}{theorem}{CounterEx}
		There exists $(\sigma, \mu) \in \Config$ and formula $\phi$ such that
		$(\sigma, \mu) \models \phi$ but $(\sigma, \pi_{\FV(\phi)}) \not\models \phi$.
	\end{restatable}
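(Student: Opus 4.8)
The plan is to exhibit a single configuration together with a formula built from $\negimp$ for which restriction fails; since the claim is purely existential, it suffices to construct the witness and verify the two satisfaction facts by direct computation. Take two $0/1$ probabilistic variables $x, w$ and let $\mu \in \DMem{\{x,w\}}$ be the one-hot distribution, $\mu([x\mapsto 1, w\mapsto 0]) = \mu([x\mapsto 0, w\mapsto 1]) = 1/2$, with $\sigma$ the empty deterministic memory. I would consider
\[
  \phi \defeq \apBern{y}{1/2} \negimp \left( \apIn{x} \indand \apIn{y} \right),
\]
so that $\FV(\phi) = \{x,y\}$. Since $y \notin \dom(\mu)$, the relevant projection is $\pi_{\FV(\phi)}\mu = \pi_{\{x\}}\mu$, the Bernoulli$(1/2)$ distribution on $\{x\}$.

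\textbf{The full state satisfies $\phi$.} I would fix an arbitrary $\mu'$ with $\mu'\models\apBern{y}{1/2}$ (so $y\in\dom(\mu')$ takes values in $\{0,1\}$) and an arbitrary $\nu\in\mu\oplus\mu'$; the goal is $\nu\models\apIn{x}\indand\apIn{y}$, which by \Cref{theorem:MBIRestriction} applied to this \AssertLOGIC subformula amounts to $x$ and $y$ being independent in $\nu$. Because the one-hot $\mu$ is NA (\Cref{thm:buildingblock}), it is $\{\{x\},\{w\}\}$-\CPNA (\Cref{theorem:PNAeqNA}); hence by the definition of $\oplus$, $\nu$ is $(\{\{x\},\{w\}\}\cup\mathcal{T})$-\CPNA for every partition $\mathcal{T}$ of $\dom(\mu')$ for which $\mu'$ is $\mathcal{T}$-\CPNA. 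Pick any such $\mathcal{T}$ (the trivial partition always qualifies) and let $B\in\mathcal{T}$ be the block containing $y$. Plugging the non-negative monotone family $f_{\{x\}}(x)=x$, $f_B=(\text{the }y\text{-coordinate})$, and the constant $1$ on every other block into the \CPNA inequality yields $\EE[xy]\le\EE[x]\,\EE[y]$; using instead $f_{\{w\}}(w)=w$, $f_B=y$, and constants elsewhere, together with the fact that $w = 1-x$ almost surely under the one-hot law, yields $\EE[(1-x)y]\le\EE[1-x]\,\EE[y]$, i.e.\ $\EE[xy]\ge\EE[x]\,\EE[y]$. Thus $\EE[xy]=\EE[x]\,\EE[y]$, and since $x,y$ are $0/1$-valued this equality forces independence. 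Every admissible $\nu$ therefore satisfies the consequent, so $(\sigma,\mu)\models\phi$.

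\textbf{The projection fails $\phi$.} For $(\sigma,\pi_{\{x\}}\mu)$ I would choose $\mu' = \mathrm{Bern}(1/2)$ on $\{y\}$ and let $\nu^\ast$ be the one-hot distribution on $\{x,y\}$. Its marginals match $\pi_{\{x\}}\mu$ and $\mu'$, and one-hot distributions are NA, hence $\{\{x\},\{y\}\}$-\CPNA, so $\nu^\ast\in\pi_{\{x\}}\mu\oplus\mu'$. But $x$ and $y$ are anti-correlated in $\nu^\ast$, with $\EE[xy]=0\ne 1/4=\EE[x]\,\EE[y]$, so $\nu^\ast\not\models\apIn{x}\indand\apIn{y}$; this single combination witnesses $(\sigma,\pi_{\{x\}}\mu)\not\models\phi$.

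\textbf{Main obstacle.} The delicate step is the full-state direction, and specifically handling an \emph{arbitrary} combined state $\mu'$ rather than one supported on $\{y\}$ alone: I must argue that independence $\EE[xy]=\EE[x]\,\EE[y]$ is forced even when $y$ lies inside a larger \CPNA-block $B$ of $\mu'$. This is exactly where using the $y$-coordinate projection as the monotone test function for $B$ (and constants on every other block) is essential, together with the identity $w=1-x$ that converts the two one-sided NA inequalities into an equality. Conceptually, restriction fails because, unlike $\negand$ whose semantics supplies a downward witness $x'\sqsubseteq x$ allowing one to restrict to free variables, the $\negimp$ semantics quantifies over upward combinations $\nu\in\mu\oplus\mu'$; discarding $w$ relaxes the \CPNA constraint (it removes precisely the $\{w\}$-block that forced $x\perp y$) and so enlarges the set of admissible combinations, breaking the equivalence.
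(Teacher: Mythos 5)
Your proof is correct and is essentially the paper's own construction: the same one-hot state with a hidden anti-correlated variable (your $w$ is the paper's $A$), the same $\negimp$-formula forcing independence of $x$ and $y$ via the two opposing $\{\{x\},\{w\}\}$-vs-$y$ \CPNA inequalities and $w=1-x$, and the same one-hot witness for the failure after projection. The only differences are cosmetic — you argue directly with covariance of binary variables where the paper derives a contradiction via Bayes' rule — so no further comparison is needed.
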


  In the following, we will consider $\AssertLOGIC$ formula on the $\CombModel$
  model as the assertion logic.

\subsection{The program logic}
\label{sec:programlogic}
We now introduce the program logic layer of \programlogic.
Judgements in \programlogic have the form $\hoare{\prop}{c}{\propB}$, where $c \in \Command$ is a probabilistic program,
and $\prop, \propB \in \AssertLOGIC$ are restricted assertions.

	\begin{definition}[Validity]
    A \programlogic judgment is \emph{valid}, written $\models
    \hoare{\prop}{c}{\propB}$, if for all $( \sigma, \mu ) \in \Mem[\DetVar]
    \times \DMem{\RanVar}$ such that $( \sigma, \mu )  \models \prop$, we have
    $\Sem{c}(\sigma, \mu ) \models \propB$.
	\end{definition}

	\begin{figure}
		\begin{mathpar}
			\inferrule*[Left=DAssn]
			{~}
			{ \vdash \hoare{\psi[e_d/x_d]}{\Assn{x_d}{e_d}}{\psi} }
			\and
			\inferrule*[Left=Skip]
			{~}
			{ \vdash \hoare{\phi}{\Skip}{\phi} }
			\and
			\inferrule*[Left=Seqn]
			{ \vdash \hoare{\phi}{c}{\psi} \\ \vdash \hoare{\psi}{c'}{\eta} }
			{ \vdash \hoare{\phi}{\Seq{c}{c'}}{\eta} }
			\\
			\inferrule*[Left= Cond]
      { \vdash \hoare{\phi \land \apEq{b}{\ktt}}{c}{\psi}
        \\ \vdash \hoare{\phi \land \apEq{b}{\kff} }{c'}{\psi}
		\\ \models \phi \rightarrow \apDetm{b} }
			{ \vdash \hoare{\phi}{\RCond{b}{c}{c'}}{\psi} }
			\and
			\inferrule*[Left= Loop]
      { \vdash \hoare{\phi \land \apEq{b}{\ktt}}{c}{\phi}
			\\ \models \phi \to \apDetm{b} }
      { \vdash \hoare{\phi}{\RWhile{b}{c}}{\phi \land \apEq{b}{\kff} } }
		\and
			\inferrule*[Left=RAssn]
			{ x_r \notin \FV(e_r) }
      { \vdash \hoare{\top}{\Assn{x_r}{e_r}}{\apEq{x_r}{e_r} } }
			\and
			\inferrule*[Left=RSamp]
			{~}
			{ \vdash \hoare{\top}{\Rand{x_r}{\Unif{S}}}{\apUnif{x_r}{S}} }
			\and
			\inferrule*[Left=RSamp*]
			{x_r \notin \FV(\phi) }
			{ \vdash \hoare{\phi}{\Rand{x_r}{\Unif{S}}}{\phi \indand \apUnif{x_r}{S}} }
      \\
			\inferrule*[Left=Weak]
			{ \vdash \hoare{\phi}{c}{\psi} \\ \models \phi' \to \phi \land \psi \to \psi' }
			{ \vdash \hoare{\phi'}{c}{\psi'} }
			\and
			\inferrule*[Left=True]
			{~}
			{ \vdash \hoare{\top}{c}{\top} }
			\\
			\inferrule*[Left=Conj]
			{ \vdash \hoare{\phi_1}{c}{\psi_1} \\
			\vdash \hoare{\phi_2}{c}{\psi_2} }
			{ \vdash \hoare{\phi_1 \land \phi_2}{c}{\psi_1 \land \psi_2} }
			\and
			\inferrule*[Left=Case]
			{ \vdash \hoare{\phi_1}{c}{\psi_1} \\
			\vdash \hoare{\phi_2}{c}{\psi_2} }
			{ \vdash \hoare{\phi_1 \lor \phi_2}{c}{\psi_1 \lor \psi_2} }
			\\
			\inferrule*[Left=Const]
			{ \vdash \hoare{\phi}{c}{\psi} \\ \FV(\eta) \cap \MV(c) = \emptyset }
			{ \vdash \hoare{\phi \land \eta}{c}{\psi \land \eta} }
			\\
			\inferrule*[Left=Frame]
			{ \vdash \hoare{\phi}{c}{\psi} \\
				\FV(\eta) \cap \MV(c) = \emptyset \\
				\FV(\psi) \subseteq T \cup \RV(c) \cup \WV(c) \\
			\models \phi \to \apIn{T \cup \RV(c)}}
			{ \vdash \hoare{\phi \indand \eta}{c}{\psi \indand \eta} }
		\end{mathpar}

		\caption{\programlogic rules: from PSL.}
		\label{def:programlogic:psl}
  \end{figure}

  \begin{figure}
		\begin{mathpar}
      \inferrule*[Left=RCase]
      { \eta \in \CCond
        \\ \models_\Mem \eta \to \bigvee_{\alpha \in S} \eta_{\alpha}
        \\ \psi \in \text{CM}
        \\ \forall \alpha \in S.~\vdash \hoare{\phi \indand \eta_{\alpha}}{c}{\psi}
      }
      { \vdash \hoare{\phi \indand \eta}{c}{\psi} }
						\\
      \inferrule*[Left=NegFrame]
      {
        \models \phi \to \apIn{\RV(c)} \\
        \FV(\eta) \cap \MV(c) = \emptyset \\
        X \subseteq \RV(c) \setminus \MV(c) \\
        y \notin \FV(\eta) \\\\
        \vdash \hoare{\phi}{c}{y \sim f(X)} \\
        \textrm{$f$ is a monotone function}
      }
      { \vdash \hoare{ \phi \negand \eta}{c}{\apIn{y}  \negand \eta} }
						\\
						\inferrule*[Left=ProbBound]
						{\vdash \hoare{\apEvent{ev_1}}{c}{\Pr[ev_2] \leq \delta}}
						{\vdash \hoare{\Pr[ev_1] \geq 1 - \epsilon}{c}{\Pr[ev_2] \leq \delta + \epsilon}}
		\end{mathpar}

		\caption{\programlogic rules: new and extended.}
		\label{def:programlogic:new}
	\end{figure}

  Next, we present the proof system of \programlogic. Since our assertions are a
  conservative extension of assertions from PSL, most of the rules carry over
  unchanged; we list existing rules in \Cref{def:programlogic:psl}. Here, we
  comment on the new and generalized rules, which we list in
  \Cref{def:programlogic:new}.

  \paragraph*{NA frame rule.}
  Our most important addition is the frame rule for the negative association
  conjunction $\negand$. Informally, the \textsc{NegFrame} rule says that if a
  set of variables $X$ is negatively associated with another set of variables
  $Y$ that satisfy $\eta$ in a program state, and the program $c$ performs a
  monotone operation $f$ on $X$ and stores the result in a variable $y$, then in
  the resulting program state, $y$ and the untouched variables $Y$ will also be
  negatively associated, and $Y$ will still satisfy $\eta$.  Like the
  \textsc{Frame} rule for independence $\indand$, the \textsc{NegFrame} rule
  uses syntactic restrictions to control which variables the program may read
  and write. The three sets of variables $\RV(c), \WV(c), \MV(c)$ represent the
  variables that $c$ may read from, must write to, and may modify, respectively;
  these sets can be defined by induction on the syntax of the program. Roughly,
  the side conditions guarantee the program $c$ does not read from or modify $Y$,
  the set of variables satisfying $\eta$; they in addition guarantee that $X$,
  the domain of the monotone map will not be modified by $c$, and $y$, the
  codomain of the monotone map does not belong to $Y$.


  \paragraph*{Generalized random case analysis.}
  As a more minor extension, we also generalize the randomized case analysis
  rule from PSL in \textsc{RCase}. At a high level, this rule allows reasoning
  by case analysis on a property $\eta$ of the program memory (e.g., whether a
  variable is true or false). Since the input is a distribution, which may have
  some probability of $\eta$ holding, and some probability of $\eta$ not
  holding, soundness of the rule is a delicate matter requiring several
  technical side conditions. The original rule in PSL only allowed case analysis
  on a Boolean expression; we generalize this rule to allow a case analysis on
  any finite number of cases (e.g., performing case analysis on the value of a
  bounded variable).

  To explain this rule, we first introduce the side conditions in order. We say
  that a formula $\eta$ is \emph{closed under conditioning} (CC) if for any
  $(\sigma, \mu) \models \eta$, for any $m$ in the support of $\mu$,
  $(\sigma, \delta(m)) \models \eta$. In the second condition, $\models_\Mem
  \phi$ denotes that for any $\sigma \in \Mem[\DetVar]$, $m \in \Mem[T]$ where
  $T \subseteq \RanVar$, $(\sigma, \delta(m)) \models \phi$, which says
		$\phi$ is valid on all effectively deterministic configurations. Finally, we say
  that a formula $\phi$ is \emph{closed under mixtures} (CM) if $(\sigma, \mu_1)
  \models \phi$,	$(\sigma, \mu_2) \models \phi$ and $\mu$ is a convex
  combination of $\mu_1, \mu_2$ together imply $(\sigma, \mu') \models \phi$.

  Then, the rule \textsc{RCase} says if an assertion $\eta$ is independent from
  the rest of the assertions in the pre-condition, $\eta$ is closed under
  conditioning, and the post-condition $\psi$ is closed under mixtures, then we
  can perform case analysis on $\eta$ to derive $\hoare{\phi \sepand
  \eta}{c}{\psi}$. Intuitively, every memory $m$ in the support of the input
  memory distribution satisfies $\eta_a$ for some case $a \in S$. The main premise shows that
  the output distribution of program $c$ from any such input $m$ satisfies $\psi$. Then,
  since any distribution $\mu$ on inputs is a convex combination of
  such memories $m$, and $\psi$ holds on each conditional output distribution,
		we have $\psi$ holds on the entire output distribution by convex closure.

  \paragraph*{Bounding bad events.}
  In addition, we present the rule \textsc{ProbBound} to facilitate bounding
  tail probabilities. It says that if the pre-condition $\apEvent{ev_1}$
  guarantees that event $ev_2$ happens for at most $\delta$ probability
  after command $c$, then in general, event $ev_2$ happens for at most
  probability $\delta + \epsilon$ after $c$, where $\epsilon$ upper bounds the
  probability that $ev_1$ is not true in the pre-condition.  The validity of
  this rule uses the law of total probability, which says for any two events
  $ev_1$ and $ev_2$,
  \begin{align*}
    \Pr(ev_1) &= \Pr(ev_1 \mid ev_2) \cdot \Pr(ev_2) + \Pr(ev_1 \mid \neg ev_2) \cdot \Pr(\neg ev_2) \\
              &\leq \Pr(ev_1 \mid ev_2) +  \Pr(\neg ev_2).
  \end{align*}
  As expected, the \programlogic proof system is sound.
		\begin{restatable}{theorem}{SoundnessProgramlogic}
    \textrm{(Soundness of \programlogic)}
    If $\vdash \hoare{\phi}{c}{\psi}$ is derivable,
    then it is valid: $\models \hoare{\phi}{c}{\psi}$.
  \end{restatable}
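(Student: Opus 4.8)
The plan is to prove soundness by induction on the derivation of $\vdash \hoare{\phi}{c}{\psi}$, showing that each proof rule preserves validity: assuming the premises hold semantically and the inductive hypotheses give validity of the sub-derivations, I would establish the conclusion $\models \hoare{\phi}{c}{\psi}$ directly from the denotational semantics $\Sem{c}$ and the satisfaction relation $\models$ on $\CombModel$. I would split the cases along the two rule groups. For the rules inherited from PSL (\textsc{DAssn}, \textsc{Skip}, \textsc{Seqn}, \textsc{Cond}, \textsc{Loop}, \textsc{RAssn}, \textsc{RSamp}, \textsc{RSamp*}, \textsc{Weak}, \textsc{True}, \textsc{Conj}, \textsc{Case}, \textsc{Const}, \textsc{Frame}), I would lean on the fact that restricting $\CombModel$ to its index-$1$ operators recovers the probabilistic model of~\citet{PSL}, so these arguments go through essentially as in that work. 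The only inherited rules needing genuine care are those that carve out a sub-configuration (\textsc{Const} and \textsc{Frame}), where I would invoke the Restriction theorem (\Cref{theorem:MBIRestriction}) in place of PSL's corresponding lemma to justify projecting onto the relevant free variables.

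The new content lies in the three rules of \Cref{def:programlogic:new}. For \textsc{ProbBound}, soundness is immediate from the law of total probability displayed in the text, applied to the semantic interpretation of $\Pr[\cdot]$. For \textsc{RCase}, I would argue as sketched: given $(\sigma,\mu)\models\phi\indand\eta$, the independence separation lets me condition on $\eta$ without disturbing $\phi$; since $\eta$ is closed under conditioning, each $m$ in the support of $\mu$ yields $(\sigma,\delta(m))\models\eta$, hence $(\sigma,\delta(m))\models\eta_{\alpha}$ for some $\alpha\in S$ by the second premise, and the main premise gives $\Sem{c}(\sigma,\delta(m))\models\psi$; because $\mu$ is a convex combination of these Diracs and $\psi$ is closed under mixtures, the full output $\Sem{c}(\sigma,\mu)$ satisfies $\psi$. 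The bookkeeping is to confirm that the independence side condition really licenses the pointwise conditioning and that the finite index set $S$ covers every memory in the support.

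The main obstacle is \textsc{NegFrame}. I would start from $(\sigma,\mu)\models\phi\negand\eta$ and unfold the semantics of $\negand$ (the index-$2$ separating conjunction, interpreted via the \CPNA-operation $\oplus$): using \Cref{theorem:MBIRestriction} I would extract a decomposition of $(\sigma,\mu)$ into a $\phi$-part on $\FV(\phi)$ and an $\eta$-part on $\FV(\eta)$ whose randomized domains are combined by $\oplus$, so that these two variable blocks are negatively associated. The side conditions $\FV(\eta)\cap\MV(c)=\emptyset$, $X\subseteq\RV(c)\setminus\MV(c)$, $y\notin\FV(\eta)$, and $\models\phi\to\apIn{\RV(c)}$ together make the execution of $c$ local to the $\phi$-block: the $\eta$-block is untouched, so $\eta$ still holds with unchanged marginal, and by the inner judgment $\vdash\hoare{\phi}{c}{\apEq{y}{f(X)}}$ the program sets $y$ to the monotone image $f(X)$ of variables $X$ drawn from the $\phi$-block. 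I would then translate the NA separation into the probability-theoretic statement via the characterization that $\negand$ captures NA (\Cref{theorem:starcapturesNA}), apply the monotone-map closure property (\Cref{thm:closure}, also packaged as the \textsc{Mono-Map} axiom) with $f$ on the $X$-side and the identity on the $\eta$-side, and translate back to conclude $\Sem{c}(\sigma,\mu)\models\apIn{y}\negand\eta$. The two delicate points I expect to spend the most effort on are (i) justifying that the program's action on the full configuration agrees, after projection, with its action on the isolated $\phi$-part --- the locality argument that the read/write/modify side conditions are designed to support, in direct analogy to the ordinary \textsc{Frame} case --- and (ii) matching the partition produced by the $\oplus$-decomposition of $(\sigma,\mu)$ with the partition required to invoke monotone closure, which is precisely where closure of \CPNA under coarsening is used.
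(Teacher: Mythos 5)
Your overall skeleton — induction on derivations, inheriting the PSL cases via the index-$1$ restriction of $\CombModel$ plus the Restriction theorem, and treating \textsc{Cond}/\textsc{Loop}/\textsc{RCase}/\textsc{ProbBound}/\textsc{NegFrame} individually — is the same as the paper's, and your \textsc{RCase} and \textsc{ProbBound} cases match the paper's arguments essentially verbatim.

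The gap is in the \textsc{NegFrame} case, specifically in the step where you propose to ``translate the NA separation into the probability-theoretic statement via \Cref{theorem:starcapturesNA}, apply \Cref{thm:closure}, and translate back.'' \Cref{theorem:starcapturesNA} characterizes only formulas of the shape $\bigneg_{i}\apIn{y_i}$ over singleton variables; in \textsc{NegFrame} the frame $\eta$ is an arbitrary restricted assertion, and the semantics of $\phi\negand\eta$ is membership of (a projection of) $\mu$ in $k_1\oplus k_2$, which is a \emph{universally quantified} \CPNA{} statement: the combined distribution must be $\mathcal{S}\cup\mathcal{T}$-\CPNA{} for \emph{every} pair of partitions witnessed by the two sides. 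Plain NA of the flattened variable set is neither derivable from this for general $\phi,\eta$ nor sufficient to re-establish membership in $g_1\oplus g_2$ in the postcondition, so the detour through NA and \Cref{thm:closure} does not close the case. The paper instead works directly at the \CPNA{} level: it observes that $\pi_{S_1}\mu$ is trivially $\{S_1\}$-\CPNA, so the input is $\{S_1\}\cup\mathcal{T}$-\CPNA{} for every $\mathcal{T}$ witnessing $\eta$, and then invokes a dedicated monotone-map closure lemma for \CPNA{} (stated for a partition with one distinguished block $S_1\mapsto\{y\}$ and identity on the $\mathcal{T}$-blocks, proved via closure under coarsening) to get $\{y\}\cup\mathcal{T}$-\CPNA{} for every such $\mathcal{T}$, hence $\mu_c\in g_1\oplus g_2$. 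Your ``delicate point (ii)'' shows you sensed that coarsening closure of \CPNA{} is where the action is, but the main line of your argument routes through the wrong closure theorem. A second, smaller omission: you need to argue from the inner judgment $\vdash\hoare{\phi}{c}{\apEq{y}{f(X)}}$ that the program's (a priori randomized) update to $y$ is in fact pointwise the Dirac of $f$ applied to the unmodified variables $X$ — the paper extracts this from the soundness of $\RV/\WV/\MV$ and the equality assertion — before any monotone-map closure can be applied; you assert this but do not indicate how it follows.
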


\section{Examples}

\label{sec:ex}

Now that we have introduced \programlogic, we present a series of formalized
case studies. Our examples are extracted from various algorithms using hashing
and balls-into-bins processes.

\subsection{Preliminaries: probabilities, expectations, concentration bounds}

Our examples will use a handful of standard facts about probability
distributions, encoded as axioms in the assertion logic. We will
generally mention these axioms before they are used, but here we
introduce one fact that we will use through all of our examples: the
\emph{Chernoff bound}.

In each of our examples we will establish negative
dependence of a sequence of random variables $\{ X_i \}_i$ and apply a
\emph{concentration bound}: a theorem showing that the sum $X_1 + \cdots + X_n$
is usually close to its expected value. This kind of analysis is useful for
establishing \emph{high-probability guarantees} of randomized algorithms, e.g.,
showing that the error of a random estimate is at most $0.01$ with probability
at least $99\%$.

\begin{theorem}[Chernoff bound for NA variables~\citep{dubhashi-ranjan}]
  \label{thm:chernoff}
  Let $X_1, \dots, X_n$ be a sequence of NA random variables, each bounded in
  $[0, 1]$, and let $Y = \sum_{i = 1}^n X_i$. Then for any failure probability
  $\beta \in (0, 1]$, we have:
  \[
    \Pr[ |Y - \EE[Y]| \geq T(\beta, n) ] \leq \beta
				\ \
				\text{ where }
				T(\beta, n) = \sqrt{ (n/2) \ln(2/\beta) }.
	\]
\end{theorem}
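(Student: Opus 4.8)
The plan is to prove this via the standard Chernoff--Hoeffding moment-generating-function (MGF) method, with the single place where negative association enters being the factorization of the MGF. The target radius $T(\beta,n) = \sqrt{(n/2)\ln(2/\beta)}$ is exactly what one obtains by setting a Hoeffding-type bound $2\exp(-2s^2/n)$ equal to $\beta$ and solving for $s$, so it suffices to establish $\Pr[\,|Y - \EE[Y]| \geq s\,] \leq 2\exp(-2s^2/n)$ for every $s \geq 0$ and then substitute.

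First I would handle the upper tail. Fix $t > 0$ and apply Markov's inequality to $e^{t(Y - \EE[Y])}$, giving $\Pr[Y - \EE[Y] \geq s] \leq e^{-ts}\,\EE[e^{t(Y - \EE[Y])}]$. Writing $Z_i = X_i - \EE[X_i]$, the MGF is $\EE[\prod_i e^{tZ_i}]$, and the crux is to bound it by $\prod_i \EE[e^{tZ_i}]$. This is where NA does its work: since $t > 0$, each $e^{tZ_i}$ is a non-negative monotone function of $X_i$, so I would peel off one variable at a time, taking $f = e^{tZ_1}$ (monotone in $X_1$) and $g = \prod_{i \geq 2} e^{tZ_i}$ (monotone in the remaining variables), and invoke the NA inequality of \Cref{def:na} with $I = \{1\}$ and $J = \{2,\dots,n\}$ to get $\EE[fg] \leq \EE[f]\,\EE[g]$; iterating yields the full product bound. (Both $f$ and $g$ are non-negative, hence lower bounded, meeting the side condition in the definition.) Next I would apply Hoeffding's lemma to each factor: since $X_i \in [0,1]$ and $\EE[Z_i] = 0$, we have $\EE[e^{tZ_i}] \leq e^{t^2/8}$, so $\EE[e^{t(Y-\EE[Y])}] \leq e^{nt^2/8}$ and thus $\Pr[Y - \EE[Y] \geq s] \leq e^{-ts + nt^2/8}$. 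Optimizing over $t$ (the minimizer is $t = 4s/n$) gives $\Pr[Y - \EE[Y] \geq s] \leq e^{-2s^2/n}$.

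For the lower tail I would run the same argument with a negative exponent: for $t > 0$, each factor $e^{-tZ_i}$ is now a non-negative \emph{antitone} function of $X_i$, so the NA inequality in its ``both antitone'' form again supplies the product bound, and symmetrically $\Pr[Y - \EE[Y] \leq -s] \leq e^{-2s^2/n}$. A union bound combines the two tails into $\Pr[\,|Y - \EE[Y]| \geq s\,] \leq 2 e^{-2s^2/n}$, and substituting $s = T(\beta,n)$ makes $2s^2/n = \ln(2/\beta)$, so the right-hand side collapses to $2 \cdot (\beta/2) = \beta$, as claimed.

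The main obstacle is the MGF factorization: the NA definition as stated compares only a single pair of monotone functions $f,g$ on disjoint index sets, so deriving $\EE[\prod_i e^{tZ_i}] \leq \prod_i \EE[e^{tZ_i}]$ requires an inductive peeling argument in which, at each stage, the remaining product $\prod_{i \geq k} e^{tZ_i}$ must still be recognized as a monotone, non-negative function of the not-yet-removed variables. The partition form of negative dependence developed in the paper (\CPNA, \Cref{def:PNA}) gives this decomposition directly, so a cleaner route would be to first use \Cref{theorem:PNAeqNA} to pass from NA to $\{\{i\}\}_i$-\CPNA and then apply the \CPNA inequality to the family $\{e^{tZ_i}\}_i$ in one shot, sidestepping the induction entirely.
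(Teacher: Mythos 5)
Your proof is correct. The paper does not prove this theorem at all --- it is imported as a black box from the cited survey of Dubhashi and Ranjan --- and your argument is precisely the standard one given there: Markov's inequality on $e^{t(Y-\EE[Y])}$, factorization of the moment generating function by peeling off one variable at a time via the NA inequality (with $I=\{k\}$, $J=\{k+1,\dots,n\}$, both functions non-negative and monotone for the upper tail, antitone for the lower tail), Hoeffding's lemma $\EE[e^{tZ_i}]\leq e^{t^2/8}$, optimization at $t=4s/n$, and a union bound; the substitution $s=T(\beta,n)$ checks out. Your closing observation that the paper's \CPNA notion (\Cref{def:PNA} together with \Cref{theorem:PNAeqNA}) delivers the product bound $\EE[\prod_i e^{tZ_i}]\leq\prod_i\EE[e^{tZ_i}]$ in a single step, without the peeling induction, is a nice remark and consistent with how the paper uses \CPNA elsewhere.
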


To hide complex numerical bounds, we use the notation $T(\beta, n)$
for the above function throughout. In our assertion logic, the
Chernoff bound can be encoded as the following axiom schema:

\begin{theorem}[Chernoff bound, axiom]
  Let $\{ x_\alpha \}$ be a family of variables indexed by $\alpha$, where
  each variable is bounded in $[0, 1]$ and is a monotone function of its program
  variables. Then for any $\beta \in (0, 1]$, the following axiom schema is
  sound in our model:
  \begin{equation} \label{ax:na-chernoff}
    \models \bigneg_{\alpha = 0}^N \apDist{x_\alpha}
    \to \Pr \left[ \ \bigmid \sum_{\alpha = 0}^N x_\alpha - \EE\left[\sum_{\alpha = 0}^N x_\alpha\right] \bigmid\ \geq T(\beta, n) \right] \leq \beta
    \tag{NA-Chernoff}
  \end{equation}
\end{theorem}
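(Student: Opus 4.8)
The plan is to reduce soundness of \eqref{ax:na-chernoff} to the purely probabilistic Chernoff bound for negatively associated variables stated in \Cref{thm:chernoff}. Since validity means $\models$, I fix an arbitrary configuration $(\sigma, \mu) \in \CombModel$ and assume it satisfies the antecedent $\bigneg_{\alpha = 0}^N \apDist{x_\alpha}$; the goal is to show that the consequent, an atomic proposition of the form $\Pr[\,\cdot\,] \leq \beta$, also holds at $(\sigma, \mu)$. By the satisfaction clause for $\Pr[ev] \leq \delta$, this amounts to bounding the single real number $\Pr_{(\sigma, \mu)}[\,|Y - \EE[Y]| \geq T(\beta, n)\,]$, where $Y = \sum_{\alpha} x_\alpha$ and the expectation is over $\mu$.

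First I would extract negative association of the summands from the antecedent. By \Cref{theorem:starcapturesNA}, satisfying $\bigneg_{\alpha=0}^N \apDist{x_\alpha}$ forces the underlying disjoint blocks of program variables $\FV(x_0), \dots, \FV(x_N)$ to be negatively associated in $\mu$. Because each $x_\alpha$ is, by hypothesis, a monotone function of the program variables in its own block, and these blocks are disjoint, I can apply the monotone-map closure property \Cref{thm:closure}(3) with the partition $\{\FV(x_\alpha)\}_\alpha$ and the monotone maps $x_\alpha$ itself, concluding that the derived real-valued random variables $x_0, \dots, x_N$ are negatively associated. With this established and using that each $x_\alpha$ is bounded in $[0,1]$, I invoke \Cref{thm:chernoff} directly on $x_0, \dots, x_N$ with failure probability $\beta$, obtaining $\Pr[\,|Y - \EE[Y]| \geq T(\beta, n)\,] \leq \beta$ (taking $n = N+1$, the number of summands). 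Unfolding the semantics of the atomic proposition then shows $(\sigma, \mu)$ satisfies the consequent, and since $(\sigma, \mu)$ was arbitrary, the schema is valid in $\CombModel$.

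The main obstacle is the middle step: the antecedent only directly constrains the \emph{program variables} underlying each $x_\alpha$, whereas \Cref{thm:chernoff} must be applied to the \emph{values} $x_\alpha$. Bridging this gap is exactly where the monotonicity hypothesis is used, via \Cref{thm:closure}(3); one must check that the monotone maps all point the same direction and that composing them with the monotone/antitone test functions in the definition of NA preserves monotonicity, so that the block-wise negative association transfers to the $x_\alpha$. The boundedness assumption is recorded separately solely to meet the hypotheses of \Cref{thm:chernoff}, and a small bookkeeping point is to match the parameter $n$ in $T(\beta, n)$ to the actual count $N+1$ of terms in the sum.
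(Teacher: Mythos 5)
Your overall reduction---unfold the semantics of the probability atom, transfer negative association from the program variables to the values $x_\alpha$, then invoke \Cref{thm:chernoff}---is exactly the intended route; the paper states this theorem without an explicit proof, and your skeleton matches the machinery it sets up for precisely this purpose.

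There is, however, one step that does not go through as written. You claim that \Cref{theorem:starcapturesNA} lets you conclude the underlying program variables are negatively associated and then apply \Cref{thm:closure}(3). But \Cref{thm:closure}(3) requires the \emph{flattened} set of variables $\bigcup_\alpha \FV(x_\alpha)$ to be NA, and the antecedent $\bigneg_{\alpha} \apDist{x_\alpha}$ does not deliver that. Unfolding the semantics of $\bigneg$ and of $\oplus$, each block $\FV(x_\alpha)$ enters the combination only with the trivial partition $\{\FV(x_\alpha)\}$ (since $\apDist{x_\alpha}$ constrains nothing but the domain), so what you actually obtain is that some $\mu' \sqsubseteq \mu$ is $\{\FV(x_0), \dots, \FV(x_N)\}$-\CPNA. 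Nothing forces the variables \emph{within} a block to be negatively associated among themselves---a block could carry positively correlated variables and the antecedent would still hold---so the hypothesis of \Cref{thm:closure}(3) can genuinely fail. The repair is to use the partition-level monotone-map closure instead: \Cref{lemma:monotoneclosure2CPNA} (equivalently, the axiom \eqref{ax:mono-map}) applied with the partition $\{\FV(x_\alpha)\}_\alpha$ and the monotone maps $f_\alpha$ defining the $x_\alpha$ yields that the pushforward distribution is $\{\{x_0\},\dots,\{x_N\}\}$-\CPNA, and then \Cref{theorem:PNAeqNA} converts this into NA of the real-valued variables $x_0,\dots,x_N$. From there your application of \Cref{thm:chernoff} with the boundedness hypothesis, and the final unfolding of the $\Pr[\cdot] \leq \beta$ atom, are correct. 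Your closing remark about matching $n$ to the number of summands is a fair catch (under the paper's convention $\bigneg_{\alpha=0}^{N}$ ranges over $N$ terms, so the mismatch is between $n$ and $N$ rather than $N+1$), but that is cosmetic.
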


We will also use a new expression in our assertions: $\EE[f]$, where $f$ is a
non-negative and bounded numeric expression, denotes the expected value of $f$
in the current program configuration. We also observe the following conventions
throughout the examples: logical variables are denoted by Greek ($\alpha, \beta,
\gamma, \dots$) and capital Roman letters ($M, N, K, \dots$). Program variables
start with lower-case Roman letters ($x, y, z, \dots$).

\subsection{Bloom filter, high-level}

\begin{figure*}
  \begin{subfigure}[b]{0.48\textwidth}
    \[
      \begin{array}{l}
        \textsc{Bloom}: \\
        \quad\Assn{bloom}{zero(N)}; \\
        \quad\Assn{m}{0}; \\
        \quad\DWhile{m < M}{} \\
        \quad\quad \Assn{h}{0} \\
        \quad\quad \DWhile{h < H}{} \\
        \quad\quad\quad \Rand{bin}{oh([N])}; \\
        \quad\quad\quad \Assn{upd}{bloom \mathop{||}  bin}; \\
        \quad\quad\quad \Assn{bloom}{upd}; \\
								\\
								\\
								\\
        \quad\quad\quad \Assn{h}{h+1}; \\
        \quad\quad \Assn{m}{m + 1};
      \end{array}
    \]
    \caption{Higher-level version}
    \label{fig:ex:bloom-v1}
  \end{subfigure}
  \hfill
  \begin{subfigure}[b]{0.48\textwidth}
    \[
      \begin{array}{l}
        \textsc{BloomArray}: \\
        \quad\Assn{bloom}{zero(N)}; \\
        \quad\Assn{m}{0}; \\
        \quad\DWhile{m < M}{} \\
        \quad\quad \Assn{h}{0} \\
        \quad\quad \DWhile{h < H}{} \\
        \quad\quad\quad \Rand{bin}{oh([N])}; \\
        \quad\quad\quad \Assn{n}{0}; \\
        \quad\quad\quad \DWhile{n < N}{} \\
        \quad\quad\quad\quad \Assn{upd}{bloom[n] \mathop{||} bin[n]}; \\
        \quad\quad\quad\quad \Assn{bloom[n]}{upd}; \\
        \quad\quad\quad\quad \Assn{n}{n + 1} \\
        \quad\quad\quad \Assn{h}{h+1}; \\
        \quad\quad \Assn{m}{m + 1}
      \end{array}
    \]
    \caption{Array version}
    \label{fig:ex:bloom-v2}
  \end{subfigure}

  \caption{Bloom filter examples}
\end{figure*}

Next, we revisit the Bloom filter example introduced in \Cref{sec:overview}.  We
show how to translate the informal argument in \Cref{sec:overview} into formal
proofs in our program logic. First, we will analyze the process of adding items
into a Bloom filter $bloom$ and prove that the entries in $bloom$ are negatively
associated at the end of the process. Second, we will analyze a program that
checks the membership of a new item in a given Bloom filter and show how
to bound its false positive rate. Last, we combine them together into one proof
that bounds the false positive rate of a Bloom filter with $M$ elements.

\paragraph{Proving NA of $bloom$}
We reproduce the code for \textsc{Bloom} in \Cref{fig:ex:bloom-v1}.
This program is a higher-level version of the program in \Cref{fig:ex:bloom-v2},
which performs array operations bit-by-bit.
We align the two versions so that the equivalent operations are side-by-side.
We will demonstrate our program logic on the higher-level version first and
analyze the array version later in~\Cref{sec:bloom_filter:lowlevel}.

Recall that the code models inserting $M$ distinct elements into a Bloom filter
backed by an array $bloom$ of length $N$, where each element is hashed by $H$
functions, each producing an element of $[N]$ uniformly at random.
We refer to the outer loop as $\mathit{outer}$, and the inner
loop as $\mathit{inner}$. For both the outer and the inner loop, we apply the
rule \textsc{Loop} with the loop invariant:
$ \bigneg_{\beta = 0}^N \apDist{bloom[\beta]} $.
We consider the inner loop first. We show that the invariant is preserved by the
body of $\mathit{inner}$. After the $oh$ sampling command, \textsc{RSamp*} gives:
\[ \left(\bigneg_{\beta = 0}^N \apDist{bloom[\beta]}\right) \indand \apOnehot{bin}{[N]} \]
By negative association of the one-hot distribution \eqref{ax:oh-pna}, we get
\[ \left(\bigneg_{\beta = 0}^N \apDist{bloom[\beta]}\right) \indand \left(\bigneg_{\gamma = 0}^N bin[\gamma]\right)\]
which implies
\[ \left(\bigneg_{\beta = 0}^N \apDist{bloom[\beta]}\right) \negand \left(\bigneg_{\gamma = 0}^N bin[\gamma]\right)\]
using \textsc{Weak}. Rearranging terms, this is equivalent to
\[ \bigneg_{\beta = 0}^N \apDist{bloom[\beta]} \negand \apDist{bin[\beta]} . \]
After the assignment to $upd$, we have:
\[
  \left(\bigneg_{\beta = 0}^N \apDist{bloom[\beta]} \negand \apDist{bin[\beta]} \right)
  \land \apEq{upd}{bloom \mathop{||} bin} .
\]
Because $\mathop{||}$ is monotone, applying the monotone mapping axiom
\eqref{ax:mono-map} gives us:
\[
  \bigneg_{\beta = 0}^N \apDist{upd[\beta]} .
\]
Using the assignment rule (\textsc{RAssn}) on the assignment to $\mathit{bloom}$
shows that the loop invariant is preserved by the inner loop. Thus,
\textsc{Loop} gives:
\[
  \hoare{\bigneg_{\beta = 0}^N \apDist{bloom[\beta]}}
  {\mathit{inner}}
  {\bigneg_{\beta = 0}^N \apDist{bloom[\beta]}}
\]
Next, we turn to the outer loop. The argument showing that the invariant is
preserved by the outer loop follows by a straightforward argument, since the
outer loop only modifies $bloom$ through the inner loop, so \textsc{Loop} gives:
\[
  \hoare{\bigneg_{\beta = 0}^N \apDist{bloom[\beta]}}
  {\mathit{outer}}
  {\bigneg_{\beta = 0}^N \apDist{bloom[\beta]}}
\]
Then, we have:
\[
  \hoare{\top}{\textsc{Bloom}}{\bigneg_{\beta = 0}^N \apDist{bloom[\beta]}}
\]
because initializing $bloom$ to the all-zeros vector, a deterministic value,
establishes the loop invariant. This judgment shows that the $bloom$ vector
satisfies NA at the end of the program.

\begin{wrapfigure}[12]{r}{0.4\textwidth}
  \begin{minipage}[c]{0.4\textwidth}
    \vspace{-1ex}
		\[
			\begin{array}{l}
				\CheckMem(H, bloom): \\
        \quad \Assn{h}{0}; \\
							\quad\Assn{allhit}{1} \\
        \quad \DWhile{h < H}{} \\
								\quad\quad \Rand{bin}{\Unif{[N]}}; \\
								\quad\quad \Assn{hit}{bloom[bin]}; \\
								\quad\quad\Assn{allhit}{hit \mathop{\&\&} allhit}; \\
       \quad\quad\Assn{h}{h+1};
						\end{array}
		\]
	\end{minipage}
	\caption{Check the membership of a new item}
	\label{fig:ex:bloom:checkmem}
\end{wrapfigure}

\paragraph{Bounding the false positive rate}
Now, we turn to verifying a bound on the false positive rate of the Bloom
filter. Recall that a false positive occurs if when querying with an element
that was not inserted, the filter returns true. We can encode the membership
check of a new element as a program \CheckMem$(H,bloom)$, listed
in~\Cref{fig:ex:bloom:checkmem}, which hashes the new element into $H$ uniformly
random positions and checks if these positions are all set to one in the filter.
If so, the Bloom filter will report that the new element is in set, when it was
never inserted---a false positive.

To verify the false positive rate, we place the program $\CheckMem(H, bloom)$
immediately after \textsc{Bloom}, and then verify a bound on the probability
that $allhit$ is 1 at the end of the combined program. We first apply the
Chernoff bound to the NA variables \eqref{ax:na-chernoff} to prove that, with
high probability, the number of occupied bins in \textsc{Bloom} is near its mean
with high probability:
\[
  \Bigg\{\top\Bigg\}{\textsc{Bloom}}
		\Bigg\{\Pr \left[ \ \bigmid \sum_{\beta = 0}^N bloom[\beta] - \EE\left[\sum_{\beta = 0}^N
  bloom[\beta] \right] \bigmid\ \geq T(\delta, N) \right] \leq \delta \Bigg\}.
\]
This concentration bound implies that a tail bound, which says with high
probability $\sum_{\beta = 0}^N bloom[\beta]$ is upper bounded by its expected
value plus $T(\delta, N)$,
\begin{align}
	\label{eq:ex:bloom}
	\Bigg\{\top\Bigg\} {\textsc{Bloom}}
		\Bigg\{\Pr \left[ \sum_{\beta = 0}^N bloom[\beta] < \EE\left[\sum_{\beta = 0}^N
		bloom[\beta] \right] + T(\delta, N) \right]  \geq 1 - \delta \Bigg\}.
	\end{align}
Then we analyze \CheckMem and show in
\iffull
\Cref{app:ex}
\else
the extended version
\fi{}
that
\begin{align*}
	\Big\{\sum_{\beta = 0}^N bloom[\beta] < K \Big\}{\CheckMem}
	\Big\{\Pr[allhit] \leq (K/N)^H \Big\}.
\end{align*}
Then, by the \textsc{ProbBound} rule and basic axioms about probabilities,
we have
\begin{align}
	\label{eq:ex:checkmem}
	\hoare{\Pr[\sum_{\beta = 0}^N bloom[\beta] < K] \geq 1 - \delta}{\CheckMem}{\Pr[allhit] \leq (K/N)^H + \delta}.
\end{align}
We then use \textsc{Seqn} to combine the proved judgements for \textsc{Bloom}
~\eqref{eq:ex:bloom} and \CheckMem~\eqref{eq:ex:checkmem} to derive that, for any
$\delta$,
\begin{align*}
	\hoare{\top}{\textsc{Bloom}; \CheckMem }
	{\Pr[allhit] \leq \left( \frac{\EE\big[\sum_{\beta = 0}^N bloom[\beta]\big] +
	T(\delta, N)}{N} \right)^H + \delta}.
\end{align*}
Since $allhit$ is 1 exactly when there is a false positive, this judgment
proves an upper bound on the false positive rate of the Bloom filter.\footnote{%
  The precise expected value is $N \cdot (1 - (1 - 1/N)^{M \cdot H})$, a fact
  which can also be shown in our logic. Roughly speaking, this fact follows
  because each element of $bloom$ is the logical-or of $M \cdot H$
  probabilistically independent bits, each $1$ with probability $1/N$ and $0$
otherwise. This argument does not rely on negative association.}

\subsection{Bloom filter, low-level}
\label{sec:bloom_filter:lowlevel}

The previous Bloom filter uses a vector operation $bloom \mathop{||} bin$ to
transform an array of negatively associated values. We next consider a
lower-level version of the previous example, \textsc{BloomArray}, in
\Cref{fig:ex:bloom-v2}, where the vector operation is replaced by a loop that
applies the Boolean-or.

Let $\mathit{outer}$ and $\mathit{mid}$ be the outer-most and second outer-most
loops, and let $\mathit{inner}$ be the inner-most loop.  Again, our goal is to
show that the vector $bloom$ is negatively associated at the end of the program.
We first prove the following judgment for $\mathit{inner}$:
\[
  \hoare{\bigneg_{\beta = 0}^N \apDist{bloom[\beta]}
    \indand \bigneg_{\gamma = 0}^N \apDist{bin[\gamma]} }
  {\mathit{inner}}
  {\bigneg_{\beta = 0}^N \apDist{bloom[\beta]}
	\negand
	\bigneg_{\gamma = n}^N \apDist{bin[\gamma]}}
\]
We will apply the rule \textsc{Loop} on $\mathit{inner}$ with the following loop
invariant:
\[
  \phi = \bigneg_{\beta = 0}^N \apDist{bloom[\beta]}
  \negand
  \bigneg_{\gamma = k}^N \apDist{bin[\gamma]}
\]
To show that the loop invariant is preserved by the body, we can first show:
\[
  \hoare{\apDist{bloom[n], bin[n]}}
  {\Assn{upd}{bloom[n] \mathop{||} bin[n]}}
  {\apEq{upd}{bloom[n] \mathop{||} bin[n]}}
\]
using \textsc{RAssn}. Noting that the boolean-or operator is a monotone operation, we
may apply the frame rule \textsc{NegFrame} to obtain:
\[
  \hoare{\apDist{bloom[n], bin[n]} \negand \eta}
  {\Assn{upd}{bloom[n] \mathop{||} bin[n]}}
  {\apDist{upd} \negand \eta}
\]
with the framing condition
\[
  \eta = \left( \bigneg_{\beta = 0}^n \apDist{bloom[\beta]} \right)
  \negand \left( \bigneg_{\beta = n + 1}^N \apDist{bloom[\beta]} \right)
  \negand \left( \bigneg_{\gamma = n + 1}^N \apDist{bin[\gamma]} \right) .
\]
Thus, by re-associating the separating conjunction and applying \textsc{RAssn}
for the remaining two assignments in the inner-most loop, we have:
\[
  \hoare{\phi}{\Assn{upd}{bloom[n] \mathop{||} bin[n]}; \Assn{bloom[n]}{upd};
  \Assn{n}{n + 1}}{\phi}
\]
and thus by \textsc{Loop}, we have:
\[
  \hoare{\bigneg_{\beta = 0}^N \apDist{bloom[\beta]}
    \negand
  \bigneg_{\gamma = n}^N \apDist{bin[\gamma]}}
  {\mathit{inner}}
  {\bigneg_{\beta = 0}^N \apDist{bloom[\beta]}
    \negand
  \bigneg_{\gamma = n}^N \apDist{bin[\gamma]}} .
\]
Now for loop $\mathit{mid}$, we establish the same loop invariant as we took
before:
\[
  \psi = \bigneg_{\beta = 0}^N \apDist{bloom[\beta]}
\]
If $\psi$ holds at the beginning of $\mathit{mid}$, then invariant for the
inner-most loop $\phi$ holds after assigning $0$ to $n$ and sampling $bin$,
since $bin$ is independent of $\psi$ (\textsc{RSamp*}) and $bin$ is distributed
as $\onehot(n)$, which implies entries in $bin$ are negatively associated
\eqref{ax:oh-pna}. Furthermore, $\phi$ implies $\psi$ at the exit of
$\mathit{inner}$, by dropping the conjunct describing $bin$. Thus, $\psi$ is a
valid invariant for $\mathit{mid}$, and the rest of the proof proceeds
unchanged.

\subsection{Permutation hashing}

\begin{wrapfigure}[12]{r}{0.4\textwidth}
  \begin{minipage}[c]{0.4\textwidth}
    \vspace{-3ex}
    \[
      \begin{array}{l}
        \textsc{PermHash}: \\
        \quad\Rand{g}{\perm([B \cdot K])}; \\
        \quad\Assn{n}{0}; \\
        \quad\Assn{ct}{0}; \\
        \quad\DWhile{n < N}{} \\
        \quad\quad \Assn{bin[n]}{mod(g[n], B)}; \\
        \quad\quad \Assn{hitZ[n]}{[bin[n] = Z]}; \\
        \quad\quad \Assn{ct}{ct + hitZ[n]}; \\
				\quad\quad \Assn{n}{n+1}
      \end{array}
    \]
  \end{minipage}
  \caption{Permutation hashing}
  \label{fig:ex:perm}
\end{wrapfigure}

Our second example considers a scheme for hashing using a random permutation.
Consider the program in \Cref{fig:ex:perm}, from an algorithm for fast set
intersection~\citep{DBLP:journals/pvldb/DingK11}. Letting $B$ be the number of
bins, and the data universe be $[B \cdot K] = \{ 1, \dots, B \cdot K \}$ where
$B \cdot K \geq N$, we
first draw a uniformly random permutation $g$ of the data universe. Then, we
hash the numbers $n \in [N]$ into $bin[n]$ by applying the hash function $g$ and
then taking the result modulo $B$. Then, we record whether the item landed in a
specific bucket $Z$ by computing the indicator $hitZ[n] = [bin[n] = Z]$, which
is $1$ if $bin[n] = Z$ and $0$ otherwise, and accumulate the result into the
count $ct$.

Our goal is to show that $ct$ is usually not far from its expected value, which
is $N/B$. If the quantities $\{[bin[n] = Z]\}_{n}$ were independent, we would be
able to apply a standard concentration bound to the sum $ct$. However,
$\{ bin[n] = Z \}_n$ are \emph{not} independent: for instance, since exactly
$K$ elements from $[B \cdot K]$ map to $Z$, if $bin[n] = Z$ for $n \in \{ 0,
1, \dots, K - 1 \}$, then $bin[K] = Z$ must be false.

Nevertheless, we can show that $\{ [bin[n] = Z] \}_n$ are negatively associated
random variables. Intuitively, $\{ g[n] \}_n$ are NA random variables because
the result of a uniformly random permutation is NA. Then, $\{ bin[n] \}_n$ is
computed by mapping the function $mod(-, B)$ over the array $g$; since this
produces another uniform permutation distribution, the vector $\{ bin[n] \}_n$
is also NA. By similar reasoning $\{ [bin[n] = Z] \}_n$ is also NA, as it is
obtained by mapping the function $[- \mathrel{=} Z]$ over $\{ bin[n] \}_n$.
Since this example is similar to the first Bloom filter example, except applying
the negative association of the permutation distribution~\eqref{ax:perm-pna} and
the permutation map axiom~\eqref{ax:perm-map}, we defer the details to the
appendix.

\subsection{Fully-dynamic dictionary}

\begin{figure*}
  \begin{subfigure}[b]{0.48\textwidth}
    \[
      \begin{array}{l}
        \textsc{FDDict}: \\
        \quad\Assn{binCt}{zero(C,P)}; \\
        \quad\Assn{n}{0}; \\
        \quad\DWhile{n < N}{} \\
        \quad\quad \Rand{crate[n]}{oh([C])}; \\
        \quad\quad \Rand{pocket[n]}{oh([P])}; \\
        \quad\quad \Assn{bin[n]}{crate[n]^\top \cdot pocket[n]}; \\
        \quad\quad \Assn{p}{0}; \\
        \quad\quad \DWhile{p < P}{} \\
        \quad\quad\quad \Assn{c}{0}; \\
        \quad\quad\quad \DWhile{c < C}{} \\
        \quad\quad\quad\quad \Assn{upd}{binCt[c][p] + bin[n][c][p]}; \\
        \quad\quad\quad\quad \Assn{binCt[c][p]}{upd}; \\
        \quad\quad\quad\quad \Assn{c}{c + 1}; \\
        \quad\quad\quad \Assn{p}{p + 1}; \\
        \quad\quad \Assn{n}{n + 1}; \\
        \quad\Assn{p}{0}; \\
        \quad\DWhile{p < P}{} \\
        \quad\quad \Assn{c}{0}; \\
        \quad\quad \DWhile{c < C}{} \\
        \quad\quad\quad \Assn{over[c][p]}{[binCt[c][p] > T_{bin}]}; \\
        \quad\quad\quad \Assn{upd}{overCt[c] + over[c][p]}; \\
        \quad\quad\quad \Assn{overCt[c]}{upd}; \\
        \quad\quad\quad \Assn{c}{c + 1}; \\
        \quad\quad \Assn{p}{p + 1}
      \end{array}
    \]
    \caption{Fully-dynamic dictionary~\citep{DBLP:journals/corr/abs-1911-05060}}
    \label{fig:ex:dictionary}
  \end{subfigure}
  \hfill
  \begin{subfigure}[b]{0.48\textwidth}
    \[
      \begin{array}{l}
        \textsc{RepeatBIB}: \\
        \quad\Assn{r}{0}; \\
        \quad\DWhile{r < R}{} \\
        \quad\quad \Assn{n}{0} \\
        \quad\quad \Assn{rem}{0}; \\
        \quad\quad \DWhile{n < N}{} \\
        \quad\quad\quad \Assn{ct[n]}{ct[n] - [ct[n] > 0]}; \\
        \quad\quad\quad \Assn{rem}{rem + [ct[n] > 0]}; \\
        \quad\quad\quad \Assn{n}{n + 1}; \\
        \quad\quad \Assn{j}{0}; \\
        \quad\quad \RWhile{j < rem}{} \\
        \quad\quad\quad \Rand{bin[j]}{oh([N])}; \\
        \quad\quad\quad \Assn{k}{0}; \\
        \quad\quad\quad \RWhile{k < N}{} \\
        \quad\quad\quad\quad \Assn{upd}{ct[k] + bin[j][k]}; \\
        \quad\quad\quad\quad \Assn{ct[k]}{upd}; \\
        \quad\quad\quad\quad \Assn{k}{k + 1}; \\
        \quad\quad\quad \Assn{j}{j + 1}; \\
        \quad\quad \Assn{n}{0}; \\
        \quad\quad \Assn{emptyCt[r]}{0}; \\
        \quad\quad \Assn{empty}{isZero(ct)}; \\
        \quad\quad \DWhile{n < N}{} \\
        \quad\quad\quad \Assn{upd}{emptyCt[r] + empty[n]}; \\
        \quad\quad\quad \Assn{emptyCt[r]}{upd}; \\
        \quad\quad\quad \Assn{n}{n + 1}; \\
        \quad\quad \Assn{r}{r + 1};
      \end{array}
    \]
    \caption{Repeated balls-into-bins~\citep{DBLP:journals/dc/BecchettiCNPP19}}
    \label{fig:ex:repeat:bib}
  \end{subfigure}

  \caption{Larger examples}
\end{figure*}

For our next example, we consider a hashing scheme for a
fully-dynamic dictionary, a space-efficient data structure that supports
insertions, deletions, and membership queries. The top level of the
data structure by \citet{DBLP:journals/corr/abs-1911-05060} uses a two-level
hashing scheme: elements are first hashed into a \emph{crate}, and then hashed
into a \emph{pocket dictionary} within each crate. As part of the space
analysis of their scheme, \citet{DBLP:journals/corr/abs-1911-05060} proves a
high-probability bound on the number of pocket dictionaries that overflow after
a given number of elements are inserted.

We extract the program \textsc{FDDict} in \Cref{fig:ex:dictionary} from the
scheme in \citet{DBLP:journals/corr/abs-1911-05060}. The program models the
insertion of $N$ elements. Each element is first hashed into one of $C$ possible
crates uniformly at random, and then hashed into one of $P$ possible pocket
dictionaries uniformly at random.  The variable $bin[n]$ is a $C$ by $P$ matrix,
with all entries zero except for the entry at $(crate[n], pocket[n])$, which is
set to $1$. Next, the program totals up the number of elements hashing to each
(crate, pocket) pair, storing the result in the $C$ by $P$ matrix $binCt$.
Finally, the program checks which (crate, pocket) pairs have count larger than
some concrete threshold $T_{bin}$ ($over$), and totals up the number of full
pocket dictionaries in each crate ($overCt$).

Our logic can prove a judgment of the following form:
\[
  \bigg\{\top\bigg\}
  {\textsc{FDDict}}
  \bigg\{\bigwedge_{\gamma = 0}^C \Pr [ overCt[\gamma] > P \cdot \rho_{bin} + T(\rho_{over}, P) ] \leq \rho_{over}\bigg\} ,
\]
where the logical variables $\rho_{bin}$ and $\rho_{over}$ represents the
parametric overflow properties.  This formalizes a result similar to
\citet[Claim 21]{DBLP:journals/corr/abs-1911-05060}, which states that except
with probability $\beta$, all crates have at most $T_{over}$ overfull pocket
dictionaries. The core of the proof shows that for every crate index $\gamma$,
the counts $binCt[\gamma][\beta]$ are negatively associated, using the
\textsc{NegFrame} rule as in the array version of the Bloom filter example.
Then, we show that vector $over[\gamma][\beta]$, which indicates whether each
pocket dictionary $\beta$ in crate $\gamma$ is overfull or not, is also
negatively associated. This holds because $over[\gamma][\beta]$ is obtained from
$binCt[\gamma][\beta]$ by applying a monotone function. Furthermore, the count
of overflows $overCt[\gamma]$ is obtained by another monotone function on
$over[\gamma][\beta]$ and thus its entries are also negatively associated.

\subsection{Repeated balls-into-bins process}
\label{sec:ex_repeated_bib}

Our final example considers a probabilistic protocol proposed by
\citet{DBLP:journals/dc/BecchettiCNPP19}, implemented as \textsc{RepeatBIB} in
\Cref{fig:ex:repeat:bib}. Intuitively, the program implements a repeated
balls-into-bins process. Initially, $N$ balls are distributed among $N$ bins
($ct[n]$). For $R$ rounds, in each round a ball is first removed from every
non-empty bin. Then, the $rem$ removed balls are randomly reassigned to bins.
This process is useful for distributed protocols and scheduling algorithms,
where the balls represent tasks and the bins represent computation nodes.
\citet{DBLP:journals/dc/BecchettiCNPP19} proposed and analyzed this algorithm
(e.g., bounding the maximum load, proving how long it takes for all balls to
visit all bins). We can verify the following lower-bound on the number of empty
bins, analogous to \citet[Lemma 1 and Lemma 2]{DBLP:journals/dc/BecchettiCNPP19}:
\[
  \Bigg\{N \geq 2 \land \apEq{\sum_{\alpha = 0}^N ct[\alpha]}{N} \Bigg\}
  {\textsc{RepeatBIB}}
  \Bigg\{
    \Pr\left[ \bigvee_{\beta = 0}^R (emptyCt[\beta] < N/15 - T(\rho_{empty}, N)) \right] \leq R \cdot \rho_{empty}
  \Bigg\}
\]

Two aspects of this program make it more difficult to verify.  First, there is
a loop with a randomized guard: the number of removed balls $rem$ is randomized
quantity. Reasoning about such loops is challenging, because our \textsc{Loop}
rule is not directly applicable and only far weaker rules are available for
loops with general randomized guards.  \citet{DBLP:journals/dc/BecchettiCNPP19}
sidestep this problem by \emph{conditioning} on the number of balls in each
bin, which also fixes $rem$ to be some value, proving the target property for
every fixed setting, and then combining the proofs together. \programlogic can
formalize this style of reasoning using the randomized case analysis rule
(\textsc{RCase}) to condition on $rem$'s value, and then apply the
\textsc{Loop} rule; however, the post-condition of \textsc{RCase} must be
closed under mixtures (CM), while independence and negative association are
known \emph{not} to satisfy this side-condition.  Thus, it is not possible to
prove negative association by first conditioning and then combining. To work
around this second problem, we use a technique from
\citet{DBLP:journals/dc/BecchettiCNPP19} and prove, on each conditional
distribution, a high-probability bound using the Chernoff bound. The
benefit of this approach is that high-probability bounds \emph{are} CM, so we
can apply \textsc{RCase} to combine the results. In our view, the fact that
\programlogic can handle this kind of subtle argument involving conditioning is
a strength of our approach.

\section{Related work}
\label{sec:rw}

\paragraph*{Bunched implications.}
The logic of bunched implications (BI)~\citep{OhP99,PymMono} is a well-studied
substructural logic. BI has a \emph{resource semantics}~\citep{POhY04}, where
states are resources and the separating conjunction combines compatible
resources together. We follow Docherty's uniform presentation and
investigation of BI~\citep{docherty:thesis}; in particular, our negative
association model relies on Docherty's non-deterministic frame conditions, and
we use his duality-theoretic framework to establish \LOGIC's metatheory.

\paragraph*{Separation logics.}
The first separation logic was developed to verify pointer-manipulating
programs~\citep{Reynolds01, IOh01, OhRY01}. There is long line of work on
separation logic for concurrency, starting
from~\citep{DBLP:journals/tcs/OHearn07, DBLP:journals/tcs/Brookes07} and
continuing to the present day~(e.g., \citep{DBLP:conf/pldi/SergeyNB15,
DBLP:journals/jfp/JungKJBBD18}).

More recently, separation logics have been
developed for probabilistic programs. \programlogic is an extension of
PSL~\citep{PSL}, a separation logic for probabilistic independence.
\citet{BDHS20} propose DIBI, an extension of BI with a non-commutative
conjunction, and developed a program logic with DIBI assertions that is capable
of proving conditional independence.  \citet{DBLP:journals/pacmpl/BatzKKMN19}
propose QSL, a separation logic where assertions have a \emph{quantitative}
interpretation, and used their logic to verify probabilistic and
heap-manipulating programs. \citet{TassarottiH19} develop a separation logic for
relational reasoning about probabilistic programs, using the coupling approach
of pRHL~\citep{BartheGB12}.

\paragraph*{Verifying approximate data structures and applying concentration bounds.}
Bloom filters are a data structure supporting \emph{approximate membership
queries} (AMQs). Ceramist~\citep{DBLP:conf/cav/GopinathanS20} is a recent
framework for verifying hash-based AMQ structures in the Coq theorem prover.
Besides handling Bloom filters, Ceramist supports subtle proofs of correctness
for many other AMQs. Compared with our approach, Ceramist proofs are more
precise but also more intricate, applying theorems about Stirling numbers to
achieve a precise bound on the false positive probability.  In contrast, our
approach reasons about negative dependence to achieve a substantially simpler
proof, albeit with less precise bounds.

Prior works in verification have also applied the Chernoff bound to bound sums
of independent random quantities~(e.g., \citep{DBLP:conf/pldi/WangS0CG21,
Chakarov-martingale}). While independence is easier to establish, the negative
association property that we need is more subtle.

\paragraph*{Negative dependence.}
There are multiple definitions of negative dependence in the literature, each
with their own strengths and weaknesses. We work with negative association
(NA)~\citep{joag1983negative,dubhashi-ranjan}, because it holds in many
situations where negative dependence should hold and it is closed under various
notions of composition. Recently, the notion of Strong Rayleigh (SR)~\citep{SR}
distribution has been proposed as an ideal definition of negative dependence.
The SR condition satisfies more closure properties than NA does; in particular,
it is preserved under various forms of conditioning. However, SR distributions
have mostly been studied for Boolean variables only, and we do not know if an
analogue of the monotone maps property of NA holds for SR.

Beyond theoretical investigations, negative dependence plays a useful role in
many practical applications. In machine learning, negative dependence can help
ensure diversity in predictions by a
model~\citep{DBLP:journals/ftml/KuleszaT12}, and fast algorithms are known to
learn and sample from negatively-dependent
distributions~\citep{DBLP:conf/colt/AnariGR16}. In algorithm design, negative
dependence is a useful tool to randomly round solutions of linear programs to
integral solutions~\citep{DBLP:conf/focs/Srinivasan01}. Negative dependence can
ensure that certain constraints are satisfied exactly after rounding, while
still allowing concentration bounds to be applied to analyze the quality of the
rounded solution.

\section{Conclusion and future direction}
\label{sec:conc}

We introduced \programlogic, a probabilistic program logic that can reason about
independence and negative association. Assertions in \programlogic are based on
a novel probabilistic model of \LOGIC, an extension of the logic of Bunched
Implications with multiple separating conjunctions. We demonstrated how to use
\programlogic to reason about probabilistic hashing schemes, and a repeated
balls-into-bins process. There are several natural directions for future work.

\paragraph*{Other models of \LOGIC, and non-deterministic frames.}
The assertion logic \LOGIC was primarily motivated by our NA model, but it is
general enough that we believe there are likely other natural models. Exploring
these directions could allow modeling finer notions of separation, and could
further justify \LOGIC as an interesting logic in its own right. It would also
be interesting to see if there are other models that use a non-deterministic
operator to combine resources, as proposed by~\citet{docherty:thesis}.

\paragraph*{Verifying negative association for sampling algorithms.}
We used NA to analyze probabilistic hashing schemes. Another classical
application of NA is in \emph{sampling schemes}, which generate a sample from a
target distribution while satisfying certain
constraints~\citep{DBLP:journals/cpc/DubhashiJR07, branden-jonasson}. NA
samplers are useful in algorithm design~\citep{DBLP:conf/focs/Srinivasan01} and
statistics, and it would be interesting to understand how to verify these
programs. Many samplers employ rejection sampling, which is not easily analyzed
in \programlogic but which could be expressed with an explicit conditioning
operator, as in probabilistic programming
languages~\citep{10.1145/2593882.2593900}.

\begin{acks}                            
  We thank the anonymous reviewers for their helpful feedback and suggestions.
  This work benefited from discussions with Simon Docherty. This work was
  supported in part by the
  \grantsponsor{GS100000001}{NSF}{http://dx.doi.org/10.13039/100000001} under
  Grant No.~\grantnum{GS100000001}{2035314},~\grantnum{GS100000001}{1943130},~\grantnum{GS100000001}{2040249},~\grantnum{GS100000001}{2040222}
  and~\grantnum{GS100000001}{2152831}.
\end{acks}

\bibliography{header.bib,refs.bib}


\newcommand{\SortNoop}[1]{}
\begin{thebibliography}{37}


\ifx \showCODEN    \undefined \def \showCODEN     #1{\unskip}     \fi
\ifx \showDOI      \undefined \def \showDOI       #1{#1}\fi
\ifx \showISBNx    \undefined \def \showISBNx     #1{\unskip}     \fi
\ifx \showISBNxiii \undefined \def \showISBNxiii  #1{\unskip}     \fi
\ifx \showISSN     \undefined \def \showISSN      #1{\unskip}     \fi
\ifx \showLCCN     \undefined \def \showLCCN      #1{\unskip}     \fi
\ifx \shownote     \undefined \def \shownote      #1{#1}          \fi
\ifx \showarticletitle \undefined \def \showarticletitle #1{#1}   \fi
\ifx \showURL      \undefined \def \showURL       {\relax}        \fi
\providecommand\bibfield[2]{#2}
\providecommand\bibinfo[2]{#2}
\providecommand\natexlab[1]{#1}
\providecommand\showeprint[2][]{arXiv:#2}

\bibitem[\protect\citeauthoryear{Anari, Oveis~Gharan, and Rezaei}{Anari
  et~al\mbox{.}}{2016}]%
        {DBLP:conf/colt/AnariGR16}
\bibfield{author}{\bibinfo{person}{Nima Anari}, \bibinfo{person}{Shayan
  Oveis~Gharan}, {and} \bibinfo{person}{Alireza Rezaei}.}
  \bibinfo{year}{2016}\natexlab{}.
\newblock \showarticletitle{{M}onte {C}arlo {M}arkov chain algorithms for
  sampling {S}trongly {R}ayleigh distributions and determinantal point
  processes}. In \bibinfo{booktitle}{\emph{{C}onference on {C}omputational
  {L}earning {T}heory ({COLT})}}, Vol.~\bibinfo{volume}{49}.
  \bibinfo{publisher}{Proceedings of Machine Learning Research},
  \bibinfo{address}{{New York, New York}}, \bibinfo{pages}{103--115}.
\newblock
\urldef\tempurl%
\url{http://proceedings.mlr.press/v49/anari16.html}
\showURL{%
\tempurl}


\bibitem[\protect\citeauthoryear{Bao, Docherty, Hsu, and Silva}{Bao
  et~al\mbox{.}}{2021}]%
        {BDHS20}
\bibfield{author}{\bibinfo{person}{Jialu Bao}, \bibinfo{person}{Simon
  Docherty}, \bibinfo{person}{Justin Hsu}, {and} \bibinfo{person}{Alexandra
  Silva}.} \bibinfo{year}{2021}\natexlab{}.
\newblock \showarticletitle{A Bunched logic for conditional independence}. In
  \bibinfo{booktitle}{\emph{{IEEE} {S}ymposium on {L}ogic in {C}omputer
  {S}cience ({LICS})}}. \bibinfo{publisher}{IEEE}, \bibinfo{address}{Rome,
  Italy}, \bibinfo{pages}{1--14}.
\newblock
\urldef\tempurl%
\url{https://doi.org/10.1109/LICS52264.2021.9470712}
\showDOI{\tempurl}


\bibitem[\protect\citeauthoryear{Barthe, Gr{\'{e}}goire, and
  B{\'{e}}guelin}{Barthe et~al\mbox{.}}{2012}]%
        {BartheGB12}
\bibfield{author}{\bibinfo{person}{Gilles Barthe}, \bibinfo{person}{Benjamin
  Gr{\'{e}}goire}, {and} \bibinfo{person}{Santiago~Zanella B{\'{e}}guelin}.}
  \bibinfo{year}{2012}\natexlab{}.
\newblock \showarticletitle{Probabilistic relational {Hoare} logics for
  computer-aided security Proofs}. In \bibinfo{booktitle}{\emph{Mathematics of
  Program Construction (MPC)}}. \bibinfo{publisher}{Springer},
  \bibinfo{address}{Madrid, Spain}, \bibinfo{pages}{1--6}.
\newblock
\urldef\tempurl%
\url{https://doi.org/10.1007/978-3-642-31113-0\_1}
\showDOI{\tempurl}


\bibitem[\protect\citeauthoryear{Barthe, Hsu, and Liao}{Barthe
  et~al\mbox{.}}{2020}]%
        {PSL}
\bibfield{author}{\bibinfo{person}{Gilles Barthe}, \bibinfo{person}{Justin
  Hsu}, {and} \bibinfo{person}{Kevin Liao}.} \bibinfo{year}{2020}\natexlab{}.
\newblock \showarticletitle{A probabilistic separation logic}.
\newblock \bibinfo{journal}{\emph{Proceedings of the {ACM} on Programming
  Languages}} \bibinfo{volume}{4}, \bibinfo{number}{{POPL}}
  (\bibinfo{year}{2020}), \bibinfo{pages}{55:1--55:30}.
\newblock
\urldef\tempurl%
\url{https://doi.org/10.1145/3371123}
\showDOI{\tempurl}


\bibitem[\protect\citeauthoryear{Batz, Kaminski, Katoen, Matheja, and
  Noll}{Batz et~al\mbox{.}}{2019}]%
        {DBLP:journals/pacmpl/BatzKKMN19}
\bibfield{author}{\bibinfo{person}{Kevin Batz},
  \bibinfo{person}{Benjamin~Lucien Kaminski}, \bibinfo{person}{Joost{-}Pieter
  Katoen}, \bibinfo{person}{Christoph Matheja}, {and} \bibinfo{person}{Thomas
  Noll}.} \bibinfo{year}{2019}\natexlab{}.
\newblock \showarticletitle{Quantitative separation logic: a logic for
  reasoning about probabilistic pointer programs}.
\newblock \bibinfo{journal}{\emph{Proceedings of the {ACM} on Programming
  Languages}} \bibinfo{volume}{3}, \bibinfo{number}{{POPL}}
  (\bibinfo{year}{2019}), \bibinfo{pages}{34:1--34:29}.
\newblock
\urldef\tempurl%
\url{https://doi.org/10.1145/3290347}
\showDOI{\tempurl}


\bibitem[\protect\citeauthoryear{Becchetti, Clementi, Natale, Pasquale, and
  Posta}{Becchetti et~al\mbox{.}}{2019}]%
        {DBLP:journals/dc/BecchettiCNPP19}
\bibfield{author}{\bibinfo{person}{Luca Becchetti}, \bibinfo{person}{Andrea
  Clementi}, \bibinfo{person}{Emanuele Natale}, \bibinfo{person}{Francesco
  Pasquale}, {and} \bibinfo{person}{Gustavo Posta}.}
  \bibinfo{year}{2019}\natexlab{}.
\newblock \showarticletitle{Self-stabilizing repeated balls-into-bins}.
\newblock \bibinfo{journal}{\emph{Distributed Computing}} \bibinfo{volume}{32},
  \bibinfo{number}{1} (\bibinfo{year}{2019}), \bibinfo{pages}{59--68}.
\newblock
\urldef\tempurl%
\url{https://doi.org/10.1007/s00446-017-0320-4}
\showDOI{\tempurl}


\bibitem[\protect\citeauthoryear{Bercea and Even}{Bercea and Even}{2019}]%
        {DBLP:journals/corr/abs-1911-05060}
\bibfield{author}{\bibinfo{person}{Ioana~O. Bercea} {and} \bibinfo{person}{Guy
  Even}.} \bibinfo{year}{2019}\natexlab{}.
\newblock \showarticletitle{Fully-dynamic space-efficient dictionaries and
  filters with constant number of memory accesses}.
\newblock \bibinfo{journal}{\emph{CoRR}}  \bibinfo{volume}{abs/1911.05060}
  (\bibinfo{year}{2019}).
\newblock
\urldef\tempurl%
\url{http://arxiv.org/abs/1911.05060}
\showURL{%
\tempurl}


\bibitem[\protect\citeauthoryear{Bloom}{Bloom}{1970}]%
        {DBLP:journals/cacm/Bloom70}
\bibfield{author}{\bibinfo{person}{Burton~H. Bloom}.}
  \bibinfo{year}{1970}\natexlab{}.
\newblock \showarticletitle{Space/time trade-offs in hash coding with allowable
  errors}.
\newblock \bibinfo{journal}{\emph{Commun. ACM}} \bibinfo{volume}{13},
  \bibinfo{number}{7} (\bibinfo{year}{1970}), \bibinfo{pages}{422--426}.
\newblock
\urldef\tempurl%
\url{https://doi.org/10.1145/362686.362692}
\showDOI{\tempurl}


\bibitem[\protect\citeauthoryear{Borcea, Br\"and\'en, and Liggett}{Borcea
  et~al\mbox{.}}{2009}]%
        {SR}
\bibfield{author}{\bibinfo{person}{Julius Borcea}, \bibinfo{person}{Petter
  Br\"and\'en}, {and} \bibinfo{person}{Thomas~M. Liggett}.}
  \bibinfo{year}{2009}\natexlab{}.
\newblock \showarticletitle{Negative dependence and the geometry of
  polynomials}.
\newblock \bibinfo{journal}{\emph{Journal of the American Mathematical
  Society}} \bibinfo{volume}{22}, \bibinfo{number}{2} (\bibinfo{year}{2009}),
  \bibinfo{pages}{521--567}.
\newblock
\urldef\tempurl%
\url{https://www.ams.org/journals/jams/2009-22-02/S0894-0347-08-00618-8/}
\showURL{%
\tempurl}


\bibitem[\protect\citeauthoryear{Bose, Guo, Kranakis, Maheshwari, Morin,
  Morrison, Smid, and Tang}{Bose et~al\mbox{.}}{2008}]%
        {DBLP:journals/ipl/BoseGKMMMST08}
\bibfield{author}{\bibinfo{person}{Prosenjit Bose}, \bibinfo{person}{Hua Guo},
  \bibinfo{person}{Evangelos Kranakis}, \bibinfo{person}{Anil Maheshwari},
  \bibinfo{person}{Pat Morin}, \bibinfo{person}{Jason Morrison},
  \bibinfo{person}{Michiel Smid}, {and} \bibinfo{person}{Yihui Tang}.}
  \bibinfo{year}{2008}\natexlab{}.
\newblock \showarticletitle{On the false-positive rate of Bloom filters}.
\newblock \bibinfo{journal}{\emph{Inform. Process. Lett.}}
  \bibinfo{volume}{108}, \bibinfo{number}{4} (\bibinfo{year}{2008}),
  \bibinfo{pages}{210--213}.
\newblock
\urldef\tempurl%
\url{https://doi.org/10.1016/j.ipl.2008.05.018}
\showDOI{\tempurl}


\bibitem[\protect\citeauthoryear{Br\"and\'en and Jonasson}{Br\"and\'en and
  Jonasson}{2012}]%
        {branden-jonasson}
\bibfield{author}{\bibinfo{person}{Petter Br\"and\'en} {and}
  \bibinfo{person}{Johan Jonasson}.} \bibinfo{year}{2012}\natexlab{}.
\newblock \showarticletitle{Negative dependence in sampling}.
\newblock \bibinfo{journal}{\emph{Scandinavian Journal of Statistics}}
  \bibinfo{volume}{39}, \bibinfo{number}{4} (\bibinfo{year}{2012}),
  \bibinfo{pages}{830--838}.
\newblock
\urldef\tempurl%
\url{https://doi.org/10.1111/j.1467-9469.2011.00766.x}
\showDOI{\tempurl}


\bibitem[\protect\citeauthoryear{Brookes}{Brookes}{2007}]%
        {DBLP:journals/tcs/Brookes07}
\bibfield{author}{\bibinfo{person}{Stephen Brookes}.}
  \bibinfo{year}{2007}\natexlab{}.
\newblock \showarticletitle{A semantics for concurrent separation logic}.
\newblock \bibinfo{journal}{\emph{Theoretical Computer Science}}
  \bibinfo{volume}{375}, \bibinfo{number}{1--3} (\bibinfo{year}{2007}),
  \bibinfo{pages}{227--270}.
\newblock
\urldef\tempurl%
\url{https://doi.org/10.1016/j.tcs.2006.12.034}
\showDOI{\tempurl}


\bibitem[\protect\citeauthoryear{Chakarov and Sankaranarayanan}{Chakarov and
  Sankaranarayanan}{2013}]%
        {Chakarov-martingale}
\bibfield{author}{\bibinfo{person}{Aleksandar Chakarov} {and}
  \bibinfo{person}{Sriram Sankaranarayanan}.} \bibinfo{year}{2013}\natexlab{}.
\newblock \showarticletitle{Probabilistic program analysis with martingales}.
  In \bibinfo{booktitle}{\emph{International Conference on Computer Aided
  Verification (CAV)}}. \bibinfo{publisher}{Springer}, \bibinfo{address}{Saint
  Petersburg, Russia}, \bibinfo{pages}{511--526}.
\newblock
\urldef\tempurl%
\url{https://doi.org/10.1007/978-3-642-39799-8\_34}
\showDOI{\tempurl}


\bibitem[\protect\citeauthoryear{Ding and K{\"{o}}nig}{Ding and
  K{\"{o}}nig}{2011}]%
        {DBLP:journals/pvldb/DingK11}
\bibfield{author}{\bibinfo{person}{Bolin Ding} {and}
  \bibinfo{person}{Arnd~Christian K{\"{o}}nig}.}
  \bibinfo{year}{2011}\natexlab{}.
\newblock \showarticletitle{Fast set intersection in memory}.
\newblock \bibinfo{journal}{\emph{Proceedings of the {VLDB} Endowment}}
  \bibinfo{volume}{4}, \bibinfo{number}{4} (\bibinfo{year}{2011}),
  \bibinfo{pages}{255--266}.
\newblock
\urldef\tempurl%
\url{https://doi.org/10.14778/1938545.1938550}
\showDOI{\tempurl}


\bibitem[\protect\citeauthoryear{Docherty}{Docherty}{2019}]%
        {docherty:thesis}
\bibfield{author}{\bibinfo{person}{Simon Docherty}.}
  \bibinfo{year}{2019}\natexlab{}.
\newblock \emph{\bibinfo{title}{Bunched logics: a uniform approach}}.
\newblock \bibinfo{thesistype}{Ph.D. Dissertation}. \bibinfo{school}{UCL
  (University College London)}.
\newblock


\bibitem[\protect\citeauthoryear{Dubhashi, Jonasson, and Ranjan}{Dubhashi
  et~al\mbox{.}}{2007}]%
        {DBLP:journals/cpc/DubhashiJR07}
\bibfield{author}{\bibinfo{person}{Devdatt~P. Dubhashi}, \bibinfo{person}{Johan
  Jonasson}, {and} \bibinfo{person}{Desh Ranjan}.}
  \bibinfo{year}{2007}\natexlab{}.
\newblock \showarticletitle{Positive influence and negative dependence}.
\newblock \bibinfo{journal}{\emph{Combinatorics, Probability and Computing}}
  \bibinfo{volume}{16}, \bibinfo{number}{1} (\bibinfo{year}{2007}),
  \bibinfo{pages}{29--41}.
\newblock
\urldef\tempurl%
\url{https://doi.org/10.1017/S0963548306007772}
\showDOI{\tempurl}


\bibitem[\protect\citeauthoryear{Dubhashi and Ranjan}{Dubhashi and
  Ranjan}{1998}]%
        {dubhashi-ranjan}
\bibfield{author}{\bibinfo{person}{Devdatt~P. Dubhashi} {and}
  \bibinfo{person}{Desh Ranjan}.} \bibinfo{year}{1998}\natexlab{}.
\newblock \showarticletitle{Balls and bins: {A} study in negative dependence}.
\newblock \bibinfo{journal}{\emph{Random Structures and Algorithms}}
  \bibinfo{volume}{13}, \bibinfo{number}{2} (\bibinfo{year}{1998}),
  \bibinfo{pages}{99--124}.
\newblock
\urldef\tempurl%
\url{https://doi.org/10.1002/(SICI)1098-2418(199809)13:2\\<99::AID-RSA1\\>3.0.CO;2-M}
\showDOI{\tempurl}


\bibitem[\protect\citeauthoryear{Gopinathan and Sergey}{Gopinathan and
  Sergey}{2020}]%
        {DBLP:conf/cav/GopinathanS20}
\bibfield{author}{\bibinfo{person}{Kiran Gopinathan} {and}
  \bibinfo{person}{Ilya Sergey}.} \bibinfo{year}{2020}\natexlab{}.
\newblock \showarticletitle{Certifying certainty and uncertainty in approximate
  membership query structures}. In \bibinfo{booktitle}{\emph{International
  Conference on Computer Aided Verification (CAV)}}
  \emph{(\bibinfo{series}{Lecture Notes in Computer Science},
  Vol.~\bibinfo{volume}{12225})}. \bibinfo{publisher}{Springer},
  \bibinfo{address}{Los Angeles, California}, \bibinfo{pages}{279--303}.
\newblock
\urldef\tempurl%
\url{https://doi.org/10.1007/978-3-030-53291-8\_16}
\showDOI{\tempurl}


\bibitem[\protect\citeauthoryear{Gordon, Henzinger, Nori, and Rajamani}{Gordon
  et~al\mbox{.}}{2014}]%
        {10.1145/2593882.2593900}
\bibfield{author}{\bibinfo{person}{Andrew~D. Gordon},
  \bibinfo{person}{Thomas~A. Henzinger}, \bibinfo{person}{Aditya~V. Nori},
  {and} \bibinfo{person}{Sriram~K. Rajamani}.} \bibinfo{year}{2014}\natexlab{}.
\newblock \showarticletitle{Probabilistic programming}. In
  \bibinfo{booktitle}{\emph{Future of Software Engineering Proceedings
  ({FOSE})}}. \bibinfo{address}{Hyderabad, India},
  \bibinfo{pages}{167–--181}.
\newblock
\urldef\tempurl%
\url{https://doi.org/10.1145/2593882.2593900}
\showDOI{\tempurl}


\bibitem[\protect\citeauthoryear{Ishtiaq and O'Hearn}{Ishtiaq and
  O'Hearn}{2001}]%
        {IOh01}
\bibfield{author}{\bibinfo{person}{Samin Ishtiaq} {and}
  \bibinfo{person}{Peter~W. O'Hearn}.} \bibinfo{year}{2001}\natexlab{}.
\newblock \showarticletitle{{BI} as an assertion language for mutable data
  structures}. In \bibinfo{booktitle}{\emph{{ACM} {SIGPLAN--SIGACT} {S}ymposium
  on {P}rinciples of {P}rogramming {L}anguages ({POPL})}}.
  \bibinfo{address}{London, England}, \bibinfo{pages}{14--26}.
\newblock
\urldef\tempurl%
\url{https://doi.org/10.1145/360204.375719}
\showDOI{\tempurl}


\bibitem[\protect\citeauthoryear{Joag-Dev and Proschan}{Joag-Dev and
  Proschan}{1983}]%
        {joag1983negative}
\bibfield{author}{\bibinfo{person}{Kumar Joag-Dev} {and} \bibinfo{person}{Frank
  Proschan}.} \bibinfo{year}{1983}\natexlab{}.
\newblock \showarticletitle{{Negative association of random variables with
  applications}}.
\newblock \bibinfo{journal}{\emph{The Annals of Statistics}}
  \bibinfo{volume}{11}, \bibinfo{number}{1} (\bibinfo{year}{1983}),
  \bibinfo{pages}{286--295}.
\newblock
\urldef\tempurl%
\url{https://doi.org/10.1214/aos/1176346079}
\showDOI{\tempurl}


\bibitem[\protect\citeauthoryear{Jung, Krebbers, Jourdan, Bizjak, Birkedal, and
  Dreyer}{Jung et~al\mbox{.}}{2018}]%
        {DBLP:journals/jfp/JungKJBBD18}
\bibfield{author}{\bibinfo{person}{Ralf Jung}, \bibinfo{person}{Robbert
  Krebbers}, \bibinfo{person}{Jacques{-}Henri Jourdan}, \bibinfo{person}{Ales
  Bizjak}, \bibinfo{person}{Lars Birkedal}, {and} \bibinfo{person}{Derek
  Dreyer}.} \bibinfo{year}{2018}\natexlab{}.
\newblock \showarticletitle{Iris from the ground up: {A} modular foundation for
  higher-order concurrent separation logic}.
\newblock \bibinfo{journal}{\emph{Journal of Functional Programming}}
  \bibinfo{volume}{28} (\bibinfo{year}{2018}), \bibinfo{pages}{e20}.
\newblock
\urldef\tempurl%
\url{https://doi.org/10.1017/S0956796818000151}
\showDOI{\tempurl}


\bibitem[\protect\citeauthoryear{Kozen}{Kozen}{1981}]%
        {Kozen81}
\bibfield{author}{\bibinfo{person}{Dexter Kozen}.}
  \bibinfo{year}{1981}\natexlab{}.
\newblock \showarticletitle{Semantics of probabilistic programs}.
\newblock \bibinfo{journal}{\emph{J. Comput. System Sci.}}
  \bibinfo{volume}{22}, \bibinfo{number}{3} (\bibinfo{year}{1981}),
  \bibinfo{pages}{328--350}.
\newblock
\urldef\tempurl%
\url{https://doi.org/10.1016/0022-0000(81)90036-2}
\showDOI{\tempurl}


\bibitem[\protect\citeauthoryear{Kulesza and Taskar}{Kulesza and
  Taskar}{2012}]%
        {DBLP:journals/ftml/KuleszaT12}
\bibfield{author}{\bibinfo{person}{Alex Kulesza} {and} \bibinfo{person}{Ben
  Taskar}.} \bibinfo{year}{2012}\natexlab{}.
\newblock \showarticletitle{Determinantal point processes for machine
  learning}.
\newblock \bibinfo{journal}{\emph{Foundations and Trends in Machine Learning}}
  \bibinfo{volume}{5}, \bibinfo{number}{2-3} (\bibinfo{year}{2012}),
  \bibinfo{pages}{123--286}.
\newblock
\urldef\tempurl%
\url{https://doi.org/10.1561/2200000044}
\showDOI{\tempurl}


\bibitem[\protect\citeauthoryear{Mitzenmacher and Upfal}{Mitzenmacher and
  Upfal}{2005}]%
        {MitzenmacherUpfal}
\bibfield{author}{\bibinfo{person}{Michael Mitzenmacher} {and}
  \bibinfo{person}{Eli Upfal}.} \bibinfo{year}{2005}\natexlab{}.
\newblock \bibinfo{booktitle}{\emph{Probability and computing - randomized
  algorithms and probabilistic analysis}}.
\newblock \bibinfo{publisher}{Cambridge University Press}.
\newblock


\bibitem[\protect\citeauthoryear{O'Hearn}{O'Hearn}{2007}]%
        {DBLP:journals/tcs/OHearn07}
\bibfield{author}{\bibinfo{person}{Peter~W. O'Hearn}.}
  \bibinfo{year}{2007}\natexlab{}.
\newblock \showarticletitle{Resources, concurrency, and local reasoning}.
\newblock \bibinfo{journal}{\emph{Theoretical Computer Science}}
  \bibinfo{volume}{375}, \bibinfo{number}{1--3} (\bibinfo{year}{2007}),
  \bibinfo{pages}{271--307}.
\newblock
\urldef\tempurl%
\url{https://doi.org/10.1016/j.tcs.2006.12.035}
\showDOI{\tempurl}


\bibitem[\protect\citeauthoryear{O'Hearn and Pym}{O'Hearn and Pym}{1999}]%
        {OhP99}
\bibfield{author}{\bibinfo{person}{Peter~W. O'Hearn} {and}
  \bibinfo{person}{David~J. Pym}.} \bibinfo{year}{1999}\natexlab{}.
\newblock \showarticletitle{The logic of bunched implications}.
\newblock \bibinfo{journal}{\emph{Bulletin of Symbolic Logic}}
  (\bibinfo{year}{1999}), \bibinfo{pages}{215--244}.
\newblock
\urldef\tempurl%
\url{https://citeseerx.ist.psu.edu/viewdoc/download?doi=10.1.1.27.4742&rep=rep1&type=pdf}
\showURL{%
\tempurl}


\bibitem[\protect\citeauthoryear{O'Hearn, Reynolds, and Yang}{O'Hearn
  et~al\mbox{.}}{2001}]%
        {OhRY01}
\bibfield{author}{\bibinfo{person}{Peter~W. O'Hearn}, \bibinfo{person}{John~C.
  Reynolds}, {and} \bibinfo{person}{Hongseok Yang}.}
  \bibinfo{year}{2001}\natexlab{}.
\newblock \showarticletitle{Local reasoning about programs that alter data
  structures}. In \bibinfo{booktitle}{\emph{International Conference on
  Computer Science Logic (CSL)}} \emph{(\bibinfo{series}{Lecture Notes in
  Computer Science}, Vol.~\bibinfo{volume}{2142})}.
  \bibinfo{publisher}{Springer}, \bibinfo{address}{Paris, France},
  \bibinfo{pages}{1--19}.
\newblock
\urldef\tempurl%
\url{https://doi.org/10.1007/3-540-44802-0_1}
\showDOI{\tempurl}


\bibitem[\protect\citeauthoryear{Pagel and Zuleger}{Pagel and Zuleger}{2021}]%
        {DBLP:conf/esop/PagelZ21}
\bibfield{author}{\bibinfo{person}{Jens Pagel} {and} \bibinfo{person}{Florian
  Zuleger}.} \bibinfo{year}{2021}\natexlab{}.
\newblock \showarticletitle{Strong-Separation Logic}. In
  \bibinfo{booktitle}{\emph{European Symposium on Programming (ESOP),
  Luxembourg City, Luxembourg}} \emph{(\bibinfo{series}{Lecture Notes in
  Computer Science}, Vol.~\bibinfo{volume}{12648})},
  \bibfield{editor}{\bibinfo{person}{Nobuko Yoshida}} (Ed.).
  \bibinfo{publisher}{Springer}, \bibinfo{pages}{664--692}.
\newblock
\urldef\tempurl%
\url{https://doi.org/10.1007/978-3-030-72019-3\_24}
\showDOI{\tempurl}


\bibitem[\protect\citeauthoryear{Pemantle}{Pemantle}{2000}]%
        {pemantle:negdep}
\bibfield{author}{\bibinfo{person}{Robin Pemantle}.}
  \bibinfo{year}{2000}\natexlab{}.
\newblock \showarticletitle{Towards a theory of negative dependence}.
\newblock \bibinfo{journal}{\emph{J. Math. Phys.}} \bibinfo{volume}{41},
  \bibinfo{number}{3} (\bibinfo{year}{2000}), \bibinfo{pages}{1371--1390}.
\newblock
\urldef\tempurl%
\url{https://doi.org/10.1063/1.533200}
\showDOI{\tempurl}


\bibitem[\protect\citeauthoryear{Pym}{Pym}{2002}]%
        {PymMono}
\bibfield{author}{\bibinfo{person}{David~J. Pym}.}
  \bibinfo{year}{2002}\natexlab{}.
\newblock \bibinfo{booktitle}{\emph{The semantics and proof theory of the logic
  of Bunched implications}}. \bibinfo{series}{Applied Logic Series},
  Vol.~\bibinfo{volume}{26}.
\newblock \bibinfo{publisher}{Kluwer Academic Publishers}.
\newblock
\newblock
\shownote{Errata and Remarks maintained at:
  \url{http://www.cantab.net/users/david.pym/BI-monograph-errata.pdf}}.


\bibitem[\protect\citeauthoryear{Pym, O'Hearn, and Yang}{Pym
  et~al\mbox{.}}{2004}]%
        {POhY04}
\bibfield{author}{\bibinfo{person}{David~J. Pym}, \bibinfo{person}{Peter~W.
  O'Hearn}, {and} \bibinfo{person}{Hongseok Yang}.}
  \bibinfo{year}{2004}\natexlab{}.
\newblock \showarticletitle{Possible worlds and resources: The semantics of
  {BI}}.
\newblock \bibinfo{journal}{\emph{Theoretical Computer Science}}
  \bibinfo{volume}{315}, \bibinfo{number}{1} (\bibinfo{year}{2004}),
  \bibinfo{pages}{257--305}.
\newblock
\urldef\tempurl%
\url{https://www.sciencedirect.com/science/article/pii/S0304397503006248}
\showURL{%
\tempurl}


\bibitem[\protect\citeauthoryear{Reynolds}{Reynolds}{2001}]%
        {Reynolds01}
\bibfield{author}{\bibinfo{person}{John~C. Reynolds}.}
  \bibinfo{year}{2001}\natexlab{}.
\newblock \showarticletitle{Intuitionistic reasoning about shared mutable data
  structure}.
\newblock \bibinfo{journal}{\emph{Millennial Perspectives in Computer Science}}
  \bibinfo{volume}{2}, \bibinfo{number}{1} (\bibinfo{year}{2001}),
  \bibinfo{pages}{303--321}.
\newblock
\urldef\tempurl%
\url{https://citeseerx.ist.psu.edu/viewdoc/download?doi=10.1.1.11.5999&rep=rep1&type=pdf}
\showURL{%
\tempurl}


\bibitem[\protect\citeauthoryear{Sergey, Nanevski, and Banerjee}{Sergey
  et~al\mbox{.}}{2015}]%
        {DBLP:conf/pldi/SergeyNB15}
\bibfield{author}{\bibinfo{person}{Ilya Sergey}, \bibinfo{person}{Aleksandar
  Nanevski}, {and} \bibinfo{person}{Anindya Banerjee}.}
  \bibinfo{year}{2015}\natexlab{}.
\newblock \showarticletitle{Mechanized verification of fine-grained concurrent
  programs}. In \bibinfo{booktitle}{\emph{{ACM SIGPLAN Conference on
  Programming Language Design and Implementation (PLDI)}}}.
  \bibinfo{publisher}{ACM Press}, \bibinfo{address}{Portland, Oregon},
  \bibinfo{pages}{77--87}.
\newblock
\urldef\tempurl%
\url{https://doi.org/10.1145/2737924.2737964}
\showDOI{\tempurl}


\bibitem[\protect\citeauthoryear{Srinivasan}{Srinivasan}{2001}]%
        {DBLP:conf/focs/Srinivasan01}
\bibfield{author}{\bibinfo{person}{Aravind Srinivasan}.}
  \bibinfo{year}{2001}\natexlab{}.
\newblock \showarticletitle{Distributions on level-sets with applications to
  approximation algorithms}. In \bibinfo{booktitle}{\emph{{IEEE} {S}ymposium on
  {F}oundations of {C}omputer {S}cience (FOCS)}}. \bibinfo{publisher}{IEEE},
  \bibinfo{address}{Las Vegas, Nevada}, \bibinfo{pages}{588--597}.
\newblock
\urldef\tempurl%
\url{https://doi.org/10.1109/SFCS.2001.959935}
\showDOI{\tempurl}


\bibitem[\protect\citeauthoryear{Tassarotti and Harper}{Tassarotti and
  Harper}{2019}]%
        {TassarottiH19}
\bibfield{author}{\bibinfo{person}{Joseph Tassarotti} {and}
  \bibinfo{person}{Robert Harper}.} \bibinfo{year}{2019}\natexlab{}.
\newblock \showarticletitle{A separation logic for concurrent randomized
  programs}.
\newblock \bibinfo{journal}{\emph{Proceedings of the {ACM} on Programming
  Languages}} \bibinfo{volume}{3}, \bibinfo{number}{{POPL}}
  (\bibinfo{year}{2019}), \bibinfo{pages}{64:1--64:30}.
\newblock
\urldef\tempurl%
\url{https://doi.org/10.1145/3290377}
\showDOI{\tempurl}


\bibitem[\protect\citeauthoryear{Wang, Sun, Fu, Chatterjee, and
  Goharshady}{Wang et~al\mbox{.}}{2021}]%
        {DBLP:conf/pldi/WangS0CG21}
\bibfield{author}{\bibinfo{person}{Jinyi Wang}, \bibinfo{person}{Yican Sun},
  \bibinfo{person}{Hongfei Fu}, \bibinfo{person}{Krishnendu Chatterjee}, {and}
  \bibinfo{person}{Amir~Kafshdar Goharshady}.} \bibinfo{year}{2021}\natexlab{}.
\newblock \showarticletitle{Quantitative analysis of assertion violations in
  probabilistic programs}. In \bibinfo{booktitle}{\emph{{ACM SIGPLAN Conference
  on Programming Language Design and Implementation (PLDI)}}}.
  \bibinfo{publisher}{ACM Press}, \bibinfo{address}{Virtual},
  \bibinfo{pages}{1171--1186}.
\newblock
\urldef\tempurl%
\url{https://doi.org/10.1145/3453483.3454102}
\showDOI{\tempurl}


\end{thebibliography}

\iffull
\appendix
\section{Preliminaries}
\label{sec:app:prelim}

\begin{lemma}
	\label{lemma:pnaeq_indep}
	Say $\mathcal{S} = \{ S_i \mid 1 \leq i \leq N\}$ where $S_i$ are disjoint,
	$S = \cup \mathcal{S}$ and $\mu \in \Mem[S]$,

	Then, $S_i$ are independent in $\mu$
	if and only if for any family of all monotone or all antitone functions
	$f_i: \Mem[S_i] \to \mathbb{R}^+$,
	\begin{align}
	\EE_{x \sim \mu}\left[\prod_{S_i \in \mathcal{S}} f_i (\p_{S_i} x) \right] =  \prod_{S_i \in \mathcal{S}} \EE_{x \in \mu}[f_i (\p_{S_i} x) ].
		\label{eq:gpnaindep}
	\end{align}
\end{lemma}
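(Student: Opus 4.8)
The plan is to prove the two directions separately, with essentially all the work in the reverse (``if'') direction. For the forward direction, I would observe that independence of the $S_i$ in $\mu$ means the joint law factors as $\mu(x) = \prod_i \pi_{S_i}\mu(\p_{S_i}x)$; for \emph{any} family $\{f_i\}$ at all (monotone, antitone, or neither), substituting this factorization into $\EE_{x \sim \mu}[\prod_i f_i(\p_{S_i}x)]$ and regrouping the resulting sum as a product of one-variable sums yields $\prod_i \EE_{x \sim \mu}[f_i(\p_{S_i}x)]$. Since the $f_i$ are non-negative, Tonelli's theorem justifies the interchange with no integrability side conditions, so this needs no monotonicity. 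The content of the lemma is therefore the converse.

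For the reverse direction, the key idea is to instantiate~\eqref{eq:gpnaindep} at indicator functions of \emph{up-sets}. Recall that $U \subseteq \Mem[S_i]$ is upward closed in the pointwise order if $a \in U$ and $b \geq a$ imply $b \in U$, and that the indicator $\mathbf{1}_U$ of such a $U$ is a non-negative, bounded, monotone function, hence an admissible choice of $f_i$. Fixing up-sets $U_i \subseteq \Mem[S_i]$ and taking $f_i = \mathbf{1}_{U_i}$, the left-hand side of~\eqref{eq:gpnaindep} collapses to $\mu(\{x : \p_{S_i}x \in U_i \text{ for all } i\}) = \mu(\prod_i U_i)$, while the right-hand side becomes $\prod_i \pi_{S_i}\mu(U_i)$, which is exactly the value assigned to the rectangle $\prod_i U_i$ by the product-of-marginals distribution $\nu$, defined by $\nu(x) \defeq \prod_i \pi_{S_i}\mu(\p_{S_i}x)$. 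Thus the hypothesis gives $\mu(\prod_i U_i) = \nu(\prod_i U_i)$ for every choice of up-sets $U_i$: the joint law and the product of its marginals agree on all ``upper rectangles.''

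It remains to upgrade agreement on upper rectangles to agreement on all events, which gives $\mu = \nu$ and hence independence; this is the main obstacle, precisely because monotone test functions form only a restricted class. I would argue as follows. The collection $\mathcal{G} = \{\prod_i U_i : U_i \text{ an up-set}\}$ is closed under finite intersection, since $(\prod_i U_i) \cap (\prod_i V_i) = \prod_i (U_i \cap V_i)$ and up-sets are closed under intersection, and it contains the whole space $\Mem[S] = \prod_i \Mem[S_i]$, so $\mathcal{G}$ is a $\pi$-system. Each principal up-set $\uparrow a = \prod_i \uparrow a_i$ lies in $\mathcal{G}$, and in the countable poset $\Mem[S]$ one has $\{a\} = \uparrow a \setminus \bigcup_{b > a} \uparrow b$, a countable Boolean combination of members of $\mathcal{G}$; hence $\sigma(\mathcal{G})$ contains every singleton and therefore every subset of $\Mem[S]$. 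Since $\mu$ and $\nu$ are probability measures agreeing on the $\pi$-system $\mathcal{G}$, Dynkin's $\pi$-$\lambda$ theorem gives $\mu = \nu$ on all of $2^{\Mem[S]}$, i.e.\ $\mu(x) = \prod_i \pi_{S_i}\mu(\p_{S_i}x)$ for every $x$, which is exactly independence of the $S_i$ in $\mu$. (When the support is finite one may instead finish by Möbius inversion on the product poset: the values $\mu(\uparrow a)$ determine each $\mu(\{a\})$, and likewise for $\nu$, so the two agree pointwise.) The only delicate checks are that up-set indicators are genuinely monotone and bounded, so the hypothesis applies, and that the generated $\sigma$-algebra separates points.
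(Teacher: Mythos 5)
Your proof is correct and rests on the same key insight as the paper's: the hypothesis, instantiated at indicator functions of one-sided (order-ideal) sets, yields enough linear constraints to recover the point masses of $\mu$ and hence the full product factorization. The execution differs, though. The paper works with \emph{antitone} indicators $[\p_{S_i}x \le v_i]$ and $[\p_{S_i}x < v_i]$ and runs an explicit induction, converting one coordinate at a time from an inequality indicator to an equality indicator via $[{=}\,v] = [{\le}\,v] - [{<}\,v]$, arriving directly at the factorization of $\Pr[\bigwedge_i \p_{S_i}x = v_i]$. You instead use \emph{monotone} indicators of up-sets, observe that $\mu$ and the product-of-marginals measure $\nu$ agree on the $\pi$-system of upper rectangles, and outsource the inversion to Dynkin's $\pi$--$\lambda$ theorem (or M\"obius inversion in the finite-support case). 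Your route is cleaner in that it replaces the paper's somewhat fiddly coordinate-peeling induction with a standard uniqueness theorem; the paper's route is more elementary and self-contained and never has to mention $\sigma$-algebras. One small caveat in your write-up: $\Mem[S]$ need not be countable --- the paper only assumes the \emph{distributions} have countable support --- so the identity $\{a\} = \mathop{\uparrow} a \setminus \bigcup_{b>a}\mathop{\uparrow} b$ as you state it may involve an uncountable union. This is easily repaired by restricting both probability measures to the countable set $\mathrm{supp}(\mu)\cup\mathrm{supp}(\nu)$ before running the $\pi$--$\lambda$ argument (or by noting that for a discrete measure the values $\mu(\mathop{\uparrow} a)$ with $a$ ranging over the support already determine all point masses), but the restriction step should be made explicit.
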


\begin{proof}
	The forward direction is straightforward.
	The backward direction needs more careful analysis.
	In general, zero correlation does not imply independence,
	but here, we have the equality for all family of monotone or antitone functions,
	so that suffices for independence.

	We prove by induction on $\mathcal{T} = \{ S_i \mid 1 \leq i \leq K\}$ that for any family of $v_i \in \Mem[S_i]$,
	\begin{align} \label{eq:gpnainduction}
		\mathbb{E}_{x \in \mu} \left[ \left( \bigwedge_{S_i \in \mathcal{T}} \p_{S_i} x = v_i \right) \land \left(\bigwedge_{S_i \in \mathcal{S} \setminus \mathcal{T}} \p_{S_i} x < v_i \right) \right]
		= \prod_{S_i \in \mathcal{T}} \mathbb{E}_{x \in \mu} \left[\p_{S_i} x = v_i\right] \cdot \prod_{S_i \in \mathcal{S} \setminus \mathcal{T}} \mathbb{E}_{x \in \mu}\left[\p_{S_i} x < v_i \right].
	\end{align}
	\begin{description}
		\item [Case $|\mathcal{T}| = 1$:] Say $\mathcal{T} = \{ S_j \}$.
			Since indicator functions $S_i < v_i$ and $S_i \leq v_i$
			are both monotonically decreasing,
			\begin{align*}
			&\mathbb{E}_{x \in \mu} \left[\p_{S_j} x = v_j \land (\bigwedge_{S_i \in \mathcal{S} \setminus \mathcal{T} } \p_{S_i} x < v_i) \right]\\
			&=	\mathbb{E}_{x \in \mu} \left[ \p_{S_j} x \leq v_j \land (\bigwedge_{S_i \in \mathcal{S} \setminus \mathcal{T} } \p_{S_i} x < v_i) \right]
			- \mathbb{E}_{x \in \mu} \left[ \p_{S_j} x < v_j \land (\bigwedge_{S_i \in \mathcal{S} \setminus \mathcal{T} } \p_{S_i} x < v_i) \right]\\
			&=	\mathbb{E}_{x \in \mu} \left[\p_{S_j} x \leq v_j\right] \cdot \prod_{S_i \in \mathcal{S} \setminus \mathcal{T} } \mathbb{E}_{x \in \mu} \left[ \p_{S_i} x < v_i\right]
			- \mathbb{E}_{x \in \mu} \left[ \p_{S_j} x < v_j \right] \cdot \prod_{S_i \in \mathcal{S} \setminus \mathcal{T} } \mathbb{E}_{x \in \mu} \left[ \p_{S_i} x < v_i \right] \tag{By~\Cref{eq:gpnaindep}}\\
			&= 	(\mathbb{E}_{x \in \mu} \left[ \p_{S_j} x  \leq v_j \right] 	- \mathbb{E}_{x \in \mu} \left[ \p_{S_j} x < v_j \right]) \cdot \prod_{S_i \in \mathcal{S} \setminus \mathcal{T} } \mathbb{E}_{x \in \mu} \left[\p_{S_i} x < v_i \right]\\
			&= 	\mathbb{E}_{x \in \mu} \left[ \p_{S_j} x = v_j\right] \cdot \prod_{S_i \in \mathcal{S} \setminus \mathcal{T} } \mathbb{E}_{x \in \mu} \left[\p_{S_i} x < v_i\right]
			\end{align*}
		\item  [Case $|\mathcal{T}| > 1$] Let $S_j$ be an element in $\mathcal{T}$.
			\begin{align*}
&\mathbb{E}_{x \in \mu} \left[ (\bigwedge_{S_i \in \mathcal{T}} \p_{S_i} x = v_i ) \land (\bigwedge_{S_i \in \mathcal{S} \setminus \mathcal{T}} \p_{S_i} x < v_i) \right] \\
&= \mathbb{E}_{x \in \mu} \left[ \p_{S_j} x \leq v_j \land (\bigwedge_{S_i \in \mathcal{T} \setminus \{S_j\} } \p_{S_i} x = v_i) \land (\bigwedge_{S_i \in \mathcal{S} \setminus \mathcal{T}} \p_{S_i} x < v_i) \right] \\
&- \mathbb{E}_{x \in \mu} \left[ \p_{S_j} x < v_j \land (\bigwedge_{S_i \in \mathcal{T} \setminus \{S_j\} } \p_{S_i} x = v_i ) \land (\bigwedge_{S_i \in \mathcal{S} \setminus \mathcal{T}} \p_{S_i} x < v_i) \right]\\
&= \mathbb{E}_{x \in \mu} \left[ \p_{S_j} x \leq v_j\right] \cdot \prod_{S_i \in \mathcal{T} \setminus \{S_j\} } \mathbb{E}_{x \in \mu} \left[\p_{S_i} x = v_i\right] \cdot \prod_{S_i \in \mathcal{S} \setminus \mathcal{T}} \mathbb{E}_{x \in \mu} \left[\p_{S_i} x < v_i \right] \\
&- \mathbb{E}_{x \in \mu} \left[\p_{S_j} x < v_j\right] \cdot \prod_{S_i \in \mathcal{T} \setminus \{S_j\} } \mathbb{E}_{x \in \mu} \left[\p_{S_i} x = v_i\right] \cdot \prod_{S_i \in \mathcal{S} \setminus \mathcal{T}} \mathbb{E}_{x \in \mu} \left[\p_{S_i} x < v_i \right] \\
&=  \mathbb{E}_{x \in \mu} \left[ \p_{S_j} x = v_j\right] \cdot \prod_{S_i \in \mathcal{T} \setminus \{S_j\} } \mathbb{E}_{x \in \mu} \left[\p_{S_i} x = v_i\right] \cdot \prod_{S_i \in \mathcal{S} \setminus \mathcal{T}} \mathbb{E}_{x \in \mu} \left[\p_{S_i} x < v_i \right]
			\end{align*}
	\end{description}
	When $\mathcal{T} = \mathcal{S}$,~\Cref{eq:gpnainduction} implies
	\begin{align*}
		\mathbb{E}_{x \in \mu} \left[ \bigwedge_{S_i \in \mathcal{S}} \p_{S_i} x = v_i\right]
	&=
	\prod_{S_i \in \mathcal{S}} 	\mathbb{E}_{x \in \mu} \left[ \p_{S_i} x = v_i \right]
	\end{align*}
	for any $v_i$'s. Thus, components in $\mathcal{S}$ are independent.
\end{proof}

\subsection{Coarsening}
We prove some properties of coarsening.
In the following we will use an alternative definition of coarsening,
which will be shown to be equivalent to what we define in the main text.
\begin{definition}[Alternative definition of coarsening]
	\label{def:partition_coarsening}
We first index any partition $\mathcal{S}$ as $\mathcal{S}_1, \dots, \mathcal{S}_{|\mathcal{S}|}$.
	Say $|\mathcal{S}'| = m$, $|\mathcal{S}| = n$.
We say	 $\mathcal{S}'$ coarsens a partition $\mathcal{S}$
	there exists a function a $f: [m] \to \mathcal{P}([n])$ such that
	1) $\cup_{i \in [m]} f(i) = [n] $;
	2) for any $i, j \in [m]$, either $i = j$ or $f(i), f(j)$ are disjoint;
	3) $\mathcal{S}' = \{ \cup \{ \mathcal{S}_j \mid j \in f(i)  \} \mid i \in [m] \}$.
\end{definition}

\begin{lemma}
	\label{lemma:coarsening_eq}
  Let $\mathcal{S}$, $\mathcal{S}'$ be two partitions. Then $\mathcal{S}'$
  coarsens $\mathcal{S}$ according to~\Cref{def:partition_coarsening} if and
  only if $\mathcal{S}'$ coarsens $\mathcal{S}$ according to~\Cref{def:PNA} .
\end{lemma}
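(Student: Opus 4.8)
The plan is to prove the two directions separately, in each case by directly unpacking the definitions. Throughout I would use the standing assumption that the blocks of a partition are nonempty and pairwise disjoint, which is what gives the argument its (small amount of) content. I would fix indexings $\mathcal{S} = \{\mathcal{S}_1, \dots, \mathcal{S}_n\}$ and $\mathcal{S}' = \{\mathcal{S}'_1, \dots, \mathcal{S}'_m\}$ once and for all.

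For the direction Def~\ref{def:partition_coarsening} $\Rightarrow$ Def~\ref{def:PNA}, I would start from a witnessing function $f$ and first check $\cup\mathcal{S}' = \cup\mathcal{S}$. By property (3), $\cup\mathcal{S}' = \cup_{i \in [m]} \cup\{\mathcal{S}_j : j \in f(i)\} = \cup\{\mathcal{S}_j : j \in \cup_i f(i)\}$, which equals $\cup\mathcal{S}$ by property (1). Then for any $s' = \mathcal{S}'_i \in \mathcal{S}'$, property (3) directly exhibits $s'$ as $\cup\mathcal{R}$ with $\mathcal{R} = \{\mathcal{S}_j : j \in f(i)\} \subseteq \mathcal{S}$, which is exactly what Def~\ref{def:PNA} demands. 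This direction is essentially bookkeeping.

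For the converse Def~\ref{def:PNA} $\Rightarrow$ Def~\ref{def:partition_coarsening}, given that each $\mathcal{S}'_i$ is a union of blocks of $\mathcal{S}$, I would define $f(i) = \{j \in [n] : \mathcal{S}_j \subseteq \mathcal{S}'_i\}$ and verify properties (1)--(3). The key auxiliary fact I would establish first is that the union representation is \emph{forced}: if $\mathcal{S}'_i = \cup\mathcal{R}$ with $\mathcal{R} \subseteq \mathcal{S}$, then in fact $\mathcal{R} = \{\mathcal{S}_j : \mathcal{S}_j \subseteq \mathcal{S}'_i\}$, because a nonempty block $\mathcal{S}_j$ contained in $\cup\mathcal{R}$ must meet some block of $\mathcal{R}$, and disjointness of the $\mathcal{S}$-blocks then forces $\mathcal{S}_j$ to be that block. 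With this in hand, (3) is immediate. For (2), if $j \in f(i) \cap f(i')$ then a point of the nonempty $\mathcal{S}_j$ lies in both $\mathcal{S}'_i$ and $\mathcal{S}'_{i'}$, so disjointness of $\mathcal{S}'$ gives $i = i'$. For (1), any $j$ and any $x \in \mathcal{S}_j$ lie in some $\mathcal{S}'_i$ (using $\cup\mathcal{S} = \cup\mathcal{S}'$), and applying the uniqueness fact to $\mathcal{S}'_i$ shows $\mathcal{S}_j \subseteq \mathcal{S}'_i$, hence $j \in f(i)$.

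The only step carrying real content---and the main obstacle, such as it is---is the uniqueness-of-representation fact, which is precisely where nonemptiness and disjointness of the partition blocks are genuinely used; once it is established, both directions reduce to routine set manipulation.
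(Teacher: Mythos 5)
Your proposal is correct and follows essentially the same route as the paper: both directions unpack the definitions, the harder direction uses the same witnessing function $f(i) = \{j : \mathcal{S}_j \subseteq \mathcal{S}'_i\}$, and your ``uniqueness of representation'' fact is the same key observation the paper phrases as ``if $s'$ covers any part of $\mathcal{S}_j$ it must cover all of it.'' No substantive differences.
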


\begin{proof}
	We index  $\mathcal{S}$ as $\mathcal{S}_1, \dots, \mathcal{S}_{n}$ and
	$\mathcal{S}'$ as $\mathcal{S}'_1, \dots, \mathcal{S}'_{|m|}$ .

	Backward direction:
 By that definition, we know
	a) for any $\mathcal{S}'_i \in \mathcal{S}'$, $\mathcal{S}'_i = \cup \mathcal{R}$ for some $\mathcal{R} \subseteq \mathcal{S}$;
	b) $\cup \mathcal{S} = \cup \mathcal{S}'$.

	We define the function $g: [m] \to \mathcal{P}([n])$ as
	$g(i) = \{j \mid \mathcal{S}_j \subseteq \mathcal{S}'_i\}$.
	This $g$ would satisfies all the conditions required:
	\begin{enumerate}
		\item By substitution, $\cup_{i \in [m]} g(i) =  \cup_{i \in [m]} \{j \mid \mathcal{S}_j \subseteq \mathcal{S}'_i\} = \cup_{s' \in \mathcal{S}'}\{j \mid \mathcal{S}_j \subseteq s'\} $.
			By b), for any $j \in [n]$, $\mathcal{S}_j \subseteq \cup \mathcal{S}'$.
			Then by a) and that $\mathcal{S}$ is a partition,	if $s'$ covers any of $\mathcal{S}_j$, it must
			covers all of $\mathcal{S}_j$,
then $\mathcal{S}_j \subseteq \cup \mathcal{S}'$ implies there exists $s' \in \mathcal{S}'$ such that $\mathcal{S}_j \subseteq s'$.
			Thus, $j \in  \{j \mid \mathcal{S}_j \subseteq s'\} \subseteq \cup_{s' \in \mathcal{S}'}\{j \mid \mathcal{S}_j \subseteq s'\} $.
			For any $j \not\in [n]$, $\mathcal{S}_j$ is undefined,
			so it is impossible that $\mathcal{S}_j \subseteq s'$ for some $s' \subseteq \mathcal{S}'$.
			Therefore, $\cup_{s' \in \mathcal{S}'}\{j \mid \mathcal{S}_j \subseteq s'\} = [n]$.
		\item
			For any $k \in g(i)$, $\mathcal{S}_k \subseteq \mathcal{S}'_i$.
			If $i \neq j$, then $\mathcal{S}'_i $ and $\mathcal{S}'_j$ are disjoint since $\mathcal{S}'$ is a partition.
			Thus, $\mathcal{S}_k \not\subseteq \mathcal{S}'_j$, and $k \not\in g(j)$.
			So for any $i \neq j$, $g(i), g(j)$ are disjoint.
		\item  By substitution,
	\begin{align*}
		\{ \cup \{ \mathcal{S}_j \mid j \in g(i)  \} \mid i \in [m] \}
			&=  \{ \cup \{ \mathcal{S}_j \mid \mathcal{S}_j \subseteq \mathcal{S}'_i  \} \mid i \in [m] \}
			&=  \{ \cup \{ \mathcal{S}_j \mid \mathcal{S}_j \subseteq s' \} \mid s' \in \mathcal{S}' \}
      .
	\end{align*}
	Again, by a) and that $\mathcal{S}$ is a partition, if $s' \in \mathcal{S}$ covers any part of of $\mathcal{S}_j$, it must
			covers all of $\mathcal{S}_j$, so  $\cup \{ \mathcal{S}_j \mid \mathcal{S}_j \subseteq s' \}  = s'$.
			Thus, $\{ \cup \{ \mathcal{S}_j \mid \mathcal{S}_j \subseteq s' \} \mid s' \in \mathcal{S}' \} =\mathcal{S}'$.
	\end{enumerate}

	Forward direction:
	By 3), we know that  $\mathcal{S}' = \{ \cup \{ \mathcal{S}_j \mid j \in f(i)  \} \mid i \in [m] \}$.
	So for any $\mathcal{S}'_i \in \mathcal{S}'$, we have $s' = \cup \{ \mathcal{S}_j \mid j \in f(i)  \} $, which is a subset of $\mathcal{S}'$ by construction. So we proved a).
	Also, $\cup \mathcal{S}' = \cup \{ \cup \{ \mathcal{S}_j \mid j \in f(i)  \} \mid i \in [m] \}
	= \cup \{ \mathcal{S}_j \mid j \in f(i)  \mid i \in [m] \} $,
	and by 1), that is equivalent to
	$\cup \{ \mathcal{S}_j \mid j \in [n] \}  $, which is equivalent to $\cup \mathcal{S}$.

\end{proof}

We can prove that coarsening commute with projections.
\begin{lemma}
	\label{lemma:coarsen_project}
	Given a partition $\mathcal{S} = \{ \mathcal{S}_i \}_i$ and a set $X$,
	let $\mathcal{S}_X = \{\mathcal{S}_i \cap X \mid \mathcal{S}_i \in \mathcal{S}\}$.
	For any $\mathcal{T}$ coarsening $\mathcal{S}_X$,
	there exists a coarsening $\mathcal{S}'$ of $\mathcal{S}$ such that
	$\mathcal{T} = \{\mathcal{S}_i \cap X \mid \mathcal{S}_i \in \mathcal{S}'\}$;
	conversely, for any $\mathcal{S}'$ coarsening $\mathcal{S}$,
	and $\mathcal{S}'_X = \{\mathcal{S}_i \cap X \mid \mathcal{S}_i \in \mathcal{S}'\}$,
  we have $\mathcal{S}'_X$ coarsens $\mathcal{S}_X$.

\end{lemma}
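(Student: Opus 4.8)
The plan is to prove both directions directly from the definition of coarsening (\Cref{def:PNA}): a partition $\mathcal{S}'$ coarsens $\mathcal{S}$ exactly when $\cup\mathcal{S}' = \cup\mathcal{S}$ and every block of $\mathcal{S}'$ is a union of blocks of $\mathcal{S}$. Write $U = \cup\mathcal{S}$ throughout, and note that only $X \cap U$ matters, since $\mathcal{S}_i \cap X \subseteq U$.

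The converse direction (projecting a coarsening) is a routine computation. Suppose $\mathcal{S}'$ coarsens $\mathcal{S}$. First, $\cup\mathcal{S}'_X = (\cup\mathcal{S}')\cap X = U \cap X = \cup\mathcal{S}_X$, since intersection with $X$ commutes with arbitrary unions. Second, any block of $\mathcal{S}'_X$ has the form $s'\cap X$ where $s' = \cup\mathcal{R}$ for some $\mathcal{R}\subseteq\mathcal{S}$; then $s'\cap X = \cup_{R\in\mathcal{R}}(R\cap X)$ is a union of blocks of $\mathcal{S}_X$. Both conditions of \Cref{def:PNA} hold, so $\mathcal{S}'_X$ coarsens $\mathcal{S}_X$.

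The forward direction (lifting a coarsening $\mathcal{T}$ of $\mathcal{S}_X$ back up to $\mathcal{S}$) is the substantive part, and the key structural fact is that the restriction map $\mathcal{S}_i\mapsto\mathcal{S}_i\cap X$ is injective on the blocks meeting $X$: distinct blocks $\mathcal{S}_i\neq\mathcal{S}_j$ are disjoint, hence $\mathcal{S}_i\cap X$ and $\mathcal{S}_j\cap X$ are disjoint, so they coincide only when both are empty. Thus the map is a bijection between $\mathcal{S}^{\mathrm{in}} := \{\mathcal{S}_i : \mathcal{S}_i\cap X\neq\emptyset\}$ and the nonempty blocks of $\mathcal{S}_X$. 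I would also record the observation that, because the blocks of $\mathcal{T}$ are pairwise disjoint unions of $\mathcal{S}_X$-blocks covering $\cup\mathcal{S}_X$, every nonempty block of $\mathcal{S}_X$ is contained in exactly one $t\in\mathcal{T}$. Using this, for each $t\in\mathcal{T}$ I define the pulled-back block $\hat t := \cup\{\mathcal{S}_i\in\mathcal{S}^{\mathrm{in}} : \mathcal{S}_i\cap X\subseteq t\}$, which is a union of $\mathcal{S}$-blocks by construction, and the family $\{\hat t\}_{t\in\mathcal{T}}$ partitions $\cup\mathcal{S}^{\mathrm{in}}$; moreover $\hat t\cap X = t$, since $t$ is precisely the union of the nonempty $\mathcal{S}_X$-blocks it contains.

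The remaining obstacle — the one point genuinely requiring care — is the treatment of the blocks $\mathcal{S}^{\mathrm{out}} := \mathcal{S}\setminus\mathcal{S}^{\mathrm{in}}$ that are disjoint from $X$ and so vanish under projection; these must be placed so that $\mathcal{S}'$ still covers all of $U$ and remains a coarsening of $\mathcal{S}$, while its projection stays exactly $\mathcal{T}$. I would absorb all of $\mathcal{S}^{\mathrm{out}}$ into a single fixed block of the family $\{\hat t\}$, replacing that $\hat t$ by $\hat t\cup(\cup\mathcal{S}^{\mathrm{out}})$ (or, when $\cup\mathcal{S}^{\mathrm{in}}=\emptyset$, taking $\cup\mathcal{S}^{\mathrm{out}}$ as a single extra block). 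Because the absorbed blocks meet $X$ trivially, intersecting with $X$ recovers $t$ unchanged, so the resulting $\mathcal{S}'$ is a partition of $U$ that coarsens $\mathcal{S}$ and satisfies $\mathcal{S}'_X = \mathcal{T}$. The degenerate cases ($U\cap X=\emptyset$, or $\mathcal{T}$ empty) are then dispatched by the same absorption step with a brief separate remark.
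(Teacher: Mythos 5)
Your proof is correct, but it takes a genuinely different route from the paper's. The paper does not argue directly from \Cref{def:PNA}; it first re-characterizes coarsening by an index function $f : [m] \to \mathcal{P}([n])$ whose images are pairwise disjoint and cover $[n]$ (\Cref{def:partition_coarsening}, shown equivalent in \Cref{lemma:coarsening_eq}), and it indexes $\mathcal{S}_X$ by the \emph{same} index set as $\mathcal{S}$, so that $(\mathcal{S}_X)_j = \mathcal{S}_j \cap X$. Both directions of the lemma then collapse to the single identity $(\cup\{\mathcal{S}_j \mid j \in f(i)\}) \cap X = \cup\{\mathcal{S}_j \cap X \mid j \in f(i)\}$: the same $f$ that witnesses $\mathcal{T}$ coarsening $\mathcal{S}_X$ defines $\mathcal{S}' = \{\cup\{\mathcal{S}_j \mid j \in f(i)\} \mid i \in [|\mathcal{T}|]\}$ and witnesses it coarsening $\mathcal{S}$, and conversely. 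Because the images of $f$ must cover every index $j$, the blocks $\mathcal{S}_j$ with $\mathcal{S}_j \cap X = \emptyset$ are automatically parked inside some block of $\mathcal{S}'$ (they contribute nothing to the corresponding $\mathcal{T}$-block), so the issue you correctly isolate as the delicate point---where to put $\mathcal{S}^{\mathrm{out}}$---is dissolved rather than solved. Your set-level argument pays for working with \Cref{def:PNA} directly: you need the injectivity of $\mathcal{S}_i \mapsto \mathcal{S}_i \cap X$ on blocks meeting $X$, the explicit pullback $\hat t$, and the absorption step, plus some residual care with empty blocks (e.g., if $\emptyset \in \mathcal{T}$ you should absorb $\mathcal{S}^{\mathrm{out}}$ into $\hat\emptyset$ specifically so that no spurious empty block survives in $\mathcal{S}'$). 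What your version buys is self-containedness---it does not rely on \Cref{lemma:coarsening_eq}---and an explicit bijection between $\mathcal{T}$ and the pulled-back blocks; what the paper's version buys is brevity and a uniform treatment of the degenerate cases.
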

\begin{proof}
	Forward direction:
	By~\Cref{def:partition_coarsening}, there exists a coarsening function $f$ such that
	\begin{align*}
		\mathcal{T} &= \{ \cup \{ {(\mathcal{S}_X)}_j \mid j \in f(i)  \} \mid i \in [|\mathcal{T}|] \}\\
														&= \{ \cup \{ \mathcal{S}_j \cap X \mid j \in f(i)\} \mid i \in [|\mathcal{T}|] \}\\
														&= \{ (\cup \{ \mathcal{S}_j \mid j \in f(i)\}) \cap X \mid i \in [|\mathcal{T}|] \}\\
														&= \{ S' \cap X \mid S' \in \mathcal{S}' \} \tag{where $\mathcal{S}' = \{\cup \{ \mathcal{S}_j \mid j \in f(i)\} \mid i \in [|\mathcal{T}|]\}$}
	\end{align*}
	$\mathcal{S}'$ has the same size as $\mathcal{T}$, so
	$\mathcal{S}' = \{\cup \{ \mathcal{S}_j \mid j \in f(i)\} \mid i \in [|\mathcal{S}'|]\}$,
	and thus $\mathcal{S}'$ coarsens $\mathcal{S}$.

Backward direction:
$\mathcal{S}'$ coarsens $\mathcal{S}$, so there exists a coarsening function $f$ such that
\begin{align*}
	\mathcal{S}' &= \{ \cup \{ S_j \mid j \in f(i)\} \mid i \in [|\mathcal{S}'|] \} .
\end{align*}
Thus,
\begin{align*}
	\mathcal{S}'_X &= \{(\cup \{ S_j \mid j \in f(i)\}) \cap X \mid i \in [|\mathcal{S}'|] \} \\
	&= \{\cup \{ S_j \cap X \mid j \in f(i)\} \mid i \in [|\mathcal{S}'|] \} \\
	&= \{\cup \{ {S_X}_j \mid j \in f(i)\} \mid i \in [|\mathcal{S}'|] \} .
\end{align*}
Therefore, $\mathcal{S}'_X$ coarsens $\mathcal{S}_X$.

\end{proof}

\section{The logic \LOGIC}
\label{sec:app:logic}

\Cref{fig:hilbert} gives a Hilbert-style proof system for $\idxset$-BI. If
one erases the subscripts on $\sepand$ and $\sepimp$, then the rules without the
last three form a Hilbert proof system for BI.

\begin{figure}
\begin{mathpar}
	\inferrule*
	{~}
	{P \vdash P}
%

	\inferrule* [right = $\top$]
	{~}
	{P \vdash \top}

	\inferrule* [right = $\bot$]
	{~}
	{\bot \vdash P}

	\inferrule* [right=$\lor$-E]
	{P \vdash R \\ Q \vdash R}
	{P \lor Q \vdash R}

	\inferrule* [right=$\lor$-I]
	{P \vdash Q_i}
	{P \vdash Q_1 \lor Q_2}

	\inferrule* [right=$\land$-I1]
	{P \vdash Q \\ P \vdash R}
	{P \vdash Q \land R}

	\inferrule* [right=$\land$-I2]
	{Q \vdash R}
	{P \land Q \vdash R}

	\inferrule* [right=$\land$-E]
	{P \vdash Q_1 \land Q_2}
	{P \vdash Q_i}

	\inferrule* [right=$\rightarrow$-I]
	{P \land Q \vdash R}
	{P \vdash Q \rightarrow R}

 	\inferrule* [right=$\rightarrow$-E]
 	{P \vdash Q \rightarrow R \\ P \vdash Q}
 	{P \vdash R}

 	\inferrule* [right=$\sepimp$]
 	{P \sepand_\idx Q \vdash R}
 	{P \vdash Q \sepimp_\idx R}

 	\inferrule* [right=$\sepimp$-E]
 	{ P \vdash Q \sepimp_\idx R \and S \vdash Q }
 	{ P \sepand_\idx S \vdash R }

 	\inferrule* [right=$\sepand$-\textsc{Unit}]
 	{~}
 	{P \dashv \vdash P \sepand_\idx \sepid_{\idx}}

 	\inferrule* [right=$\sepand$-\textsc{Conj}]
 	{P \vdash R \\ Q \vdash S}
 	{P \sepand_\idx Q \vdash R \sepand_\idx S}

 	\inferrule* [right=$\sepand$-\textsc{Comm}]
 	{~}
 	{P \sepand_\idx Q \vdash Q \sepand_\idx P}

 	\inferrule* [right= $\sepand$-\textsc{Assoc}]
 	{~}
 	{(P \sepand_\idx Q) \sepand_\idx R \dashv\vdash P \sepand_\idx (Q \sepand_\idx R)}

        \inferrule* [right= $\sepand$-\textsc{Inclusion}]
 	{\idx_1 \leq \idx_2}
 	{P \sepand_{\idx_1} Q \vdash P \sepand_{\idx_2} Q}

%
\end{mathpar}
\caption{Hilbert system for $\idxset$-BI}
\label{fig:hilbert}
\end{figure}

Using the \idxset-BI rules, we can derive the rule $\textsc{Cut}$:
\begin{prooftree}\small
\AxiomC{$P \vdash Q$}
\AxiomC{$Q \vdash R$}
\RightLabel{$\land 2$}
\UnaryInfC{$P \land Q \vdash R$}
\RightLabel{$\rightarrow$}
\UnaryInfC{$P \vdash Q \rightarrow R$}
\RightLabel{\textsc{Cut}}
\BinaryInfC{$P \vdash R$}
\end{prooftree}

\InclusionOthers*

\begin{proof}
	\label{proof:InclusionOthers}
		Let $\idx_1 \leq \idx_2$. For the first rule, it suffices to show that $(P
		\sepimp_{\idx_2} Q) \sepand_{\idx_1} P \vdash Q$. By
		$\sepand$-\textsc{Weakening}, we have that $(P \sepimp_{\idx_2} Q)
		\sepand_{\idx_1} P \vdash (P \sepimp_{\idx_2} Q) \sepand_{\idx_2} P$ so the
		result follows from \textsc{Cut} and $\sepimp-\textsc{E}$.

		For the second rule, $\sepand_{1}$-\textsc{Unit} implies $ \sepid_{\idx_2} \vdash \sepid_{\idx_2} \sepand_{\idx_1}
		\sepid_{\idx_1}$;
		$\sepand$-\textsc{Weakening} implies $\sepid_{\idx_2} \sepand_{\idx_1}
		\sepid_{\idx_1} \vdash \sepid_{\idx_2} \sepand_{\idx_2} \sepid_{\idx_1}$;
		and $\sepand_{2}$-\textsc{Unit} implies $\sepid_{\idx_2} \sepand_{\idx_2} \sepid_{\idx_1} \vdash \sepid_{\idx_1}.$
		Then by \textsc{Cut}, we have $ \sepid_{\idx_2} \vdash \sepid_{\idx_1} $
\end{proof}

\section{The Model of Negative Dependence and Independence}
\label{sec:app:model}

%
%

\subsection{A BI model for negative association}
\PNAoplusinterpolate*
\begin{proof}
	Let $S$ denote $\dom(\mu_1)$ and $T$ denote $\dom(\mu_2)$.

	For any $\mu \in \mu_1 \oplus_s \mu_2$, we have
	$\pi_S \mu = \mu_1$,
	$\pi_T \mu = \mu_2$, and $\mu$ satisfies NA.
	$\mu$ being NA
implies $\mu$ is $\mathcal{R}$-\CPNA for any partition $\mathcal{R}$ on
$\dom(\mu)$
	So for any partition $\mathcal{S}$ on $S$, partition $\mathcal{T}$ on $T$,
	$\mu$ is $\mathcal{S} \cup \mathcal{T}$-\CPNA.
	Therefore, $\mu \in \mu_1 \oplus \mu_2$.

	For any $\mu \in \mu_1 \oplus \mu_2$,
		$\pi_S \mu = \mu_1$,
	$\pi_T \mu = \mu_2$,
	and $\mu$ is $\{S, T\}$-\CPNA since $\mu_1$ is $\{S\}$-\CPNA,
	$\mu_2$ is $\{T\}$-\CPNA.
	Thus, $\mu \in \mu_1 \oplus_w \mu_2$.
\end{proof}

\PNAisBI*
	\begin{proof}
		\label{proof:PNAisBI}
		We sketch the conditions, using the notation from the definition:
		\begin{description}
			\item[Down-Closed.]
				Let $dom(x) = S, dom(x') = S', dom(y) = T, dom(y') = T'$. We claim that we
				can take $z' = \pi_{S' \cup T'} z$. We evidently have $z \sqsupseteq z'$,
				and $\pi_{S'} z' = \pi_{S'} \pi_{S} z = x'$ and $\pi_{T'} z' = \pi_{T'}
				\pi_T z = y'$.

				What remains to show is that $z'$ is $\mathcal{S} \cup \mathcal{T}$-\CPNA
				for any $\mathcal{S}$, $\mathcal{T}$
				such that $x'$ is $\mathcal{S}$-\CPNA, $y'$ is $\mathcal{T}$-\CPNA,
				and $(\cup \mathcal{S}) \cap (\cup \mathcal{T}) = \emptyset$.

				If $x'$ is $\mathcal{S}$-\CPNA, then $x$ is $\mathcal{S}$-\CPNA;
				if $y'$ is $\mathcal{T}$-\CPNA, then $y$ is $\mathcal{T}$-\CPNA;
				then $z \in x \oplus y$ must be $\mathcal{S} \cup \mathcal{T}$-\CPNA.
				Since $z' := \pi_{S' \cup T'} z$, and $(\cup \mathcal{S}) \cup (\cup \mathcal{T}) \subseteq S' \cup T'$, we have $z'$ is $\mathcal{S} \cup \mathcal{T}$-\CPNA too.
				And evidently, $dom(z') = S' \cup T' = dom(x') \cup dom(y')$.
				So $z' \in x' \oplus y'$.

			\item[Commutativity.]
				Immediate.
			\item[Associativity.]
				Let $dom(x) = R, dom(y) = S, dom(z) = T$. We can assume that these sets
				are all disjoint, otherwise there is nothing to prove. We claim that we
				can take $s = \pi_{S \cup T} w$.
				For any $w$ in $t \oplus z$, $t \in x \oplus y$,
				we want to show that $w \in x \oplus s$ and $s \in y \oplus z$.
				\begin{itemize}
					\item
						For any partition $\mathcal{R}, \mathcal{S}$ such that
						$(\cup \mathcal{R})  \cap (\cup \mathcal{S}) = \emptyset$
						and $x$ is $\mathcal{R}$-\CPNA, $s$ is $\mathcal{S}$-\CPNA.
						For set $X \in \Var$,
						write $\{ Y \cap X \mid Y \in \mathcal{S}\}$ as $\mathcal{S}_X$.
						Then, by~\Cref{lemma:coarsen_project},
						$s$ is $\mathcal{S}$-\CPNA implies $y$ must be $\mathcal{S}_S$-\CPNA.
						Similarly,
						$s$ is $\mathcal{S}$-\CPNA implies $z$ must be $\mathcal{S}_T$-\CPNA.

						Then, $t \in x \oplus y$ must be $\mathcal{R} \cup (\mathcal{S}_S) $-\CPNA,
						and $w \in t \oplus z$ must be
						$\mathcal{R} \cup \mathcal{S}_S \cup \mathcal{S}_T$-\CPNA.
						Note that $\mathcal{S}$ coarsens $\mathcal{S}_S \cup \mathcal{S}_T$
						so $w$ is $\mathcal{R} \cup \mathcal{S}_S \cup \mathcal{S}_T$-\CPNA
						implies that $w$ is $\mathcal{R} \cup \mathcal{S} $-\CPNA.

						Also, $\pi_{R} w = \pi_R \pi_{R \cup S} w = \pi_R t = x$,
						and $dom(w) = R \cup S \cup T = dom(x) \cup dom(s)$.

						Hence, $w \in x \oplus s$.

					\item
						Note that $x$ is trivially $\{R\}$-\CPNA.
						Then, for any partition $\mathcal{S}, \mathcal{T}$ such that
						$R \cap (\cup \mathcal{S})  \cap (\cup \mathcal{T}) = \emptyset$
						and $y$ is $\mathcal{S}$-\CPNA and $z$ is $\mathcal{T}$-\CPNA,
						first $t$ must be $(\{R\} \cup \mathcal{S})$-\CPNA,
						and then $w$ must be $(\{R\} \cup \mathcal{S} \cup \mathcal{T})$-\CPNA.
						By projection, $s = \pi_{S \cup T}$ must be $\mathcal{S} \cup \mathcal{T} z$-\CPNA.

						Also, $\pi_{S} s = \pi_{S} \pi_{S \cup T} w = \pi_S w = \pi_S \pi_{R \cup S} w = \pi_S t = y$,
						and similarly, $\pi_T s = z$.
						Also, $dom(s) = S \cup T = dom(y) \cup dom(z)$.

						Hence, $s \in y \oplus z$.
				\end{itemize}

			\item[Unit Existence.]
				Take $e$ to be $\mu$ where $\mu$ is the (unique) distribution in $\DMem{\emptyset}$.
			\item[Unit Closure.]
				Immediate as we take $E = M$.
			\item[Unit Coherence.]
				Immediate: $x \in y \oplus e$ entails $y = \pi_{dom(y)} x$, which implies $y \sqsubseteq x$.
				\qedhere
		\end{description}
	\end{proof}

	\begin{theorem}
	Given a set of variables $S$, $S$ satisfies NA in $\mu$
	iff $\mu$ satisfies $\mathcal{S}$-\CPNA for any
$\mathcal{S}$ partitioning $S$ iff $\mu$  satisfies $\{ \{s\} \mid s \in S\}$-\CPNA.
\end{theorem}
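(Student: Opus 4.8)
The plan is to prove the three-way equivalence as a cycle $(1) \Rightarrow (2) \Rightarrow (3) \Rightarrow (1)$, where $(1)$ is that $S$ satisfies NA in $\mu$, $(2)$ is that $\mu$ is $\mathcal{S}$-\CPNA for \emph{every} partition $\mathcal{S}$ of $S$, and $(3)$ is that $\mu$ is $\{\{s\} \mid s \in S\}$-\CPNA. The implication $(2) \Rightarrow (3)$ is immediate, since the singleton partition is one particular partition of $S$, so $(3)$ is just the instance $\mathcal{S} = \{\{s\} \mid s \in S\}$ of the universally quantified statement $(2)$. The substance is in the remaining two implications.

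For $(1) \Rightarrow (2)$, I would first observe that any coarsening $\mathcal{T}$ of a partition $\mathcal{S}$ of $S$ is itself just a partition of $S$, so it suffices to derive the following product inequality from NA: for any pairwise-disjoint $A_1, \dots, A_m \subseteq S$ and any family of non-negative monotone (resp.\ non-negative antitone) functions $f_i : \Mem[A_i] \to \mathbb{R}^+$, one has $\EE_\mu[\prod_i f_i] \le \prod_i \EE_\mu[f_i]$. I would prove this by induction on $m$. The case $m = 1$ is trivial, and $m = 2$ is exactly the defining inequality of NA, since non-negative monotone functions are in particular monotone and bounded below. For the inductive step I would group the first $m-1$ factors into $h = \prod_{i < m} f_i$, a function on $A_1 \cup \dots \cup A_{m-1}$; the key closure fact is that a product of non-negative monotone functions on disjoint variable sets is again non-negative and monotone (and likewise in the antitone case), which is precisely where non-negativity is essential. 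Applying NA to the disjoint pair $(A_1 \cup \dots \cup A_{m-1},\, A_m)$ with functions $(h, f_m)$ gives $\EE_\mu[h f_m] \le \EE_\mu[h]\,\EE_\mu[f_m]$, while the induction hypothesis gives $\EE_\mu[h] \le \prod_{i<m} \EE_\mu[f_i]$; multiplying through by $\EE_\mu[f_m] \ge 0$ closes the induction.

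For $(3) \Rightarrow (1)$, I would take disjoint $I, J \subseteq S$ and a pair $f, g$ as in Definition~\ref{def:na}, and reduce to a non-negative setting so that the \CPNA inequality applies. The partition to use is $\mathcal{T} = \{I, J\} \cup \{\{s\} \mid s \in S \setminus (I \cup J)\}$, assigning the constant function $1$ (which is both monotone and antitone) to the leftover singletons, so that the product over $\mathcal{T}$ telescopes to the two relevant factors. To handle non-negativity I would use that the NA inequality is exactly $\mathrm{Cov}(f, g) \le 0$ and that covariance is invariant under adding constants and under simultaneously negating both arguments. Concretely: when $f, g$ are monotone and bounded below I shift them down to non-negative monotone functions and invoke the monotone case of \CPNA; when they are monotone and bounded above I replace them by $\sup f - f$ and $\sup g - g$, which are non-negative and antitone with the same covariance, and invoke the antitone case; the two antitone cases for $f, g$ are symmetric. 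In each case the \CPNA inequality yields that the covariance of the shifted pair is $\le 0$, hence $\EE_\mu[fg] \le \EE_\mu[f]\,\EE_\mu[g]$.

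The main obstacle I anticipate is the bookkeeping in $(3) \Rightarrow (1)$: matching the four combinations of (both monotone / both antitone) with (bounded below / bounded above) against the two available cases of \CPNA, and verifying that each constant shift or reflection preserves both the monotonicity/antitonicity and the covariance. This is exactly the place where the one-sided boundedness hypothesis of Definition~\ref{def:na} is needed, ensuring the shifting constants ($\inf$ or $\sup$) are finite and the relevant expectations are well-defined. By contrast, the induction in $(1) \Rightarrow (2)$ is routine once the product-of-monotone-functions closure and the non-negativity of the intermediate expectations are in hand.
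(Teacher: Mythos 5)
Your overall architecture matches the paper's: the equivalence between the all-partitions condition and the singleton-partition condition is dispatched exactly as you say (singletons form a partition; every partition coarsens the singleton partition), and your induction for NA $\Rightarrow$ \CPNA --- grouping the first $m-1$ non-negative monotone factors into a single monotone function on the union of their domains and applying the two-block NA inequality --- is the same induction the paper runs on the size of the partition. Your device for $(3)\Rightarrow(1)$ of padding the partition $\{I,J\}$ with leftover singletons carrying the constant function $1$ is also fine.

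The gap is in your case analysis for $(3)\Rightarrow(1)$. Definition~\ref{def:na}, as the paper uses it, requires the inequality for every pair $f,g$ each of which is bounded on \emph{some} side --- and the sides need not agree. You only treat ``both bounded below'' (shift both up) and ``both bounded above'' (reflect both), and your toolkit cannot reach the mixed case, say $f$ monotone and bounded below while $g$ is monotone and bounded above but unbounded below. Shifting $f$ and reflecting $g$ produces one non-negative monotone function and one non-negative antitone function, which is not a family the \CPNA inequality accepts; and since $\mathrm{Cov}(f,\sup g - g) = -\mathrm{Cov}(f,g)$, reflecting only one argument proves an inequality in the wrong direction. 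The paper closes this case with an extra analytic step your proposal lacks: after shifting, one function is non-negative and the other non-positive; the non-positive one is truncated from below at $-n$, the bounded case is applied to each truncation, and the monotone convergence theorem passes the inequality to the limit. Without this (or an equivalent approximation argument), your proof establishes NA only for the restricted class of pairs bounded on a common side, which is strictly weaker than the theorem as stated.
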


\begin{proof}
The second equivalence is straightforward:
\begin{itemize}
	\item  $\{\{s\} \mid s \in \mathcal{S} \}$ is a partition of $S$, so we have the backward direction.
	\item Any $\mathcal{S}$
		partitioning $S$ coarsens $\{\{s\} \mid s \in \S \}$, so we have the first direction.
\end{itemize}

For the forward direction of the first equivalence,
it suffices to prove that for any partition $\mathcal{S}$ of $S$,
any family of all monotone or all antitone functions
		$f_i: \Mem[S_i] \to \mathbb{R}^+$
		\begin{align}
			\mathbb{E}_{x \sim \mu} \left[ \prod_{S_i \in \mathcal{S}} f_i(\p_{S_i} m) \right]
			&\leq  \prod_{S_i \in \mathcal{S}} \mathbb{E}_{x \sim \mu} \left[ f_i(\p_{S_i} m) \right].
		\end{align}
We prove that by induction on the size of $\mathcal{S}$.

\begin{description}
	\item [Base case $|\mathcal{S}| = 1$: ]$\mathcal{S}$-\CPNA is trivial.
	\item [Base case $|\mathcal{S}| = 2$: ] $\mathcal{S}$-\CPNA is straightforward from NA.
	\item [Inductive case: ]
		Assuming $\mu$ satisfies $\mathcal{S}$-\CPNA for any partition
		with size less than $K$,
		we want to show that $\mu$ satisfies $\mathcal{S}$-\CPNA for any partition
	 with size equals to $K$.

		Say $\mathcal{S} = \{S_1, \dots, S_K\}$.
		For any family of all monotone or all antitone functions
		$f_i: \Mem[S_i] \to \mathcal{R}^+$,
		either both $ m \mapsto \prod_{i = 1}^{K-1} f_i(\p_{S_i} m)$ and $f_K$ are monotone,
		or both $ m \mapsto \prod_{i = 1}^{K-1} f_i(\p_{S_i} m)$ and $f_K$ are antitone.
		Thus, by the inductive hypothesis
		\begin{align*}
			\mathbb{E}_{x \sim \mu} \left[ \prod_{i = 1}^K f_i(\p_{S_i} m) \right]
			&\leq	\mathbb{E}_{x \sim \mu} \left[ \prod_{i = 1}^{K - 1} f_i(\p_{S_i} m) \right]
			\cdot \mathbb{E}_{x \sim \mu} \left[ f_K(\p_{S_K} m)\right] \\
			&\leq  \prod_{i = 1}^{K - 1} \mathbb{E}_{x \sim \mu} \left[ f_i(\p_{S_i} m) \right]
			\cdot \mathbb{E}_{x \sim \mu} \left[ f_K(\p_{S_K} m)\right] \\
			&=  \prod_{i = 1}^{K} \mathbb{E}_{x \sim \mu} \left[ f_i(\p_{S_i} m) \right] .
		\end{align*}
\end{description}

The backward direction of the first equivalence is more involved.
For any two disjoint $A, B \subseteq S$, we know $\mu$ satisfies $\{A, B\}$-\CPNA,
so for every pair of both monotone or both antitone functions
$f: \Mem[A] \to \mathbb{R}^+$,
$g: \Mem[B] \to \mathbb{R}^+$,
	\[
		\mathbb{E}_{m \sim \mu} [ f(\p_A m) \cdot g(\p_B m) ]
		\leq \mathbb{E}_{m \sim \mu} [ f(\p_A m) ] \cdot \mathbb{E}_{m \sim \mu} [ g(\p_B m) ] .
	\]
But the problem is to show this inequality when $f, g$ are not both
non-negative. We prove that in three steps:
\begin{enumerate}
	\item If $f, g$ are lower-bounded by $-L$, i.e., $f(x) \geq  -L$
and $g(x) \geq  -L$ for any $x$. Then $x \to f(x) + L$ and
$x \to g(x) + L$ are both non-negative functions.
Thus,
	\begin{align}
 \label{eq:PNAeqNA_plusM}
		\mathbb{E}_{m \sim \mu} [ (f(\p_A m) + L) \cdot (g(\p_B m) + L) ]
		\leq \mathbb{E}_{m \sim \mu} [ f(\p_A m) + L] \cdot \mathbb{E}_{m \sim \mu} [ g(\p_B m) + L] .
	\end{align}
	Meanwhile,
	\begin{align*}
	\mathbb{E} [ (f(\p_A m) + L) \cdot (g(\p_B m) + L) ]
	&= \mathbb{E} [ f(\p_A m) \cdot g(\p_B m) ]
	+ L \cdot \mathbb{E} [ f(\p_A m)]
	+  L \cdot \mathbb{E} [ g(\p_B m)] + L^2 \\
	\mathbb{E} [ f(\p_A m) + L] \cdot \mathbb{E}[ g(\p_B m) + L]
	&=(\mathbb{E} [f(\p_A m)] + L) \cdot (\mathbb{E}[g(\p_B m)] + L) \\
	&= \mathbb{E} [ f(\p_A m) ] \cdot \mathbb{E} [g(\p_B m) ]
	+ L \cdot \mathbb{E} [ f(\p_A m)]
	+  L \cdot \mathbb{E} [ g(\p_B m)]  + L^2 .
	\end{align*}

	So~\Cref{eq:PNAeqNA_plusM} implies that
	\begin{align*}
\mathbb{E}_{m \sim \mu} [ f(\p_A m) \cdot g(\p_B m)]
		\leq \mathbb{E}_{m \sim \mu} [ f(\p_A m)] \cdot \mathbb{E}_{m \sim \mu} [ g(\p_B m)] .
	\end{align*}

\item If the codomain of $f$ or $g$ does not range across both negative and positive numbers,
	then we can also prove the desired inequality by applying the monotone convergence theorem
	on the result for lower-bounded functions.
	\begin{itemize}
		\item Say $f$ is non-negative and $g$ is non-positive.
			For any natural number $n$,  $m \in \Mem[A \cup B]$,
			we define $g_n(\p_B m) = \max(g(\p_B m), -n)$,
			$h_n(m) = f(\p_A m) \cdot g_n(\p_B m) $.
			Then for any $n$, $g_n$ and $h_n$ are lower-bounded non-positive functions;
			and for any $m$, $\{g_n(m)\}_{n \in \mathbb{N}}$ is a monotonically decreasing sequence
			converging to $g(m)$,
			$\{h_n(m)\}_{n \in \mathbb{N}}$ is a monotonically decreasing sequence
			converging to $f(\p_A m) \cdot g(\p_B m) $.
			By the monotone convergence theorem,
			\begin{align*}
				\mathbb{E}_{m \sim \mu} f(\p_A m) \cdot g(\p_B m) &= \lim_{n \to \infty} \mathbb{E}_{m \sim \mu} h_n (m) \\
				\mathbb{E}_{m \sim \mu} g(\p_B m) &= \lim_{n \to \infty} \mathbb{E}_{m \sim \mu} g_n (\pi_B m).
			\end{align*}
			By what we proved above, for any $n$, we have
	\begin{align*}
		\mathbb{E}_{m \sim \mu} [ h_n(m) ]
		\leq \mathbb{E}_{m \sim \mu} [ f(\p_A m)] \cdot \mathbb{E}_{m \sim \mu} [ g_n(\p_B m)]
	\end{align*}

	Taking that to the limit $n \to \infty$,
	\begin{align*}
		\lim_{n \to \infty} \mathbb{E}_{m \sim \mu} [ h_n(m) ]
		&\leq \lim_{n \to \infty} \left( \mathbb{E}_{m \sim \mu} [ f(\p_A m)] \cdot \mathbb{E}_{m \sim \mu} [ g(\p_B m)] \right) \\
		&= \lim_{n \to \infty}  \mathbb{E}_{m \sim \mu} [ f(\p_A m)] \cdot \lim_{n \to \infty}  \mathbb{E}_{m \sim \mu} [ g(\p_B m)]
	\end{align*}
	Therefore, for any distribution $\mu \in \DMem{A \cup B}$,
	\begin{align*}
\mathbb{E}_{m \sim \mu} [ f(\p_A m) \cdot g(\p_B m)]
		\leq \mathbb{E}_{m \sim \mu} [ f(\p_A m)] \cdot \mathbb{E}_{m \sim \mu} [ g(\p_B m)] .
	\end{align*}
\item The case where $f$ is non-positive and $g$ is non-negative is symmetric.
\item The case where $f$ and $g$ are both non-positive is also similar.
	We will define $f_n(\p_A m) = \max(f(\p_A m), -n)$,  $g_n(\p_B m) = \max(g(\p_B m), -n)$,
$h_n(m) = f_n(\p_A m) \cdot g_n(\p_B m) $. Then we have
			\begin{align*}
				\mathbb{E}_{m \sim \mu} f(\p_A m) \cdot g(\p_B m) &= \lim_{n \to \infty} \mathbb{E}_{m \sim \mu} h_n (m) \\
				\mathbb{E}_{m \sim \mu} g(\p_B m) &= \lim_{n \to \infty} \mathbb{E}_{m \sim \mu} g_n (\pi_B m)\\
				\mathbb{E}_{m \sim \mu} f(\p_B m) &= \lim_{n \to \infty} \mathbb{E}_{m \sim \mu} f_n (\pi_A m).
			\end{align*}
			And the rest follows.
	\end{itemize}

\item Now we consider the general case where we only know both $f$ and $g$
	are either lower-bounded or upper bounded.
	\begin{itemize}
		\item If both $f$ and $g$ are lower-bounded, reduce to the first case.
		\item If $f$ is lower-bounded by $L$, $g$ is upper-bounded by $U$,
			then we can consider function $f' = f + L$ and $g' = g - U$.
			Then $f'$ is non-negative and $g'$ is non-positive, so by step 2, we have
				\begin{align*}
\mathbb{E}_{m \sim \mu} [ f'(\p_A m) \cdot g'(\p_B m)]
		\leq \mathbb{E}_{m \sim \mu} [ f'(\p_A m)] \cdot \mathbb{E}_{m \sim \mu} [ g'(\p_B m)] .
	\end{align*}
	By calculations analogous to what we did in step 1, that implies
	\begin{align*}
\mathbb{E}_{m \sim \mu} [ f(\p_A m) \cdot g(\p_B m)]
		\leq \mathbb{E}_{m \sim \mu} [ f(\p_A m)] \cdot \mathbb{E}_{m \sim \mu} [ g(\p_B m)] .
	\end{align*}
\item If $f$ is upper-bounded and $g$ is lower-bounded: analogous to above.
\item If both $f$ and $g$ are upper-bounded: also, analogous to above.
	\end{itemize}
	\end{enumerate}

	Thus, $\mu$ satisfies $\{A, B\}$-\CPNA implies $\mu$ satisfies $(A, B)$-NA.
	And therefore, $\mu$ satisfies $\{A, B\}$-\CPNA for any $A, B \subseteq S$
	implies $S$  satisfies strong NA in $\mu$.

\end{proof}

\subsection{A $\mathbf{2}$-BI model for independence and negative association}

	The proof that independence implies \CPNA will use the following lemma.
\begin{lemma} \label{groupNA}
	In a distribution $\mu$, if $\mu$ satisfies $\{S_1, S_2\}$-\CPNA,
	$\mu$ satisfies	$\{T_1, T_2\}$-\CPNA,
	and $S_1 \cup S_2$ is independent from $T_1 \cup T_2$ in $\mu$
	then $\mu$ is $\{S_1 \cup T_1, S_2 \cup T_2\}$-\CPNA.
\end{lemma}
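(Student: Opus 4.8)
The plan is to unfold the definition of $\mathcal{S}$-\CPNA for the two-block partition $\mathcal{S} = \{S_1 \cup T_1, S_2 \cup T_2\}$ and reduce it to a single inequality. A two-block partition has only two coarsenings: itself, and the trivial one-block partition $\{S_1 \cup T_1 \cup S_2 \cup T_2\}$, for which the \CPNA inequality is the tautology $\mathbb{E}[h] \leq \mathbb{E}[h]$. Hence it suffices to show, for every pair of non-negative functions $f : \Mem[S_1 \cup T_1] \to \mathbb{R}^+$ and $g : \Mem[S_2 \cup T_2] \to \mathbb{R}^+$ that are both monotone or both antitone,
\[
\mathbb{E}_{m \sim \mu}[f(\p_{S_1 \cup T_1} m) \cdot g(\p_{S_2 \cup T_2} m)] \leq \mathbb{E}_{m \sim \mu}[f(\p_{S_1 \cup T_1} m)] \cdot \mathbb{E}_{m \sim \mu}[g(\p_{S_2 \cup T_2} m)].
\]
I will treat the monotone case; the antitone case is symmetric.

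Since $S_1 \cup S_2$ is independent of $T_1 \cup T_2$ in $\mu$, I write each memory as a pair $(s, t)$ with $s$ on $S_1 \cup S_2$ and $t$ on $T_1 \cup T_2$; the relevant marginal factors as $\mu(s,t) = \mu_S(s)\,\mu_T(t)$, where $\mu_S, \mu_T$ are the marginals, so every expectation over $\mu$ splits into an iterated expectation $\mathbb{E}_{s \sim \mu_S}\mathbb{E}_{t \sim \mu_T}$ (all functions are non-negative, so the sums reorder freely). First I condition on $t$: for fixed $t$, the map $s \mapsto f(\p_{S_1} s, \p_{T_1} t)$ depends on $s$ only through $\p_{S_1} s$ and is non-negative and monotone, and likewise $s \mapsto g(\p_{S_2} s, \p_{T_2} t)$ depends only on $\p_{S_2} s$. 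Applying the $\{S_1, S_2\}$-\CPNA of $\mu$ (via its two-block coarsening) to these two functions gives, for each fixed $t$,
\[
\mathbb{E}_{s \sim \mu_S}[f(\p_{S_1} s, \p_{T_1} t)\, g(\p_{S_2} s, \p_{T_2} t)] \leq F(t)\, G(t),
\]
where $F(t) = \mathbb{E}_{s \sim \mu_S}[f(\p_{S_1} s, \p_{T_1} t)]$ and $G(t) = \mathbb{E}_{s \sim \mu_S}[g(\p_{S_2} s, \p_{T_2} t)]$.

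The key remaining observation is that $F$ depends on $t$ only through $\p_{T_1} t$ and, being an average over $s$ of functions that are non-negative and monotone in the $T_1$-argument, is itself non-negative and monotone in $\p_{T_1} t$; symmetrically $G$ is non-negative and monotone in $\p_{T_2} t$. Taking $\mathbb{E}_{t \sim \mu_T}$ of the displayed inequality and applying the $\{T_1, T_2\}$-\CPNA of $\mu$ to $F$ and $G$ yields $\mathbb{E}_{t}[F\,G] \leq \mathbb{E}_{t}[F] \cdot \mathbb{E}_{t}[G]$. Finally, refactoring through the independence gives $\mathbb{E}_{t}[F] = \mathbb{E}_{m \sim \mu}[f(\p_{S_1 \cup T_1} m)]$ and $\mathbb{E}_{t}[G] = \mathbb{E}_{m \sim \mu}[g(\p_{S_2 \cup T_2} m)]$, so chaining the two inequalities proves the claim. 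The main obstacle is the monotonicity bookkeeping in this middle step: one must verify that partially averaging a jointly monotone non-negative function over an independent block preserves both non-negativity and monotonicity in the remaining block, and that the analogous statement holds in the antitone case (where the partially averaged functions remain antitone). Everything else is the reduction to the two-block coarsening together with the independence-based refactoring of the expectations.
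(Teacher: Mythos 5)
Your proposal is correct and follows essentially the same route as the paper's proof: split the expectation via independence into an iterated expectation, apply one of the two-block \CPNA{} hypotheses to the inner expectation, observe that the partial averages remain non-negative and monotone (resp.\ antitone) so the other \CPNA{} hypothesis applies to the outer expectation, and refactor back via independence. The only difference is that you condition on $t$ first and apply $\{S_1,S_2\}$-\CPNA{} before $\{T_1,T_2\}$-\CPNA{}, whereas the paper does the mirror image; your explicit reduction to the single two-block inequality (dismissing the trivial one-block coarsening) is a harmless addition.
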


\begin{proof}
	By the definition of \CPNA and independence, $S_1, S_2$ are disjoint,
	$T_1, T_2$ are disjoint, and $S_1 \cup T_1, S_2 \cup T_2$ are disjoint.
	For any monotonically decreasing/increasing functions
	$f: \Mem[S_1 \cup T_1] \to \mathbb{R}^+, g: \Mem[S_2 \cup T_2] \to \mathbb{R}^+$,
	\begin{align*}
		&\mathbb{E}_{m \sim \mu}[f(\p_{S_1 \cup T_1} m) \cdot g(\p_{S_2 \cup T_2} m)] \\
		&=\mathbb{E}_{s \sim \pi_{S_1 \cup S_2} \mu} \mathbb{E}_{t \sim \pi_{T_1 \cup T_2}\mu}[f(\p_{S_1} s, \p_{T_1} t) \cdot g(\p_{S_2} s, \p_{T_2} t)] \tag{By independence of $S_1 \cup S_2$ and $T_1 \cup T_2$}\\
		&\leq \mathbb{E}_{s \sim \pi_{S_1 \cup S_2} \mu} \left( \mathbb{E}_{t_1 \sim \pi_{T_1} \mu}[f(\p_{S_1} s, t_1)] \cdot  \mathbb{E}_{t_2 \sim \pi_{T_2} \mu} [g(\p_{S_2} s, t_2)]\right) \tag{$\diamondsuit$} \\
		&\leq \mathbb{E}_{s_1 \sim \pi_{S_1} \mu} \mathbb{E}_{t_1 \sim \pi_{T_1} \mu}[f(s_1, t_1)] \cdot  \mathbb{E}_{s_2 \sim \pi_{S_2} \mu} \mathbb{E}_{t_2 \sim \pi_{T_2} \mu} [g(s_2, t_2)] \tag{$\heartsuit$} \\
		&\leq \mathbb{E}_{m \sim \mu}[f(\p_{S_1 \cup T_1} m)] \cdot \mathbb{E}_{m \sim \mu}[g(\p_{S_2 \cup T_2} m)] \tag{$\clubsuit$}
	\end{align*}
	where $\diamondsuit$ is because  $\pi_{T_1 \cup T_2} \mu$ is $T_1, T_2$-\CPNA
	and $f(\p_{S_1} s, t_1), g(\p_{S_2} s, t_2)$ are both monotonically
	decreasing/increasing in $T_1, T_2$; $\heartsuit$ is because $\pi_{S_1 \cup
	S_2} \mu$ is $S_1, S_2$-\CPNA and that $ \mathbb{E}_{t_1 \sim \pi_{T_1}
\mu}[f(\p_{S_1} s, t_1)]$, and $\mathbb{E}_{t_2 \sim \pi_{T_2} \mu}
[g(\p_{S_2} s, t_2)]$  are both monotonically decreasing/increasing in $S_1,
S_2$; $\clubsuit$ is by independence of $S_1$ and $T_1$ and the independence of
$S_2$ and $T_2$ in $\mu$.
\end{proof}

\IndepClosure*

\begin{proof}
	\label{proof:IndepClosure}
	Fix $\mathcal{S}$ and $\mathcal{T}$.
	Say $\mathcal{S} = \{S_1, \dots, S_p\}$ and $\mathcal{T} = \{T_1, \dots, T_q\} $.  For any
	$\mathcal{R}$ coarsening $\mathcal{S} \cup \mathcal{T}$, indexing $\mathcal{S}
	\cup \mathcal{T}$ as $\{U_1, \dots, U_{p + q}\}$, indexing $\mathcal{R}$ as
	$\{R_1, \dots, R_n\}$, we have:
	\begin{align*}
		\mathcal{R} = \{ \cup \{ U_j \mid j \in f(i)  \} \mid i \in [n] \} .
	\end{align*}
	Then, given a family of monotonically increasing/decreasing functions
	$g_i: R_i \to \mathbb{R}^{+}$
	\begin{align*}
		\mathbb{E}_{m \sim \mu} \left[\prod_{R_i \in \mathcal{R}} g_i(\p_{R_i} m)\right]
				&=
				\mathbb{E}_{m \sim \mu} \left[\prod_{i \in [n]} g_i(\p_{\cup \{U_j \mid j \in f(i)\}} m ) \right].
	\end{align*}
	For each $i$, $\cup \{U_j \mid j \in f(i)\}$ can be divided into the part in $S$ and the part in $T$. We refer to them as
	$S'_i$ and $T'_i$. (Some of $S'_i$ and $T'_i$ may be empty).
	Thus, for each $i$,
	\begin{align*}
		g_i(\p_{\cup \{U_j \mid j \in f(i)\}} m )
				&= g_i(\p_{S'_i \cup T'_i} m ) .
	\end{align*}

	By~\Cref{lemma:coarsen_project}, $\mathcal{S}' = \{S'_1, \dots, S'_n\}$
	coarsens $\mathcal{S}$, and $\mathcal{T}' = \{T'_1, \dots, T'_n\}$
	coarsens $\mathcal{T}$. So $\mu$ is $\mathcal{S}'$-\CPNA and
	$\mathcal{T}'$-\CPNA.

	We prove by induction on $k \in [n]$ that
	\begin{align*}
		\mathbb{E}_{m \sim \mu} \left[\prod_{i \in [k]} g_i(\p_{S'_i \cup T'_i} m ) \right] &\leq \prod_{i \in [k]} \mathbb{E}_{m \sim \mu} \left[g_i(\p_{S'_i \cup T'_i} m )\right] .
	\end{align*}
	\begin{description}
		\item [Base case] When $k = 1$, trivial.
		\item [Inductive case]
			For $k < n$, assume
				\begin{align*}
		\mathbb{E}_{m \sim \mu} [\prod_{i \in [k]} g_i(\p_{S'_i \cup T'_i} m )] &\leq \prod_{i \in [k]} \mathbb{E}_{m \sim \mu} [g_i(\p_{S'_i \cup T'_i} m )].
			\end{align*}
			Note that $\mu$ is $\mathcal{S}'$-\CPNA implies that
			$\mu$ is $\{\cup_{i \in [k]} (S'_i), S'_{k+1}\}$-\CPNA, and
			$\mu$ is $\mathcal{T}'$-\CPNA implies that $\{\cup_{i \in [k]} (T'_i), T'_{k+1}\}$-NA.
			Thus, by~\Cref{groupNA}, $\mu$ is also
			$\{\{\cup_{i \in [k]} (S'_i) \cup \{\cup_{i \in [k]} (T'_i) , S'_{k+1} \cup T'_{k+1}\} $-NA.
					Also, since all $g_i$ is monotonically increasing (decreasing) and non-negative,
					$m \mapsto \prod_{i \in [k]} g_i(m)$ is also a monotonically increasing (decreasing) function from $ \cup_{i \in [k]} S'_i \cup \cup_{i \in [k]} T'_i$ to $\mathbb{R}^+$.
					Therefore,
			\begin{align*}
					\mathbb{E}_{m \sim \mu} [\prod_{i \in [k + 1]} g_i(\p_{S'_i \cup T'_i} m )]
					&\leq \mathbb{E}_{m \sim \mu} [\prod_{i \in [k]}g_i(\p_{S'_i \cup T'_i} m )] \cdot \mathbb{E}_{m \sim \mu} [g_{k + 1}(\p_{S'_{k + 1} \cup T'_{k + 1}} m )] \\
					&\leq \prod_{i \in [k+1]} \mathbb{E}_{m \sim \mu} [g_i(\p_{S'_i \cup T'_i} m )],
			\end{align*}
	\end{description}
	where the second inequality follows from the inductive hypothesis.

	Thus, the desired inequality holds for any $\mathcal{R}$ coarsening $\mathcal{S} \cup \mathcal{T}$ and any family of monotonically increasing (decreasing) functions on $\mathcal{R}$.
	Thus, $\mu$ is $\mathcal{S} \cup \mathcal{T}$-\CPNA.
\end{proof}

\ProductIsBI*

\begin{proof}
	\label{proof:productBI}
	Let $\mathcal{X}_{i, \idx} = (X_i, \sqsubseteq_i, \oplus_{i, \idx}, E_{i, \idx})$.
	For any $i \in \{1, 2\}$, $\idx \in \idxset$, $\mathcal{X}_{i, \idx}$ is a BI frame.

First, for any $\idx \in \idxset$,
we prove that $(X, \sqsubseteq, \oplus_{\idx}, E_{\idx})$ is a BI frame.

	\begin{description}
		\item[Down-Closed.] Let $(z_1, z_2) \in (x_1, x_2) \oplus_{\idx} (y_1, y_2)$ with
			$(x_1, x_2) \sqsupseteq (x_1', x_2')$ and $(y_1, y_2) \sqsupseteq (y_1', y_2')$.
			Then, from the Down-Closed property of $\mathcal{X}_{1, \idx}$ and $\mathcal{X}_{2, \idx}$ respectively,
			we have that there exists $z_1'$ and $z_2'$ such that
			$z_i \sqsupseteq_i z_i'$ and $z_i' \in x_i' \oplus y_i'$ for $i \in \{1, 2\}$.
			Hence $(z_1, z_2) \sqsupseteq (z_1', z_2')$ and $(z_1', z_2') \in (x_1', x_2') \oplus_{\idx} (y_1', y_2')$.

		\item[Commutativity.] Immediate.

		\item[Associativity.] Let $(w_1, w_2) \in (t_1, t_2) \oplus_{\idx} (z_1, z_2)$ and $(t_1, t_2) \in (x_1, x_2) \oplus_{\idx} (y_1, y_2)$.
			Then for $i \in \{1, 2\}$ there exists $s_i \in y_i \oplus_{i, \idx} z_i$ such that $w_i \in x_i \oplus_{i, \idx} s_i$.
			Thus, $(s_1, s_2) \in (y_1, y_2) \oplus_{\idx} (z_1, z_2)$ and $(w_1, w_2) \in (x_1, x_2) \oplus_{\idx} (s_1, s_2)$.

		\item[Unit Existence.] Immediate.
		\item[Unit Closure.]
			If $(e_1, e_2) \in E_{\idx}$ and $(e_1', e_2') \sqsupseteq (e_1, e_2)$, then $e_i' \in E_{i, \idx}$, so
			$(e_1', e_2') \in E_{\idx}$.
		\item[Unit Coherence.]
		Let $(e_1, e_2) \in E_{\idx}$ and $(x_1, x_2) \in (y_1, y_2) \oplus_{\idx} (e_1, e_2)$. Then
			$x_i \sqsupseteq_{i} y_i$, which implies that $(x_1, x_2) \sqsupseteq (y_1, y_2)$.
			\qedhere
	\end{description}

Second, we show
if $\idx \leq \idx' \in \idxset$, then $(\mu_1, \mu_1') \oplus_{\idx} (\mu_2, \mu_2') \subseteq (\mu_1, \mu_1') \oplus_{\idx'} (\mu_2, \mu_2') $:

$\mathcal{X}_1$ and $\mathcal{X}_2$ are \LOGIC frames, so $\idx \leq \idx'$ implies that  $\mu_1 \oplus_{1, \idx} \mu_2 \subseteq_1 \mu_1 \oplus_{1, \idx'} \mu_2 $ and  $\mu_1' \oplus_{2, \idx'} \mu_2' \subseteq_2 \mu_1' \oplus_{2, \idx'} \mu_2'$.
Therefore, $(\mu_1, \mu_1') \oplus_{\idx} (\mu_2, \mu_2') \subseteq (\mu_1, \mu_1') \oplus_{\idx'} (\mu_2, \mu_2') $.
\end{proof}

\section{Program Logic}
\label{sec:app:seplog}

\subsection{The semantics of \Lang}

\begin{definition}[Semantics of expressions]
	\label{def:semantics_expressions}
We assume all expressions are well-typed.
We interpret them as $\Mem[\DetVar] \times \Mem[T] \to \Val$, which can be naturally lifted to $\Mem[\DetVar] \times \DMem{T} \to \Dist(\Val)$. For $\sigma \in \Mem[\DetVar]$, $m \in \Mem[T]$,
\begin{align*}
	\Sem{x_d} (\sigma, m) &= \sigma(x_d) \\
	\Sem{x_r} (\sigma, m) &= m(x_r)  \\
\Sem{[e_1, \dots, e_n]} (\sigma, m) &= [\Sem{e_1}(\sigma, m) , \dots, \Sem{e_n}(\sigma, m) ] \\
\Sem{e + f} (\sigma, m)  &=
\Sem{e}(\sigma, m)  + \Sem{f}(\sigma, m)  &\text{when $\Sem{e}(\sigma, m) $ and $\Sem{f_r}(\sigma, m) $ are numbers} \\
	\Sem{e \land f} (\sigma, m) &=
	\Sem{e}(\sigma, m)  \land \Sem{f}(\sigma, m)  &\text{when $\Sem{e}(\sigma, m) $ and $\Sem{f_r}(\sigma, m) $ are booleans}
\end{align*}
\end{definition}

	\label{def:convexcomb}
\begin{definition}[Convex combination of distributions]
Let the binary operator $\comb_{\rho}$ takes a convex combination of two distributions, i.e.,
for any $\mu_1, \mu_2 \in \DMem{S}$, for any $x \in \Mem[S]$,
\begin{align*}
	\mu_1 \comb_{\rho} \mu_2 (x) \triangleq \rho \cdot \mu_1(x) + (1- \rho) \cdot \mu_2(x).
\end{align*}
\end{definition}

\begin{definition}[Semantics of \Lang]
	\label{def:semantics_lang}
Let $\text{Unif}_S$ denotes the uniform distribution on finite set $S$, i.e., $\mu: s \mapsto \frac{1}{|S|}$ for any $s \in S$.

We also assume that for any $\sigma \in \Mem[\DetVar]$, any $\mu_1 \in \DMem{T_1}, \mu_2 \in \DMem{T_2}$ where $T_1, T_2 \subseteq \RanVar$, the first component of
$\Sem{c}(\sigma, \mu_1)$ and $\Sem{c}(\sigma, \mu_2)$ are the same.
	For any $(\sigma, \mu) \in \Config$, let
\begin{align*}
	\Sem{\Skip}(\sigma, \mu)	&\triangleq (\sigma, \mu) \\
	\Sem{\Assn{x_d}{e_d}}(\sigma, \mu)	&\triangleq (\sigma[x_d \mapsto \Sem{e_d} (\sigma, \mu)], \mu) \\
	\Sem{\Assn{x_r}{e_r}}(\sigma, \mu)	&\triangleq (\sigma, \dbind(\mu, m \mapsto \dunit(m[x_r \mapsto \Sem{e_r}(\sigma, m)]))) \\
	\Sem{\Rand{x_r}{\Unif{S}}}(\sigma, \mu)	&\triangleq (\sigma, \dbind(\mu, m \mapsto \dbind(\text{Unif}_S, v \mapsto \dunit(m[x_r \mapsto v]))))  \\
	\Sem{\Seq{c}{c'}}(\sigma, \mu)	&\triangleq \Sem{c'}(\Sem{c} (\sigma, \mu)) \\
	\Sem{\RCond{b}{c}{c'}(\sigma, \mu)} &\triangleq
	\begin{cases}
		\Sem{c}(\sigma, \mu) & \text{ if } \Sem{b}(\sigma, \mu) = \delta(\ktt)\\
		\Sem{c'}(\sigma, \mu) & \text{ if } \Sem{b}(\sigma, \mu) = \delta(\kff)\\
		(\sigma, \mu_1 \comb_{\rho} \mu_2) & \text{otherwise, where } (\sigma, \mu_1) =  \Sem{c}(\sigma, \mu \mid \Sem{b} \sigma = \ktt), \\
		\qquad																																			& \rho = \mu(\Sem{b} \sigma = \ktt),
																																				 \text{ and } (\sigma, \mu_2) =  \Sem{c'}(\sigma, \mu \mid \Sem{b} \sigma = \kff).
	\end{cases}\\
\Sem{\RWhile{b}{c}} (\sigma, \mu) &\triangleq
\lim_{n \to \infty} \Sem{(\RCondt{b}{c})^n; \RCondt{b}{\mathbf{abort}}} (\sigma, \mu)
	\end{align*}
\end{definition}
In the last case of the conditional, we write $\Sem{b} \sigma$ for the partial
evaluation of $b$ on the deterministic memory and think of it as another
expression.  So $\mu \mid \Sem{b} \sigma = \ktt$ and  $\mu \mid \Sem{b} \sigma
= \kff$ are both conditional distribution.  We assume that there is no update
to the deterministic memory when branching on randomized expressions, so the
deterministic memory of both $\Sem{c} (\sigma, \mu \mid \Sem{b} \sigma = \ktt)$
and $\Sem{c} (\sigma, \mu \mid \Sem{b} \sigma = \kff)$ remains to be $\sigma$.

In the semantics for the while loop, we can see $\mathbf{abort}$ as a command
that programmers does not have access to: for any $\sigma, \mu$,
$\Sem{\mathbf{abort}} (\sigma, \mu)$ returns the zero sub-distribution.  The
limit, taken with the point-wise order, exists because the sub-distribution's
mass monotonically increases as $n$ increases and is upper bounded by 1.  In
practice, because we assumed that all loops terminates in finite steps, the
limit is always a full distribution, so all commands in \Lang can still be
interpreted as transformers from configurations to configurations.

\subsection{The atomic propositions and axioms}
\begin{definition}
	\label{def:bignotation}
	For any operation $\odot \in \{ \land, \lor, \negand, \indand\}$, we use the
corresponding big-operation $\bigodot \in \left \{ \bigwedge, \bigvee, \bigneg, \bigind \right \}$.
\begin{itemize}
	\item For any constant or logical variable $N \geq 1$, let $\bigodot_{i
= 0}^{N} P_i = P_0$ abbreviate $((P_{0} \odot P_{1}) \odot \cdots ) \odot
P_{N-1}$. Formally, let $\bigodot_{i = 0}^{N} P_i = P_0$ if $N = 1$, and
let $\bigodot_{i = 0}^{N} P_i \triangleq \left(\bigodot_{i =
0}^{N - 1} P_i \right) \odot P_{N-1}$ for $N > 1$.

\item For a finite multi- set of formula $\{P_i\}_{i \in S}$, let $\bigodot_{s \in S}
P_s$ abbreviate $((P_{s_0} \odot P_{s_1}) \odot \cdots ) \odot P_{s_{k}}$,
where $s_0, \dots, s_{k}$ is an arbitrary ordering of $S$.  The satisfaction
is not ambiguous since $\odot$ is associative and commutative.

\item For any program variable $v \in \DetVar \cup \RanVar$, for any state
  $(\sigma, \mu) \models \apEq{v}{N}$, we want $\bigodot_{i = 0}^{v} P_i$ to be equivalent to
$\bigodot_{i = 0}^{N} P_i$.  Formally,  $\bigodot_{i = 0}^{v} P_i$
abbreviates $\bigvee_{N \in \Val} ( \apEq{v}{N} \land \bigodot_{i = 0}^{N} P_i )$.
\end{itemize}
\end{definition}

\starcapturesNA*

\begin{proof}
	Forward direction: we fix $\sigma \in \Mem[\DetVar]$.
 Let $Y_j = \{y_i \mid 0 \leq i \leq j\}$
	we prove by induction on $j$ that
		$(\sigma, \pi_{Y_{j} } \mu)\models  \bigneg_{i=0}^{j + 1} \apIn{y_i}  $.

		If $j = 0$: then $y_1 \in \dom(\mu)$, and $(\sigma, \pi_{\{y_1\}} \mu) \models \apIn{y_1}$.

		If $j \geq 1$: Assuming $(\sigma, \pi_{Y_{j - 1} } \mu)\models  \bigneg_{i=0}^{j 1} \apIn{y_i}  $.
	Since $Y$ satisfies NA in $\mu$, by~\Cref{theorem:PNAeqNA}, $\mu$ is $\mathcal{T}$-\CPNA for any partition $\mathcal{T}$ of $Y$.
	In particular, for any partition $\mathcal{T}_1$ on $Y_{j-1}$ and any partition $\mathcal{T}_2$ on $\{y_j\}$,
		$\mu$ must be $\mathcal{T}_1 \cup \mathcal{T}_2$-\CPNA.
		Thus,
		$\pi_{Y_j} \mu \in \pi_{Y_{j - 1}} \mu \oplus \pi_{\{y_j\}} \mu$.
		Since $(\sigma, \pi_{Y_{j - 1} } \mu)\models  \bigneg_{i=0}^{j } \apIn{y_i}  $
		and $(\sigma, \pi_{\{y_j\} } \mu)\models  \apIn{y_j}  $,
		that implies $(\sigma, \pi_{Y_{j} } \mu)\models  \bigneg_{i=0}^{j + 1} \apIn{y_i}  $.

		Thus, $(\sigma, \pi_{Y_{j} } \mu)\models  \bigneg_{i=0}^{j + 1} \apIn{y_i}  $.
		Take $j = K$, we have $(\sigma, \pi_{Y} \mu) \models  \bigneg_{i=0}^{K} \apIn{y_i}$.
		By persistence, $(\sigma, \mu) \models  \bigneg_{i=0}^{K} \apIn{y_i}$.

	Backward direction: for any $A, B$ being  disjoint subsets of $Y$,
	by commutativity and associativity of $\negand$, we can
	reorder formula and get
	$(\sigma, \mu) \models \left( \bigneg_{y_i \in A} \apIn{y_i} \negand \bigneg_{y_i \in B} \apIn{y_i} \right) \negand \bigneg_{y_i \in (T \setminus (A \cup B))} \apIn{y_i}  $.
	By satisfaction rules and the definition of $\oplus$, there exists $\mu' \sqsubseteq \mu$ such that
	$(\sigma,  \mu') \models  \bigneg_{y_i \in A} \apIn{y_i} \negand \bigneg_{y_i \in B} \apIn{y_i}$.
	By satisfaction rules again,
	there exists $\mu_1, \mu_2, \mu''$ such that
	$ \mu' \sqsupseteq  \mu'' \in  \mu_1 \oplus \mu_2$,
	and $(\sigma, \mu_1) \models  \bigneg_{y_i \in A} \apIn{y_i} $,
	and $(\sigma, \mu_2) \models  \bigneg_{y_i \in B} \apIn{y_i} $.
	Note that $\mu_1$ is trivially $\{ A \}$-\CPNA,
	and $\mu_2$ is trivially $\{ B \}$-\CPNA.
	Thus, $\mu''$ satisfies $\{A, B \}$-\CPNA.

	Therefore, $\mu$ satisfies $(A, B)$-NA for any $A, B$ being disjoint subsets of $Y$,
	i.e., $\mu$ satisfies NA on $Y$.
\end{proof}

\PNAIntro*

\begin{proof}
	For any state $(\sigma, \mu)$ satisfying $\apOnehot{[x_1, \dots, x_{N-1}]}{N}$,
	by~\Cref{thm:buildingblock},
	$\{x_1, \dots, x_N\}$ satisfies	NA in $\mu$,
	and by~\Cref{theorem:starcapturesNA},
	$(\sigma, \mu) \models \bigneg_{\gamma = 1}^N \apIn{x_{\gamma}} $.
	Similarly,
	by~\Cref{thm:buildingblock},
	$\{x_1, \dots, x_N\}$ satisfies	NA in $\mu$,
	and by~\Cref{theorem:starcapturesNA},
	$(\sigma, \mu) \models \bigneg_{\gamma = 1}^N \apIn{x_{\gamma}} $.

\end{proof}

\begin{lemma}
	\label{lemma:monotoneclosure2CPNA}
	Given a distribution $\mu$ with domain $S$.
	Let $\mathcal{S} = \{S_1, \dots, S_N \}$ be a partition of $S$,
	$\{f_i: \Mem[S_i] \to \Mem[T_i]\}_{\mathcal{S}_i \in \mathcal{S}}$ be a family of non-decreasing functions (or a family of non-increasing functions),
	and $\mathcal{T} = \{ T_1, \dots, T_N\}$ be a partition of another set $T$.
	Let
	\[\mu' = 	\dbind\Big(\mu, m \mapsto  \bigoplus_{S_i \in \mathcal{S}} \dunit(f_i(\p_{S_i} m))  \Big). \]
	If $\mu$ satisfies $\mathcal{S}$-\CPNA, then $\mu'$ satisfies
	$\mathcal{T}$-\CPNA.
\end{lemma}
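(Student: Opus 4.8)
The plan is to recognize that $\mu'$ is nothing more than the pushforward of $\mu$ along the deterministic map $F \colon \Mem[S] \to \Mem[T]$ given by $F(m) = \bowtie_{S_i \in \mathcal{S}} f_i(\p_{S_i} m)$. Indeed, $\dunit$ applied to a point is a Dirac distribution, and combining Diracs on disjoint domains yields the Dirac on their union, so the $\dbind$ in the definition of $\mu'$ just applies $F$ pointwise. Consequently, for every function $h$ on $\Mem[T]$ we have the change-of-variables identity $\mathbb{E}_{m' \sim \mu'}[h(m')] = \mathbb{E}_{m \sim \mu}[h(F(m))]$, which I will use to transport every expectation appearing in the $\mathcal{T}$-\CPNA inequality for $\mu'$ back to an expectation over $\mu$.

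To verify $\mathcal{T}$-\CPNA of $\mu'$, I would fix an arbitrary coarsening $\mathcal{U}$ of $\mathcal{T}$ together with a family of non-negative all-monotone (or all-antitone) functions $\{g_U \colon \Mem[U] \to \mathbb{R}^+\}_{U \in \mathcal{U}}$. Since $\mathcal{U}$ coarsens $\mathcal{T} = \{T_1, \dots, T_N\}$, each block is a disjoint union $U = \bigcup_{i \in I_U} T_i$ for some $I_U \subseteq \{1, \dots, N\}$, and these index sets partition $\{1, \dots, N\}$. Setting $V_U = \bigcup_{i \in I_U} S_i$ then produces a partition $\mathcal{V} = \{V_U\}_{U \in \mathcal{U}}$ of $S$ that coarsens $\mathcal{S}$. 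For each $U$ I define the composite $h_U \colon \Mem[V_U] \to \mathbb{R}^+$ by $h_U(x) = g_U\big(\bowtie_{i \in I_U} f_i(\p_{S_i} x)\big)$, which by construction satisfies $h_U(\p_{V_U} m) = g_U(\p_U F(m))$ because $S_i \subseteq V_U$ for each $i \in I_U$.

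The crux is that each $h_U$ is non-negative and still monotone in a uniform direction: as every $f_i$ is coordinatewise non-decreasing (resp.\ non-increasing) for the pointwise orders, the combined map $\bowtie_{i \in I_U} f_i$ is monotone from $\Mem[V_U]$ to $\Mem[U]$, and composing with a monotone or antitone $g_U$ yields functions that are all monotone or all antitone across $U$; I would check the four sign combinations to confirm $\{h_U\}_U$ is a legitimate family for the $\mathcal{S}$-\CPNA test. Applying $\mathcal{S}$-\CPNA of $\mu$ to the coarsening $\mathcal{V}$ and this family gives
\[
  \mathbb{E}_{m \sim \mu}\Big[\prod_{U \in \mathcal{U}} h_U(\p_{V_U} m)\Big]
  \leq \prod_{U \in \mathcal{U}} \mathbb{E}_{m \sim \mu}\big[h_U(\p_{V_U} m)\big].
\]
Rewriting each factor through $h_U(\p_{V_U} m) = g_U(\p_U F(m))$ and the pushforward identity $\mathbb{E}_{m \sim \mu}[\,\cdot \circ F\,] = \mathbb{E}_{m' \sim \mu'}[\,\cdot\,]$ turns this inequality into exactly the $\mathcal{T}$-\CPNA inequality for $\mu'$ against the chosen $\mathcal{U}$ and $\{g_U\}$. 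Since these were arbitrary, $\mu'$ is $\mathcal{T}$-\CPNA.

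I expect the main obstacle to be the bookkeeping around monotonicity of the composites $h_U$: I must be careful that ``$f_i$ non-decreasing'' means monotone for the pointwise product order on $\Mem[S_i]$ and $\Mem[T_i]$, so that the joined map is monotone and the non-decreasing/non-increasing choice for the $f_i$ correctly interacts with the monotone/antitone choice for the $g_U$ to guarantee the $h_U$ are all of a single type. The remaining facts---that $\mathcal{V}$ coarsens $\mathcal{S}$ and that $\mu'$ is a genuine pushforward of $\mu$ under $F$---are routine.
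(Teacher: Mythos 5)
Your proposal is correct and follows essentially the same route as the paper's proof: both reduce an arbitrary coarsening of $\mathcal{T}$ to the corresponding coarsening of $\mathcal{S}$, compose the test functions with the joined monotone maps, check that the composites form a uniformly monotone (or antitone) non-negative family, apply $\mathcal{S}$-\CPNA of $\mu$, and transport the inequality through the pushforward identity. The only difference is presentational — the paper unfolds the pushforward identity as an explicit sum over memories rather than invoking change of variables abstractly.
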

\begin{proof}
	It suffices to show that for any $\mathcal{R} = \{ R_1, \dots, R_m\}$ that coarsens $\mathcal{T}$,
	for any family of non-negative non-increasing (or non-decreasing) functions $\{g_i: \Mem[R_i] \to \mathbb{R}^+\}_{R_i \in \mathcal{R}}$,
\[
	\mathbb{E}_{m' \sim \mu'}\left[ \prod_{i} g_i(\p_{R_i} m')\right] \leq \prod_i \mathbb{E}_{m' \sim \mu' } \left[  g_i(\p_{R_i} m') \right].
\]

Our first step is to show that if we can obtain a distribution on $\mathcal{R}$ by first applying monotone map and then coarsening, then we can also obtaining that by first coarsening and then applying monotone map:
For any $\mathcal{R}$ that coarsens $\mathcal{T} = \{T_1, \dots, T_n \}$, there exists some coarsening function $g$ such that
\begin{align*}
	\mathcal{R} = \{ \cup\{T_j \mid j \in g(i)\} \mid i \in [n] \}.
\end{align*}
Let $R_i =\cup \{T_j \mid j \in g(i)\} $.
Since $g$ is a coarsening function,
for each $i \in [n]$,
there exists exactly one $k_i$ (i.e., the index for the component in the coarsened function) such that $g(k_i) \ni i$.
Then,
$T_i \subseteq \cup \{T_j \mid j \in g(k_i)\} = R_{k_i}$.

Let $R'_i =\cup \{S_j \mid j \in g(i)\} $,
and $\mathcal{R}' = \{ R'_i \mid i \in [n]\}$.
Then $\mathcal{R}'$ coarsens $\mathcal{S}$.
	Define $h_i: \Mem[R'_i] \to  \Mem[R_i]$ by having
	$h_i(m) = \bowtie_{j \in g(i)} f_j(\p_{S_j} m)$.
Since each $f_i$ is monotone, then each $h_i$ also monotone in the point-wise order. Then,
	\begin{align*}
		\mu' =	\dbind\Big(\mu,   m \mapsto \bigoplus_{i \in [n]} \dunit(f_i(\p_{S_i} m)) \Big)
		&= 	\dbind\Big(\mu, m \mapsto \bigoplus_{R'_i \in \mathcal{R}'} \dunit(h_i(\p_{R'_i} m))  \Big),
	\end{align*}
	so $\mu'$ is equivalent to applying $h_i$ on each component of $\mathcal{R}'$.

	Also, since $\mu$ is  $\mathcal{S}$-\CPNA and $\mathcal{R}'$ coarsens $\mathcal{S}$,
for any family of non-negative non-increasing (or non-decreasing) functions $\{g_i: \Mem[R'_i] \to \mathbb{R}^+\}_{R'_i \in \mathcal{R}'}$,
\begin{align} \label{eq:oldPNAreq}
\mathbb{E}_{m \sim \mu} \left[ \prod_{i} g_i(\p_{R'_i} m)\right] \leq \prod_i \mathbb{E}_{m \sim \mu}   g_i(\p_{R'_i} m) .
\end{align}
Our second step is to show that this inequality is preserved under monotone maps.
\begin{itemize}
	\item If every $h_i$ is non-increasing and $\{g_i: \Mem[R_i] \to \mathbb{R}^+\}_{R_i \in \mathcal{R}}$ are non-decreasing,
 is non-increasing,
	\begin{align*}
				& \mathbb{E}_{m' \sim \mu'}[ \prod_{R_i \in \mathcal{R}} g_i(\p_{R_i} m')]\\
				&= \sum_{m' \in \Mem[T]} \mu'(m') \cdot \prod_{R_i \in \mathcal{R}}  g_i(\p_{R_i} m') \\
				&=\sum_{m' \in \Mem[T]} \sum_{m \in \Mem[S]} \mu(m) \cdot (\prod_{R'_i \in \mathcal{R}'} \dunit(h_i(\p_{R'_i} m))(\p_{R_i} m')) \cdot \prod_{R_i \in \mathcal{R}}  g_i(\p_{R_i} m') \\
				&=\sum_{m \in \Mem[S]} \mu(m) \cdot \prod_{R'_i \in \mathcal{R}'}  g_i(h_i(\p_{R'_i} m)) \\
				&= \mathbb{E}_{m \sim \mu} [ \prod_{R'_i \in \mathcal{R}'} g_i(h_i(\p_{R'_i} m))] .
	\end{align*}
	Note that $g_i \circ f$ is non-negative non-decreasing,
	then since $\mu$ satisfies~\Cref{eq:oldPNAreq}, we have
	\begin{align*}
				\mathbb{E}_{m \sim \mu} [\prod_{R'_i \in \mathcal{R}'} g_i(h_i(\p_{R'_i} m))]
				&\leq
				\prod_{R'_i \in \mathcal{R}'} \mathbb{E}_{m \sim \mu} [g_i(h_i(\p_{R'_i} m))]\\
				&= \prod_{R'_i \in \mathcal{R}'} \sum_{m \in \Mem[S] } \mu(m) \cdot g_i(h_i(\p_{R'_i} m)) \\
				&= \prod_{R_i \in \mathcal{R}} \mathbb{E}_{m' \sim \mu' }  g_i(\p_{R_i} m') .
	\end{align*}
	Combined, we have $ \mathbb{E}_{m' \sim \mu'}[ \prod_{i} g_i(\p_{R_i} m')] \leq \prod_i \mathbb{E}_{m' \sim \mu' }   g_i(\p_{R_i} m')$.

\item When $f_i$ is non-increasing and $\{g_i\}$ are non-negative non-decreasing,
	or when $f_i$ is non-decreasing and $\{g_i\}$ are non-negative non-decreasing/non-increasing,
	the proof is analogous.
\end{itemize}
\end{proof}

\MonotoneMapAxiom*
\begin{proof}
 Abbreviate $ \{ \bigcup_{\alpha = 0}^{K_{\gamma} + 1} \{x_{\gamma, \alpha}\} \mid 1 \leq \gamma < M\}$ as $X_M$.

	For any state $(\sigma, \mu)$ satisfying $\bigneg_{\gamma = 0}^N (\bigwedge_{\alpha = 0}^{K_{\gamma} + 1} \apIn{x_{\gamma, \alpha}}) $,
		we show by induction on $M$ that: for $1 \leq M \leq N$,
		\begin{itemize}
			\item $(\sigma, \mu) \models \bigneg_{\gamma = 0}^M (\bigwedge_{\alpha = 0}^{K_{\gamma}} \apIn{x_{\gamma, \alpha}}) $;
			\item And $\mu$ satisfies $X_M$-\CPNA implies
		 $\mu$ satisfies $X_N$-\CPNA.
	\end{itemize}

		Assuming that it is true for $M$, we show that for $M-1$.
		By assumption, $(\sigma, \mu) \models \bigneg_{\gamma = 0}^M (\bigwedge_{\alpha = 0}^{K_{\gamma} + 1} \apIn{x_{\gamma, \alpha}}) $,
		so there exists $\mu', \mu_1, \mu_2$ such that $\mu \sqsupseteq \mu' \in \mu_1 \oplus \mu_2$,
		$(\sigma, \mu_1) \models \bigneg_{\gamma = 0}^{M-1} (\bigwedge_{\alpha = 0}^{K_{\gamma} + 1} \apIn{x_{\gamma, \alpha}})$ and $(\sigma, \mu_2) \models \bigwedge_{\alpha = 0}^{K_M + 1} \apIn{x_{M, \alpha}}$. Thus, $\mu'$ is $X_M$-\CPNA if $\mu_1$ is $X_{M-1}$-\CPNA and $\mu_2$ is $\cup_{\alpha = 0}^{K_M + 1} \{ x_{M, \alpha} \}$-\CPNA. Thus,
		\begin{itemize}
			\item By the definition of $\oplus$, we have $\mu \sqsupseteq \mu_1$,
				and by persistence $(\sigma, \mu) \models \bigneg_{\gamma = 0}^{M-1} (\bigwedge_{\alpha = 0}^{K_{\gamma} + 1} \apIn{x_{\gamma, \alpha}}) $;
			\item Since 	$(\sigma, \mu_1) \models \bigneg_{\gamma = 0}^{M-1} (\bigwedge_{\alpha = 0}^{K_{\gamma} + 1} \apIn{x_{\gamma, \alpha}})$, we have  $\cup X_{M-1}$ is inside $\dom(\sigma) \cup \dom(\mu_1)$. Thus,
				if $\mu$ satisfies $X_{M-1}$-\CPNA, then $\mu_1 \sqsubseteq \mu$  satisfies $X_{M-1}$-\CPNA.
				Trivially, $\mu_2$ is $\cup_{\alpha = 0}^{K_M} \{ x_{M, \alpha} \}$-\CPNA.
				Thus, $\mu'$ is $X_M$-\CPNA, and $\mu$ is $X_M$-\CPNA.
				By inductive assumption, $\mu$ is $X_N$-\CPNA.
		\end{itemize}
	 Then, letting $M =1$ would give us: $\mu$ satisfies $X_1$-\CPNA implies
		 $\mu$ satisfies $X_N$-\CPNA.
			$X_1$ is a partition of one component, so $\mu$ satisfying $X_1$-\CPNA is trivial.
			Therefore, $\mu$ satisfies $X_N$-\CPNA.

		Since $ \bigwedge_{\gamma = 0}^N y_{\gamma} = f_{\gamma}(x_{\gamma, 1}, \dots, x_{\gamma, K_{\gamma}})$ and $f_{\gamma}$ are all antitone or monotone,
		by~\Cref{lemma:monotoneclosure2CPNA},
	$\mu$ satisfies $\{ \{ y_{\gamma} \} \mid 1 \leq \gamma < N\}$-\CPNA.
	Then we can show $(\sigma, \mu) \models \bigneg_{\gamma=1}^N \apIn{y_{\gamma}}$
	through another simple induction or by applying the existing theorems.
 If we do that by applying the theorems,
	~\Cref{theorem:PNAeqNA} implies that
	if $\mu$ satisfies $\{ \{ y_{\gamma} \} \mid 1 \leq \gamma \leq N\}$-\CPNA,
	then $\{ \{ y_{\gamma} \} \mid 1 \leq \gamma \leq N\}$ satisfies NA in $\mu$.
	Then, by~\Cref{theorem:starcapturesNA}, $(\sigma, \mu) \models \bigneg_{\gamma=1}^N \apIn{y_{\gamma}}$.
\end{proof}

\subsection{The restriction property}
We prove the restriction on deterministic memories and on randomized memories
by separate induction, and then combine them.

\begin{lemma}[Restriction on deterministic memories]
	\label{lemma:det_restriction}
	Let $(\sigma, \mu)$ be any configuration in $\Config$,
		and let $\phi$ be a \AssertLOGIC formula interpreted on $\CombModel$,
		Then, for any $m \in \Mem[\DetVar \setminus \FV(\phi)]$,
		\[(\sigma, \mu) \models \phi \iff (\p_{\FV(\phi)}\sigma \bowtie m, \mu) \models \phi. \]
	\end{lemma}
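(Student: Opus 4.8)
The plan is to prove the statement by structural induction on the \AssertLOGIC formula $\phi$, after first rephrasing it in a form that composes cleanly under the induction. Rather than carrying around a fixed fill-in memory $m$, I would prove the equivalent \emph{agreement} formulation: for any two deterministic memories $\sigma_1, \sigma_2 \in \Mem[\DetVar]$ with $\p_{\FV(\phi)}\sigma_1 = \p_{\FV(\phi)}\sigma_2$, and any $\mu$, we have $(\sigma_1, \mu) \models \phi \iff (\sigma_2, \mu) \models \phi$. The lemma is then the special case $\sigma_1 = \sigma$, $\sigma_2 = \p_{\FV(\phi)}\sigma \bowtie m$, since these agree on $\FV(\phi)$ by construction. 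The advantage of this phrasing is that if $\sigma_1, \sigma_2$ agree on $\FV(\phi)$ they also agree on $\FV(P)$ for every subformula $P$, so the induction hypothesis applies to subformulas verbatim, without bookkeeping about how the fill-in memory restricts along the syntax.

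The base cases are immediate from the satisfaction clauses of the atomic propositions in~\Cref{def:ap}: each is defined through $\Sem{e}(\sigma, \cdot)$ for expressions $e$ with $\FV(e) \subseteq \FV(\phi)$, and $\Sem{e}$ reads $\sigma$ only at the deterministic variables of $\FV(e)$; since $\sigma_1$ and $\sigma_2$ agree there, the interpretations and the associated domain side-conditions coincide. The cases $\top$ and $\bot$ are trivial. For $\land$ and $\lor$ I would apply the induction hypothesis to each immediate subformula, using $\FV(P), \FV(Q) \subseteq \FV(\phi)$. For $\rightarrow$ the only subtlety is the Kripke quantifier: by the product pre-order of $\CombModel$ and the fact that $\DetModel$'s pre-order $\sqsubseteq_d$ is equality, every extension of $(\sigma_i, \mu)$ has the shape $(\sigma_i, \mu')$ with $\mu \sqsubseteq \mu'$, so the deterministic component is frozen. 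Hence both sides quantify over the same $\mu'$, and the induction hypothesis for $P$ and $Q$ yields the equivalence at each $\mu'$.

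The heart of the argument is the separating connectives $\indand$, $\negand$, and $\indimp$, and the key structural fact I would isolate first is that in $\CombModel = \DetModel \times \ProbModel$ the deterministic component is rigid under both $\oplus_\idx$ and $\sqsubseteq$. Concretely, since $\oplus_d$ is the partial-identity operation ($m_1 \oplus_d m_2 = \{m_1\}$ if $m_1 = m_2$ and $\emptyset$ otherwise) and the product operation combines componentwise, any decomposition $x' \in y \oplus_\idx z$ with $x' \sqsubseteq (\sigma_i, \mu)$ forces the deterministic parts of $x'$, $y$, and $z$ all to equal $\sigma_i$, while only the probabilistic witnesses vary. Thus $(\sigma_i, \mu) \models P \sepand_\idx Q$ unfolds to the existence of probabilistic memories $\mu'' \sqsubseteq \mu$ and $\mu_y, \mu_z$ with $\mu''$ in the probabilistic combination of $\mu_y$ and $\mu_z$, $(\sigma_i, \mu_y) \models P$, and $(\sigma_i, \mu_z) \models Q$; since the probabilistic witnesses do not depend on $i$ and $\sigma_1, \sigma_2$ agree on $\FV(P), \FV(Q) \subseteq \FV(\phi)$, the induction hypothesis applied to $P$ at $(\cdot, \mu_y)$ and to $Q$ at $(\cdot, \mu_z)$ makes the two existentials equivalent. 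The $\indimp$ case is analogous: $z \in (\sigma_i, \mu) \oplus_\idx y$ is nonempty only when $y$ and $z$ carry deterministic memory $\sigma_i$, so the universally quantified implication again ranges over identical probabilistic data for $\sigma_1$ and $\sigma_2$. I expect establishing this rigidity fact for the product frame --- carefully unfolding~\Cref{def:mbi-frame}, the product-frame definition, and the deterministic frame $\DetModel$ --- to be the main obstacle; once it is in hand, every separating-connective case reduces to the induction hypothesis on the untouched probabilistic witnesses. Notably, this argument uses no special property of $\indimp$ over $\negimp$: deterministic restriction holds for either, and it is only the \emph{randomized} restriction (the counterexample following~\Cref{theorem:MBIRestriction}) that forces \AssertLOGIC to exclude $\negimp$.
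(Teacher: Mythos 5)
Your proposal is correct and follows essentially the same route as the paper's proof: structural induction on $\phi$, with the key observation that the deterministic component of $\CombModel$ is rigid under both the pre-order and the combining operations (since $\sqsubseteq_d$ is equality and $\oplus_d$ is the partial identity), so every Kripke extension and every separation witness carries the same deterministic memory and the induction hypothesis applies to the untouched probabilistic witnesses. Your agreement reformulation (two memories agreeing on $\FV(\phi)$) is a slightly cleaner packaging of the induction hypothesis than the paper's fill-in-memory phrasing --- it avoids the bookkeeping of re-splitting $m$ for subformulas with smaller free-variable sets --- but the substance of every case, including your closing remark that nothing here distinguishes $\indimp$ from $\negimp$, matches the paper.
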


	\begin{proof}
		Note that the two directions are symmetric, so we only prove the forward direction.

		We prove it by induction on the syntax of formula.
		Most cases are straightforward, so we only show three cases.
		\begin{description}
			\item [$\phi = P \to Q$: ]  Assuming $(\sigma, \mu) \models P \to Q$,
				that says for any $(\sigma', \mu') \sqsupseteq (\sigma, \mu)$,
				if $(\sigma', \mu') \models P$, then $(\sigma', \mu') \models Q$.

				For any $(\widehat{\sigma}, \widehat{\mu}) \sqsupseteq (\p_{\FV(\phi)}\sigma \bowtie m, \mu)$,
				it must $\widehat{\sigma} = \p_{\FV(\phi)}\sigma \bowtie m$ and $\widehat{\mu} \sqsupseteq \mu$.
				If $ (\widehat{\sigma}, \widehat{\mu} ) \models P$,
				then by $\widehat{\sigma} = \p_{\FV(\phi)}\sigma \bowtie m$
				and the inductive hypothesis, $ (\sigma, \widehat{\mu}) \models P $;
				since $(\sigma, \mu) \models P \to Q$ and $\widehat{\mu} \sqsupseteq \mu$,
				$ (\sigma, \widehat{\mu}) \models P $ implies $ (\sigma, \widehat{\mu}) \models Q$;
				and by inductive hypothesis again, $ (\p_{\FV(\phi)}\sigma \bowtie m, \widehat{mu}) \models Q $,
				Thus, $ (\p_{\FV(\phi)}\sigma \bowtie m, \mu) \models P \to Q$.

			\item [$\phi = P \negand Q$: ]
				Assuming $(\sigma, \mu) \models P \negand Q$,
				then there exists
				$(\widehat{\sigma}, \widehat{\mu}), (\sigma_1, \mu_1), (\sigma_2, \sigma_2)$ such that
				$(\sigma, \mu) \sqsupseteq (\widehat{\sigma}, \widehat{\mu}) \in (\sigma_1, \mu_1) \oplus (\sigma_2, \mu_2)$,
				$(\sigma_1, \mu_1) \models P$,
				and $(\sigma_2, \mu_2) \models Q$.
				By the definition of the pre-order and $\oplus$,
				it must $\sigma = \widehat{\sigma} = \sigma_1 = \sigma_2$,
				$\mu \sqsupseteq \widehat{\mu} \in \mu_1 \oplus \mu_2$.

				By inductive hypothesis,
				$(\p_{\FV(\phi)}\sigma_1 \bowtie m, \mu_1) = (\p_{\FV(\phi)}\sigma \bowtie m, \mu_1) \models P$,
				and $(\p_{\FV(\phi)}\sigma_2 \bowtie m, \mu_2) = (\p_{\FV(\phi)}\sigma \bowtie m, \mu_2) \models Q$.
				Also,
				\[(\p_{\FV(\phi)}\sigma \bowtie m, \mu) \sqsupseteq (\p_{\FV(\phi)}\sigma \bowtie m, \widehat{\mu}) \in  (\p_{\FV(\phi)}\sigma \bowtie m, \mu_1)  \oplus (\p_{\FV(\phi)}\sigma \bowtie m, \mu_2)  . \]
				So $(\p_{\FV(\phi)}\sigma \bowtie m, \mu) \models P \negand Q$.

			\item [$\phi = P \indand Q$]
				Analogous as the case for $P \negand Q$.

			\item [$\phi = P \indimp Q$: ]
				Assuming $(\sigma, \mu) \models P \indimp Q$,
				that says for any
				$(\sigma'', \mu'') \in (\sigma, \mu) \otimes (\sigma', \mu')$,
				if $(\sigma', \mu') \models P$,
				then $(\sigma'', \mu'') \models Q$.

				For any $(\sigma'', \mu'') \in (\p_{\FV(\phi)}\sigma \bowtie m, \mu) \otimes (\sigma', \mu')$,
				it must $\sigma'' = \sigma' = \p_{\FV(\phi)}\sigma \bowtie m $, $\mu'' \in \mu \otimes \mu'$.
				Thus, $(\sigma', \mu') \models P$ is equivalent to $( \p_{\FV(\phi)}\sigma \bowtie m , \mu') \models P$, and by inductive hypothesis,
				that is equivalent to $(\sigma, \mu') \models P$.
				It also follows that $(\sigma, \mu'') \in (\sigma, \mu) \otimes (\sigma, \mu')$.
				Since $(\sigma, \mu) \models P \indimp Q$ and $(\sigma, \mu') \models P$,
				we have $(\sigma, \mu'') \models Q$.
				By inductive hypothesis, $ (\p_{\FV(\phi)}\sigma \bowtie m, \mu'') \models Q $,
				and equivalently $(\sigma'', \mu'') \models Q$.

				Thus, $(\p_{\FV(\phi)}\sigma \bowtie m, \mu) \models P \indimp Q$.

		\end{description}
	\end{proof}

	\begin{lemma}[Restriction on randomized memories]
		\label{lemma:random_restriction}
	Let $(\sigma, \mu)$ be any configuration in $\Config$,
		and let $\phi$ be a \AssertLOGIC formula interpreted on $\CombModel$,
		Then,

		\[(\sigma, \mu) \models \phi \iff (\sigma, \pi_{\FV(\phi)}\mu) \models \phi. \]
	\end{lemma}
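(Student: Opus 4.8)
The plan is to prove the two directions separately, reducing the real work to a single structural induction. First I would record the standard fact that satisfaction in any $\idxset$-BI model is \emph{persistent}: whenever $x \models_\valuation \phi$ and $x \sqsubseteq y$, also $y \models_\valuation \phi$. This is the usual Kripke-monotonicity lemma for bunched logics; it follows by induction on $\phi$ from persistence of the valuation together with the (Down-Closed) frame condition, the only delicate cases ($\rightarrow$ and the separating implications) being handled exactly as in BI. Writing the projection $\pi_{\FV(\phi)}$ as projection onto $\FV(\phi)\cap\dom(\mu)$, we have $\pi_{\FV(\phi)}\mu \sqsubseteq \mu$ in $\ProbModel$ (the projection is a marginal) while the deterministic component is unchanged, so persistence immediately yields $(\sigma,\pi_{\FV(\phi)}\mu)\models\phi \Rightarrow (\sigma,\mu)\models\phi$. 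Only the forward direction $(\sigma,\mu)\models\phi \Rightarrow (\sigma,\pi_{\FV(\phi)}\mu)\models\phi$ then remains.

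For the forward direction I would induct on $\phi$, writing $V = \FV(\phi)\cap\dom(\mu)$ so that $\pi_{\FV(\phi)}\mu = \pi_V\mu$, and repeatedly using that projections compose: since $\FV(P),\FV(Q)\subseteq\FV(\phi)$ for immediate subformulas, $\pi_{\FV(P)}(\pi_{\FV(\phi)}\mu)=\pi_{\FV(P)}\mu$. The atomic cases hold because $\Sem{e}(\sigma,m)$ depends only on the variables occurring in $e$, so every clause defining the atomic propositions (probabilities, the Dirac/uniform/Bernoulli shape, pointwise (in)equalities) is determined by $\pi_{\FV(p)}\mu = \pi_{\FV(\phi)}\mu$; the connectives $\land,\lor$ follow from the induction hypothesis on both subformulas together with persistence, used to lift $\pi_{\FV(P)}\mu$ back up to $\pi_{\FV(\phi)}\mu$.

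The conjunction cases $\indand$ and $\negand$ are handled by projecting the witness: if $\mu''\in\mu_1\oplus_\idx\mu_2$ with $\mu''\sqsubseteq\mu$ witnesses $(\sigma,\mu)\models P\sepand_\idx Q$, I take $\pi_{\dom(\mu'')\cap V}\mu''$, which lies below $\pi_V\mu$ and is a legal $\oplus_\idx$-combination of the corresponding marginals of $\mu_1$ and $\mu_2$; those marginals still satisfy $P$ and $Q$ by the induction hypothesis together with persistence. For $\indand$ this uses that marginalizing an independent product factors as the independent product of the marginals, and for $\negand$ it is exactly the (Down-Closed) direction of the proof that $\PNAmodel$ is a BI frame, i.e. that $\CPNA$ is preserved under coarsening/projection.

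The implication cases $\rightarrow$ and $\indimp$ are the crux, since both quantify over \emph{larger} states, so an extension of $\pi_V\mu$ need not be an extension of $\mu$. For $\rightarrow$, given $\nu\sqsupseteq\pi_V\mu$ with $(\sigma,\nu)\models P$, I glue $\mu$ and $\pi_{\FV(\phi)}\nu$ along their common $V$-marginal, concretely $\mu^\dagger(d,w)=\mu(d)\cdot\pi_{\FV(\phi)}\nu(\p_V d,w)/\pi_V\mu(\p_V d)$, making the newly introduced variables conditionally independent of $\dom(\mu)\setminus V$ given $V$; this is well defined because any free variable of $\phi$ already in $\dom(\mu)$ lies in $V$, so the new ones are fresh for $\mu$. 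Then $\mu^\dagger\sqsupseteq\mu$ and $\pi_{\FV(\phi)}\mu^\dagger=\pi_{\FV(\phi)}\nu$, so the induction hypothesis and persistence transport $(\sigma,\nu)\models P$ to $(\sigma,\mu^\dagger)\models P$, the hypothesis $(\sigma,\mu)\models P\to Q$ gives $(\sigma,\mu^\dagger)\models Q$, and this transports back to $(\sigma,\nu)\models Q$. For $\indimp$, given $\mu_P'$ disjoint from $V$ with $(\sigma,\mu_P')\models P$, I set $\mu_P=\pi_{\FV(P)}\mu_P'$ and note $\dom(\mu_P)$ is disjoint from $\dom(\mu)$ (since $\FV(P)\cap\dom(\mu)\subseteq V$ while $\dom(\mu_P')\cap V=\emptyset$), so the independent product $\mu\otimes\mu_P$ is defined; the hypothesis applied to this unique product, with the induction hypothesis on $P$ and $Q$, gives $(\sigma,\pi_{\FV(Q)}(\mu\otimes\mu_P))\models Q$, and a short computation shows $\pi_{\FV(Q)}(\mu\otimes\mu_P)\sqsubseteq\pi_{\FV(Q)}((\pi_V\mu)\otimes\mu_P')$, so persistence closes the case. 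I expect $\indimp$ to be the main obstacle: it goes through only because $\otimes$ is \emph{deterministic} (the witness in $z\in x\otimes y$ is unique and marginalizes factor-wise), precisely the property the non-deterministic $\oplus$ lacks, which matches the fact recorded immediately afterward that $\negimp$ fails restriction and is therefore excluded from $\AssertLOGIC$.
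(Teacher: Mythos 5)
Your proof is correct and follows essentially the same route as the paper's: the backward direction by Kripke persistence, and the forward direction by induction on $\phi$, with the separating conjunctions handled by projecting the witness (which is exactly the (Down-Closed) condition the paper invokes for $\oplus$, i.e.\ closure of \CPNA under projection/coarsening). The paper only writes out the new $\negand$ case and defers the remaining cases to the PSL restriction proof; your explicit gluing construction for $\rightarrow$ and the disjointness argument for $\indimp$ fill in precisely those deferred cases, and your closing observation about why $\negimp$ fails matches the paper's counterexample.
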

	\begin{proof}
		The reverse direction follows by the Kripke monotonicity,
		and the forward direction follows by induction on $\phi$.
		The proof for most of the inductive cases is very similar to the proof that
		the probabilistic model in~\citet{PSL} satisfies restriction.
		The new inductive case is:
		\begin{itemize}
			\item $\phi \equiv  P \negand Q$.
				Then, $\mu \models \phi$ iff there exists  $\mu', \mu_1, \mu_2$ s.t. $\mu \sqsupseteq \mu' \in \mu_1 \oplus \mu_2$, $\mu_1 \models P$ and  $\mu_2 \models Q$.
				By induction, $\pi_{FV(P)} \mu_1 \models P$ and  $\pi_{FV(Q)} \mu_2 \models Q$.
				Note that $\pi_{FV(P)} \mu_1 \sqsubseteq \mu_1$ and  $\pi_{FV(Q)} \mu_2 \sqsubseteq \mu_2$.
				By Down-closure, there exists $\mu'' \sqsubseteq  \mu'$  such that
				$\mu'' \in \pi_{FV(P)} \mu_1 \oplus \pi_{FV(Q)} \mu_2$.
				This $\mu''$ satisfies $P \negand Q$.
				By definition of $\oplus$ in the \CPNA model,
				\[dom(\mu'') = dom(\pi_{FV(P)} \mu_1) \cup dom(\pi_{FV(Q)} \mu_2 ) = FV(P) \cup FV(Q) = FV(P \negand Q) .\]
				Also, by the definition of the pre-order, $\mu'' \sqsubseteq \mu' \sqsubseteq \mu$
				implies that $\mu'' = \pi_{dom(\mu'')} \mu = \pi_{FV(\phi)} \mu$.
				Thus, $\pi_{FV(\phi)} \mu = \mu'' \models \phi$.
\end{itemize}
\end{proof}

	\MBIRestriction*
	\begin{proof}
		Based on~\Cref{lemma:det_restriction} and~\Cref{lemma:random_restriction},
\[(\sigma, \mu) \models \phi \iff (\sigma, \pi_{\phi} \mu) \models \phi  \iff (\p_{\FV(\phi)}\sigma \bowtie m, \pi_{\FV(\phi)}\mu) \models \phi. \]
	\end{proof}

For the counterexample of the restriction property, we prove a lemma.
\begin{lemma}
	\label{lemma:counterex}
	Let $\sigma \in \Mem[\DetVar]$ be ``empty'' -- let every deterministic variable be undefined,
	$\mu $ be the uniform distribution over one hot vectors on $A, B$,
and $\phi = (\apUnif{C}{\{0, 1\}}) \negimp (\apIn{B} \indand \apIn{C})$.
Then, $(\sigma, \mu) \models \phi$.
\end{lemma}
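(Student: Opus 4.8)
The plan is to unfold the semantics of the negative-association magic wand $\negimp$ and reduce the goal to a statement about the independence of $B$ and $C$ in every distribution that can be combined with $\mu$. Concretely, $(\sigma,\mu)\models\phi$ holds iff for every configuration $(\sigma',\mu')\models\apUnif{C}{\{0,1\}}$ and every $(\sigma'',\mu'')\in(\sigma,\mu)\oplus(\sigma',\mu')$ we have $(\sigma'',\mu'')\models\apIn{B}\indand\apIn{C}$. Since $\sigma$ is the empty deterministic memory, the deterministic operation forces $\sigma'=\sigma''=\sigma$, so that component is trivial and everything reduces to the probabilistic memories. Moreover, applying \Cref{theorem:MBIRestriction} to the postcondition, whose free variables are $\{B,C\}$, shows that $(\sigma'',\mu'')\models\apIn{B}\indand\apIn{C}$ holds iff $B$ and $C$ are independent in $\pi_{\{B,C\}}\mu''$, i.e. iff $B$ and $C$ are independent in $\mu''$. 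Thus it suffices to prove: for every $\mu'$ with uniform-Bernoulli marginal on $C$ and every $\mu''\in\mu\oplus\mu'$, the variables $B$ and $C$ are independent in $\mu''$.

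First I would record the marginal constraints on $\mu''$. Since $\mu''\in\mu\oplus\mu'$, we have $\pi_{\{A,B\}}\mu''=\mu$, the one-hot distribution, so $A+B=1$ almost surely and hence $A=1-B$; also $\pi_{\{B\}}\mu''$ and $\pi_{\{C\}}\mu''$ are both uniform Bernoulli. Parametrizing the joint law of $(B,C)$ by $a=\Pr[B{=}1,C{=}1]$, the uniform marginals force $\Pr[B{=}0,C{=}1]=\Pr[B{=}1,C{=}0]=\tfrac12-a$ and $\Pr[B{=}0,C{=}0]=a$, so that independence of $B$ and $C$ is equivalent to the single equation $a=\tfrac14$. (Along the way I would note that $\mu\oplus\mu'$ is empty unless $\dom(\mu')\cap\{A,B\}=\emptyset$, so in the non-vacuous case $C\notin\{A,B\}$ and this bookkeeping is well defined.)

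The heart of the argument is to pin down $a=\tfrac14$ using negative association. Taking the trivial partition $\mathcal{T}=\{\dom(\mu')\}$ (on which $\mu'$ is vacuously \CPNA) together with $\mathcal{S}=\{\{A\},\{B\}\}$ (on which the one-hot $\mu$ is \CPNA, since one-hot variables are NA by \Cref{thm:buildingblock} and \Cref{theorem:PNAeqNA}), the definition of $\oplus$ guarantees that $\mu''$ is $\{\{A\},\{B\},\dom(\mu')\}$-\CPNA. I would then instantiate the defining inequality of \CPNA (\Cref{def:PNA}) at two coarsenings of this partition. Using the coarsening that merges $A$ into the block containing $\dom(\mu')$ while keeping $\{B\}$ separate, with the monotone functions $m\mapsto m(C)$ and $m\mapsto m(B)$, gives $\EE[B\cdot C]\le\EE[B]\EE[C]=\tfrac14$, i.e. $a\le\tfrac14$. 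Using instead the coarsening with blocks $\{A,B\}$ and $\dom(\mu')$, with the monotone functions $m\mapsto m(A)$ and $m\mapsto m(C)$, gives $\EE[A\cdot C]\le\EE[A]\EE[C]=\tfrac14$; since $A=1-B$ we have $\EE[A\cdot C]=\tfrac12-a$, so $a\ge\tfrac14$. Combining the two bounds yields $a=\tfrac14$, hence $B$ and $C$ are independent in $\mu''$, as required.

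I expect the main obstacle to be organizing these two \CPNA instantiations so that the perfect anticorrelation $A=1-B$ supplied by the one-hot distribution converts the one-sided NA inequalities into the two-sided bound that forces \emph{exact} independence. This is precisely the feature that the projection $\pi_{\{B\}}\mu$ destroys: without the witness variable $A$, there is no \CPNA constraint linking $B$ and $C$, so independence is no longer forced. That asymmetry is what makes $(\sigma,\mu)$ and its projection disagree on $\phi$, and hence what powers the promised counterexample to restriction for $\negimp$.
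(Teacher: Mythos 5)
Your proof is correct, and it rests on the same core insight as the paper's: the combination $\mu''\in\mu\oplus\mu'$ must satisfy both the $(\{B\},\dom(\mu'))$ and the $(\{A\},\dom(\mu'))$ \CPNA{} constraints, and the perfect anticorrelation $A=1-B$ supplied by the one-hot distribution turns the two one-sided inequalities $\EE[BC]\le\EE[B]\,\EE[C]$ and $\EE[AC]\le\EE[A]\,\EE[C]$ into a two-sided bound that forces exact independence. Where you diverge is in the endgame. The paper argues by contradiction: it invokes its auxiliary characterization (\Cref{lemma:pnaeq_indep}, proved by a separate induction) that failure of independence yields a \emph{strict} monotone-correlation inequality for some pair $(f,g)$, and then shows by an explicit manipulation of the conditional probabilities $\Pr[C=v\mid A=1]$ that this strict inequality transfers from $(B,C)$ to $(A,C)$, contradicting the \CPNA{} constraint. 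You instead observe that for $0/1$-valued $B,C$ with uniform marginals the joint law has a single free parameter $a=\Pr[B{=}1,C{=}1]$, and the two \CPNA{} instantiations pin it to $a=\tfrac14$ directly. Your route is more elementary — it needs only the two coordinate-projection test functions and no appeal to \Cref{lemma:pnaeq_indep} — though it exploits the specific $2\times 2$ structure of this example, whereas the paper's contradiction argument is the shape one would need if the variables were not binary. One cosmetic difference: you extract the two pairwise constraints as coarsenings of the single partition $\{\{A\},\{B\},\dom(\mu')\}$, while the paper feeds the sub-partitions $\{\{A\}\}$ and $\{\{B\}\}$ into the definition of $\oplus$ separately; both are legitimate under \Cref{def:PNA}.
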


\begin{proof}
			Fix any $\mu_C$ such that $ (\sigma, \mu_C) \models \apUnif{C}{\{0, 1\}}$,
			which implies that $\pi_C \mu_C (0) = 0.5$ and $\pi_C \mu_C(1) = 0.5$.
			Fix $\mu_e \in \mu \oplus \mu_C$.

			Since $B \in \dom(\mu)$, $\mu$ is trivially $\{\{B\}\}$-\CPNA.
			Similarly, $\mu_C$ is trivially $\{\{C\}\}$-\CPNA.
				Thus, $\mu_e \in \mu \oplus \mu_C$ must be $\{\{B\}, \{C\}\}$-\CPNA.
				Then for any two both monotone or antitone functions $f: \Mem[B] \to \mathbb{R}^+,
				g: \Mem[C] \to \mathbb{R}^+$,
				\begin{align*}
					\mathbb{E}_{m \sim \mu_e}[f(\p_B m) \cdot g(\p_C m)] \leq \mathbb{E}_{m \sim \mu_e}[f(\p_B m)] \cdot \mathbb{E}_{m \sim \mu_e}[g(\p_C m)].
				\end{align*}
				Similarly, 	$\mu_e \in \mu \oplus \mu_C$ must be $\{\{A\}, \{C\}\}$-\CPNA,
				and for any two both monotone or antitone functions $f: \Mem[A] \to \mathbb{R}^+,
				g: \Mem[C] \to \mathbb{R}^+$,
				\begin{align}
					\label{eq:AC-NA}
					\mathbb{E}_{m \sim \mu_e}[f(\p_A m) \cdot g(\p_C m)] \leq \mathbb{E}_{m \sim \mu_e}[f(\p_A m)] \cdot \mathbb{E}_{m \sim \mu_e}[g(\p_C m)].
				\end{align}

				Suppose variables $B$ and $C$ are not independent in $\mu_e$,
				then by~\Cref{lemma:pnaeq_indep},
				there must exists some both monotone or both antitone functions $f: \Mem[B] \to \mathbb{R}^+,
				g: \Mem[C] \to \mathbb{R}^+$ such that
				\begin{align*}
					& \mathbb{E}_{m \sim \mu_e}[f(\p_B m) \cdot g(\p_C m)] < \mathbb{E}_{m \sim \mu_e}[f(\p_B m)] \cdot \mathbb{E}_{m \sim \mu_e}[g(\p_C m)] \\
					\iff & 0.5 \cdot f(0) \cdot \left( g(1) \cdot P(C = 1 \mid B = 0) + g(0) \cdot P(C = 0 \mid B = 0) \right) \\
										&+  0.5 \cdot f(1) \cdot \left( g(1) \cdot P(C = 1 \mid B = 1) + g(0) \cdot P(C = 0, B = 1) \right) \\
					<& (0.5 \cdot f(0) + 0.5 \cdot f(1) )  \cdot ( 0.5 \cdot g(1) + 0.5 \cdot g(0)) ,
				\end{align*}
				where $ P(\dots)$ denotes the respective probability that in $\mu_e$.
				Since $\mu_e \in \mu \oplus \mu_C$,
				we have $\mu_e \sqsupseteq \mu$,
				and $\mu$ being a uniform distribution over one-hot vectors on $A, B$ indicates that
				for any $m$ in the support of $\mu_e$, $A = 1$ iff $B = 0$, and $A = 0$ iff $B = 1$.
				Therefore,
				$P(C = v \mid B = 0) = P(C = v \mid A = 1)$ and $P(C = v \mid B = 1) = P(C = v \mid A = 0). $
				Also, by Bayes theorem, we have
				\begin{align*}
					&P(C = v \mid A = 0) \cdot P(A = 0) + P(C = v \mid A = 1) \cdot P(A = 1) = P(C = v)  \\
					\implies & P(C = v \mid A = 0) \cdot 0.5 + P(C = v \mid A = 1) \cdot 0.5 = 0.5 \\
					\iff & P(C = v \mid A = 0) = 1 - P(C = v \mid A = 1) .
			\end{align*}
			Let $X = P(C = 1 \mid A = 1)$, $Y = P(C = 0 \mid A = 1)$, then we have
			\begin{align*}
			& 0.5 \cdot f(0) \cdot \left( g(1) \cdot X + g(0) \cdot Y \right)
			+  0.5 \cdot f(1) \cdot \left( g(1) \cdot (1-X) + g(0) \cdot (1 - Y) \right) \\
					<& (0.5 \cdot f(0) + 0.5 \cdot f(1) )  \cdot ( 0.5 \cdot g(1) + 0.5 \cdot g(0)) \\
					\iff & f(0) \cdot  g(1) \cdot (X - 0.5) + f(0) \cdot  g(0) \cdot (Y - 0.5)
			+ f(1) \cdot g(1) \cdot (0.5-X) + f(1) \cdot  g(0) \cdot (0.5 - Y) < 0 \\
					\iff & (f(0) - f(1)) \cdot  g(1) \cdot (X - 0.5) + (f(0) - f(1)) \cdot  g(0) \cdot (Y - 0.5) < 0 \\
					\iff 			& 0.5 \cdot f(1) \cdot \left( g(1) \cdot X + g(0) \cdot Y \right)
			+  0.5 \cdot f(0) \cdot \left( g(1) \cdot (1-X) + g(0) \cdot (1 - Y) \right) \\
					<& (0.5 \cdot f(0) + 0.5 \cdot f(1) )  \cdot ( 0.5 \cdot g(1) + 0.5 \cdot g(0)) .
			\end{align*}
				Viewing $f$ as a function from $\Mem[A]$ to $\mathbb{R}^+$,
				this is equivalent to
				\begin{align*}
					\mathbb{E}_{m \sim \mu_e}[f(\p_A m) \cdot g(\p_C m)] < \mathbb{E}_{m \sim \mu_e}[f(\p_A m)] \cdot \mathbb{E}_{m \sim \mu_e}[g(\p_C m)] .
				\end{align*}
				The last inequality contradicts~\Cref{eq:AC-NA}.

				Therefore, $B$ and $C$ must be independent in $\mu_e$.
				Hence, $\mu_e \models \apIn{B} \sepand \apIn{C}$, and $\mu \models \phi$.

			\end{proof}

\CounterEx*
	\begin{proof}
		\label{proof:counterex}
Let $A, B, C$ be three variables in $\RanVar$.
Let $\phi = (\apUnif{C}{\{0, 1\}}) \negimp (\apIn{B} \indand \apIn{C})$.
Let $\sigma$ be a deterministic memory where every deterministic variable is undefined,
and $\mu $ be the uniform distribution over one hot vectors on $A, B$.
Then, we claim $(\sigma, \mu) \models \phi$
but $(\sigma, \pi_{\{B,C\}} \mu ) \not\models \phi$.
For $(\sigma, \mu) \models \phi$, it suffices to show that for any $\mu_C$
where $C$'s value is the uniform distribution on $\{0, 1\}$, for any $\mu'
\sqsupseteq \mu$, and $\mu_e \in \mu' \oplus \mu_C$, $B$ and $C$ are
independent in $\mu_e$ according to~\Cref{lemma:counterex}.
The intuition is that $\mu_e$ must satisfies $\{B, C\}$-\CPNA and $\{A, C\}$-\CPNA,
and since $A$'s value is always the opposite of $B$'s value,
$(B, C)$ has to satisfy pairwise independence in $\mu_e$.
To show $(\sigma, \pi_{\{B,C\}} \mu) \not\models \phi$,
we first note that $\pi_{\{B,C\}} \mu = \pi_{\{B\}} \mu$ is a uniform distribution of $0$ and $1$ on $B$.
Let  $\mu_C' \in \DMem{\{C\}}$ be the uniform distribution on $\{0, 1\}$,
$\mu' \in \DMem{\{B, C\}}$ be the uniform distribution over one-hot vectors on $B,C$.
Clearly, $B, C$ are not independent in $\mu'$, so $\mu' \not\models \apIn{B} \indand \apIn{C}$.
Also, $\mu'$ is in $\pi_{\{B, C\}} \mu \oplus \mu_C'$. So $\pi_{B,C} \mu \not\models \apUnif{C}{\{0, 1\}}  \negimp (\apIn{A} \indand \apIn{C})$.
\end{proof}

			\subsection{The proof system of the program logic}
			\text{ }
%

The proof for the soundness of the frame rule relies on the following corollary of
~\Cref{lemma:monotoneclosure2CPNA}.

	\begin{lemma}[Monotone map closure \CPNA (specific case)]
	\label{lemma:monotoneclosureCPNA}
	Let $S, T \subseteq \Var$ be two disjoint sets of variables, and $\mathcal{T}$
	is sub-partition of $T$.  Suppose that $f : \Mem[S] \to \Mem[S']$ is a
	monotonically non-decreasing non-negative function, and $S'$ disjoint from
	$T$.  If $\mu$ with domain $S \cup T$ satisfies $\{S\} \cup
	\mathcal{T}$-\CPNA, then $\dbind(\mu, m \mapsto \dunit(f(\pi_S m)) \oplus
	\dunit(\pi_T m)) \in \DMem{S' \cup T}$ satisfies $\{S'\} \cup
	\mathcal{T}$-\CPNA.
\end{lemma}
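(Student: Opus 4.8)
The plan is to obtain this as a direct specialization of the general monotone-map closure result, \Cref{lemma:monotoneclosure2CPNA}. The distribution $\mu$ has domain $S \cup T$, and since $S$ and $T$ are disjoint and $\mathcal{T}$ partitions $T$, the collection $\{S\} \cup \mathcal{T}$ is a partition of $\dom(\mu)$. I would instantiate \Cref{lemma:monotoneclosure2CPNA} with this partition, assigning the given monotone map $f : \Mem[S] \to \Mem[S']$ to the block $S$ and the identity map $\mathrm{id} : \Mem[T_j] \to \Mem[T_j]$ to each block $T_j \in \mathcal{T}$. The identity is non-decreasing and $f$ is non-decreasing by hypothesis, so this is a family of all-non-decreasing maps, exactly what the general lemma requires. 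The image partition consists of $S'$ together with the unchanged blocks of $\mathcal{T}$, that is, $\{S'\} \cup \mathcal{T}$, which is a genuine partition of $S' \cup T$ because $S'$ is disjoint from $T$ and the blocks of $\mathcal{T}$ partition $T$.

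The one computation to verify is that the distribution built by the general construction agrees with the one in the statement. Under this instantiation, \Cref{lemma:monotoneclosure2CPNA} forms
\[
  \dbind\Big(\mu,\; m \mapsto \dunit(f(\p_S m)) \oplus \bigoplus_{T_j \in \mathcal{T}} \dunit(\p_{T_j} m)\Big).
\]
Each $\dunit(\p_{T_j} m)$ is a point mass on $\Mem[T_j]$, and the $T_j$ are pairwise disjoint, so combining them is deterministic and returns the single point mass $\dunit(\p_T m)$ on $\Mem[T]$. Hence the inner term equals $\dunit(f(\p_S m)) \oplus \dunit(\p_T m)$, matching the distribution in the statement verbatim. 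I would isolate this as a one-line remark that combining Dirac distributions over disjoint domains is single-valued and yields the Dirac mass on the union. \Cref{lemma:monotoneclosure2CPNA} then converts $\{S\} \cup \mathcal{T}$-\CPNA of $\mu$ into $\{S'\} \cup \mathcal{T}$-\CPNA of this distribution, which is exactly the claim.

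The main place to be careful is the reading of ``sub-partition of $T$'': if $\mathcal{T}$ covers all of $T$, the argument above is immediate, but if $\bigcup\mathcal{T}$ is a proper subset of $T$, then $\{S\} \cup \mathcal{T}$ is a partition only of $S \cup \bigcup\mathcal{T}$ rather than of $\dom(\mu)$. In that case I would either extend the partition with the extra identity block $T \setminus \bigcup\mathcal{T}$ so as to cover the whole domain and then discard that block at the end, or invoke the drop-block closure of \CPNA directly: being $\mathcal{R}$-\CPNA entails $\mathcal{R}'$-\CPNA for any sub-collection $\mathcal{R}' \subseteq \mathcal{R}$, by instantiating the test function on each discarded block with the constant $1$. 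Apart from this minor bookkeeping, the lemma is a routine corollary and I expect no genuine difficulty; in particular the non-negativity hypothesis on $f$ plays no role in the \CPNA conclusion (the general lemma imposes no sign condition on the maps themselves), so I would simply carry it along unused.
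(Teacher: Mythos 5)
Your proposal matches the paper's own proof: the paper reduces to \Cref{lemma:monotoneclosure2CPNA} in exactly the same way, taking $f$ on the block $S$ and the identity map on each block of $\mathcal{T}$. Your additional remarks --- that combining Dirac masses over disjoint domains is single-valued so the constructed distributions agree, and that the non-negativity of $f$ is not actually needed by the general lemma --- are correct and, if anything, slightly more careful than the paper's one-line reduction.
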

\begin{proof}
	We can reduce to~\Cref{lemma:monotoneclosure2CPNA}:
Let $S_1 = S$, $T_1 = S'$,
and $\{S_2, \dots, S_n\} = \{T_2, \dots, T_n\} = \mathcal{T}$.
Let $f_1 = f$ and the rest of $f_i$ to be the identity map.
Note that we have assumed to only work with non-negative values, so identity maps are also monotonically non-decreasing non-negative function.
\end{proof}

\SoundnessProgramlogic*
\begin{proof}
	\label{proof:soundness_programlogic}
	We only prove the cases not already in~\citet{PSL}.
	\begin{description}
		\item [Rule: \textsc{Cond}.]

		For any configuration $(\sigma, \mu) \models \phi$,
		by side-condition that $\models \phi \rightarrow \apDetm{b}$,
		$(\sigma, \mu) \models \apDetm{b}$.
		Thus, $\Sem{b} (\sigma, \mu)$ must be a Dirac distribution.
		Since the commands are well-typed, $\Sem{b} (\sigma, \mu)$ is a distribution over booleans.
		So $\Sem{b} (\sigma, \mu)$ is either a Dirac distribution of truthful value $\ktt$
		or a Dirac distribution of false value $\kff$.

		If $\Sem{b} (\sigma, \mu)$ is a Dirac distribution of truthful value $\ktt$,
		then for any $m$ in the support of $\mu$, $\Sem{b}(\sigma, m) = \ktt$,
    and thus, $(\sigma, \mu) \models \phi \land \apEq{b}{\ktt}$.
    By the side-condition $\vdash \hoare{\phi \land \apEq{b}{\ktt} }{c}{\psi}$
		and inductive hypothesis that this judgement is sound,
		$\Sem{c} (\sigma, \mu) \models \psi$.
		When $\Sem{b} (\sigma, \mu) = \delta(\ktt)$, the semantics say that $ \Sem{\RCond{b}{c}{c'}} (\sigma, \mu) = \Sem{c} (\sigma, \mu) \models \psi$.

		Symmetrically, when $\Sem{b} (\sigma, \mu) = \delta(\kff)$,
 $ \Sem{\RCond{b}{c}{c'}} (\sigma, \mu) = \Sem{c'} (\sigma, \mu) \models \psi$.

\item [Rule: \textsc{Loop}.]
	For any $(\sigma, \mu) \models \phi$, the side condition implies $(\sigma, \mu) \models \apDetm{b}$. We show by induction that for any $n$,
	$\Sem{(\RCondt{b}{c})^n} (\sigma, \mu) \models \phi \land \apDetm{b}$,
	and
  $\Sem{(\RCondt{b}{c})^n} (\sigma, \mu) \models \phi \land \apEq{b}{\kff}$ implies
  $\Sem{(\RCondt{b}{c})^{n + 1}} (\sigma, \mu) \models \phi \land \apEq{b}{\kff}$.

Say $(\sigma', \mu') = \Sem{(\RCondt{b}{c})^n} (\sigma, \mu)$. Assuming 	$(\sigma', \mu') \models \phi \land \apDetm{b}$, there are two possibilities:
	\begin{itemize}
    \item $(\sigma', \mu') \models \phi \land \apEq{b}{\kff}$, then
			\[ \Sem{(\RCondt{b}{c})^{n+1} } (\sigma, \mu) = \Sem{\RCondt{b}{c} }
      (\sigma', \mu') = (\sigma', \mu') \models \phi \land \apEq{b}{\kff}. \]
    \item $(\sigma', \mu') \models \phi \land \apEq{b}{\ktt}$, then
	\[ \Sem{(\RCondt{b}{c})^{n+1} } (\sigma, \mu) = \Sem{\RCondt{b}{c} } (\sigma', \mu')
	= \Sem{c}  (\sigma', \mu') \models \phi, \]
	where the last satisfaction is guaranteed by $\vdash \hoare{\phi \land \apEq{b}{\ktt}}{c}{\phi}$.
	Since $\models \phi \to \apDetm{b}$, so $\Sem{\RCondt{b}{c} } (\sigma', \mu') \models \phi \land \apDetm{b}$.
	\end{itemize}
	Since we assumed that the loop ends in finite step,
	there exists a finite number $N$ such that
  $\Sem{(\RCondt{b}{c})^N} (\sigma, \mu) \models \phi \land \apEq{b}{\kff}$ and
  also $\Sem{(\RCondt{b}{c})^{N -1}} (\sigma, \mu) \models \phi \land \apEq{b}{\ktt}$
	if $N > 1$.

	Then
		$\Sem{\RWhile{b}{c}} (\sigma, \mu) = \Sem{(\RCondt{b}{c})^{N}} (\sigma, \mu)
    \models \phi \land \apEq{b}{\kff}$.

	\item [Rule: \textsc{RCase}.]
		For any $(\sigma, \mu) \models \phi \indand \eta$,
		there exists $\mu', \mu_1, \mu_2$ such that $\mu \sqsupseteq \mu' \in \mu_1 \oplus \mu_2$
		and $(\sigma, \mu_1) \models \phi$, $(\sigma, \mu_2) \models \eta$.

	Say $\mu_2$ is in $\DMem{T}$, then for any $m$ in the support of $\mu_2$,
	the conditional distribution $\mu_2 \mid T = m$ is a Dirac distribution of $m$,
	i.e., $\delta(m)$.
	Since $\eta \in \CCond$ (closed under conditioning),
	so $(\sigma, \delta(m)) \models \eta$,
	and thus $(\sigma, \delta(m)) \models \bigvee_{\alpha \in S} \eta_\alpha$.
	Then, there exists $\alpha$ such that $(\sigma, \delta(m)) \models \eta_\alpha$.
	Since $\dom(\mu_1)$ and $\dom(\mu_2)$ are independent in $\mu'$,
	the conditional distribution $\mu' \mid T = m$ is in $\mu_1 \oplus \delta(m)$.
	So $\mu' \mid T = m \models \phi \indand \eta_\alpha$.
	By the side-condition $\vdash \phi \indand \eta_{\alpha}$ and inductive hypothesis that
	it is sound, we have $\Sem{c} (\sigma,\mu' \mid T = m) \models \psi$.

	For any command $c$, any condition $b$,
	let $(\sigma_b, \mu_b) = \Sem{c} (\sigma, \mu' \mid b)$ and $ (\sigma_b, \mu_{\neg b}) = \Sem{c}(\sigma, \mu' \mid \neg b)$.
	We can show by induction on the semantics of commands that
	$\Sem{c} (\sigma, \mu') $ is a convex combination of $(\sigma, \mu' \mid b)$
	and $(\sigma, \mu' \mid \neg b) $:
	$\Sem{c} (\sigma, \mu') = (\sigma, (\mu' \mid b) \comb_{\mu'(b = \ktt)} (\mu' \mid \neg b) ) $.
	Thus, $\Sem{c} (\sigma,\mu')$ is a convex combination of
	all $\Sem{c} (\sigma,\mu' \mid T = m) $, where $m$ in the support of $\mu'$.
	Since each of $\Sem{c} (\sigma,\mu' \mid T = m)$ satisfies $\psi$ and $\psi$ is
	closed under mixture,
	we have  $\Sem{c} (\sigma,\mu') \models \psi$.
	We can also show  by induction on the semantics of commands that
	$\Sem{c}(\sigma, \mu) \sqsupseteq \Sem{c}(\sigma, \mu')$ if $\mu \sqsupseteq \mu'$.
	So by persistence $\Sem{c}(\sigma, \mu) \models \psi$.

\item [Rule: \textsc{ProbBound}.]
	For any program state $(\sigma, \mu) \models \Pr[ev_1] \geq 1 - \epsilon$,
	let event $ev_1^{\sigma}: \Mem[\dom(\mu)] \to \{0, 1\}$ be the result of partially
	interpreting $ev_1$ on $\sigma$, i.e.,
	$ev_1^{\sigma} = \text{curry}(\Sem{ev_1}) (\sigma)$,
	and denote the function $\lambda x. 1 - ev_1^{\sigma}(x)$ as $\neg ev_1^{\sigma}$.
	We also write $\Pr_{\mu}[ev_1^{\sigma}]$ for
	$\sum_{m \in \DMem{\dom(\mu)}} \mu(m) \cdot ev_1^{\sigma}(m)$.

	We can express $\mu$ as the convex combination of two conditional distributions,
	i.e.,
	\[
		\mu = \Pr_{\mu}[ev_1^{\sigma}] \cdot (\mu \mid ev_1^{\sigma})
	+ \Pr_{\mu}[\neg ev_1^{\sigma}] \cdot (\mu \mid \neg ev_1^{\sigma}).
	\]
	Let $(\sigma', \mu_{ev_1^{\sigma}}) = \Sem{c}(\sigma, (\mu \mid ev_1^{\sigma}))$.
	Since assignments to deterministic memories can only use variables in the
	deterministic memories, there exists probabilistic memories $\mu_{\neg ev_1^{\sigma}}$
	such that
	$(\sigma', \mu_{\neg ev_1^{\sigma}}) = \Sem{c}(\sigma, (\mu \mid \neg ev_1^{\sigma}))$.
	Then, by induction on the denotational semantics,
	we can prove that
	$\Sem{c} (\sigma, \mu) = (\sigma', \mu_{ev_1^{\sigma}} \circ_{\Pr_{\mu}[ev_1^{\sigma}]} \mu_{\neg ev_1^{\sigma}})$.

	By construction, $\Sem{ev_1} (\sigma, (\mu \mid ev_1^{\sigma})) = 1$,
	so $ (\sigma, (\mu \mid ev_1^{\sigma})) \models ev_1$.
	Also, by the assumption and inductive hypothesis, we have
	$\models \hoare{ev_1}{c}{\Pr[ev_2] \geq 1- \delta}$,
	which implies
	\[\Sem{c}(\sigma, (\mu \mid ev_1^{\sigma})) = (\sigma', \mu_{ev_1^{\sigma}}) \models \Pr[ev_2] \geq 1 - \delta.\]
	By definition, that means
	$\Pr_{(\sigma', \mu_{ev_1^{\sigma}})}[ev_2] \geq 1 - \delta $.
	Then, by the law of total probability,
	\begin{align*}
		\Pr_{(\sigma', \mu_{ev_1^{\sigma}} \circ_{\Pr_{\mu}[ev_1^{\sigma}]} \mu_{\neg ev})} [ev_1^{\sigma}]
		&\leq \Pr_{(\sigma', \mu_{ev_1^{\sigma}})} [ev_2] + \Pr_{\mu}[\neg ev_1^{\sigma}] \\
		&\leq \delta +  \Pr_{\sigma, \mu}[\neg ev_1] \tag{because $\Pr_{\mu}[\neg ev_1^{\sigma}] = \Pr_{\sigma, \mu}[\neg ev_1]$}\\
		&\leq \delta +  \epsilon  \tag{because $(\sigma, \mu) \models \Pr[ev_1] \geq 1 - \epsilon$,}
	\end{align*}

	Therefore,
	$\Sem{c} (\sigma, \mu)\models \Pr[ev_2] \leq \delta + \epsilon$.

		\item [Rule: \textsc{NegFrame}.]  For any $(\sigma, \mu) \models ev_1 \sepand \eta$,
	there exists $k_1, k_2, \mu'$ such that $\mu \sqsupseteq \mu' \in k_1 \oplus k_2$, and
	$(\sigma, k_1)	\models \phi$ and $(\sigma, k_2) \models \eta$.

	Let $S_1 \triangleq \dom(k_1)$,
	and note that $(\sigma, k_1) \models \phi$ and $\models \phi \rightarrow \apIn{\RV(c)}$
	implies $(\sigma, k_1) \models \apIn{\RV(c)}$,
	and thus $\RV(c) \cap \RanVar \subseteq S_1$.
	Let $S_2 \triangleq \dom(k_2) \cap \FV(\eta)$.
	Then, $\MV(c)$  is disjoint from $S_2$ because $S_2 \subseteq \FV(\eta)$ and $\FV(\eta) \cap \MV(c) = \emptyset$;
	also, by restriction,  $(\sigma, \pi_{S_2} k_2) \models \eta$.

	Since $k_1 \oplus k_2$ is non-empty, $dom(k_1), dom(k_2)$ are disjoint;
	since $S_1 = dom(k_1), S_2 \subseteq dom(k_2)$, $S_1, S_2$ are disjoint.

	Let $(\sigma_e, \mu_e) = \denot{c} (\sigma, \mu)$.
	Denote $\RV(c) \cap \DetVar$ as $R_1$, $\MV(C) \cap \DetVar$ as $M_1$,
 $\RV(c) \cap \RanVar$ as $R_2$, $\MV(C) \cap \RanVar$ as $M_2$.
	By the soundness of $\RV, \WV$, and $\MV$,
	there exists $G: \Mem[R_1] \to \Mem[M_1]$,
	$F : \Mem[\RV(c)]	\to \DMem{M_2}$ such that:
	\begin{align*}
		\sigma_e &= G(\p_{R_1} \sigma) \bowtie \p_{\DetVar \setminus M_1} \sigma \\
		\mu_e &= \dbind(\mu, m \mapsto F(\p_{R_1} \sigma \bowtie \p_{R_2} m) \otimes  \dunit(\p_{\dom(\mu) \setminus M_2} m) ) .
	\end{align*}
	Since $R_2 \subseteq S_1$, and $S_2$ is disjoint of $S_1$ and $\MV(c)$,
	we have
	\begin{align}
							\pi_{M_2 \cup S_1 \cup S_2} \mu_e
							&= \dbind(\pi_{S_1 \cup S_2}  \mu, (m_1, m_2) \mapsto F(\p_{R_1} \sigma \bowtie \p_{R_2} m_1) \otimes \dunit(\p_{S_1 \setminus M_2} m_1) \otimes \dunit(m_2)) . \label{eq:Fmap}
	\end{align}

	One implication is that $(\sigma_e, \pi_{S_2} \mu_e) \models \eta$:
	~\Cref{eq:Fmap}  implies that $\pi_{S_2} \mu_e = \pi_{S_2} \mu$.
	We $(\sigma, \pi_{S_2} \mu ) \models \eta$, so  $(\sigma, \pi_{S_2} \mu_e ) \models \eta$.
			By restriction, for any $m \in \Mem[\DetVar \setminus \FV(\eta)]$,
			$(\p_{\FV(\eta)} \sigma \bowtie m, \pi_{S_2} \mu_e) \models \eta$.
			Since $	\sigma_e = G(\p_{R_1} \sigma) \bowtie \p_{\DetVar \setminus M_1} \sigma $,
			and $\FV(\eta) \cap \MV(c) = \emptyset$ implies that $\FV(\eta) \cap \DetVar \subseteq \DetVar \setminus M_1$,
			and $(\sigma_e, \pi_{S_2} \mu_e) \models \eta$.

	If $y \in \DetVar$, then $ \Sem{y} (\sigma_e, m_F \bowtie \p_{X \cap \RanVar} m ) = \sigma_e (y)$.
	Thus, $(\sigma_e, t) \models \apIn{y}$ where $t$ is the trivial distribution in $\DMem{\emptyset}$.
	Also, $(\sigma_e, \mu_e) \in (\sigma_e, t) \circ (\sigma_e, \mu_e)$,
	where $ (\sigma_e, \mu_e) \sqsupseteq (\sigma_e, \pi_{S_2} \mu_e) \models \eta$.
	So $(\sigma_e, \mu_e) \models \eta$.

	If $y \notin \MV(c)$, then $\{y\} \cup S_2 \subseteq \dom(\mu) \setminus \MV(c)$,
	and thus $\pi_{\{y\} \cup S_2}  \mu_e = \pi_{\{y\} \cup S_2} \mu$.
	\begin{itemize}
    \item Since $\vdash \hoare{\phi}{c}{\apEq{y}{f(X)}}$, by inductive assumption we have $\denot{c} (\sigma, k_1) \models \apEq{y}{f(X)}$.
	Updates to deterministic variables only depend on deterministic program state, so $\denot{c} (\sigma, k_1)  = (\sigma_e, \mu_k)$ for some $\mu_k$.
	And $y \notin \MV(c)$ implies that $ \pi_{\{y \}} \mu_k =  \pi_{\{y \}} \mu = \pi_{\{y\}} k_1$.
  The restriction property and $\denot{c} (\sigma, k_1) \models \apEq{y}{f(X)}$ implies $(\sigma_e, \pi_{\{y\}} k_1) = (\sigma_e, \pi_{\{y\}} \mu_k) \models \apIn{y}$.
\item $(\sigma_e, \pi_{S_2} \mu) \models \eta$.
\item  $\mu \sqsupseteq \mu' \in k_1 \oplus k_2$, so $\pi_{\{y\} \cup S_2} \mu \in \pi_{\{ y \}}k_1 \oplus \pi_{S_2} k_2$ too.
	\end{itemize}
	Therefore, $(\sigma_e, \pi_{\{y\} \cup S_2} \mu) \models \apIn{y} \negand \eta $.
	Since $ (\sigma_e, \mu_e)  \sqsupseteq (\sigma_e,  \pi_{\{y\} \cup S_2} \mu_e)  = (\sigma_e,  \pi_{\{y\} \cup S_2} \mu)$,
	by persistence, $  (\sigma_e, \mu_e) \models \apIn{y} \negand \eta$.

	If $ y \in \RanVar$ and $y \in \MV(c)$,
	our overall strategy is to first connect $F$ with $f$ and show the operation on variable $y$ is a monotone map,
	and then apply monotone map closure to establish the NA between $y$ and $\eta$.

	Since $(\sigma, \pi_{S_1}\mu)
	= (\sigma, k_1) \models \phi$, by persistence $(\sigma, \mu) \models \phi$.
  By side-condition that $\vdash \{\phi\} c \{ \apEq{y}{f(X)} \}$ and by induction
	that the proof rules are sound, it must $\denot{c}(\sigma,
  \mu) =  (\sigma_e, \mu_e)  \models \apEq{y}{f(X)}$.  By restriction, $(\sigma_e, \pi_{X \cup \{y\}} \mu_e)
  \models \apEq{y}{f(X)}$.

	Since $X \cap \RanVar \subseteq (\RV(c) \cap \RanVar) \setminus \MV(c) \subseteq S_1
	\setminus \MV(c)$ and $y \in \MV(c)$,
	\begin{align}
		\pi_{X \cup \{y\}}  \mu_e &= \pi_{X \cup \{y\}} \pi_{ S_1 \cup \MV(c)} \denot{c} \mu \notag\\
																																				&= \pi_{X \cup \{y\}} \dbind(\pi_{S_1}  \mu, m \mapsto F(\p_{R_1} \sigma \bowtie \p_{R_2} m) \otimes \dunit(\p_{S_1 \setminus M_2} m)) \notag \\
																																				&=\dbind(\pi_{S_1}  \mu,  m \mapsto \pi_{\{y\}} F(\p_{R_1} \sigma \bowtie \p_{R_2} m) \otimes \dunit(\p_{X \cap \RanVar} m)) 																										\label{eq:projecty}
	\end{align}

  Since $(\sigma_e, \pi_{X \cup \{y\}}  \mu_e)  \models \apEq{y}{f(X)}$, for every $m_e$ in the support of $ \pi_{X \cup \{y\}}  \mu_e$,
	we have $\Sem{y} (\sigma_e, m_e) = \Sem{f(X)} (\sigma_e, m_e)$.
	By~\Cref{eq:projecty}, a memory $m_e$ is in the support of $ \pi_{X \cup \{y\}}  \mu_e$ if and only if there exists some $m, m_F$
	such that $m$ is in the support of $\mu$, and $m_F$ is in the support of $\pi_{\{y\}} F(\p_{R_1} \sigma \bowtie \p_{R_2} m)$,
and $m_e = m_F \bowtie \p_{X \cap \RanVar} m$.
	Thus, the condition we have is:
	for every  $m$ is in the support of $\mu$ and $m_F$ is in the support of $\pi_{\{y\}} F(\p_{R_1} \sigma \bowtie \p_{R_2} m)$,
	\[ \Sem{y} (\sigma_e,m_F \bowtie \p_{X \cap \RanVar} m )
	= \Sem{f(X)} (\sigma_e, m_F \bowtie \p_{X \cap \RanVar} m ) . \]
Since $f$ does not depend on states and $X$ do not depend on $m_F$,
we also have that $\Sem{f(X)} (\sigma_e, m_F \bowtie \p_{X \cap \RanVar} m ) =   \Sem{f(X)} (\sigma, m)$.

	If $y \in \RanVar$, then $\Sem{y} (\sigma_e,m_F \bowtie \p_{X \cap \RanVar} m) = m_F(y)$,
	so it must $m_F(y) = \Sem{f(X)} (\sigma, m) $.
		so although $F$ is a randomized function according to its type,
	for any $m$ in the support of $\mu$,
$\pi_{\{y\}} F(\p_{R_1} \sigma \bowtie \p_{R_2} m) $  is a Dirac distribution:
\[ \pi_{\{y\}} F(\p_{R_1} \sigma \bowtie \p_{R_2} m) = \delta(\Sem{f(X)} (\sigma, m)). \]
Fixing $\sigma$, then there exists $f'$ such that $ \Sem{f(X)} (\sigma, m) = f'(\p_{X \cap \RanVar} m)$
and $f'$ is monotone as $f$ is monotone. Since $X \cap \RanVar \subseteq S_1 $,
we can also make $f'$ to have type $\Mem[S_1] \to \Val$.

Thus,
	\begin{align*}
							\pi_{\{y\} \cup S_2} \mu_e
							&= \dbind(\pi_{S_1 \cup S_2}  \mu, (m_1, m_2) \mapsto  \pi_y F(\p_{R_1} \sigma \bowtie \p_{R_2} m_1) \otimes \dunit(m_2))\\
							&= \dbind(\pi_{S_1 \cup S_2}  \mu, (m_1, m_2) \mapsto  \dunit(\Sem{f(X)} (\sigma, m_1)) \otimes \dunit(m_2))\\
							&= \dbind(\pi_{S_1 \cup S_2}  \mu, (m_1, m_2) \mapsto  \dunit(f'(m_1)) \otimes \dunit(m_2)) .
	\end{align*}
	Let $	\mu_c = \pi_{\{y\} \cup S_2} (\mu_e)$.
	We will then show that
	$(\sigma_e, \mu_c) \models \apIn{y} \sepand \eta. $

	Let $g_1 = \pi_{y} (\mu_e)$, $g_2 = \pi_{S_2} \mu $,
	it suffices to show that $\mu_c \in g_1 \oplus g_2$.
	and $g_1 \models \apIn{y} $,
	$g_2 \models \eta$:
	\begin{itemize}
    \item $(\sigma_e, \mu_e) \models \apEq{y}{f(X)}$, so $(\sigma_e, \mu_e) \models \apIn{y}$.
			By restriction, $(\sigma_e, g_1) \models \apIn{y}$.
		\item $g_2 = \pi_{S_2} \mu = \pi_{FV(\eta)} \pi_{dom(k_2)} \mu = \pi_{FV(\eta)} k_2 $,
			so $(\sigma, g_2) \models \eta$.
			By~\Cref{lemma:det_restriction}, for any $m \in \Mem[\DetVar \setminus \FV(\eta)]$,
			$(\p_{\FV(\eta)} \sigma \bowtie m, g_2) \models \eta$.
			Since $	\sigma_e = G(\p_{R_1} \sigma) \bowtie \p_{\DetVar \setminus M_1} \sigma $,
			and $\FV(\eta) \cap \MV(c) = \emptyset$ implies that $\FV(\eta) \subseteq \DetVar \setminus M_1$,
			$(\sigma_e, g_2) \models \eta$.

		\item
			First,
			$\pi_{\{ y\}} \mu_c = \pi_{\{ y\}} \mu_e = g_1$,
			and $\pi_{S_2} \mu_c = \pi_{S_2} \mu_e = \pi_{S_2} \mu = g_2$.

			Second, $\pi_{dom(k_1) \cup dom(k_2)} \mu \in k_1 \oplus k_2$
		 implies that $\pi_{dom(k_1) \cup dom(k_2)} \mu$ is
			$\mathcal{S}' \cup \mathcal{T}'$-\CPNA for any $\mathcal{S}', \mathcal{T}'$ such that $k_1$ is $\mathcal{S}'$-\CPNA, $k_2$ is $\mathcal{T}'$-\CPNA.
			Because $\pi_{S_1} \mu$ is always $\{S_1\}$-\CPNA,
			$\pi_{S_1 \cup S_2}\mu$ is $\{S_1\} \cup \mathcal{T}$-\CPNA for  any $\mathcal{T}$ such that $g_2$ is $\mathcal{T}$-\CPNA.
			Recall that
			\begin{align*}
				\mu_c = \dbind(\pi_{S_1 \cup S_2}  \mu, (m_1, m_2) \mapsto  \dunit(f'(m_1)) \otimes \dunit(m_2)) .
			\end{align*}
			Thus, by the monotonicity map closure~\Cref{lemma:monotoneclosureCPNA},
			$\mu_c$ is $\{y\} \cup \mathcal{T}$ -\CPNA for  any $\mathcal{T}$ such that $g_2$ is $\mathcal{T}$-\CPNA.
			Thus, $\mu_c \in g_1 \oplus g_2$, and therefore $(\sigma_e, \mu_c) \in (\sigma_e, g_1) \oplus (\sigma_e, g_2) $.
	\end{itemize}
	Therefore, $(\sigma_e, \mu_c) \models \apIn{y} \indand \eta$.

 By persistence,  $(\sigma_e, \mu_e) \models \apIn{y} \indand \eta$.
	\end{description}
\end{proof}

\section {Completeness of $M$-BI}
\label{sec:app:completeness}

\subsection{$\idxset$-BI Algebras}

\begin{definition}[BI Algebra] \label{def:bi-algebra}
  An BI algebra is an algebra $\mathcal{A} = (A, \land, \lor, \rightarrow, \top, \bot, \sepand, \sepimp, \top^{\sepand})$ such that
  \begin{itemize}
    \item $(A, \land, \lor, \rightarrow, \top, \bot)$ is a Heyting algebra
    \item $(A, \sepand, \top^{\sepand})$ is a commutative monoid
    \item $a \sepand b \leq c$ if and only if $a \leq b \sepimp c$
  \end{itemize}
  where $\leq$ is the ordering associated with the Heyting algebra.
\end{definition}

\begin{definition}[$\idxset$-BI Algebra] \label{def:mbi-algebra}
  An $\idxset$-BI algebra is an algebra $\mathcal{A} = (A, \land, \lor, \rightarrow, \top, \bot, \sepand_{\idx \in \idxset}, \sepimp_{\idx \in \idxset}, \top^{\sepand}_{\idx \in \idxset})$ such that
  \begin{itemize}
    \item For each $\idx \in \idxset$, the structure $(A, \land, \lor, \rightarrow, \top, \bot, \sepand_{\idx}, \sepimp_{\idx}, \top^{\sepand}_{\idx})$ is a BI-algebra
    \item If $\idx_1 \leq \idx_2$ then $a \sepand_{\idx_1} b \leq a \sepand_{\idx_2} b$
  \end{itemize}
\end{definition}

We can interpret $M$-BI in an $M$-BI algebra $A$. Let $\valuation :
\mathcal{AP} \rightarrow A$ be a map assigning atomic propositions to elements
of $A$. We extend $\valuation$ to an interpretation $\llbracket - \rrbracket_{\valuation}$
mapping propositions to elements of $A$, defined by:

\newcommand{\alginterp}[1]{\llbracket #1 \rrbracket}

\begin{align*}
  \alginterp{p} &= \valuation(p) \\
  \alginterp{\top} &= \top \\
  \alginterp{\sepid_\idx} &= \top^{\sepand}_{\idx} \\
  \alginterp{\bot} &= \bot \\
  \alginterp{P \land Q}_\valuation &= \alginterp{P}_\valuation \land \alginterp{Q}_\valuation \\
  \alginterp{P \lor Q}_\valuation &= \alginterp{P}_\valuation \lor \alginterp{Q}_\valuation \\
  \alginterp{P \rightarrow Q}_\valuation &= \alginterp{P}_\valuation \rightarrow \alginterp{Q}_\valuation \\
  \alginterp{P \sepand_{\idx} Q}_\valuation &= \alginterp{P}_\valuation \sepand_{\idx} \alginterp{Q}_\valuation \\
  \alginterp{P \sepimp_{\idx} Q}_\valuation &= \alginterp{P}_\valuation \sepimp_{\idx} \alginterp{Q}_\valuation \\
\end{align*}

\begin{theorem}[Algebraic Soundness] If $P \vdash Q$ is provable, then for all $\valuation$, $\alginterp{P}_\valuation \leq \alginterp{Q}_\valuation$.\end{theorem}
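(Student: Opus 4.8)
The plan is to prove the statement by induction on the height of the derivation of $P \vdash Q$ in the Hilbert system of \Cref{fig:hilbert}. Since the interpretation $\alginterp{-}_\valuation$ is defined compositionally, each proof rule translates into an inequality between the interpretations of its premises and its conclusion, and I would verify each such inequality using only the axioms of an $\idxset$-BI algebra (\Cref{def:mbi-algebra}). An $\idxset$-BI algebra is a Heyting algebra equipped, for each $\idx$, with a commutative residuated monoid structure $(\sepand_{\idx}, \top^{\sepand}_{\idx}, \sepimp_{\idx})$, and these are related by the inclusion condition; so the soundness of every rule reduces to exactly one of these ingredients.

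First I would dispatch the purely additive rules. The identity, $\top$, and $\bot$ rules hold because $\leq$ is a partial order with top and bottom; the $\land$- and $\lor$-rules express the universal properties of meet and join in the underlying lattice; and $\rightarrow$-I and $\rightarrow$-E follow from the Heyting adjunction $a \land b \leq c \iff a \leq b \rightarrow c$ (together with the meet-semilattice laws, which give modus ponens via $(b \rightarrow c) \land b \leq c$). None of these involve the index $\idx$, so they go through verbatim as in the soundness proof for ordinary BI.

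Next I would handle the multiplicative rules, working inside the fixed residuated monoid indexed by a single $\idx$. The $\sepimp$ and $\sepimp$-E rules are the two directions of the residuation $a \sepand_{\idx} b \leq c \iff a \leq b \sepimp_{\idx} c$; the $\sepand$-\textsc{Unit} rule $P \dashv\vdash P \sepand_{\idx} \sepid_{\idx}$ is the monoid unit law $a = a \sepand_{\idx} \top^{\sepand}_{\idx}$; and $\sepand$-\textsc{Comm} and $\sepand$-\textsc{Assoc} are commutativity and associativity of the monoid operation. The $\sepand$-\textsc{Conj} rule requires that $\sepand_{\idx}$ be monotone in both arguments, i.e. $a \leq a'$ and $b \leq b'$ imply $a \sepand_{\idx} b \leq a' \sepand_{\idx} b'$; I would derive this from residuation (the standard fact that a residuated operation is order-preserving) and then chain the two monotonicity steps. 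Finally, the only genuinely new rule, $\sepand$-\textsc{Inclusion}, is sound by the defining inclusion axiom of an $\idxset$-BI algebra: from $\idx_1 \leq \idx_2$ we get $a \sepand_{\idx_1} b \leq a \sepand_{\idx_2} b$, instantiated at $a = \alginterp{P}_\valuation$ and $b = \alginterp{Q}_\valuation$.

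I expect no serious obstacle: the argument is a routine rule-by-rule induction, and the one mildly nonobvious step is deriving monotonicity of $\sepand_{\idx}$ from residuation in the $\sepand$-\textsc{Conj} case. The only subtlety worth flagging is bookkeeping the index $\idx$ consistently: each multiplicative rule lives entirely within one monoid $(\sepand_{\idx}, \sepimp_{\idx}, \top^{\sepand}_{\idx})$, and the inclusion axiom is the sole point at which two distinct indices interact. Once this is observed, the induction closes immediately.
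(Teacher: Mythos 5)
Your proposal is correct and follows essentially the same route as the paper: a rule-by-rule induction on the derivation in which all the standard BI cases reduce to the Heyting-algebra and residuated-commutative-monoid axioms (the paper simply defers these to Docherty's thesis), and the one genuinely new case, $\sepand$-\textsc{Inclusion}, is discharged by the defining inclusion axiom $a \sepand_{\idx_1} b \leq a \sepand_{\idx_2} b$ of an $\idxset$-BI algebra. Your explicit derivation of monotonicity of $\sepand_{\idx}$ from residuation for the $\sepand$-\textsc{Conj} case is a correct filling-in of a step the paper leaves implicit.
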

\begin{proof}
  By induction on the derivation of $P \vdash Q$. The cases for everything except the \textsc{Inclusion} rules show follow from the exact same argument as for standard BI and BI-algebra, as in Simon Docherty's thesis.

  For the remaining case of $\sepand$-\textsc{inclusion}, let $\idx_1 \leq \idx_2$. Then we have
    \[\alginterp{P \sepand_{\idx_1} Q}_\valuation = \alginterp{P}_\valuation \sepand_{\idx_1} \alginterp{Q}_\valuation \leq \alginterp{P}_\valuation \sepand_{\idx_2} \alginterp{Q}_\valuation \leq \alginterp{P \sepand_{\idx_2} Q}_\valuation\]
\end{proof}

\begin{definition}[Lindenbaum-Tarski Algebra] The Lindenbaum-Tarski algebra corresponding to $\idxset$-BI is the set of all equivalence classes of interprovable propositions. That is, define the equivalence relation $P \sim Q$ as $P \vdash Q$ and $Q \vdash P$. We will show that the set of equivalence classes of this relation forms an $\idxset$-BI algebra. Let $[P]_\sim$ be the equivalence class of $P$ under $\sim$.
  Take $\sepid_{\idx}$, $\top$, and $\bot$ to be $[\sepid_{\idx}]_{\sim}$, $[\top]_{\sim}$, and $[\bot]_{\sim}$, respectively. Then we define:
  \begin{align*}
    [P]_{\sim} \land [Q]_\sim &= [P \land Q]_\sim \\
    [P]_{\sim} \lor [Q]_\sim &= [P \lor Q]_\sim \\
    &\vdots \\
    [P]_{\sim} \sepand_{\idx} [Q]_\sim &= [P \sepand_{\idx} Q]_\sim \\
    [P]_{\sim} \sepimp_{\idx} [Q]_\sim &= [P \sepimp_{\idx} Q]_\sim \\
  \end{align*}
  The fact that these operations are well-defined and form a $\idxset$-BI algebra follows almost entirely from the corresponding result for normal BI outlined in Docherty's thesis. The only remaining case is to check that if $\idx_1 \leq \idx_2$ then $[P]_{\sim} \sepand_{\idx_1} [Q]_{\sim} \leq [P]_{\sim} \sepand_{\idx_2} [Q]_\sim$. We have
  \begin{align*}
    [P]_{\sim} \sepand_{\idx_1} [Q]_\sim
      &= [P \sepand_{\idx_1} Q]_{\sim} \\
						&\leq [P \sepand_{\idx_2} Q]_{\sim} \tag{Since $P \sepand_{\idx_1} Q \leq P \sepand_{\idx_2} Q$} \\
      &= [P]_{\sim} \sepand_{\idx_2} [Q]_{\sim}
  \end{align*}
\end{definition}

\begin{lemma} $P \vdash Q$ if and only if $[P]_{\sim} \leq [Q]_{\sim}$. \end{lemma}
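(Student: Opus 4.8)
The plan is to reduce the biconditional to the definition of the Heyting-algebra order $\leq$ on the Lindenbaum-Tarski algebra, and then discharge each direction with a pair of the basic proof rules. Recall that in any Heyting algebra the partial order is recovered from the meet by $a \leq b$ iff $a \land b = a$. Instantiating this in the Lindenbaum-Tarski algebra and using the definition $[P]_{\sim} \land [Q]_{\sim} = [P \land Q]_{\sim}$, we obtain that $[P]_{\sim} \leq [Q]_{\sim}$ is equivalent to $[P \land Q]_{\sim} = [P]_{\sim}$, i.e.\ to $P \land Q \sim P$, which by definition of $\sim$ unfolds to the two provabilities $P \land Q \vdash P$ and $P \vdash P \land Q$.

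The first of these, $P \land Q \vdash P$, holds unconditionally: it follows from reflexivity $P \land Q \vdash P \land Q$ together with the $\land$-E rule. Hence the only real content is the equivalence of $P \vdash Q$ with $P \vdash P \land Q$. First I would prove the forward direction: assuming $P \vdash Q$, I combine it with the axiom $P \vdash P$ via the rule $\land$-I1 to obtain $P \vdash P \land Q$, which together with $P \land Q \vdash P$ gives $P \sim P \land Q$ and hence $[P]_{\sim} \leq [Q]_{\sim}$. For the backward direction, assume $[P]_{\sim} \leq [Q]_{\sim}$; then in particular $P \vdash P \land Q$, and applying $\land$-E (selecting the second conjunct) yields $P \vdash Q$ directly.

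There is essentially no obstacle here: the argument is a routine unwinding of the order on a Lindenbaum-Tarski algebra, and both directions follow from $\land$-I and $\land$-E alone, with no need to invoke the multiplicative connectives $\sepand_{\idx}$ and $\sepimp_{\idx}$, the \textsc{Inclusion} rules, or even \textsc{Cut}. The one point deserving a little care is the bookkeeping that identifies $\leq$ with provability --- making precise that the algebra's order is the meet-order and that $[P]_{\sim} \land [Q]_{\sim} = [P \land Q]_{\sim}$ by the definition of the Lindenbaum-Tarski operations --- but this is immediate from the construction established just above the lemma, and carries over verbatim from the analogous fact for ordinary BI.
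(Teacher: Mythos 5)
Your proof is correct and takes essentially the same route as the paper's: both identify $[P]_{\sim} \leq [Q]_{\sim}$ with $[P \land Q]_{\sim} = [P]_{\sim}$ via the Heyting meet-order and discharge each direction with the additive conjunction rules (the paper's only cosmetic differences are that it cites the earlier well-definedness argument for the forward direction and uses $P \land Q \vdash Q$ plus transitivity rather than $\land$-E for the backward one).
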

\begin{proof}
  In the proof that the Lindenbaum-Tarski algebra indeed formed an $\idxset$-BI algebra, we already showed that $P \vdash Q$ implies $[P]_{\sim} \leq [Q]_{\sim}$. Consider the opposite direction. Then we have that $[P]_{\sim} \land [Q]_{\sim} = [P]_{\sim}$, hence $[P \land Q]_{\sim} = [P]_{\sim}$. This implies that $P \land Q \dashv\vdash P$.  Since $P \land Q \vdash Q$, by transitivity we have $P \vdash Q$.
\end{proof}

\begin{theorem}[Algebraic Completeness] If $\alginterp{P}_\valuation \leq \alginterp{Q}_\valuation$ for all $\valuation$, then $P \vdash Q$.\end{theorem}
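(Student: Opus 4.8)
The plan is to finish the algebraic metatheory by evaluating at the canonical algebra. Since we have already built the Lindenbaum--Tarski algebra $\mathcal{L}$ and verified that it is an $\idxset$-BI algebra, and since we have the bridge lemma that $P \vdash Q$ iff $[P]_\sim \leq [Q]_\sim$, the only missing ingredient is a single valuation on $\mathcal{L}$ that reflects provability. First I would define the \emph{canonical valuation} $\valuation_L : \mathcal{AP} \to \mathcal{L}$ by $\valuation_L(p) = [p]_\sim$.

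The key step is a \emph{truth lemma}: for every $\idxset$-BI formula $P$, one has $\alginterp{P}_{\valuation_L} = [P]_\sim$. I would prove this by induction on the structure of $P$. The base case $P = p \in \mathcal{AP}$ is exactly the definition of $\valuation_L$; the cases $\top$, $\bot$, and $\sepid_\idx$ hold because $\alginterp{-}$ sends them to the designated elements $[\top]_\sim$, $[\bot]_\sim$, $[\sepid_\idx]_\sim$ of $\mathcal{L}$. Each connective case then follows because the operations on $\mathcal{L}$ were defined to commute with the corresponding syntactic connective; for instance
\[
  \alginterp{P \sepand_\idx Q}_{\valuation_L}
  = \alginterp{P}_{\valuation_L} \sepand_\idx \alginterp{Q}_{\valuation_L}
  = [P]_\sim \sepand_\idx [Q]_\sim
  = [P \sepand_\idx Q]_\sim,
\]
using the inductive hypothesis in the middle equality and the definition of $\sepand_\idx$ on $\mathcal{L}$ in the last; the cases for $\land$, $\lor$, $\rightarrow$, and $\sepimp_\idx$ are identical in form.

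With the truth lemma in hand the theorem is immediate: assuming $\alginterp{P}_\valuation \leq \alginterp{Q}_\valuation$ for all valuations $\valuation$, I would specialize to $\valuation = \valuation_L$ to obtain $[P]_\sim = \alginterp{P}_{\valuation_L} \leq \alginterp{Q}_{\valuation_L} = [Q]_\sim$, and then conclude $P \vdash Q$ by the bridge lemma. I do not expect a genuine obstacle here: the substantive work---checking that $\mathcal{L}$ is well-defined and is an $\idxset$-BI algebra, in particular that $[P]_\sim \sepand_{\idx_1} [Q]_\sim \leq [P]_\sim \sepand_{\idx_2} [Q]_\sim$ whenever $\idx_1 \leq \idx_2$---was already discharged in the construction of the Lindenbaum--Tarski algebra. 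The indexed operators $\sepand_\idx$, $\sepimp_\idx$, $\sepid_\idx$ are the only features beyond ordinary BI, and since they are handled uniformly by the same inductive template, the argument is a routine instantiation of the standard Lindenbaum--Tarski method rather than anything new.
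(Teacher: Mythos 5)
Your proposal is correct and follows essentially the same route as the paper: instantiate at the canonical valuation $\valuation(p) = [p]_\sim$ on the Lindenbaum--Tarski algebra, observe that $\alginterp{P}_\valuation = [P]_\sim$, and conclude via the lemma that $[P]_\sim \leq [Q]_\sim$ iff $P \vdash Q$. The only difference is that you spell out the inductive truth lemma explicitly, which the paper states without proof.
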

\begin{proof}
  Consider the valuation $\valuation$ which maps $p \in \mathcal{AP}$ to $[p]_\sim$. Then $\alginterp{P}_\valuation = [P]_{\sim}$ and $\alginterp{Q}_\valuation = [Q]_{\sim}$. Hence we have
  $[P]_{\sim} \leq [Q]_{\sim}$ which implies $P \vdash Q$.
\end{proof}

\subsection{$\idxset$-BI Frames}
$\idxset$-BI formulas are interpreted on Down-Closed $\idxset$-BI frames.
We define a complex algebra on \LOGIC frames.

\begin{definition}[Complex Algebra] \label{def:complex-algebra}
  If $\mathcal{X}$ is an $\idxset$-BI frame, then the \emph{complex algebra} of $\mathcal{X}$, written $\complex(\mathcal{X})$ is the structure $(\mathcal{P}_{\sqsubseteq}(X), \cap, \cup, \rightarrow_{\mathcal{X}}, X, \emptyset, \sepand_{\idx \in \idxset}, \sepimp_{\idx \in \idxset}, E_{\idx \in \idxset})$ where
  \begin{align*}
    \mathcal{P}_{\sqsubseteq}(X) &= \{ A \subseteq X \ | \ a \in A \land a \sqsubseteq b \rightarrow b \in A \} \\
    A \rightarrow_{\mathcal{X}} B &= \{ a \ | \ \forall b.\, a \sqsubseteq b \land b \in A \rightarrow b \in B \} \\
    A \sepand_{\idx} B &= \{ x \ | \ \exists w, y, z.\, w \sqsubseteq x \land w \in y \oplus_{\idx} z \land y \in A \land z \in B \} \\
    A \sepimp_{\idx} B &= \{ x \ | \ \forall w, y, z.\, (x \sqsubseteq w \land z \in w \oplus_{\idx} y \land y \in A) \rightarrow z \in B \}
  \end{align*}

\end{definition}

\begin{lemma} If $\mathcal{X}$ is an $\idxset$-BI frame, then $\complex(\mathcal{X})$ is an $\idxset$-BI algebra. \end{lemma}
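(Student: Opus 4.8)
The plan is to split the definition of an $\idxset$-BI algebra into its two requirements and dispatch them separately, observing that almost all of the work is inherited from the single-index case. For the first requirement---that for each fixed $\idx \in \idxset$ the $\idx$-indexed reduct $(\mathcal{P}_{\sqsubseteq}(X), \cap, \cup, \rightarrow_{\mathcal{X}}, X, \emptyset, \sepand_{\idx}, \sepimp_{\idx}, E_{\idx})$ is a BI algebra---I would note that by the definition of an $\idxset$-BI frame, each component $(X, \sqsubseteq, \oplus_{\idx}, E_{\idx})$ is itself a Down-Closed BI frame. Moreover, comparing Definition (Complex Algebra) with the single-index operations, this reduct is literally the complex algebra of that one BI frame. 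Hence the claim is exactly the standard result that the complex algebra of a BI frame is a BI algebra, as established in Docherty's thesis, and no new argument is needed beyond citing that theorem.

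Concretely, the underlying poset $(\mathcal{P}_{\sqsubseteq}(X), \subseteq)$ of up-closed subsets forms a complete Heyting algebra with meet $\cap$, join $\cup$, top $X$, bottom $\emptyset$, and Heyting implication $\rightarrow_{\mathcal{X}}$; this structure does not depend on $\idx$. The (Commutativity), (Associativity), and (Unit Existence/Coherence/Closure) frame conditions for $\oplus_{\idx}$ and $E_{\idx}$ translate into commutativity, associativity, and unitality of $\sepand_{\idx}$ with unit $E_{\idx}$, while the residuation law $A \sepand_{\idx} B \subseteq C$ iff $A \subseteq B \sepimp_{\idx} C$ follows from the definitions of $\sepand_{\idx}$ and $\sepimp_{\idx}$ together with (Down-Closed). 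These are the same calculations as in the single-index BI case, so I would not reproduce them in detail.

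The genuinely new obligation is the second requirement: if $\idx_1 \leq \idx_2$ then $A \sepand_{\idx_1} B \subseteq A \sepand_{\idx_2} B$ for all up-closed $A, B$ (the algebra order being $\subseteq$). This I would verify directly. Take $x \in A \sepand_{\idx_1} B$; by definition there are $w, y, z$ with $w \sqsubseteq x$, $w \in y \oplus_{\idx_1} z$, $y \in A$, and $z \in B$. By the (Operation Inclusion) condition of the $\idxset$-BI frame, $y \oplus_{\idx_1} z \subseteq y \oplus_{\idx_2} z$, so $w \in y \oplus_{\idx_2} z$; the same triple $w, y, z$ then witnesses $x \in A \sepand_{\idx_2} B$. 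Hence the inclusion holds. The only possible obstacle is bookkeeping: ensuring that the reduct operations of $\complex(\mathcal{X})$ coincide exactly with the complex-algebra operations of each single BI frame $(X, \sqsubseteq, \oplus_{\idx}, E_{\idx})$, so that the cited theorem applies unchanged. Since the definitions of $\rightarrow_{\mathcal{X}}$, $\sepand_{\idx}$, and $\sepimp_{\idx}$ in Definition (Complex Algebra) are precisely the standard ones, this coincidence is immediate, and the proof is complete.
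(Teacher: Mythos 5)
Your proof is correct and follows essentially the same route as the paper: cite Docherty's result that the complex algebra of each single BI frame $(X, \sqsubseteq, \oplus_{\idx}, E_{\idx})$ is a BI algebra, then verify the ordering condition directly by taking witnesses $w, y, z$ for $x \in A \sepand_{\idx_1} B$ and applying (Operation Inclusion) to get $w \in y \oplus_{\idx_2} z$. No gaps.
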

\begin{proof}
  Each $(X, \sqsubseteq, \oplus_\idx, E_{\idx})$ is a
  BI frame. \citet{docherty:thesis} shows that the complex of a BI frame is a
  BI algebra. Thus the only thing to check is that the ordering on $\sepand$
  respects the ordering on $\idxset$.  Let $\idx_1 \leq \idx_2$. We must show
  that $A \sepand_{\idx_1} B \subseteq A \sepand_{\idx_2} B$. Let $x \in A
  \sepand_{\idx_1} B$. Then there exists $w, y, z$ such that $w \sqsubseteq x$
  and $w \in y \oplus_{\idx_1} z$, with $y \in A$ and $z \in B$.
  by the \text{Operation Inclusion} property, we have that $w \in y \oplus_{\idx_2} z$, hence
  $x \in A \sepand_{\idx_2} B$.
\end{proof}

\begin{definition}[Prime Filter]
  If $(L, \land, \lor)$ is a bounded distributive lattice, a prime filter on $F$ is a non-empty proper subset of $A$ such that:
  \begin{itemize}
  \item If $x \in F$ and $x \leq y $ then $y \in F$.
  \item If $x \in F$ and $y \in F$ then $x \land y \in F$.
  \item If $x \lor y \in F$ then $x \in F$ or $y \in F$.
  \end{itemize}
\end{definition}
We write $\prf(L)$ for the set of prime filters on $L$.

\begin{definition}[Prime Filter Frame]
  If $\mathcal{A}$ is an $\idxset$-BI algebra, then the prime filter $\idxset$-frame of $\mathcal{A}$ is defined as $\prf(\mathcal{A}) = (\prf(A), \subseteq, \oplus_{\idx \in \idxset}, E_{\idx \in \idxset})$
  where
  \begin{align*}
    F_1 \oplus_{\idx} F_2 &= \{ F \in \prf(A) \ | \ \forall a_1 \in F_1.\, \forall a_2 \in F_2.\, a_1 \sepand_{\idx} a_2 \in F \} \\
    E_{\idx} &= \{ F \in \prf(A) \ | \ \top^{\sepand}_{\idx} \in F \}
  \end{align*}
\end{definition}

\begin{lemma}
  If $\mathcal{A}$ is an $\idxset$-BI algebra, then $\prf(\mathcal{A})$ is an $\idxset$-BI frame.
\end{lemma}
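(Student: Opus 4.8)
The plan is to verify the two defining requirements of an $\idxset$-BI frame (\Cref{def:mbi-frame}) for the structure $\prf(\mathcal{A}) = (\prf(A), \subseteq, \oplus_{\idx \in \idxset}, E_{\idx \in \idxset})$: first, that for each fixed $\idx$ the single-index reduct $(\prf(A), \subseteq, \oplus_{\idx}, E_{\idx})$ is a (Down-Closed) BI frame; and second, that (Operation Inclusion) holds with respect to the preorder $\leq$ on $\idxset$. The first requirement will be outsourced to the known single-conjunction result, and the second is the only genuinely new obligation.

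For the first requirement, I would appeal to the corresponding result for ordinary BI from \citet{docherty:thesis}, namely that the prime filter frame of a BI algebra is a BI frame. By \Cref{def:mbi-algebra}, for each fixed $\idx \in \idxset$ the reduct $(A, \land, \lor, \rightarrow, \top, \bot, \sepand_{\idx}, \sepimp_{\idx}, \top^{\sepand}_{\idx})$ is an ordinary BI algebra. The operations $\oplus_{\idx}$ and the unit set $E_{\idx}$ of $\prf(\mathcal{A})$ are by definition precisely the prime-filter frame operations built from this BI algebra. Hence Docherty's theorem applies verbatim and delivers all six frame conditions---(Down-Closed), (Commutativity), (Associativity), (Unit Existence), (Unit Coherence), (Unit Closure)---for the $\idx$-indexed reduct, with no further computation required.

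The remaining obligation is (Operation Inclusion): for $\idx_1 \leq \idx_2$, I must show $F_1 \oplus_{\idx_1} F_2 \subseteq F_1 \oplus_{\idx_2} F_2$ for all prime filters $F_1, F_2 \in \prf(A)$. I would fix an arbitrary $F \in F_1 \oplus_{\idx_1} F_2$ and arbitrary $a_1 \in F_1$, $a_2 \in F_2$. By the definition of $\oplus_{\idx_1}$ we have $a_1 \sepand_{\idx_1} a_2 \in F$. Since $\mathcal{A}$ is an $\idxset$-BI algebra and $\idx_1 \leq \idx_2$, \Cref{def:mbi-algebra} guarantees the algebraic inequality $a_1 \sepand_{\idx_1} a_2 \leq a_1 \sepand_{\idx_2} a_2$; because $F$ is a prime filter it is upward closed, so $a_1 \sepand_{\idx_2} a_2 \in F$. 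As $a_1, a_2$ were arbitrary, $F$ satisfies the defining condition for membership in $F_1 \oplus_{\idx_2} F_2$, giving the inclusion.

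I expect essentially no obstacle: the per-index frame conditions are inherited from the established BI result, and the inclusion condition collapses to the single observation that prime filters are upward closed and therefore respect the algebraic ordering $\sepand_{\idx_1} \leq \sepand_{\idx_2}$ supplied by \Cref{def:mbi-algebra}. The one bookkeeping point that deserves explicit mention is that the operations $\oplus_{\idx}$ and $E_{\idx}$ recorded in $\prf(\mathcal{A})$ coincide with the frame operations of each single-index BI algebra, so that Docherty's theorem may be invoked without modification; this is immediate by comparing the two definitions.
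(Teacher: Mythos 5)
Your proof is correct and follows exactly the same route as the paper's: delegate the six per-index frame conditions to Docherty's prime-filter theorem for ordinary BI, then verify (Operation Inclusion) by combining the algebraic inequality $a_1 \sepand_{\idx_1} a_2 \leq a_1 \sepand_{\idx_2} a_2$ with the upward closure of prime filters. No gaps.
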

\begin{proof}
  \citet{docherty:thesis} shows that for each $\idx \in \idxset$, $(\prf(A), \subseteq, \oplus_{\idx}, E_{\idx})$ is a BI frame. Therefore, we only need to check the Operation Inclusion property. Let $\idx_1 \leq \idx_2$
  and let $F, G, H \in \prf(A)$ with $F \in G \oplus_{\idx_1} H$.
  Let $a \in G$ and $b \in H$. Then $a \sepand_{\idx_1} b \in F$. Since $a \sepand_{\idx_1} b \leq a \sepand_{\idx_2} b$, and filters are upward-closed,
  $a \sepand_{\idx_2} b \in F$, hence $F \in G \oplus_{\idx_2} H$.

\end{proof}

\begin{theorem}[Representation Theorem]
  Every $\idxset$-BI algebra is isomorphic to a subalgebra of a complex algebra. In particular, if $\mathcal{A}$ is an $\idxset$-BI algebra,
  then the map $\theta : \mathcal{A} \rightarrow \complex(\prf(\mathcal{A}))$ defined as
  \[ \theta(x) = \{ F \in \prf(\mathcal{A}) \ | \ x \in F \} \]
  is an embedding.
\end{theorem}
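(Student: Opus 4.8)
The plan is to reduce this $\idxset$-indexed representation theorem to Docherty's representation theorem for ordinary BI, applied one index at a time. The crucial observation is that $\theta$ is defined purely in terms of the underlying (Heyting, hence bounded distributive lattice) structure of $\mathcal{A}$: it sends $x$ to the set of prime filters containing $x$, with no reference to any multiplicative connective or to the preorder on $\idxset$. Consequently, whether $\theta$ preserves a given operation can be checked independently for each operation, and in particular index by index for the multiplicative fragment.

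First I would check that $\theta$ lands in $\complex(\prf(\mathcal{A}))$, i.e. that each $\theta(x)$ is $\subseteq$-up-closed in $\prf(\mathcal{A})$: if $x \in F$ and $F \subseteq G$ then $x \in G$, which is immediate from upward-closure of filters. Next, injectivity follows from the prime filter theorem for bounded distributive lattices: if $x \neq y$ then, without loss of generality $x \not\leq y$, and there is a prime filter $F$ with $x \in F$ and $y \notin F$, whence $\theta(x) \neq \theta(y)$. For the Heyting operations $\land, \lor, \rightarrow, \top, \bot$, the fact that $\theta$ is a homomorphism is exactly the classical Stone/Esakia-style representation and is inherited verbatim from Docherty's development.

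The substantive content is that $\theta$ preserves the multiplicative connectives. For each fixed $\idx$, the structure $(A, \land, \lor, \rightarrow, \top, \bot, \sepand_{\idx}, \sepimp_{\idx}, \top^{\sepand}_{\idx})$ is by definition a BI algebra, and the frame $\prf(\mathcal{A})$ restricted to the index $\idx$, namely $(\prf(A), \subseteq, \oplus_{\idx}, E_{\idx})$, is precisely the prime filter BI frame of that BI algebra; the previous lemma already established that the whole $\prf(\mathcal{A})$ is an $\idxset$-BI frame, and its $\oplus_{\idx}$ and $E_{\idx}$ agree with the single-index construction. Hence Docherty's BI representation theorem applies directly to give $\theta(x \sepand_{\idx} y) = \theta(x) \sepand_{\idx} \theta(y)$, $\theta(x \sepimp_{\idx} y) = \theta(x) \sepimp_{\idx} \theta(y)$, and $\theta(\top^{\sepand}_{\idx}) = E_{\idx}$, for every $\idx \in \idxset$. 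Since $\theta$ preserves all Heyting operations together with every multiplicative operation of every index's BI algebra, and since the $\idxset$-BI-algebra signature consists of exactly these operations (the inclusion $\idx_1 \leq \idx_2$ is a side condition, not an operation, and thus imposes no additional homomorphism obligation), $\theta$ is an $\idxset$-BI-algebra embedding.

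I expect the main obstacle to be the harder inclusion in the preservation of $\sepand_{\idx}$, namely $\theta(x) \sepand_{\idx} \theta(y) \subseteq \theta(x \sepand_{\idx} y)$: given a prime filter $F$ with $F \in G \oplus_{\idx} H$ for $x \in G$, $y \in H$, one must produce the required containment, and this is where the prime filter extension lemmas do the real work. However, since this argument is entirely internal to a single BI algebra at index $\idx$, it is covered by Docherty's proof, and the only genuinely $\idxset$-specific remark needed is that the index-independence of the definition of $\theta$ lets all the per-index preservation facts coexist without any further coherence check.
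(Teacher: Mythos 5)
Your proposal is correct and follows essentially the same route as the paper: the paper's proof likewise invokes Docherty's single-index BI representation theorem for each $\idx \in \idxset$ and concludes from the index-independence of $\theta$ that it is injective and a homomorphism for all operations simultaneously. You simply spell out a few of the details (up-closure of $\theta(x)$, injectivity via the prime filter theorem, the observation that the inclusion condition imposes no homomorphism obligation) that the paper leaves implicit.
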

\begin{proof}
  \citet{docherty:thesis} proves that for each $\idx \in \idxset$, this map $\theta$ is an embedding of $(A, \land, \lor, \rightarrow, \top, \bot, \sepand_{\idx}, \sepimp_{\idx}, \top^{\sepand}_{\idx})$ as a BI algebra into the complex algebra, viewed as a BI algebra for the operations indexed by $\idx$. Hence, $\theta$ is injective and a homomorphism with respect to all of the $\idxset$-BI algebra operations.
\end{proof}

\begin{theorem}[Equivalence of Algebras and Frames]
  Let $\mathcal{A} = (A, \dots)$ be an $\idxset$-BI algebra and let $\valuation_{\textsf{a}} : \mathcal{AP} \rightarrow A$ be an interpretation of atomic propositions. Let $\mathcal{X} = (X, \dots)$ be an $\idxset$-BI frame and let $\valuation_{\textsf{f}} : \PROP \rightarrow \mathcal{P}(X)$ be a persistent valuation on $\mathcal{X}$. Let $\theta$ be the embedding from the previous result. Define the persistent valuation $\valuation_{\textsf{a}}' : \mathcal{AP} \rightarrow \mathcal{P}(\prf(A))$ and the interpretation $\valuation_{\textsf{f}}' : \mathcal{AP} \rightarrow \complex(\mathcal{X})$ by:
  \begin{align*}
    \valuation_{\textsf{a}}'(p) &= \theta(\valuation_{\textsf{a}}(p)) \\
    \valuation_{\textsf{f}}'(p) &= \valuation_{\textsf{f}}(p) .
  \end{align*}
Then we have
\begin{enumerate}
\item $x \models_{\valuation_{\textsf{f}}} P$ if and only if $x \in \alginterp{P}_{\valuation_{\textsf{f}}'}$
\item $F \models_{\valuation_{\textsf{a}}'} P$ if and only if $\alginterp{P}_{\valuation_{\textsf{a}}} \in F$ .
\end{enumerate}
\end{theorem}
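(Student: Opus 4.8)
The plan is to establish both biconditionals by a single structural induction on the formula $P$, treating the two claims as parallel ``truth lemmas'': claim~(1) is the representation lemma for the complex algebra of a frame, and claim~(2) is the representation lemma for the prime filter frame of an algebra. The key structural observation is that neither induction interacts with the pre-order $\leq$ on $\idxset$: every multiplicative connective $\sepand_\idx$, $\sepimp_\idx$, $\sepid_\idx$ is interpreted at a single fixed index $\idx$, and at that index the satisfaction clauses, the complex-algebra operations (\Cref{def:complex-algebra}), and the prime-filter-frame operations all coincide exactly with the corresponding BI constructions. Consequently, once the ordering-dependent coherence conditions have been discharged---which was already done in the lemmas showing $\complex(\mathcal{X})$ is an $\idxset$-BI algebra and $\prf(\mathcal{A})$ is an $\idxset$-BI frame---each inductive case reduces to the single-index BI truth lemma of \citet{docherty:thesis}, applied at the relevant $\idx$.

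Before the induction I would record the standard preliminary that $\models_{\valuation_{\textsf{f}}}$ is \emph{persistent}: if $x \models_{\valuation_{\textsf{f}}} P$ and $x \sqsubseteq y$ then $y \models_{\valuation_{\textsf{f}}} P$, proved by its own easy induction using the frame conditions; this guarantees that each truth set $\{x : x \models_{\valuation_{\textsf{f}}} P\}$ lies in the carrier $\mathcal{P}_{\sqsubseteq}(X)$ of $\complex(\mathcal{X})$. The atomic and additive cases are then routine. For claim~(1): the atomic case holds by $\valuation_{\textsf{f}}'(p) = \valuation_{\textsf{f}}(p)$; $\top, \bot, \land, \lor, \rightarrow$ match because $\complex(\mathcal{X})$ interprets them as $X, \emptyset, \cap, \cup$ and the frame-theoretic Heyting implication $\rightarrow_{\mathcal{X}}$, which is definitionally the satisfaction clause for $\rightarrow$; and $\sepid_\idx$ matches since $\alginterp{\sepid_\idx} = E_\idx$. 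For claim~(2): the atomic case is exactly the definition of $\theta$; $\top, \bot$ use that prime filters are proper and upward-closed; and $\land, \lor$ use, respectively, closure under meets together with upward closure, and primeness.

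The multiplicative cases carry the real content. For claim~(1) and $P \sepand_\idx Q$, the complex-algebra clause $A \sepand_\idx B = \{x : \exists w, y, z.\ w \sqsubseteq x \land w \in y \oplus_\idx z \land y \in A \land z \in B\}$ is syntactically the satisfaction clause for $\sepand_\idx$, so the induction hypothesis closes it immediately; for $P \sepimp_\idx Q$ I must reconcile the complex-algebra clause, which quantifies over $w \sqsupseteq x$, with the satisfaction clause, which does not. One inclusion is trivial (take $w = x$); for the other, given $x \sqsubseteq w$ and $z \in w \oplus_\idx y$ with $y \models P$, the (Down-Closed) condition yields $z' \sqsubseteq z$ with $z' \in x \oplus_\idx y$, and the satisfaction clause at $x$ gives $z' \models Q$, whence $z \models Q$ by persistence. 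For claim~(2), $P \sepand_\idx Q$ and $P \sepimp_\idx Q$ require the prime-filter existence/extension lemmas: one direction uses that $\oplus_\idx$ on prime filters is \emph{defined} so that $\alginterp{P} \in G$ and $\alginterp{Q} \in H$ force $\alginterp{P} \sepand_\idx \alginterp{Q} \in F$ whenever $F \in G \oplus_\idx H$; the converse must \emph{construct} witnessing prime filters $G, H$ and a subfilter of $F$ from the hypothesis $\alginterp{P} \sepand_\idx \alginterp{Q} \in F$, via a Zorn's-lemma prime-filter extension argument.

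I expect the main obstacle to be precisely these multiplicative cases of claim~(2), since they depend on the non-trivial prime-filter existence theorems rather than on routine unfolding of definitions. Fortunately these arguments are entirely index-local: for each fixed $\idx$, the operation $\oplus_\idx$ on $\prf(\mathcal{A})$ and the connectives $\sepand_\idx, \sepimp_\idx$ on $\mathcal{A}$ form exactly a BI prime-filter frame over a BI algebra, so the extension lemmas and the corresponding cases of the truth lemma transfer verbatim from \citet{docherty:thesis} with the subscript $\idx$ carried along. Taking $P$ to be the whole formula at the end of the induction then yields both stated biconditionals.
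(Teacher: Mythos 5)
Your proof is correct, and part (1) follows the paper's own structural induction essentially verbatim — indeed you are more careful than the paper in two places: you note that persistence of $\models_{\valuation_{\textsf{f}}}$ is needed for the truth sets to land in $\mathcal{P}_{\sqsubseteq}(X)$ at all, and in the $\sepimp_\idx$ case you explicitly reconcile the satisfaction clause (no quantification over $w \sqsupseteq x$) with the complex-algebra clause (which has it) via (Down-Closed) plus persistence, a mismatch the paper's appendix silently papers over by restating the satisfaction clause with the extra quantifier. Where you genuinely diverge is part (2). You run a second, independent induction directly on the prime filter frame, re-invoking Zorn's-lemma prime-filter extension arguments for the $\sepand_\idx$ and $\sepimp_\idx$ cases. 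The paper instead gets part (2) almost for free: it applies part (1) to the particular frame $\prf(\mathcal{A})$ with valuation $\valuation_{\textsf{a}}'$, obtaining $F \models_{\valuation_{\textsf{a}}'} P$ iff $F \in \llbracket P \rrbracket_{\valuation_{\textsf{a}}'}$, and then uses the Representation Theorem — $\theta$ is a homomorphism, so $\llbracket P \rrbracket_{\valuation_{\textsf{a}}'} = \theta(\llbracket P \rrbracket_{\valuation_{\textsf{a}}})$ — together with the definition of $\theta$ to conclude $F \in \theta(\llbracket P \rrbracket_{\valuation_{\textsf{a}}})$ iff $\llbracket P \rrbracket_{\valuation_{\textsf{a}}} \in F$. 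This confines all the prime-filter existence machinery to a single prior lemma (the embedding theorem, cited from Docherty) rather than re-deploying it inside the truth lemma; your route duplicates that work but is self-contained and equally valid, and it makes explicit where the non-constructive content lives. Either way the index pre-order plays no role in the induction, exactly as you observe.
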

\begin{proof}
For the first part, we proceed by induction on $P$.
\newcommand{\lra}{\leftrightarrow}
\begin{itemize}
\item Case $P = p$: We have:
  \begin{align*}
    x \models_{\valuation_{\textsf{f}}} p
    & \lra x \in \valuation_{\textsf{f}}(p) \\
    & \lra x \in \valuation_{\textsf{f'}}(p) \\
    & \lra x \in \alginterp{p}_{\textsf{f'}}
   \end{align*}
\item Case $P = \top$: Then $x \models_{\valuation_{\textsf{f}}} \top$ holds for all $x$, and $\top$ in $\complex(\mathcal{X})$ is defined to be $X$, so $x \in \alginterp{\top}_{\valuation_{\textsf{f}}'}$ holds for all $x$.
\item Case $P = \sepid_{\idx}$: Similar to $P = \top$.
\item Case $P = \bot$: Similar to $P = \top$.
\item Case $P = Q_1 \land Q_2$:
  \begin{align*}
			x \models_{\valuation_{\textsf{f}}} Q_1 \land Q_2
    & \lra x \models_{\valuation_{\textsf{f}}} Q_1 \textrm{ and } x \models_{\valuation_{\textsf{f}}} Q_2  \tag{By satisfication rule}\\
				& \lra x \in \alginterp{Q_1}_{\valuation_{\textsf{f}'}} \textrm{ and } x \in \alginterp{Q_2}_{\valuation_{\textsf{f}'}} \tag{$\valuation_{\textsf{f}'}$ and $\valuation_{\textsf{f}'}$ the same} \\
				& \lra x \in \alginterp{Q_1}_{\valuation_{\textsf{f}'}} \cap \alginterp{Q_2}_{\valuation_{\textsf{f}'}} \\
				& \lra x \in \alginterp{Q_1 \land Q_2}_{\valuation_{\textsf{f}'}} \tag{By the $\land$ operation in Complex algebra and the recursive definition of $\valuation$}
  \end{align*}
\item Case $P = Q_1 \lor Q_2$:
  \begin{align*}
    x \models_{\valuation_{\textsf{f}}} Q_1 \lor Q_2
    & \lra x \models_{\valuation_{\textsf{f}}} Q_1 \textrm{ or } x \models_{\valuation_{\textsf{f}}} Q_2 \\
    & \lra x \in \alginterp{Q_1}_{\valuation_{\textsf{f}'}} \textrm{ or } x \in \alginterp{Q_2}_{\valuation_{\textsf{f}'}} \\
    & \lra x \in \alginterp{Q_1}_{\valuation_{\textsf{f}'}} \cup \alginterp{Q_2}_{\valuation_{\textsf{f}'}} \\
    & \lra x \in \alginterp{Q_1 \lor Q_2}_{\valuation_{\textsf{f}'}}
  \end{align*}
\item Case $P = Q_1 \rightarrow Q_2$: Let $x \models_{\valuation_{\textsf{f}}} Q_1 \rightarrow Q_2$. Then, for all $y$ such that $x \sqsubseteq y$,
  if $y \models_{\valuation_{\textsf{f}}} Q_1$, then $y \models_{\valuation_{\textsf{f}}} Q_2$. Applying the induction hypothesis, we have that for all $y$ such that $x \sqsubseteq y$,
  if $y \in \alginterp{Q_1}_{\valuation_{\textsf{f}'}}$, then
   $y \in \alginterp{Q_2}_{\valuation_{\textsf{f}'}}$.
  Hence, $x \in \alginterp{Q_1 \rightarrow Q_2}_{\valuation_{\textsf{f}'}}$. The reverse direction is similar.
\item Case $P = Q_1 \sepand_{\idx} Q_2$: Let $x \models_{\valuation_{\textsf{f}}} Q_1 \sepand_{\idx} Q_2$. Then there exists $x'$, $y$, and $z$ such that $x' \sqsubseteq x$ and $x' \in y \oplus_{\idx} z$, where $y \models_{\valuation_{\textsf{f}}} Q_1$ and $z \models_{\valuation_{\textsf{f}}} Q_2$. By the induction hypothesis, we have that $y \in \alginterp{Q_1}_{\valuation_{\textsf{f}'}}$ and $z \in \alginterp{Q_2}_{\valuation_{\textsf{f}'}}$. Hence, $x \in \alginterp{Q_1 \sepand_{\idx} Q_2}_{\valuation_{\textsf{f}'}}$.
\item Case $P = Q_1 \sepimp_{\idx} Q_2$: Let $ x \models_{\valuation_{\textsf{f}}} Q_1 \sepimp_{\idx} Q_2$. Then for all $x'$, $y$, and $z$ such that $x \sqsubseteq x'$ and $z \in x' \oplus_{\idx} y$, if $y \models_{\valuation_{\textsf{f}}} Q_1$,  then $z \models_{\valuation_{\textsf{f}}} Q_2$.

  To show that $x \in \alginterp{Q_1 \sepimp_{\idx} Q_2}_{\valuation_{\textsf{f}'}}$, let $w$, $y$, and $z$ be such that $x \sqsubseteq w$, $z \in w \oplus_{\idx} y$, and $y \in \alginterp{Q_1}_{\valuation_{\textsf{f}'}}$. We must show that $z \in \alginterp{Q_2}_{\valuation_{\textsf{f}'}}$. Applying the induction hypothesis, we have that $y \in \models_{\valuation_{\textsf{f}}} Q_1$. Thus, by the preceding paragraph, we have that $z \models_{\valuation_{\textsf{f}}} Q_2$. Applying the induction hypothesis again, we get that $z \in \alginterp{Q_2}_{\valuation_{\textsf{f}'}}$.
\end{itemize}
For the second part, assuming $F \in \prf(\mathcal{A})$,
then for any $P$, we have
		\begin{align*}
			F \models_{\valuation_{\textsf{a}}'} P
			& \lra F \in \valuation_{\textsf{a}'}(P) \\
			& \lra F \in  \prf(\mathcal{A}) \text{ and } \valuation_{\textsf{a}}(P) \in F \\
			& \lra \valuation_{\textsf{a}}(P) \in F \\
			& \lra \alginterp{P}_{\valuation_{\textsf{a}}} \in F .
		\end{align*}
\end{proof}

\begin{theorem}[Completeness]
  If $P \models_\valuation Q$ for all $\valuation$, then $P \vdash Q$
\end{theorem}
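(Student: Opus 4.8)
The plan is to prove the contrapositive using the algebraic duality already set up: assuming $P \not\vdash Q$, I will manufacture an \LOGIC model and a state at which $P$ holds but $Q$ fails. The key technical inputs are the Lindenbaum-Tarski algebra, the prime-filter construction $\prf(-)$, and the Equivalence of Algebras and Frames theorem, which together let me transport a purely algebraic separation of $[P]_\sim$ and $[Q]_\sim$ into a genuine frame countermodel.

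First I would instantiate $\mathcal{A}$ as the Lindenbaum-Tarski algebra of \LOGIC and take the canonical algebra valuation $\valuation_{\textsf{a}}(p) = [p]_\sim$, so that $\alginterp{P}_{\valuation_{\textsf{a}}} = [P]_\sim$ and $\alginterp{Q}_{\valuation_{\textsf{a}}} = [Q]_\sim$. By the lemma characterizing provability as the lattice order ($P \vdash Q$ iff $[P]_\sim \leq [Q]_\sim$), the assumption $P \not\vdash Q$ gives $[P]_\sim \not\leq [Q]_\sim$. The crucial step is then to invoke the prime filter theorem for bounded distributive lattices (the Heyting reduct is distributive): since $[P]_\sim \not\leq [Q]_\sim$, there exists a prime filter $F \in \prf(\mathcal{A})$ with $[P]_\sim \in F$ and $[Q]_\sim \notin F$. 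By the earlier lemma, $\prf(\mathcal{A})$ is an \LOGIC frame, and equipping it with the persistent valuation $\valuation_{\textsf{a}}'(p) = \theta([p]_\sim)$ yields an \LOGIC model.

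Finally, I would apply part (2) of the Equivalence of Algebras and Frames theorem, which states $F \models_{\valuation_{\textsf{a}}'} R$ iff $\alginterp{R}_{\valuation_{\textsf{a}}} \in F$ for every formula $R$. Applied to $P$ and $Q$, this gives $F \models_{\valuation_{\textsf{a}}'} P$ but $F \not\models_{\valuation_{\textsf{a}}'} Q$, so the semantic entailment from $P$ to $Q$ fails at the state $F$. Contraposing, $P \models Q$ implies $P \vdash Q$; combined with the (already routine) soundness direction, this establishes the full bi-implication $P \models Q \iff P \vdash Q$.

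The main obstacle is the prime-filter separation step: it is exactly where refuting $P \vdash Q$ is forced to produce a concrete frame state satisfying $P$ but not $Q$, and it rests on the (non-constructive) prime filter theorem for distributive lattices. A secondary point requiring care is matching the precise reading of the semantic consequence relation $\models$ with the local, single-state refutation produced here, and confirming that $\valuation_{\textsf{a}}'$ is a well-typed persistent valuation on $\prf(\mathcal{A})$; both follow from the Representation Theorem, which guarantees that $\theta$ lands in $\complex(\prf(\mathcal{A}))$, i.e.\ in the up-closed subsets on which the frame semantics is defined.
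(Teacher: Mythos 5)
Your proposal is correct and follows essentially the same route as the paper: reduce to the Lindenbaum--Tarski algebra via algebraic completeness, separate $[P]_\sim$ from $[Q]_\sim$ with a prime filter, and transfer the separation to the prime filter frame using the Equivalence of Algebras and Frames theorem (part (2)) to obtain a state satisfying $P$ but not $Q$. The only difference is presentational: you unpack the algebraic-completeness step explicitly rather than citing it as a black box, and you flag the global-versus-local reading of $\models$, a wrinkle the paper's own proof shares.
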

\begin{proof}
  Suppose $P \not\vdash Q$. Then by algebraic completeness, there exists some
  $\idxset$-BI algebra $\mathcal{A}$ and an interpretation
  $\valuation_{\textsf{a}}$ such that $ \alginterp{P}_{\valuation_{\textsf{a}}}
  \nleq \alginterp{Q}_{\valuation_{\textsf{a}}}$. By the prime filter theorem,
  there exists~\citep{docherty:thesis} a prime filter $F$ such that $\alginterp{P}_{\valuation_{\textsf{a}}} \in F$
and $\alginterp{Q}_{\valuation_{\textsf{a}}} \not\in F$. Let $\valuation_{\textsf{a}}'$ be as in the previous theorem, then we have
$F \models_{\valuation_{\textsf{a}}'} P$ and $F \not\models_{\valuation_{\textsf{a}}'} Q$ which contradicts the assumption that $P \models_{\valuation_{\textsf{a}}'} Q$.
\end{proof}

\section{Examples: Omitted Details}
\label{app:ex}

\subsection{Bound false positive rate in Bloom filter}
One detail we omitted is that, since the first line of the program \textsc{Bloom},
$bloom$ has been kept as a \emph{bit-array} throughout, i.e.,
all its entries are either 0 or 1. So it is easy to prove that
\[
	\hoare{\top}{\textsc{Bloom}}
		{\bigwedge_{\beta = 0}^{N}(bloom[\beta] = 0 \lor bloom[\beta] = 1 )}.
\]
Then, by the conjunction rule \textsc{Conj}, we have
\begin{align*}
  \hoare{\top}{\textsc{Bloom}}
	{\Pr \left[\sum_{\beta = 0}^N bloom[\beta] < K + T(\delta, N) \right] \geq 1 - \delta },
\end{align*}
where $K = \EE\left[\sum_{\beta = 0}^N bloom[\beta]\right]$.

In the following, we will abbreviate formulas that assert
$b$ is a bit-array where exactly $J$ of its first $N$ entries
are one,
\[
	\left( \sum_{\beta = 0}^N b[\beta] = J \right) \land
	\bigwedge_{\beta = 0}^{N}(b[\beta] = 0 \lor b[\beta] = 1),
\]
as $\bv{b}{J}{N}$.
Similarly, we will use $\bv{b}{< J}{N}$ to abbreviate
\[
	\left( \sum_{\beta = 0}^N b[\beta] < J \right) \land \bigwedge_{\beta =
	0}^{N}(b[\beta] = 0 \lor b[\beta] = 1).
\]
Now we restate our goal as
\begin{align*}
	\hoare{\bv{bloom}{< K}{N}}{\CheckMem}{\Pr[allhit] \leq(K/N)^H}.
\end{align*}

\CheckMem first initializes $h$ and $allhit$ deterministically to 1.
Then, using \textsc{RAssn} and \textsc{Frame}, we can show that
\[
	\vdash \hoare{\top}
	{\Assn{h}{0}; \Assn{allhit}{1}}
	{\left( \apEq{h}{0} \right) \sepand \left( \apEq{allhit}{1} \right)}.
\]
Using the~\eqref{ax:prob:one} axiom and the fact that $1 \leq (K/N)^0$ for any $K$ and $N$,
we can show $\models \left( \apEq{h}{0} \right) \sepand \left( \apEq{allhit}{1} \right)
\to \Pr[allhit] \leq (K/N)^h$.
Thus,
\[
	\vdash \hoare{\top}
	{\Assn{h}{0}; \Assn{allhit}{1}}
	{\Pr[allhit] \leq (K/N)^h}.
\]
Because the assignments $\Assn{h}{0}; \Assn{allhit}{1}$ do not modify the
Bloom filter array $bloom$, we can then apply the frame rule \textsc{Frame} to
derive
\begin{align}
	\label{eq:ex:checkmem:init}
	\vdash \hoare{\bv{bloom}{< K}{N}}
	{\Assn{h}{0};\Assn{allhit}{1}}
	{\bv{bloom}{< K}{N} \sepand \Pr[allhit] \leq (K/N)^h}.
\end{align}
We will abbreviate $\bv{bloom}{< K}{N} \sepand \Pr[allhit] \leq (K/N)^h$ as $\eta$.
Because $\sum_{\beta = 0}^N b[\beta]$ is an integer upper bounded by $N$,
\[
	\models_{\Mem} \eta \to \bigvee_{0 \leq J < K} \eta_J,
\]
where $\eta_J$ abbreviates $J < K \land \left(
	\bv{bloom}{J}{N} \sepand \Pr[allhit] \leq (K/N)^h \right)$.

We will then prove that for each $J$, the formula $\eta_J$
is a loop invariant of \CheckMem's loop body.
The loop body first uniformly sample an element from $[N]$,
so by \textsc{RSamp}${}^*$,
\begin{align}
	\eta_J \sepand \apUnif{bin}{[N]} .
	\label{eq:ex:checkmem:loop1}
\end{align}
Together with the axiom $\models P \sepand Q \to P$,~\Cref{eq:ex:checkmem:loop1}
implies
\[
	J < K \land \left(\bv{bloom}{J}{N} \sepand \left( \Pr[allhit] \leq
	(\frac{K}{N})^h \right) \sepand \apUnif{bin}{[N]} \right).
\]
Then, $hit$ gets assigned to $bloom[bin]$, so by \text{RAssn}, we have
\[
	\hoare{ \bv{bloom}{J}{N} \sepand \apUnif{bin}{[N]}
	}{\Assn{hit}{bloom[\beta]}}{ \left( \bv{bloom}{J}{N} \sepand \apUnif{bin}{[N]}
	\right) \land \apEq{hit}{bloom[bin]}}.
\]
Since the array $bloom$ only contains zero-one entries, when the sum of its entries is $J$,
an entry $bloom[bin]$ drawn uniformly at random has probability $\frac{J}{N}$ to be 1.
If the entry is in addition chosen independently from values in $bloom$,
then the bit $bloom[bin]$ is distributed independent from the distribution of $bloom$.
The~\eqref{ax:unif:samp} axiom encodes this fact:
\[
	\models \left( \left( \bv{b}{J}{N} \sepand  \apUnif{x}{[N]} \right)
	\land \apEq{hit}{b[x]} \right)
	\to \apBern{hit}{\frac{J}{N}}\sepand \bv{b}{J}{N} .
\]
Thus, we have
\[
	\hoare{ \bv{bloom}{J}{N} \sepand \apUnif{bin}{[N]}
	}{\Assn{hit}{bloom[\beta]}}{\apBern{hit}{\frac{J}{N}} \sepand
	\bv{bloom}{J}{N} }.
\]
Because $\Assn{hit}{bloom[bin]}$ does not modify $allhit$, we
can apply the frame rule for $\sepand$ \textsc{Frame} and get
\begin{align*}
	&	\left\{ \bv{bloom}{J}{N} \sepand \apUnif{bin}{[N]} \sepand \Pr[allhit] \leq \left(\frac{K}{N}\right)^h \right\}\\
& \quad\quad\quad\quad\quad\quad \Assn{hit}{bloom[\beta]}\\
& \left\{ \apBern{hit}{\frac{J}{N}} \sepand \bv{bloom}{J}{N} \sepand  \Pr[allhit] \leq \left(\frac{K}{N}\right)^h \right\}.
\end{align*}
Next, with the assignment $\Assn{allhit}{hit \mathop{\&\&} allhit}$,
by applying the axioms~\eqref{ax:ind:prob},~\eqref{ax:equal:prob}
and the \textsc{RAssn} rule,
we get:
\begin{align*}
	&\left\{ \apBern{hit}{\frac{J}{N}} \sepand \bv{bloom}{J}{N} \sepand  \Pr[allhit] \leq \left(\frac{K}{N}\right)^h\right\} \\
	& \quad\quad\quad\quad\quad \Assn{allhit}{hit \mathop{\&\&} allhit}\\
	&\left\{ \left(\Pr[allhit] \leq \frac{J}{N} \cdot \left(\frac{K}{N}\right)^h \right) \sepand \bv{bloom}{J}{N}\right\}
\end{align*}
We can then apply the rule of constancy \textsc{Const} and get
\begin{align*}
	&	\left\{ J < K \land \left( \bv{bloom}{J}{N} \sepand \apUnif{bin}{[N]} \sepand \Pr[allhit] \leq \left(\frac{K}{N}\right)^h \right)\right\}\\
	&  \quad\quad\quad\quad \Assn{hit}{bloom[\beta]}; \Assn{allhit}{hit \mathop{\&\&} allhit}\\
	&\left\{ J < K \land \left( \left(\Pr[allhit] \leq \frac{J}{N} \cdot \left(\frac{K}{N}\right)^h \right) \sepand \bv{bloom}{J}{N}\right) \right\}
\end{align*}
When we have $J <  K$, then $ (K/N)^h \cdot \frac{J}{N}  \leq  (K/N)^{h+1}$, so the post condition
implies
\[
	J < K \land \left( \left(\Pr[allhit] \leq  \left(\frac{K}{N}\right)^{h + 1} \right) \sepand \bv{bloom}{J}{N}\right)
\]
The last step in the loop body is the assignment $\Assn{h}{h + 1}$.
By the deterministic assignment rule \textsc{DAssn},
we can establish the post condition $\eta_J$ afterwards:
\[
	J < K \land \left(\bv{bloom}{J}{N} \sepand \Pr[allhit] \leq \left(\frac{K}{N}\right)^h \right).
\]
Thus, we have $\hoare{\eta_J}{\textit{loop body}}{\eta_J}$

By \textsc{Loop} rule, we can establish $\hoare{\eta_J}{loop}{\eta_J \land h \geq H}$.
Since $\eta_J \land h \geq H$ implies $\Pr[allhit] \leq (K/N)^H $, we then have
\[
	\hoare{\eta_J}{loop}{\Pr[allhit] \leq \left(\frac{K}{N}\right)^H}.
	\]
	Because $\Pr[allhit] \leq (K/N)^H$ is closed under mixtures,
and $\eta$ is closed under conditioning, we can then apply \textsc{RCase} to prove that
\begin{align}
	\label{eq:ex:checkmem:loop}
	\hoare{\eta}{loop}{ \Pr[allhit] \leq \left(\frac{K}{N}\right)^H}.
\end{align}
Using the \textsc{Seqn} rule to combine the proved judgments for \CheckMem's
initialization~\eqref{eq:ex:checkmem:init} and loop~\eqref{eq:ex:checkmem:loop}, we derive
\begin{align*}
	\hoare{\sum_{\beta = 0}^N bloom[\beta] < K}{\CheckMem}{\Pr[allhit] \leq \left(\frac{K}{N}\right)^H}.
\end{align*}

\subsection{Permutation Hashing}

We sketch how to replicate the informal reasoning in \programlogic. For the main
loop, we apply the rule \textsc{Loop} with the following loop invariant:
\[
  \bigwedge_{\alpha = 0}^n \apEq{hitZ[\alpha]}{[mod(g[\alpha], B) = Z]}
  \land \apPerm{g}{[B \cdot K]}
  \land ct = \sum_{\alpha = 0}^n hitZ[\alpha]
  \land (\neg (n < N) \to \apEq{n}{N})
\]
The loop invariant is preserved by the body of the loop, using the assignment
rule (\textsc{RAssn}) and the rule of constancy (\textsc{Const}). Thus we can
show the following judgment:
\[
  \hoare{\apEq{ct}{0} \land \apEq{n}{0}}{loop}{%
    \bigwedge_{\alpha = 0}^N \apEq{hitZ[\alpha]}{[mod(g[\alpha], B) = Z]}
    \land \apPerm{g}{[B \cdot K]}
  \land \apEq{ct}{\sum_{\alpha = 0}^N hitZ[\alpha]}}
\]
Applying \eqref{ax:perm-map}, the post-condition implies:
\[
  \bigwedge_{\alpha = 0}^N \apEq{hitZ[\alpha]}{[mod(g[\alpha], B) = Z]}
    \land \bigsep_{\alpha = 0}^N \apDist{hitZ[\alpha]}
    \land \apPerm{g}{[B \cdot K]}
    \land \apEq{ct}{\sum_{\alpha = 0}^N hitZ[\alpha]}
\]
Applying basic axioms about expected value and the permutation distribution
(\eqref{ax:perm:marg} \eqref{ax:prob:unif} \eqref{ax:biject:unif}), we have:
\[
  \bigsep_{\alpha = 0}^N \apDist{hitZ[\alpha]}
  \land \apEq{ct}{\sum_{\alpha = 0}^N hitZ[\alpha]}
  \land \apEq{\EE[ct]}{N / B}
\]
And we can apply the negative-association Chernoff bound \eqref{ax:na-chernoff}
to conclude:
\[
  \hoare{\top}
  {\textsc{PermHash}}
  {\Pr [ |ct - N / B| > T(\beta, N) ] < \beta}
\]
This conclusion corresponds to Proposition A.2 in
\citet{DBLP:journals/pvldb/DingK11} algorithm for fast set
intersection.\footnote{%
  \citet{DBLP:journals/pvldb/DingK11} apply a variant of the Chernoff bound to
  obtain a multiplicative, rather than an additive, error guarantee. We present
  the additive version since the bound is a bit simpler, but there is no
  difficulty to handling the multiplicative version in our framework.}

\subsection{Fully-Dynamic Dictionary}
We outline the main steps in the formal proof; the most interesting step is the
last one, where we use negative association, but all steps can be handled in our
framework.

We will refer to the two outer-most loops as (1) and (2), the next two
outer-most loops as (1.1) and (2.1), and the inner-most loop as (1.1.1).

\paragraph*{Computing $\EE[binCt[c][p]]$.}
For loop (1), we apply \textsc{Loop} with the following loop invariant:
\[
  \bigwedge_{\gamma = 0}^C \bigwedge_{\beta = 0}^P
  \apEq{\EE[binCt[\gamma][\beta]]}{n / (P \cdot C)}
  \land ( \neg (n < N) \to \apEq{n}{N} ) .
\]
To show that this invariant is preserved by the loop, by two applications of
\textsc{RSamp*} the following holds after the sampling commands:
\[
  \apOnehot{pocket[n]}{[P]} \indand \apOnehot{crate[n]}{[C]}
  \land \apEq{bin[n]}{crate[n]^\top \cdot pocket[n]} .
\]
Using an axiom about independence and products of one-hot vectors
\eqref{ax:ind:prod:oh}, this implies:
\[
  \apOnehot{bin[n]}{[C] \times [P]} .
\]
Using an axiom about the one-hot encoding~\eqref{ax:oh:marg}:
\[
  \apEq{\mathbb{E}[bin[\alpha][\gamma][\beta]]}{1/(P \cdot C)}
\]
for every $\alpha$, $\gamma$, and $\beta$. Standard loop invariants for loop
(1.1) and (1.1.1) show that:
\[
  \apEq{binCt[c][p]}{\sum_{\alpha = 0}^n bin[\alpha][c][p]} ,
\]
and linearity of expectation establishes the invariant condition for loop (1).
The invariant holds at the start of loop (1) since $binCt$ is zero-initialized,
and it also holds at the end of loop (1). Since $binCt$ is not modified further,
the expectation bound also holds at the end of the program (\textsc{Const}).

\paragraph*{Bounding $\Pr[binCt[c][p] > T_{bin}]$.}
For loop (1), we apply \textsc{Loop} with the following loop invariant:
\[
  \left( \bigind_{\alpha = 0}^n \apDist{bin[\alpha]} \right)
  \land
  \bigwedge_{\gamma = 0}^C \bigwedge_{\beta = 0}^P
  \apEq{binCt[\gamma][\beta]}{\sum_{\alpha = 0}^N bin[\alpha][\gamma][\beta]}
  \land ( \neg (n < N) \to \apEq{n}{N} ) .
\]
The first conjunction is an invariant, by applying the sampling rule
\textsc{Samp*} and the independence frame rule \textsc{Frame} from PSL. The rest
of the invariant is preserved, following standard invariants for loops (1.1) and
(1.1.1). By projection \eqref{ax:ind:map}, at the end of loop (1) we can
conclude:
\[
  \bigwedge_{\gamma = 0}^C \bigwedge_{\beta = 0}^P
  \left( \bigind_{\alpha = 0}^N \apDist{bin[\alpha][\gamma][\beta]} \right)
  \land \apEq{binCt[\gamma][\beta]}{\sum_{\alpha = 0}^N bin[\alpha][\gamma][\beta]} .
\]
Thus a (standard) Chernoff bound gives:
\[
  \bigwedge_{\gamma = 0}^C \bigwedge_{\beta = 0}^P
  \Pr [ binCt[\gamma][\beta] > \EE[binCt[\gamma][\beta]] + T(\rho_{bin}, N) ] \leq \rho_{bin} .
\]
where $\EE[binCt[\gamma][\beta]]$ is $N/(P \cdot C)$ by the previous step.
Again, property holds until the end of the program since $binCt$ is not modified
further (\textsc{Const}).

\paragraph*{Bounding $\EE[overCt[c]]$.}

Using standard loop invariants, at the end of loop (2) we have:
\[
  \bigwedge_{\gamma = 0}^C
  \apEq{overCt[\gamma]}{\sum_{\beta = 0}^P \sum_{\delta = 0}^C bin[\delta][\beta]}
  \land \bigwedge_{\beta = 0}^P \apEq{over[\gamma][\beta]}{[binCt[\gamma][\beta] > T_{bin}]} .
\]
Using linearity of expectation and the fact that $over[\gamma][\beta]$ is either
zero or one, we have:
\[
  \EE[overCt[\gamma]]
	\sim \sum_{\beta = 0}^P \mathbb{E}[over[\gamma][\beta]]
	\sim \sum_{\beta = 0}^P \Pr[over[\gamma][\beta] = 1]
  \sim \sum_{\beta = 0}^P \Pr[binCt[\gamma][\beta] > T_{bin}]
  \leq P \cdot \rho_{bin}
\]
since we have bound the probability in the previous step.

\paragraph*{Bounding $\Pr[overCt[c] > T_{over}]$.}

We want the following loop invariant for (1):
\[
  \bigwedge_{\gamma = 0}^C \bigwedge_{\beta = 0}^P \bigneg_{\beta = 0}^P \apDist{binCt[\gamma][\beta]}
  \land
  ( \neg(n < N) \to \apEq{n}{N} ) .
\]
We want the following loop invariant for (1.1):
\[
  \bigwedge_{\gamma = 0}^C
  \bigneg_{\beta = 0}^P \apDist{binCt[\gamma][\beta]}
	\negand
  \bigneg_{\beta = c}^P \apDist{bin[n][\gamma][\beta]}
  \land
  ( \neg(p < P) \to \apEq{p}{P} ) .
\]
And the following loop invariant for (1.1.1):
\[
  \bigwedge_{\gamma = 0}^C
  \bigneg_{\beta = 0}^P \apDist{binCt[\gamma][\beta]}
	\negand
  \bigneg_{\beta = p + [c > \gamma]}^P \apDist{bin[n][\gamma][\beta]} .
\]
We show the invariant post-conditions for a fixed $\gamma$; the big conjunction
then follows by applying \textsc{Conj}. Working from inside-to-outside, we start
with loop (1.1.1). To establish the invariant condition, the critical case is $c
= \gamma$. We can pull out:
\[
  \bigneg_{\beta = 0}^p \apDist{binCt[\gamma][\beta]}
	\negand
  \bigneg_{\beta = p + 1}^P \apDist{binCt[\gamma][\beta]}
	\negand
  \bigneg_{\beta = p + 1}^P \apDist{bin[n][\gamma][\beta]}
	\negand
	\underbrace{
    \apDist{binCt[\gamma][p]}
		\negand
    \apDist{bin[n][\gamma][p]}
	}_{\Phi}
\]
Now, we can use the assignment rule to show:
\[
  \hoare{\Phi}{\Assn{upd}{binCt[c][p] + bin[n][c][p]}}{\apEq{upd}{binCt[c][p] + bin[n][c][p]}}
\]
Since addition is a monotone function, the NA frame rule (\textsc{NegFrame})
gives:
\[
  \bigneg_{\beta = 0}^p \apDist{binCt[\gamma][\beta]}
	\negand
  \bigneg_{\beta = p + 1}^P \apDist{binCt[\gamma][\beta]}
	\negand
  \bigneg_{\beta = p + 1}^P \apDist{bin[n][\gamma][\beta]}
	\negand
  \apDist{upd}
\]
after the assignment to $upd$. After the assignment to $bin[c][p]$, we can fold:
\[
  \bigneg_{\beta = 0}^P \apDist{binCt[\gamma][\beta]}
	\negand
  \bigneg_{\beta = p + 1}^P \apDist{bin[n][\gamma][\beta]}
\]
to establish the invariant for loop (1.1.1).

To establish the invariant for loop (1.1), when the inner-most loop (1.1.1)
terminates we have $c > \gamma$, and so we have:
\[
  \bigneg_{\beta = 0}^P \apDist{binCt[\gamma][\beta]}
	\negand
  \bigneg_{\beta = p + 1}^P \apDist{bin[n][\gamma][\beta]} .
\]

To establish the invariant for loop (1), note that the invariant for loop (1.1)
holds on loop entry since $z$ is zero-initialized \eqref{ax:det:ind}. And the
loop invariant for loop (1) is established when loop (1.1) exits, when $p = P$.

Next, we tackle loop (2). We take the invariant:
\[
  \bigwedge_{\gamma = 0}^C
  \bigneg_{\beta = 0}^p \apDist{over[\gamma][\beta]}
	\negand
  \bigneg_{\beta = p}^P \apDist{binCt[\gamma][\beta]} .
\]
For the inner loop (2.1), we take the invariant:
\[
  \bigwedge_{\gamma = 0}^C
  \bigneg_{\beta = 0}^{p + [c > \gamma]} \apDist{over[\gamma][\beta]}
  \negand \bigneg_{\beta = p + [c > \gamma]}^P \apDist{binCt[\gamma][\beta]} .
\]
Again, we show the invariant post-conditions for a fixed $\gamma$.  For the
critical iteration $c = \gamma$, we again isolate $binCt[\gamma][p]$, observe
that addition is monotone and the function $[binCt[c][p] > T_{bin}]$ is monotone in
$binCt[\gamma][p]$, and apply the NA frame rule (\textsc{NegFrame}).

Finally at the end of the program, we can show:
\[
  \bigneg_{\beta = 0}^P \apDist{over[\gamma][\beta]}
\]
along with the regular invariant
\[
  \apEq{overCt[\gamma]}{\sum_{\beta = 0}^P over[\gamma][\beta]} .
\]
We can then apply the negative-dependence Chernoff bound \eqref{ax:na-chernoff}:
\[
  \Pr [ overCt[\gamma] > \EE[overCt[\gamma]] + T(\rho_{over}, P) ] \leq \rho_{over} .
\]
Using the expectation bound from the previous step and putting everything
together, we conclude:
\[
  \hoare{\top}
  {\textsc{FDDict}}
  {\bigwedge_{\gamma = 0}^C \Pr [ overCt[\gamma] > P \cdot \rho_{bin} + T(\rho_{over}, P) ] \leq \rho_{over}} ,
\]
thus showing a high-probability upper-bound on the number of overfull pocket
dictionaries within each crate.

\subsection{Repeated Balls-into-Bins}

We will refer to the loops in \Cref{fig:ex:repeat:bib} using the same scheme we
used before: the outer-most loop is loop (1), the three next-outer-most loops
are loops (1.1), (1.2), and (1.3), and the inner-most loop is loop (1.2.1).
Starting from the outside, we take the following invariant for loop~(1):
\[
  \Pr\left[ \bigvee_{\beta = 0}^r (emptyCt[\beta] > T_{empty}) \right] \leq r \cdot \rho_{empty}
  \land \apEq{\sum_{\alpha = 0}^N ct[\alpha]}{N}
\]
Showing the invariant condition requires some work. First, note that:
\[
  \models_\Mem \apEq{\sum_{\alpha = 0}^N ct[\alpha]}{N}
  \to \bigvee_{\sigma : [N] \to [N]} \bigwedge_{\alpha = 0}^N
  \apEq{ct[\alpha]}{|\sigma^{-1}(\alpha)|}
\]
where $\sigma : [N] \to [N]$ ranges over all assignments of $N$ balls to $N$
bins, and where we write $\models_\Mem$ to denote that the formula is valid in
all \emph{memories}, rather than distributions over memories. We write
$\tau(\alpha) = |\sigma^{-1}(\alpha)|$ for the number of balls in bin $\alpha$.
We will show:
\[
  \hoare{\bigwedge_{\alpha = 0}^N \apEq{ct[\alpha]}{\tau(\alpha)} }
  {body}
  {\Pr\left[ emptyCt[r] < T_{empty} \right] \leq \rho_{empty}}
\]
where $body$ is the body of loop (1). For loop (1.1), it is straightforward to
show the invariant using the loop rule \textsc{Loop}:
\[
  \bigwedge_{\alpha = n}^N (\apEq{ct[\alpha]}{\tau(\alpha)})
  \land \bigwedge_{\alpha = 0}^n (\apEq{ct[\alpha]}{\tau(\alpha)- [\tau(\alpha) > 0]})
  \land \apEq{rem}{\sum_{\alpha = 0}^n [\sigma(\alpha) > 0]}
  \land ( \neg(n < N) \to \apEq{n}{N} )
\]
At the exit of loop (1.1), we have:
\[
  \bigneg_{\alpha = 0}^N \apDist{ct[\alpha]}
\]
since counts are all equal to expressions of logical variables, so conditioning
on the logical variables, they are all deterministic; we take this formula to be
the invariant for loop (1.2). Note that the loop guard is not deterministic,
since the value of $rem$ is randomized. However, \emph{under our conditioning},
$rem$ is deterministic under our conditioning since it is fully determined by
the initial counts (i.e., it is the number of buckets that are initially
non-empty). Hence, we may apply the loop rule \textsc{Loop}, treating the loop
guard as deterministic. This is the power of reasoning under conditioning.

Now to establish the invariant for loop (1.2), we reason much as in the previous
examples. The sampling rule \textsc{Samp*} gives:
\[
  \bigneg_{\alpha = 0}^N \apDist{ct[\alpha]} \indand \apDist{bin[j]}
\]
By negative association for one-hot encoding \eqref{ax:oh-pna}:
\[
  \bigneg_{\alpha = 0}^N \apDist{ct[\alpha]} \indand \bigneg_{\alpha = 0}^N \apDist{bin[j][\alpha]}
\]
This implies:
\[
  \bigneg_{\alpha = 0}^N \apDist{ct[\alpha]} \negand \bigneg_{\alpha = 0}^N \apDist{bin[j][\alpha]}
\]
For the inner-most loop (1.2.1), we apply the same technique as for loop (1.2).
Since loop (1.2) has a randomized guard, $k$ is a random variable and loop
(1.2.1) also has a randomized guard. However, under the conditioning, we may
assume that $k$ is deterministic and apply \textsc{Loop} on loop (1.2.1) with
the following invariant:
\[
  \bigneg_{\alpha = 0}^k \apDist{ct[\alpha]}
  \negand \left(\apDist{ct[k]} \negand \bigneg_{\alpha = k}^N \apDist{bin[j][\alpha]}\right)
  \negand \bigneg_{\alpha = k + 1}^N \apDist{ct[\alpha]}
  \land (\neg (k < N) \to \apEq{k}{N})
\]
Like in earlier examples, we can establish this invariant using the negative
dependence frame rule since $ct[n] + bin[j][n]$ is monotone. Thus at exit of
loop (1.2.1), we have:
\[
  \bigneg_{\alpha = 0}^N \apDist{ct[\alpha]}
\]
Next, three applications of the assignment rule \textsc{RAssn} give:
\[
  \bigneg_{\alpha = 0}^N \apDist{ct[\alpha]}
  \land \apEq{n}{0}
  \land \apEq{emptyCt[r]}{0}
  \land \apEq{empty}{isZero(ct)}
\]
The function $isZero(v)$ takes a vector of numbers $v$, and returns a vector
where each index $i$ $1$ if $v[i]$ is zero, else it holds $0$. This is an
antitone function: it is non-increasing in its argument. Thus, the monotone
mapping axiom \eqref{ax:mono-map} gives:
\[
  \bigneg_{\alpha = 0}^N \apDist{empty[\alpha]}
\]
Then, a standard loop invariant for loop (1.3) gives:
\[
  \bigneg_{\alpha = 0}^N \apDist{empty[\alpha]}
\land \apEq{emptyCt[r]}{\sum_{\alpha = 0}^N empty[\alpha]}
\]
Now, we are in position to apply the negative association Chernoff bound
\eqref{ax:na-chernoff}, giving the judgment:
\[
  \hoare{\bigwedge_{\alpha = 0}^N \apEq{ct[\alpha]}{\tau(\alpha)}}
  {body}
  { \Pr [ emptyCt[r] < \EE[emptyCt[r]] - T(\rho_{empty}, N) ] \leq \rho_{empty} }
\]
where $body$ is the body of loop (1). However, we are not yet done. We want
to combine these judgments---one for each map $\sigma : [N] \to [N]$---using the
randomized case analysis rule \textsc{RCase}
%
%
We can take the trivial pre-condition $\phi = \top$, and the case condition:
\[
  \eta = \bigwedge_{\alpha = 0}^N \apEq{ct[\alpha]}{\tau(\alpha)}
\]
Since $\eta$ asserts that the equality holds with probability $1$, it is closed
under conditioning. However, our post-condition has a problem: it mentions the
expected value $\EE[emptyCt[r]]$, which may not be preserved under mixtures, so
the entire assertion is not CM. However, translating an argument by \citet[Lemma
2]{DBLP:journals/dc/BecchettiCNPP19} into our logic gives:
\[
  \hoare{\bigwedge_{\alpha = 0}^N \apEq{ct[\alpha]}{\tau(\alpha)}}
  {body}
  { \EE[emptyCt[r]] \geq N/15 }
\]
assuming that $N \geq 2$. The argument makes use of basic properties of expected
values and the exponential function; we omit the details. Thus, we have:
\[
  \hoare{\bigwedge_{\alpha = 0}^N \apEq{ct[\alpha]}{\tau(\alpha)}}
  {body}
  { \Pr [ emptyCt[r] < N/15 - T(\rho_{empty}, N) ] \leq \rho_{empty} }
\]
and the post-condition is now a CM assertion. Applying \textsc{RCase}, we have:
\[
  \hoare{\sum_{\alpha = 0}^N \apEq{ct[\alpha]}{N}}
  {body}
  { \Pr [ emptyCt[r] < N/15 - T(\rho_{empty}, N) ] \leq \rho_{empty} }
\]
Recalling that we wanted the following invariant for loop (1):
\[
  \Pr\left[ \bigvee_{\beta = 0}^r (emptyCt[\beta] < T_{empty}) \right] \leq r \cdot \rho_{empty}
  \land \sum_{\alpha = 0}^N \apEq{ct[\alpha]}{N}
\]
we can use the rule of constancy \textsc{Const} and the assignment rule
\textsc{RAssn} to preserve the first conjunct to show:
\[
  \Pr\left[ \bigvee_{\beta = 0}^{r-1} (emptyCt[\beta] < T_{empty}) \right] \leq (r - 1) \cdot \rho_{empty}
\]
at the end of the body of loop (1). Combined with the probability bound for
$emptyCt[r]$, an application of the union bound \eqref{ax:union:bd} establishes
the invariant for loop (1). Putting everything together, we have:
\[
  \hoare{N \geq 2 \land \sum_{\alpha = 0}^N \apEq{ct[\alpha]}{N}}
  {\textsc{RepeatBIB}}
  {
    \Pr\left[ \bigvee_{\beta = 0}^R (emptyCt[\beta] < N/15 - T(\rho_{empty}, N)) \right] \leq R \cdot \rho_{empty}
  }
\]
analogous to \citet[Lemma 1 and 2]{DBLP:journals/dc/BecchettiCNPP19}.

\subsection{Axioms for Examples}

For completeness, we present the probability-related axioms that we need for the
examples. For simplicity we present the axioms in binary form, though most
extend directly to big operations.
\begin{itemize}
  \item Linearity of expectation. Let $e, f$ be \emph{bounded} expressions.
    \begin{equation} \label{ax:lin:exp}
      \models \apEq{\EE[\alpha \cdot e + \beta \cdot f]}{\alpha \cdot \EE[e] + \beta \cdot \EE[f]}
      \tag{LinExp}
    \end{equation}
  \item Union bound. Let $ev_1, ev_2 \in \Event$,
    \begin{equation} \label{ax:union:bd}
      \models \Pr [ ev_1 \lor ev_2 ] \leq \Pr [ ev_1 ] + \ Pr [ ev_2 ]
      \tag{UnionBd}
    \end{equation}
  \item Permutation marginal. Let $x$ be an array variable, and let $S$ be a
    finite set.
    \begin{equation} \label{ax:perm:marg}
      \models \apPerm{x}{S} \to \apUnif{x[\alpha]}{S}
      \tag{PermMarg}
    \end{equation}
  \item Expectation Indicator. Let $e$ be a 0/1 valued expression,
    \begin{equation} \label{ax:expect:ind}
      \models \apEq{\EE[ e ]}{\Pr[e = 1]}
      \tag{ExpectInd}
    \end{equation}
	\item Bernoulli variables probabilities. Let $e$ be an expression,
    \begin{equation} \label{ax:bernoulli:p}
      \models \apBern{e}{p} \to \Pr[e = 1] = p
      \tag{BernProb}
    \end{equation}
  \item Probability of uniform. Let $S$ be a finite set.
    \begin{equation} \label{ax:prob:unif}
      \models \apEq{\Pr[ \apUnif{x}{S} = \alpha ]}{1/|S|}
      \tag{ProbUnif}
    \end{equation}
  \item Bijection uniform. Let $S$ be a finite set, and let $f : S
    \to S$ be a bijection.
    \begin{equation} \label{ax:biject:unif}
      \models \apUnif{x}{S} \to \apUnif{f(x)}{S}
      \tag{BijectUnif}
    \end{equation}
  \item One-hot marginal. Let $x$ be an array variable.
    \begin{equation} \label{ax:oh:marg}
      \models \apOnehot{x}{S} \to \apUnif{x[\alpha]}{S}
      \tag{OHMarg}
    \end{equation}
  \item Independent product one-hot.
    \begin{equation} \label{ax:ind:prod:oh}
      \models \apOnehot{x}{[M]} \indand \apOnehot{y}{[N]} \to \apOnehot{x^\top \cdot y}{[M] \times [N]}
      \tag{IndProdOH}
    \end{equation}
  \item Independent map. Let $x$ be an array variable of length $N$.
    \begin{equation} \label{ax:ind:map}
      \models \bigind_{\alpha = 0}^N \apDist{x[\alpha]}
      \to \bigind_{\alpha = 0}^N \apDist{f(x[\alpha])}
      \tag{IndMap}
    \end{equation}
  \item Deterministic independent. Let $x$ be a variable.
    \begin{equation} \label{ax:det:ind}
      \models \apDetm{x} \to \apDist{x} \indand \apDist{e}
      \tag{DetInd}
    \end{equation}
	\item Events happen only if they have probability one. Let $ev \in \Event$,
	\begin{equation} \label{ax:prob:one}
		\models \apEvent{ev} \to \Pr(ev) = 1
		\tag{ProbOne}
	\end{equation}
	\item Uniform sampling from a population.
  We represent a population as a bit-vector, where each entry is an individual
  and 1 indicates they have some feature and 0 indicates not. Then, if we
  uniformly sample from the population,  the probability of getting a one is
  equal to population-level ratio of ones, regardless how they are distributed
  in the population. Let $N \geq J$ be constants or logical variables, $b$ be an
  array variable of length $N$, and $x, hit$ be variables:
	\begin{equation} \label{ax:unif:samp}
		\models \left( \left( \bv{b}{J}{N} \sepand  \apUnif{x}{[N]} \right) \land
		\apEq{hit}{b[x]} \right) \to \apBern{\frac{J}{N}}{hit}\sepand\left(
		\sum_{\beta = 0}^N b[\beta] = J \right).
		\tag{UniformSamp}
	\end{equation}
	\item Independent product probabilities. Let $ev_1, ev_2 \in \Event$ ,
	$J, K$ be two real numbers,
	\begin{equation} \label{ax:ind:prob}
		\models \Pr[ev_1] \leq J  \sepand \Pr[ev_2] \leq K \to \Pr[ev_1 \land ev_2]\leq J \cdot K.
		\tag{IndepProb}
	\end{equation}
	\item Equal probabilities.
	Let $b_1, b_2$ be two boolean expressions. Recall that $b_1, b_2 \in \Event$ too.
	\begin{equation} \label{ax:equal:prob}
    \models \apEq{b_1}{b_2} \to \Pr[b_1]  = \Pr[b_2]
		\tag{EqualProb}
	\end{equation}
\end{itemize}
\fi

\end{document}